\numberwithin{equation}{section}
\theoremstyle{plain}
\newtheorem{definition}{Definition}
\newtheorem{remark}{Remark}
\newtheorem{corollary}{Corollary}
\newtheorem{lemma}{Lemma}
\newtheorem{assumption}{Assumption}
\newtheorem{proposition}{Proposition}
\newtheorem{theorem}{Theorem}
\newtheorem*{theorem*}{Theorem}
\newcommand{\+}{\mathbf}
\newcommand{\bs}{\boldsymbol}
\newcommand{\Ps}{\mathbb{P}_n^{\bs \theta^*}}
\newcommand{\asls}{ \underset{a.s.}{\overset{\bs\theta^*}{\longrightarrow}} } 
\newcommand{\E}{\mathbb{E}}
\renewcommand{\circ}{ \odot}
\title{Consistent and fast inference in compartmental models of epidemics
using Poisson Approximate Likelihoods}
\author{Michael Whitehouse \\ School of Mathematics, University of Bristol\\
Nick Whiteley\\ School of Mathematics, University of Bristol\\
Lorenzo Rimella\\Department of Mathematics and Statistics, Lancaster University
}
\begin{document}

\maketitle
\begin{abstract}
Addressing the challenge of scaling-up epidemiological inference
to complex and heterogeneous models, we introduce Poisson Approximate
Likelihood  (PAL) methods. In contrast to the popular ODE approach to compartmental modelling, in which a large population limit is used to motivate a deterministic model, PALs are derived from approximate filtering equations for finite-population, stochastic compartmental models, and the large population limit drives consistency of maximum PAL estimators. Our theoretical results appear to be the first likelihood-based parameter estimation consistency results which apply to a broad class of partially observed stochastic compartmental models and address the large population limit. PALs are simple to implement, involving only elementary arithmetic operations and no tuning parameters, and fast to evaluate, requiring no simulation from the model and having computational cost independent of population size. Through examples we demonstrate how PALs can be used to: fit an age-structured model of influenza, taking advantage of automatic differentiation in Stan; compare over-dispersion mechanisms in a model of rotavirus by embedding PALs within sequential Monte Carlo; and evaluate the role of unit-specific parameters in a meta-population model of measles.
\end{abstract}
\setlength{\abovedisplayskip}{6pt}
\setlength{\belowdisplayskip}{6pt}
\section{Introduction}
Compartmental modelling is one of the most widespread methods for
quantifying the dynamics of infectious diseases in populations, rooted in the works of McKendrick and Kermack in the 1920's \citep{m1925applications,kermack1927contribution},
\cite{bartlett1949some,bartlett1966introduction}
and  \cite{kendall1956deterministic}, see \citep{isham2005stochastic}
for an  overview. In this modelling paradigm individuals in a population transition between a
collection of discrete compartments, usually representing disease
states, where the rates of transition may depend on the current state
of the population as a whole as well as possibly unknown parameters. This provides
an interpretable, mechanistic framework in which to infer epidemic
characteristics such as reproduction numbers, forecast disease dynamics
and explore the possible impacts of public health interventions. Compartmental models are also popular in ecology and biochemistry, for example, \citep{komorowski2009bayesian,fearnhead2014inference},  but that is beyond the scope of the present work.

The earliest formulated compartmental
models of epidemics consist of a small number of compartments, just
three in the standard Susceptible-Infected-Recovered (SIR) model. Modern
compartmental models often feature many more compartments,
each corresponding to some combination of disease state and other
variates. By increasing the number of compartments, the modeller can specify a more precise representation
of complex diseases and populations, such as multi-strain dynamics \citep{worden2017products},
subpopulations associated with, e.g., households or age-groups \citep{andrade2020evaluation},
and spatial information \citep{xia2004measles}. Modelling such features is considered a key challenge by
epidemiologists \citep{ball2015seven,funk2015nine,riley2015five,wikramaratna2015five}. 


However, the computational cost of fitting compartmental
models to data, in general, grows with the number of compartments and also, in the
cases of some methods, with the population size.  Exact likelihood-based inference is intractable in general and approximate inference typically either involves deleterious model simplifications or involves highly sophisticated algorithms which incur a substantial computational cost. Thus scaling-up inference to complex models is an important and open challenge -- this is the motivation for the present work.


Compartmental models come in various forms, some stochastic,
some deterministic; some in continuous time, some in discrete time;
some modelling finite populations, some motivated by large population
asymptotics. Deterministic, ODE-based compartmental models are very popular in practice and often motivated by the fact they can be obtained from finite-population stochastic models in the large population limit. As a very simple example, consider the continuous-time, stochastic version of the SEIR model, with fixed population size $n$ and  numbers of susceptible, exposed, infective and removed individuals
denoted $X_{t}^{(n)}\coloneqq[S_{t}^{(n)}\,E_t^{(n)}\,I_{t}^{(n)}\,R_{t}^{(n)}]^{\top}$.  Each susceptible individual becomes exposed at instantaneous
rate $\beta n^{-1}I_{t}^{(n)}$, each exposed individual becomes infective at rate $\rho$, each infective individual is
``removed'' at rate $\gamma$, and $(X_{t}^{(n)})_{t\geq0}$ is a jump-Markov
process.  General results concerning the convergence
of jump-Markov processes to the solutions of ODE's \citep{kurtz1970solutions,kurtz1971limit}
can be applied to show that, if $n^{-1}X_{0}^{(n)}\to x_{0}$ in probability, then for
any $T>0$ and $\delta>0$,
\begin{equation}
\lim_{n\to\infty}\mathbb{P}\left(\sup_{0\leq t\leq T}\|n^{-1}X_{t}^{(n)}-x_{t}\|>\delta\right)=0,\label{eq:kurtz_lln}
\end{equation}
where $(x_{t})_{t\geq0}$ , $x_{t}\equiv[s_{t}\;e_t\;i_{t}\;r_{t}]^{\top}$,
solves:
\begin{equation}
\dfrac{\mathrm{d}s_{t}}{\mathrm{d}t}=-\beta s_{t}i_{t},\qquad\dfrac{\mathrm{d}e_{t}}{\mathrm{d}t}=\beta s_{t}i_{t}-\rho e_{t},\qquad\dfrac{\mathrm{d}i_{t}}{\mathrm{d}t}=\rho e_{t}-\gamma i_{t},\qquad\dfrac{\mathrm{d}r_{t}}{\mathrm{d}t}=\gamma i_{t}.\label{eq:ode}
\end{equation}
It follows from $S_{0}^{(n)}+E_0^{(n)}+I_{0}^{(n)}+R_{0}^{(n)}=n$ together
with $n^{-1}X_{0}^{(n)}\to x_{0}$ and \eqref{eq:ode}, that $s_{t}+e_t+i_{t}+r_{t}=1$
for all $t\geq0$. In order to use this ODE to model a population
of size $n$, $x_{t}$ is scaled back up by a factor of $n$,
$x_{t}^{(n)}\equiv[s_{t}^{(n)}\;e_t^{(n)}\;i_{t}^{(n)}\;r_{t}^{(n)}]^{\top}\coloneqq n[s_{t}\;e_t\;i_t\;r_{t}]^{\top}$,
which satisfies the form of SEIR ODE usually encountered in practice:
\begin{equation}
\dfrac{\mathrm{d}s_{t}^{(n)}}{\mathrm{d}t}=-\beta s_{t}^{(n)}\frac{i_{t}^{(n)}}{n},\quad\dfrac{\mathrm{d}e^{(n)}_{t}}{\mathrm{d}t}=\beta s_{t}^{(n)}\frac{i_{t}^{(n)}}{n}-\rho e^{(n)}_{t}, \quad\dfrac{\mathrm{d}i_{t}^{(n)}}{\mathrm{d}t}=\rho e^{(n)}_{t}-\gamma i_{t}^{(n)},\quad\dfrac{\mathrm{d}r_{t}^{(n)}}{\mathrm{d}t}=\gamma i_{t}^{(n)}.\label{eq:ODE_rescaled}
\end{equation}

To relate $(X_{t}^{(n)})_{t\geq0}$ or $(x_{t}^{(n)})_{t\geq0}$ to
data, for example, error-prone measurements of the number of newly infective
individuals in given time periods, one usually postulates a probabilistic
observation model, and evaluation of the likelihood function for the
parameters $(\beta,\rho,\gamma)$ then involves marginalizing out $(X_{t}^{(n)})_{t\geq0}$
in the case of the finite population stochastic model, which is intractable, or numerical
approximation to $(x_{t}^{(n)})_{t\geq0}$ in the case of the ODE.

Note here that the only way that $x_{t}^{(n)}$ depends on $n$ is
through the scaling factor $x_{t}^{(n)}=n x_{t}$. This, along with
the lack of stochasticity, illustrates the simplicity but inflexibility
of the ODE approach to compartmental modelling. Indeed it has been recognized that ODE models
cannot capture important epidemiological phenomena such as fade-out,
extinction, lack of synchrony, or deviations from stable behaviour
\citep[Sec. 8]{roberts2015nine} and, somewhat more obviously, may under-represent uncertainty
\citep{king2015avoidable}.


To summarise the above, consider the following conceptual workflow:
\begin{enumerate}[label=ODE \arabic*. , wide=0.5em,  leftmargin=*, itemsep=-3pt,topsep=2pt]
\item specify a finite population, stochastic, continuous-time
compartmental model $(X_{t}^{(n)})_{t\geq0}$;
\item scale $X_{t}^{(n)}$ by $n^{-1}$ and take the large population
limit $n\to\infty$ to obtain $(x_{t})_{t\geq0}$;
\item re-scale $(x_{t})_{t\geq0}$ by $n$ to obtain $(x_{t}^{(n)})_{t\geq0},$
on the appropriate scale for a population of size $n$;
\item numerically approximate $(x_{t}^{(n)})_{t\geq0}$ and combine
with an observation model to evaluate the likelihood function.
\end{enumerate}
Of course in practice, someone can use the ODE
model \eqref{eq:ODE_rescaled} without knowing anything about steps
ODE 1.-3. We write out these steps in order to emphasize how the ODE
approach differs to the PAL methods proposed in the present work, where crucially the limit $n\to\infty$ is taken later in the conceptual workflow: 
\begin{enumerate}[label=PAL \arabic*. , wide=0.5em,  leftmargin=*, itemsep=-3pt,topsep=2pt]
\item specify a finite population, stochastic, discrete-time
compartmental model;
\item combine this model with an observation model to obtain discrete-time
filtering equations;
\item recursively approximate the filtering equations using Poisson
distributions, thus defining the PAL;
\item take the large population limit, $n\to\infty$, to establish
the consistency of the parameter estimator obtained by maximizing the PAL.
\end{enumerate}

The Latent Compartmental Model we work with is introduced in section \ref{sec:models}. It allows the probabilities of individuals transitioning between compartments to depend on the state of the population as a whole in a quite general way, as well as allowing for immigration and emigration, constant or random and dynamic population size. Due to the general form of this compartmental model, we can treat classical disease states, such as SEIR, as well as discrete covariates or subpopulations such as spatial locations or age-groups, in a single framework. Also in section \ref{sec:models}, we introduce two types of observation models: one for prevalence data, allowing for under-reporting, mis-reporting and spurious measurements; and one for incidence data, including incidence data which are aggregated in time.  In section \ref{sec:filtering} we introduce recursive Poisson approximations of filtering equations which lead us to PALs. The algorithms used to compute PALs involve only elementary linear algebra operations. Evaluating PALs up to a constant independent of parameter values as is sufficient for optimization, or evaluating ratios of PALs as arises in MCMC algorithms, has computational cost independent of population size.  In section \ref{sec:consistency} we state our main theoretical results concerning consistency of maximum PAL estimators in the large-population regime, and outline the main steps in the proof. As part of the proof we obtain novel results about asymptotically accurate filtering. Section \ref{sec:over-dispersion} discusses how over-dispersion can be handled by numerically integrating out latent variables using sequential Monte Carlo. In section \ref{sec:examples} we demonstrate various aspects of our methodology  and connections to our theory in the context of simulated and real data sets. Opportunities for future research are described in section   \ref{sec:future}.

\section{Connections to the literature}
\subsubsection*{Poisson process approximations}
Recursive approximation of filtering distributions using Poisson processes underlies the so-called
Probability Hypothesis Density (PHD) filter of \cite{mahler2003multitarget},
subsequently re-derived and generalized by \cite{singh2009filters,caron2011conditional}.
A specific but epidemiologically uninteresting (as we shall explain in section \ref{sec:lat_comp_model}) case of one model we consider in section  \ref{sec:case_I} coincides with a discrete-state version of the model considered in these works and the corresponding special case of our algorithm \ref{alg:x} would coincide with a discrete-state version of the PHD filter. The incidence data model we define in sections \ref{sec:obs_model_Z} and \ref{sec:obs_model_Z_agg} is however different, and particularly important for epidemiological data. Parameter estimation using the PHD filter in spatial multi-target models was suggested by \cite{singh2011} but without any rigorous justification and the authors are not aware of any theoretical results concerning parameter estimation consistency using the PHD filter. Approximate filtering for a limited class of epidemic models using multinomial rather than Poisson approximations was proposed by \cite{whiteley2021inference}, but without any consistency theory.
\subsubsection*{Inference algorithms for stochastic compartmental models}\label{sec:inference_methods}
Evaluating the likelihood
function for finite-population, stochastic compartmental models involves marginalizing out over the
set of all possible configurations of the population amongst the
compartments. The cost of this summation explodes with the number
of compartments and the population size. This has prompted the development of a variety
of simulation-based inference methods:  Data Augmentation
MCMC \citep{gibson1998estimating,o1999bayesian,o2010introduction,lekone2006statistical,fintzi2021linear,nguyen2021stochastic},
Approximate Bayesian Computation (ABC) \citep{toni2009approximate,mckinley2009inference}
and Sequential Monte Carlo (SMC) \citep{andrieu2010particle,ionides2011iterated,dukic2012tracking,koepke2016predictive,ju2021sequential}.
If one can simulate from the model, then in principle, one can apply the ABC and SMC methods. However, in practice there are usually algorithmic parameters to tune  and the computational cost of the simulation usually scales up with both the number of compartments and  population size, making these techniques very
computationally intensive in general. 



A functional central limit theorem associated with \eqref{eq:kurtz_lln}  due to \cite{kurtz1971limit}
gives rise to an SDE known as the Linear Noise Approximation (LNA), see e.g., \citep{fearnhead2014inference,komorowski2009bayesian}. Evaluating the Gaussian transition density of the LNA involves solving an ODE for its mean vector and covariance matrix. If combined with a linear-Gaussian observation model, the cost of a marginal likelihood evaluation scales with the  third power of the number of compartments in general.  Other varieties of SDE-based approximations
to finite-population stochastic compartmental models have been proposed
\citep{allen2017primer}, but their transition probabilities are usually not available in closed form and generally
costly simulation-based methods are relied upon to fit these models
to data \citep{roberts2001inference,cauchemez2008likelihood}.


\subsubsection*{Parameter estimation consistency results for compartmental models}
 The literature on consistency of parameter estimation in the large population limit is focused on specific instances of compartmental
models for which inferential calculations can be made in closed form, such as the continuous-time SIR model in which all infection and removal times are observed \citep{becker1993martingale},
\citep[Ch. 9]{andersson2012stochastic},  or  only the initial and final states
of the population are observed \citep[Ch. 10]{andersson2012stochastic}, estimating  the Malthusian parameter in an SEIR
model \citep{lindenstrand2013estimation}, or $R_{0}$ in an SIR model
 \citep{britton2010stochastic}. There appears
to be a lack of consistency results for likelihood-based estimators
for more general classes of compartmental models. Many stochastic compartmental models of epidemics are transient, in the sense that with probability one the entire population eventually ends up in one compartment and stays there, such as the R compartment in SIR. For this reason it seems that the asymptotic regime of a finite, fixed population size and increasingly long time horizon is not a fruitful regime in which to study consistency of parameter estimators for many epidemic models. One  unusual case is the Susceptible-Infective-Susceptible model, see \citep{gourieroux2021temporally} for an analysis in the regime where the time horizon tends to infinity, though for any finite population this epidemic will eventually go extinct.
\section{Models}\label{sec:models}
\subsection{Notation}
The set of natural numbers, including $0$, is denoted $\mathbb{N}_0$. The set of non-negative real numbers is denoted $\mathbb{R}_{\geq 0}$. For an integer $m\geq 1$, $[m]\coloneqq\{1,\ldots,m\}$. Matrices and vectors are denoted by bold upper-case and bold lower-case letters, respectively, e.g., $\+{A}$ and $\+{b}$, with non-bold upper-case and lower case used for their respective elements $A^{(i,j)}$, $b^{(i)}$. All vectors are column vectors unless stated otherwise. We use $\+1_m$ to denote the vector of $m$ $1$'s and $\+0_m$ to denote the vector of $m$ $0$'s. The indicator function is denoted $\mathbb{I}[\cdot]$. The element-wise product of matrices and vectors are denoted $\+A \circ \+B$ and $\+a \circ \+b$ respectively, the element-wise division of matrices and vectors are denoted $\+A \oslash \+B$ and $\+a \oslash \+b$ respectively, the outer product of vectors is denoted $\+a \otimes \+ b$. The logarithm $\log \+A$, factorial $\+A!$, and exponential $\exp(\+A)$ are taken element-wise. For $\+ x \in \mathbb{N}_0^m$ we define $\bs \eta(\+x) = [x^{(1)}/\+1_m^\top \+x \,\cdots\, x^{(m)}/\+1_m^\top \+x]^\top$ if $\+1_m^\top\+x>0$, i.e. $\bs \eta(\+x)$ normalizes $\+x$ to yield a probability vector;  and $\bs \eta(\+x) = \bs 0_m$ if $\+1_m^\top\+x=0$.


For $\+x\in \mathbb{N}_0^m$ and $\bs \lambda \in \mathbb{R}_{\geq 0}^m$ we write $\+x \sim \mathrm{Pois}(\bs \lambda)$ to denote that
the elements of $\+x$ are independent and element $x^{(i)}$ is Poisson distributed with parameter $\lambda^{(i)}$. We shall say that such a
 random vector $\+x$ has a ``vector-Poisson distribution''. For a probability vector $\bs
 \pi$  we write $\mathrm{Mult}(n,\bs\pi)$ for the associated multinomial distribution.
Similarly, for a random matrix $\+X \in \mathbb{N}_0^{m \times l}$ and a matrix $\bs \Lambda \in \mathbb{R}_{\geq 0}^{m \times l}$, we write $\+X\sim \mathrm{Pois}(\bs \Lambda) $ when the elements of $\+X$ are independent with $X^{(i,j)}$ being Poisson distributed with parameter $\Lambda^{(i,j)}$.
We call $\bs \lambda$ (resp. $\bs\Lambda$) the intensity vector (resp. matrix). For a length-$m$ vector $\+b$ with nonnegative elements, we call $\mathrm{supp}(\+b)\coloneqq \{i\in[m]:b^{(i)}>0\}$ the support of $\+b$. By convention, we take a sum over an empty set to be equal to $0$, i.e. a sum of $0$ terms. We write $\bs e_i$ for the vector of zeros except for a $1$ in the $i$th entry.
\subsection{Latent Compartmental Model} \label{sec:lat_comp_model}
The model we consider is defined by: $m$, the number of compartments; $n$ the expected initial population size; $\mathbb{P}_{0,n}$ an initial distribution on $\mathbb{N}_0^m$ such that $\E_{{\+x}_0\sim \mathbb{P}_{0,n}}[\+1_m^\top{\+x}_0] = n$,  e.g., $ \mathrm{Pois}(\bs \lambda_0)$ for some $\bs\lambda_0\in\mathbb{R}_{\geq0}^m$ such that  $\+1_m^\top \bs \lambda_0 =n$, or $\mathrm{Mult}(n,\bs\pi_0)$ for some length-$m$ probability vector $\bs\pi_0$; a sequence, $\{\bs \alpha_{t}\}_{t \geq 1}$ with $\bs \alpha_{t} \in \mathbb{R}_{\geq 0}^m$ for all $t \geq 1$, of immigration intensity vectors; a sequence, $\{\bs \delta_t\}_{t \geq 0}$ with $\bs \delta_t \in [0,1]^m$ for all $t \geq 0$; and for each $t \geq 0$  a mapping from length-$m$ probability vectors to size-$m \times m$ row-stochastic matrices, $\boldsymbol{\eta} \mapsto \mathbf{K}_{t, \boldsymbol{\eta}}$. 

The population at time $t \in \mathbb{N}_0$ is a set of a random number $n_t$ of random variables $\{\xi_{t}^{(1)}, \ldots, \xi_{t}^{(n_t)}\}$, each valued in $[m]$.  The counts of individuals in each of the $m$ compartments at time $t$ are
collected in $\mathbf{x}_{t}=[x_{t}^{(1)} \cdots x_{t}^{(m)}]^\top$, where $x_{t}^{(i)}=\sum_{j=1}^{n_t} \mathbb{I}[\xi_{t}^{(j)}=i].$ The population is initialised as a draw ${\+x}_0 \sim \mathbb{P}_{0,n}$. The members of the population are exchangeable, labelled by, e.g., a uniformly random assignment of indices $\{\xi_{0}^{(1)}, \ldots, \xi_{0}^{(n_0)}\}$ subject to $x_{0}^{(j)}:=\sum_{i=1}^{n_0} \mathbb{I}[\xi_{0}^{(i)}=j].$ For $t\geq 1$, given $\{\xi_{t-1}^{(1)}, \ldots, \xi_{t-1}^{(n_{t-1})}\}$, we obtain $n_t$ and  $\{\xi_{t}^{(1)}, \ldots, \xi_{t}^{(n_{t})}\}$ as follows.  For $i = 1,\dots n_{t-1}$, with probability $1-\delta_{t}^{(\xi_{t-1}^{(i)})}$  the individual $\xi_{t-1}^{(i)}$ emigrates from $[m]$  to a state $0\notin[m]$ from which it does not return.  The counts of remaining individuals are collected in the vector $\bar{\mathbf{x}}_{t-1}$,  where $\bar{x}_{t-1}^{(j)} :=\sum_{i=1}^{n_{t-1}} \mathbb{I}[\xi_{t-1}^{(i)}=j]\mathbb{I}[\phi^{(i)}_t=1]$ and $\phi_t^{(i)} \sim \mathrm{Bernoulli}(\delta_{t}^{(\xi_{t-1}^{(i)})})$. 

For each $i$ such that $\mathbb{I}[\phi^{(i)}_t=1]=1$, i.e. a remaining individual,  $\xi_{t}^{(i)}$ is then drawn from the $\xi_{t-1}^{(i)}$'th row of  $\mathbf{K}_{t, \boldsymbol{\eta}(\bar{\+ x}_{t-1})}$ and the resulting counts of individuals in the compartments $[m]$ are denoted  $\tilde{\+ x}_t$ where  $\tilde{ x}^{(j)}_t \coloneqq \sum_{i=1}^{n_{t-1}}\mathbb{I}[\phi_t^{(i)} = 1 ]\mathbb{I}[\xi_t^{(i)} = j ]$, if $\bar{\+x}_{t-1} = \bs 0_m$ then $\tilde{\+x}_t = \bs 0_m$.   Let $\mathbf{Z}_{t}$ be the $m \times m$
matrix with elements $Z_{t}^{(i, j)}:=\sum_{k=1}^{n_{t-1}} \mathbb{I}[\xi_{t-1}^{(k)}=i, \xi_{t}^{(k)}=j],$ which counts the individuals transitioning from
compartment $i$ at $t-1$ to compartment $j$ at time $t$.  New individuals then immigrate into the compartments $[m]$ according to a vector-Poisson distribution  $\hat{\+ x}_t \sim \mathrm{Pois}(\bs \alpha_t)$ and the resulting combined counts of individuals are $\+x_t \coloneqq \tilde{\+x}_t + \hat{\+x}_t$ with $n_t \coloneqq \+1_m^\top(\tilde{\+x}_t + \hat{\+x}_t)$. The population $\{\xi_{t}^{(1)}, \ldots, \xi_{t}^{(n_t)}\}$ is then obtained by uniformly random assignment of indices subject to $x_{t}^{(j)}:=\sum_{i=1}^{n_t} \mathbb{I}[\xi_{t}^{(i)}=j]$. Note that under this model, the processes $(\+x_t)_{t\geq0}$ and $(\+Z_t)_{t\geq1}$ are Markov chains, although we shall not need explicit expressions for their transition probabilities.

If the matrix $\mathbf{K}_{t, \boldsymbol{\eta}}$ were to have no dependence on $\boldsymbol{\eta}$, then the Latent Compartmental Model is a discrete-state version of the dynamic spatial Poisson-process model underlying the PHD filter \citep{mahler2003multitarget,singh2009filters,caron2011conditional}. However, for epidemiological modelling it is critical that  $\mathbf{K}_{t, \boldsymbol{\eta}}$ does depend on $\boldsymbol{\eta}$; for example in the case of SEIR as we shall now state, it is this dependence which models the mechanism of infection amongst the population. 
\subsubsection*{SEIR example}
As a very simple example of the Latent Compartmental Model consider the SEIR model:\begin{equation*}
S_{t+1} = S_{t} - B_{t},\quad E_{t+1} = E_{t} + B_{t} - C_{t},\quad I_{t+1} = I_{t} + C_{t} - D_{t}, \qquad R_{t+1} = R_{t} + D_{t}.
\end{equation*}
With conditionally independent, binomially distributed random variables:
\begin{equation*}
B_t \sim \mathrm{Bin}(S_t, 1- e^{-h\beta\frac{I_t}{n_t}}), \quad C_t \sim \mathrm{Bin}(E_t, 1- e^{-h\rho}),\quad D_t \sim \mathrm{Bin}(I_t, 1- e^{-h\gamma}),
\end{equation*}
where $h>0$ is a time-step size.
With no immigration or emigration, this model is cast as an instance of the model from section \ref{sec:lat_comp_model} by taking $m=4$, identifying $\+{x}_t\equiv[S_t\;E_t\;I_t\;R_t]^\top$ and:
\begin{equation}\label{eq:SEIRexample}
\mathbf{K}_{t, \boldsymbol{\eta}}=\left[\begin{array}{cccc}
e^{-h \beta \eta^{(3)}} & 1-e^{-h \beta \eta^{(3)}} & 0 &0\\
0 &  e^{-h \rho} & 1-e^{-h \rho} &0\\
0 &  0&e^{-h \gamma} & 1-e^{-h \gamma}\\
0 & 0  &0 &1
\end{array}\right].
\end{equation}


 

\subsection{Observation Models}
\subsubsection{Prevalence data}\label{sec:obs_model_x}
Epidemiological \emph{prevalence data} pertain to the overall levels of susceptibility, exposure and infectivity in the population. In the context of the Latent Compartmental Model, such data are related to the counts of individuals in each compartment at given points in time, i.e., $(\+{x}_t)_{t\geq1}$.  The observation at time $t\geq1$ is an $m$-length vector $\+y_t$ distributed as follows. With a vector $\+q_t \in [0,1]^m$, for each $j\in[m]$ each individual in compartment $j$ is independently detected with probability $q_t^{(j)}$, and the counts of detected individuals are collected  in a vector $\bar{\+y}_t$, i.e.,
\begin{equation}\label{eq:obsx}
\bar{y}_t^{(i)} \sim \mathrm{Bin}(x_t^{(i)}, q_t^{(i)}),\quad i \in[m].
\end{equation}
With $\+G_t$ a row-stochastic matrix of size $m\times m$, each individual detected in compartment $j$ is independently reported in compartment $k$ with probability $G_t^{(j,k)}$. The counts of these reported individuals are collected in an $m$-length vector $\tilde{\+y}_t$. The off-diagonal elements of the matrix $\+G_t$ can be interpreted as the probabilities of mis-reporting between compartments.
Then the observation $\+y_t$ is given by:
\begin{equation*}
\+y_t = \tilde{\+y}_t + \hat{\+y}_t,
\end{equation*}
where independently $\hat{\+y}_t \sim \mathrm{Pois}(\bs \kappa_t)$ for $\bs \kappa_t \in \mathbb{R}^m_{\geq 0}$, which can be interpreted as additive error counts. In epidemiological data usually only individuals associated with some subset of compartments are detected, and only at certain times. If individuals in say compartment $i$ are not observed at time $t$, then for inference we will set $y_t^{(i)}=0$ and $q_t^{(i)}=0$.

More detailed interpretation of this observation model, in terms of e.g. epidemiological testing of the population, probability of false positives, etc., will be specific to the context in which the Latent Compartmental Model is applied. We provide discussion of this point illustrated by example in section \ref{sec:interpretation} of the supplementary material.
\subsubsection{Incidence data}\label{sec:obs_model_Z}
Epidemiological measurements often involve data related to the number of newly infective or recovered individuals over given time periods -- known as \emph{incidence} data. In order to model such data, generalized to allow for transitions from any compartment to any compartment, we consider  an observation at time $t\geq1$ which is an $m\times m$ matrix $\+Y_t$. The elements of $\+Y_t$  are conditionally independent given $\+Z_t$, and with a matrix $\+Q_t\in[0,1]^{m\times m}$,   
\begin{equation}\label{obs_mod_Y}
Y_t^{(i,j)} \sim \mathrm{Bin}(Z_t^{(i,j)}, Q_t^{(i,j)}),\quad (i,j)\in[m]\times[m].
\end{equation}
Similarly to the case of prevalence data,  if  $Y_t^{(i,j)}$ are missing,  then  for inference we set $Y_t^{(i,j)}=0$ and $Q_t^{(i,j)}=0$. One could extend this model to incorporate mis-reporting and/or additive error counts in a similar manner to in section \ref{sec:obs_model_x}, but for simplicity of presentation we do not do so.

In the context of the SEIR model, for example, the variable $Y_t^{(2,3)}$ models the number of individuals which are newly infective at time $t$, i.e. the count of the number of individuals which have transitioned $E\to I$ from time $t-1$ to $t$, subject to random under-reporting parameterized by $Q_t^{(i,j)}$. 
\subsubsection{Aggregated incidence data}\label{sec:obs_model_Z_agg}
In some situations it is desirable to model observations as in section \ref{sec:obs_model_Z}, but with transitions of individuals between compartments occurring on a finer time-scale than observations. For example, consider the SEIR model and suppose each discrete time step corresponds to one week. Then the model in \eqref{eq:SEIRexample} assigns zero probability to a transition $S\to I$ in one week: in order to transition between $S\to I$, an individual must transit $S\to E$ and then $E\to I$, but at least two discrete time steps are needed for that to occur with positive probability. Similarly, transitions $E\to R$ in one week happen with zero probability. To model incidence data as in section \ref{sec:obs_model_Z} but allowing for these sort of multi-step transitions between observation times, we introduce a sequence of increasing integer observation times $(\tau_r)_{r\geq 1}\subset\mathbb{N}_0$ where $\tau_0\coloneqq0$. We then define $\bar{\+Y}_r\coloneqq \sum_{t=\tau_{r-1}+1}^{\tau_r} \+Y_t$, where $(\+Y_t)_{t\geq1}$ are distributed as per section \ref{sec:obs_model_Z}. This model coincides with the model from that section in the case that $\tau_k=k$, we present these two models separately in order to help present a step-by-step explanation in section \ref{sec:filtering} of the corresponding filtering recursions.

In the context of the SEIR model,  $\bar{Y}_r^{(2,3)}$  models the total number of individuals which have become infective between times $\tau_{r-1}$ and $\tau_r$,  subject to random under-reporting. If $\tau_r -\tau_{r-1}\geq2$, this allows for two-step transitions of the form $S\to E\to I$ or $E\to I\to R$ to occur with positive probability between observations times. 
\section{Filtering recursions and Poisson Approximate Likelihoods}\label{sec:filtering}
Our next objective is to state and explain the filtering recursions which are used to compute PALs. In section \ref{sec:case_I} we give the filtering recursion and PAL for the Latent Compartmental Model combined with the prevalence data model from section  \ref{sec:obs_model_x}, we refer to this combination as case (I). In section  \ref{sec:case_II} we give filtering recursions for a simplified case of the Latent Compartmental Model in which $n_0=n$ $a.s.$ for $n \in \mathbb{N}$, $\bs\delta_t=\+1_m$, and $\bs\alpha_{t}=\+0_m$ for all $t$, i.e. no emigration or immigration, combined with the incidence data model from sections \ref{sec:obs_model_Z} and \ref{sec:obs_model_Z_agg}. We refer to this as case (II). We discuss the filtering recursions in case (II) with $\bs\delta_t=\+1_m$ and $\bs\alpha_{t}=\+0_m$ only for ease of exposition. By expanding on the derivations we give in the following sections, the reader could obtain without great difficulty the filtering recursions for case (II) in the full generality of the Latent Compartmental Model and in section \ref{sec:measles} we consider an example involving immigration, emigration and incidence data as an illustration. 

Below we state a collection of lemmas which formalize the derivations of the steps in filtering recursions. The proofs, given in section  \ref{sec:proofs_filtering} of the supplementary materials, rely on moment generating functions and some techniques from the theory of Poisson processes \citep{kingman1992poisson}.
\subsection{Case (I)}\label{sec:case_I}
In this case, the observations $(\+y_t)_{t\geq1}$ follow the model from section \ref{sec:obs_model_x}. The pair of processes $(\+x_t)_{t\geq0}$ and $(\+y_t)_{t\geq1}$ constitutes a hidden Markov model: $(\+x_t)_{t\geq0}$ is a Markov chain, and $(\+y_{t})_{t\geq1}$ are conditionally independent given $(\+x_t)_{t\geq0}$ with the conditional distribution of $\+y_{t}$ given $(\+x_t)_{t\geq0}$ depending only on $\+x_t$. Therefore the filtering distributions $p(\mathbf{x}_t |\+y_{1:t})$, obey a two-step recursion, with steps canonically referred to as ``prediction'' and ``update'':
\begin{equation*} \label{eq:predup}
p(\+x_{t-1}|\+y_{1:t-1}) \overset{\mathrm{prediction}}{\longrightarrow} p(\+x_{t}| \+y_{1:t-1})  \overset{\mathrm{update}}{\longrightarrow} p(\+x_{t}|\+y_{1:t}),
\end{equation*}
where, for $t\geq1$,
\begin{align}
p(\+x_{t}| \+y_{1:t-1})  &= \sum_{\+x_{t-1}\in\mathbb{N}_0^m} p(\+x_t|\+x_{t-1}) p(\+x_{t-1} | \+y_{1:t-1}),\label{eq:prediction_x}\\
p(\+x_{t}| \+y_{1:t}) &=\frac{p(\+y_t | \+x_t)p(\+x_{t}| \+y_{1:t-1})}{p(\+y_t | \+y_{1:t-1})},\label{eq:update_x}\\
p(\+y_t | \+y_{1:t-1})&=\sum_{\+x_t\in\mathbb{N}_0^m} p(\+y_t | \+x_t)p(\+x_{t}| \+y_{1:t-1}),\label{eq:cond_like_x}
\end{align}
and here and below, by convention, conditioning on $\+y_{1:0}$ is understood to mean no conditioning, $p(\cdot|\+y_{1:0})\coloneqq p(\cdot)$. The marginal likelihood of the observations $\+y_1,\ldots,\+y_t$ can be written:
\begin{equation}
p(\+y_{1:t})=\prod_{s=1}^t p(\+y_s | \+y_{1:s-1}).\label{eq:marg_like_y}
\end{equation}
The general idea of the PAL is to obtain vector-Poisson distribution approximation to each of the terms $p(\+y_1)$ and $p(\+y_t | \+y_{1:t-1})$, $t\geq1$, computed via vector-Poisson approximations to each of the filtering distributions $p(\+x_{t} | \+y_{1:t-1})$ and $p(\+x_{t}| \+y_{1:t})$. 

\subsubsection*{Approximating the prediction step}
For time step $t=0$ we take a vector-Poisson approximation $\mathrm{Pois}(\bs \lambda_0)$ to the initial distribution $\mathbb{P}_{0,n}$ by setting $\bs \lambda_0\coloneqq \mathbb{E}_{\+x_0\sim\mathbb{P}_{0,n}}[\+x_0]$ and $\bar{\bs \lambda}_0\coloneqq \bs \lambda_0$.  For $t\geq 1$, suppose we have obtained $\bar{\bs\lambda}_{t-1}$ and so defined a vector-Poisson approximation $\mathrm{Pois}( \bar{\bs\lambda}_{t-1})$ to $p(\+x_{t-1} | \+y_{1:t-1})$. In order to derive a vector-Poisson approximation to $p(\+x_{t} | \+y_{1:t-1})$, we need to consider the operation \eqref{eq:prediction_x} in more detail, in accordance with the definition of the Latent Compartmental Model. We shall not need an explicit formula for the transition probabilities $p(\+x_t|\+x_{t-1})$, but rather work with the intermediate quantities  $\bar{\+x}_{t-1}, \tilde{\+x}_t, \hat{\+x}_t$ introduced in section \ref{sec:lat_comp_model}.

For $\bar{\+x} \in \mathbb{R}^m$ and a length-$m$ probability vector $\bs\eta$, let $M_t(\bar{\+x}, \bs\eta,\cdot)$ be the probability mass function of $(\+1_m^\top \+Z)^\top$ where the $i$th row of $\+Z \in \mathbb{N}_0^{m\times m}$ has distribution $\mathrm{Mult}(\bar{x}^{(i)}, \+K_{t, \bs \eta}^{(i, \cdot)})$. Then we have:
\begin{align}
p(\tilde{\+x}_t | \+y_{1:t-1}) &= \sum_{\bar{\+x}_{t-1}\in \mathbb{N}_0^m}p(\bar{\+x}_{t-1} | \+y_{1:t-1})p(\tilde{\+x}_t | \bar{\+x}_{t-1}) \\
&= \sum_{\bar{\+x}_{t-1}\in \mathbb{N}_0^m}p(\bar{\+x}_{t-1} | \+y_{1:t-1})M_t(\bar{\+x}_{t-1}, \bs \eta(\bar{\+x}_{t-1}), \tilde{\+x}_t), \label{eq:approx}
\end{align}
where $\bar{\+x}_{t-1}$ is related to $\+x_{t-1}$ by $\bar{x}_{t-1}^{(i)}\sim \mathrm{Bin}(x^{(i)}_{t-1}, \delta_t^{(i)})$. The summation in \eqref{eq:approx} is too expensive to compute in general. To define an approximation which circumvents this issue, in \eqref{eq:approx}   we replace $p(\+x_{t-1} | \+y_{1:t-1})$ by its approximation $\mathrm{Pois}(\bar{\bs \lambda}_{t-1}) $, and  replace $\bs \eta(\bar{\+x}_{t-1})$ by $\bs \eta(\E[\bar{\+x}_{t-1}])$ where this expectation is under $\bar{\+x}_{t-1} \sim \mathrm{Pois}(\bar{\bs \lambda}_{t-1}\circ \bs \delta_t)$. Lemma \ref{lem:xpred} explains the rationale for making the  vector-Poisson approximation 
$$p(\tilde{\+x}_t | \+y_{1:t-1})\approx\mathrm{Pois}\left((\bar{\bs \lambda}_{t-1} \circ \bs \delta_t)^\top \+ K_{t, \bs \eta(\bar{\bs \lambda}_{t-1}\circ \bs \delta_t)} \right).$$
\begin{lemma}\label{lem:xpred}
Suppose that ${\+x \sim \mathrm{Pois}(\bs \lambda)}$ for ${\bs \lambda \in \mathbb{R}^m_{\geq 0}}$ and  ${\bar x^{(i)} \sim \mathrm{Bin}(x^{(i)}, \delta^{(i)})}$ for $\bs \delta \in [0,1]^m$. Then ${\bar{\+ x} \sim \mathrm{Pois}(\bs \lambda \circ \bs \delta).}$ Furthermore, if $\mu(\cdot)$ is the probability mass function associated with $\mathrm{Pois}(\bs \lambda \circ \bs \delta)$ and $\E_\mu \left[ \cdot\right]$ is the expected value under $\mu$, then ${\sum_{\bar{\+x}\in \mathbb{N}_{0}^m} \mu(\bar{\+x})M_t(\bar{\+x}, \bs \eta ( \E_\mu \left[  \bar{\+ x} \right]), \cdot )}$ is the probability mass function associated with
$\mathrm{Pois}\left((\bs \lambda \circ \bs \delta)^\top \+ K_{t, \bs \eta(\bs \lambda \circ \bs \delta)} \right)$.
\end{lemma}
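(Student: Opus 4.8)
The plan is to assemble the statement from two standard facts of the Poisson calculus --- thinning and multinomial splitting --- together with one identity tailored to the normalisation map $\bs\eta$.

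First I would dispatch the thinning claim $\bar{\+x}\sim\mathrm{Pois}(\bs\lambda\circ\bs\delta)$ coordinatewise. Since $\+x\sim\mathrm{Pois}(\bs\lambda)$ has independent entries and $\bar x^{(i)}\mid x^{(i)}\sim\mathrm{Bin}(x^{(i)},\delta^{(i)})$ independently across $i$, it is enough to treat one coordinate. Conditioning on $x^{(i)}$ and composing the binomial and Poisson probability generating functions gives $\E[s^{\bar x^{(i)}}]=\exp(\lambda^{(i)}\delta^{(i)}(s-1))$, which identifies $\bar x^{(i)}\sim\mathrm{Pois}(\lambda^{(i)}\delta^{(i)})$; independence across coordinates is inherited, yielding $\bar{\+x}\sim\mathrm{Pois}(\bs\lambda\circ\bs\delta)$.

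Writing $\bs\nu\coloneqq\bs\lambda\circ\bs\delta$ and $N\coloneqq\+1_m^\top\bar{\+x}$, the crux of the second assertion is the identity $\E_\mu[\bs\eta(\bar{\+x})]=\bs\eta(\bs\nu)$. I would emphasise that this is \emph{not} a generic ``expectation of a nonlinear map'' statement --- it holds precisely because of the conditional structure of independent Poissons. Conditional on $\{N=n\}$ with $n\geq1$, the vector $\bar{\+x}$ is $\mathrm{Mult}(n,\bs\eta(\bs\nu))$, so $\E_\mu[\bar x^{(i)}\mid N=n]=n\,\eta^{(i)}(\bs\nu)$ and therefore $\E_\mu[\eta^{(i)}(\bar{\+x})\mid N=n]=\eta^{(i)}(\bs\nu)$, independently of $n$; the tower property then removes the conditioning. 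I expect this to be the main obstacle, and the delicate point is the event $\{N=0\}$, on which $\bs\eta$ is a $0/0$ ratio: one must either restrict the expectation to $\{N>0\}$ or adopt an explicit convention for $\bs\eta(\+0_m)$ and check it leaves the identity intact, the perturbing mass $\mathbb{P}(N=0)$ being exponentially small in the expected population size $\+1_m^\top\bs\nu$.

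With $\bs\eta^\star\coloneqq\E_\mu[\bs\eta(\bar{\+x})]=\bs\eta(\bs\nu)$ now a fixed probability vector, the kernel $\+K_{t,\bs\eta^\star}$ is deterministic, and $M_t(\bar{\+x},\bs\eta^\star,\cdot)$ simply describes routing the $\bar x^{(i)}$ through the fixed row-probabilities $\+K_{t,\bs\eta^\star}^{(i,\cdot)}$. I would then invoke the multinomial-splitting property of the Poisson: since the $\bar x^{(i)}\sim\mathrm{Pois}(\nu^{(i)})$ are independent, the routed counts $Z^{(i,j)}$ are mutually independent with $Z^{(i,j)}\sim\mathrm{Pois}(\nu^{(i)}K_{t,\bs\eta^\star}^{(i,j)})$. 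Each column sum $(\+1_m^\top\+Z)^{(j)}=\sum_i Z^{(i,j)}$ is then a sum over a disjoint family of independent Poissons, hence Poisson with parameter $(\bs\nu^\top\+K_{t,\bs\eta^\star})^{(j)}$, and distinct column sums are independent. Assembling the coordinates gives $(\+1_m^\top\+Z)^\top\sim\mathrm{Pois}(\bs\nu^\top\+K_{t,\bs\eta(\bs\nu)})$, which is exactly the mixture in the statement, completing the argument.
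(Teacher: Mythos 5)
Your proof is correct, and it follows a genuinely different route from the paper's. Both halves of the paper's proof are brute-force computations: the thinning claim is proved by writing the joint pmf of $(\+x,\bar{\+x})$ and summing out $\+x$, and the second claim by computing the moment generating function of $\+1_m^\top\+Z$, conditioning on $\bar{\+x}$, inserting the multinomial m.g.f.\ row by row, and resumming against the Poisson weights until the m.g.f.\ of $\mathrm{Pois}\left((\bs\lambda\circ\bs\delta)^\top\+K_{t,\bs\eta(\bs\lambda\circ\bs\delta)}\right)$ appears. You replace the first computation with a p.g.f.\ composition and the second with the Poisson colouring/splitting theorem, under which the routed counts $Z^{(i,j)}$ are mutually independent Poissons and the column sums, being sums over disjoint independent families, are independent Poissons with the stated means; this is shorter, more structural, and rests on essentially the fact from Kingman that the paper itself cites in the proof of Lemma \ref{lem:xupdate}. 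The more substantive difference is your explicit proof that $\E_\mu[\bs\eta(\bar{\+x})]=\bs\eta(\bs\lambda\circ\bs\delta)$ via the conditional multinomial representation of a Poisson vector given its total: the paper never establishes this identity at all — its computation simply uses the kernel $\+K_{t,\bs\eta(\bs\lambda\circ\bs\delta)}$ from the outset, silently identifying $\E_\mu[\bs\eta(\bar{\+x})]$ with $\bs\eta(\bs\lambda\circ\bs\delta)$ — so your argument supplies a step the paper leaves implicit. On the $\{\+1_m^\top\bar{\+x}=0\}$ caveat you raise: you are right that $\bs\eta$ is undefined there and that the paper ignores this, but of your two remedies only conditioning on $\{\+1_m^\top\bar{\+x}>0\}$ delivers the exact identity; under a generic convention for $\bs\eta(\+0_m)$ the expectation differs from $\bs\eta(\bs\lambda\circ\bs\delta)$ by an exponentially small but nonzero amount, which would perturb the kernel argument and make the lemma's stated intensity only approximately correct. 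Since the lemma as written needs the exact identity, the conditional-expectation reading (or a convention rendering the identity exact) is the one to adopt; with that understanding your proof is complete and, on this point, more careful than the paper's.
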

\noindent The proof is given in section \ref{sec:proofs_filtering} of the supplementary material.  As per the definition of the Latent Compartmental Model, $\+x_t$ is obtained by summing $\tilde{\+x}_t$ with $\hat{\+x}_t$ where $\hat{\+x}_t\sim\mathrm{Pois}(\bs\alpha_t)$. Since the sum of independent Poisson random variables is also Poisson with intensity given by the sum of the intensities, we then take the approximation $$p(\+x_{t} | \+y_{1:t-1})\approx\mathrm{Pois}(\bs\lambda_t),\quad \text{with}\quad\bs\lambda_t\coloneqq (\bar{\bs \lambda}_{t-1} \circ \bs \delta_t)^\top \+ K_{t, \bs \eta(\bar{\bs \lambda}_{t-1}\circ \bs \delta_t)} +\bs\alpha_t.$$
\subsubsection*{Approximating the update step}
In order to obtain a vector-Poisson approximation to $p(\+x_{t} | \+y_{1:t})$ we substitute $\mathrm{Pois}(\bs \lambda_{t})$ in place of $p(\+x_{t} | \+y_{1:t-1})$ in \eqref{eq:update_x}, which can be viewed as an application of Bayes' rule, and we shall define $\bar{\bs\lambda}_t$ to be the mean vector of the resulting distribution.  Lemma \ref{lem:xupdate} can be applied to calculate $\bar{\bs\lambda}_t$ in accordance with this recipe, leading us to:
\begin{equation*}
p(\+x_{t} | \+y_{1:t})\approx\mathrm{Pois}(\bar{\bs\lambda}_t),\quad\bar{\bs\lambda}_t \coloneqq  [\+ 1_m - \+q_t +  (\{\+ y_t^\top \oslash[(\+ q_t \circ \bs \lambda_t)^\top \+ G_t + \bs \kappa_t^\top ] \}[(\+ 1_m \otimes \+ q_t)\circ \+ G_t^\top] )^\top ]\circ \bs \lambda_t,
\end{equation*} 
Lemma \ref{lem:xupdate}  also tells us how to  obtain a vector-Poisson approximation to $p(\+y_{t} | \+y_{1:t-1})$.

\begin{lemma}\label{lem:xupdate}
Suppose that $\+x \sim \mathrm{Pois}(\bs \lambda)$ for given $\bs \lambda \in \mathbb{R}_{\geq 0}^m$ and let $\bar{\+ y}$ be a vector with conditionally independent elements distributed $\bar{y}^{(i)} \sim \mathrm{Bin}(x^{(i)},q^{(i)})$ for given $\+q \in [0,1]^m$. For $\+G$ a row-stochastic $m\times m$ matrix and $\+M$ an $m\times m$ matrix with rows distributed $\+ M^{(i, \cdot)} \sim \mathrm{Mult}(\bar{y}^{(i)}, \+ G^{(i, \cdot)})$,  let $\tilde{\+y} \coloneqq \sum_{i=1}^m \+ M^{(i, \cdot)}$ and  $\+ y \coloneqq  \tilde{\+y}+ \hat{\+y}$ where  $\hat{\+ y} \sim \mathrm{Pois}(\bs \kappa)$ for a given $\bs \kappa \in \mathbb{R}_{\geq 0}^m$. Then:
\begin{equation} \label{eq:condexp}
\E\left[\+x |\+y \right] = [\+ 1_m - \+q +  (\{\+ y^\top\oslash[(\+ q \circ \bs \lambda)^\top \+ G + \bs \kappa^\top ] \}[(\+ 1_m \otimes \+ q)\circ \+ G^\top] )^\top ]\circ \bs \lambda.
\end{equation}
and $\+ y\sim \mathrm{Pois}([(\bs \lambda \circ \+ q)^\top \+G]^\top + \bs \kappa)$, i.e., 
\begin{equation*}
\log p(\+ y) =  -[(\bs \lambda_t\circ \+ q)^\top \+G + \bs \kappa^\top]\+1_m + \+y^\top \log([(\bs \lambda\circ \+ q)^\top \+G]^\top + \bs \kappa) - \+1_m^\top\log(\+y!),
\end{equation*}
with the convention $0\log 0 \coloneqq 0$.
\end{lemma}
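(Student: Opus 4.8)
The plan is to reduce everything to three elementary properties of the Poisson distribution together with one fact about conditioning: \emph{thinning} (a binomially sub-sampled Poisson count is again Poisson), \emph{colouring/splitting} (a $\mathrm{Pois}(\mu)$ count partitioned multinomially into categories yields \emph{independent} Poisson counts, with means $\mu$ times the respective category probabilities), \emph{superposition} (a sum of independent Poissons is Poisson with the summed intensity), and the fact that the conditional law of a collection of independent Poissons given their sum is multinomial with probabilities proportional to their means. These suffice for both the marginal law of $\+y$ and the conditional expectation.

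First I would disaggregate the mechanism at the level of compartments. For each $i$, split $x^{(i)}\sim\mathrm{Pois}(\lambda^{(i)})$ into an ``undetected'' count $u^{(i)}$ and, for each $k$, the count $M^{(i,k)}$ of individuals originating in $i$, detected, and reported in $k$, with category probabilities $1-q^{(i)}$ and $q^{(i)}G^{(i,k)}$. By thinning and colouring, the whole array $\{u^{(i)}\}_i\cup\{M^{(i,k)}\}_{i,k}$ consists of mutually independent Poisson variables, with $u^{(i)}\sim\mathrm{Pois}(\lambda^{(i)}(1-q^{(i)}))$ and $M^{(i,k)}\sim\mathrm{Pois}(\lambda^{(i)}q^{(i)}G^{(i,k)})$; independence across $i$ is inherited from independence of the $x^{(i)}$. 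Here $\tilde y^{(k)}=\sum_i M^{(i,k)}$ and $x^{(i)}=u^{(i)}+\sum_k M^{(i,k)}$. The marginal law of $\+y$ is then immediate: by superposition $\tilde y^{(k)}\sim\mathrm{Pois}([(\bs\lambda\circ\+q)^\top\+G]^{(k)})$ independently across $k$, and adding the independent $\hat y^{(k)}\sim\mathrm{Pois}(\kappa^{(k)})$ gives $y^{(k)}\sim\mathrm{Pois}(\mu^{(k)})$ with $\mu^{(k)}=[(\bs\lambda\circ\+q)^\top\+G]^{(k)}+\kappa^{(k)}$, which is exactly the stated intensity $[(\bs\lambda\circ\+q)^\top\+G]^\top+\bs\kappa$; the displayed expression for $\log p(\+y)$ is then the vector-Poisson log-pmf.

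For the conditional expectation I would exploit linearity together with the independence structure. Since $x^{(i)}=u^{(i)}+\sum_k M^{(i,k)}$ and $u^{(i)}$ is independent of $\+y$, we have $\E[u^{(i)}\mid\+y]=\lambda^{(i)}(1-q^{(i)})$. The key observation is that $y^{(k)}=\sum_i M^{(i,k)}+\hat y^{(k)}$ depends only on the ``column-$k$'' latent variables, and distinct columns involve disjoint independent variables; hence conditioning on the full vector $\+y$ factorises across $k$, so $\E[M^{(i,k)}\mid\+y]=\E[M^{(i,k)}\mid y^{(k)}]$. As $y^{(k)}$ is a sum of independent Poissons, the conditional law of its summands is multinomial, yielding $\E[M^{(i,k)}\mid y^{(k)}]=y^{(k)}\lambda^{(i)}q^{(i)}G^{(i,k)}/\mu^{(k)}$. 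Summing over $k$ gives $\E[x^{(i)}\mid\+y]=\lambda^{(i)}[\,1-q^{(i)}+q^{(i)}\sum_k y^{(k)}G^{(i,k)}/\mu^{(k)}\,]$.

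The final step is to recognise that this scalar identity is precisely the right-hand side of \eqref{eq:condexp}: the row vector $\+y^\top\oslash[(\+q\circ\bs\lambda)^\top\+G+\bs\kappa^\top]$ has $k$th entry $y^{(k)}/\mu^{(k)}$, and right-multiplying by $(\+1_m\otimes\+q)\circ\+G^\top$, whose $(k,i)$ entry is $q^{(i)}G^{(i,k)}$, produces $i$th entry $q^{(i)}\sum_k y^{(k)}G^{(i,k)}/\mu^{(k)}$; adding $\+1_m-\+q$ and taking the Hadamard product with $\bs\lambda$ then matches the display entrywise. I expect the main obstacle to be organisational rather than conceptual: establishing the joint independence of the disaggregated Poisson array carefully enough that the column-wise factorisation of the conditioning is rigorous, and then aligning the indexed sums with the compact Hadamard/outer-product notation without transposition errors.
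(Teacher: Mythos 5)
Your proposal is correct, and its core — the disaggregation of each $x^{(i)}$ into an undetected count $u^{(i)}$ and reported counts $M^{(i,k)}$, all mutually independent Poissons, followed by column-wise conditioning via the Poisson-given-sum-is-multinomial fact — is exactly the mechanism the paper uses for the conditional expectation: the paper's $(m+1)\times m$ array $\+\Xi$ (rows $1,\dots,m$ being $\+M$, row $m+1$ being $\hat{\+y}$) is your array $\{M^{(i,k)}\}\cup\{\hat y^{(k)}\}$, and its decomposition $\+x=\bar{\+y}+\breve{\+x}$ is your $x^{(i)}=u^{(i)}+\sum_k M^{(i,k)}$. Where you genuinely diverge is the marginal law of $\+y$: the paper derives $\tilde{\+y}\sim\mathrm{Pois}((\bs\lambda\circ\+q)^\top\+G)$ by an explicit moment-generating-function computation (averaging the multinomial m.g.f.\ over the Poisson law of $\bar{\+y}$), whereas you obtain it for free by superposition from the same colouring decomposition that drives the conditional-expectation argument. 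This buys you a more unified and economical proof — one decomposition serves both claims — at the cost of leaning on the colouring theorem as a black box, while the paper's m.g.f.\ route is self-contained. Two small points you should still tidy up: your step $\E[M^{(i,k)}\mid\+y]=\E[M^{(i,k)}\mid y^{(k)}]$ is justified exactly as you say (the pair $(M^{(i,k)},y^{(k)})$ is independent of the other columns' variables), and the paper leaves this implicit, so making it explicit is an improvement; and you should note the degenerate case $\mu^{(k)}=0$, where $y^{(k)}=0$ and $M^{(i,k)}=0$ almost surely, so the $k$th term is set to zero in accordance with the paper's $0/0\coloneqq 0$ convention — the paper handles this explicitly and your multinomial-conditioning step tacitly assumes $\mu^{(k)}>0$.
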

\noindent The proof is given in section \ref{sec:proofs_filtering} of the supplementary material.
\subsubsection*{Computing the PAL}
Gathering together the approximations discussed above we arrive at the following algorithm.
\begin{algorithm}[H]
\caption{Filtering for case (I)}\label{alg:x}
\begin{algorithmic}[1]
  \Statex {\bf initialize:} $\bar{\bs \lambda}_{0} \leftarrow {\bs \lambda}_0$
  \State {\bf for}  $t \geq 1$:
  \State\quad  $ \bs \lambda_{t} \leftarrow [(\bar{\bs {\lambda}}_{t-1}\circ \bs{\delta}_t)^\top\+{K}_{t, \bs{\eta}( \bar{\bs \lambda}_{t-1}\circ \bs{\delta}_t)}]^\top + \bs{\alpha}_t $
   \State \quad$\bar{ \bs \lambda}_{t} \leftarrow [\+ 1_m - \+q_t +  (\{\+ y_t^\top \oslash[(\+ q_t \circ \bs \lambda_t)^\top \+ G_t + \bs \kappa_t^\top ] \}[(\+ 1_m \otimes \+ q_t)\circ \+ G_t^\top] )^\top ]\circ \bs \lambda_t$
   \State \quad $\bs\mu_t\leftarrow [(\bs \lambda_t\circ \+ q_t)^\top \+G_t]^\top + \bs \kappa_t$
  \State \quad$\ell(\+y_t | \+y_{1:t-1}) \leftarrow -\bs\mu_t^\top\+1_m + \+y_t^\top \log(\bs\mu_t) - \+1_m^\top\log(\+y_t!)$
  \State {\bf end for}
\end{algorithmic}
\end{algorithm}
\noindent If, at line $3$ of algorithm \ref{alg:x}, we encounter $0/0$ in performing the element-wise division operation we set the vector element in question to $0$, which is in accordance with $p(\+x_{t} | \+y_{1:t})\approx\mathrm{Pois}(\bar{\bs\lambda}_t)$. At line 5 of algorithm \ref{alg:x} we apply the convention $0\log 0 \coloneqq 0$, in accordance with $p(\+y_{t} | \+y_{1:t-1})\approx\mathrm{Pois}(\bs\mu_t)$.

Mimicking \eqref{eq:marg_like_y}, the log PAL associated with algorithm \ref{alg:x} is:
\begin{equation}
\log p(\+y_{1:t})\approx \sum_{s=1}^t \ell(\+y_s | \+y_{1:s-1}),\label{eq:pal_case_I}
\end{equation}
It is important to note that the term $\+1_m^\top \log(\+y_t!)$ in $\ell(\+y_t | \+y_{1:t-1})$ calculated in algorithm \ref{alg:x} has no dependence on the ingredients of the model, i.e., $\+K_{t,\bs \eta }$, $\bs\kappa_t$, etc. and so in practice if one is computing the PAL in order to maximize it with respect to parameters of the model, or evaluate PAL ratios for different parameter values, the term $\+1_m^\top\log(\+y_t!)$ never needs to be computed.
\subsection{Case (II)} \label{sec:case_II}
In this case we consider the Latent Compartmental Model with $n_0=n$ with probability $1$, $\bs\delta_t=\+1_m$ and $\bs\alpha_{t}=\+0_m$ for all $t$, i.e. no emigration or immigration, and with the observations $(\bar{\+Y}_r)_{r\geq1}$ following the model from section \ref{sec:obs_model_Z_agg}. For ease of exposition we start with the special case that $(\tau_r)_{r\geq1}=\mathbb{N}$, in which case $(\bar{\+Y}_r)_{r\geq1}\equiv(\+Y_t)_{t\geq1}$ and the model from section \ref{sec:obs_model_Z_agg} reduces to that from section \ref{sec:obs_model_Z}.

To derive the filtering recursions we follow a similar programme to case (I), starting from the fact that the pair of processes $(\+Z_t)_{t\geq1}$ and $(\+Y_t)_{t\geq1}$ constitutes a hidden Markov model, and approximating the following prediction and update operations:
\begin{equation*}
p(\+Z_{t-1}| {\+Y}_{1:t-1}) \overset{\mathrm{prediction}}{\longrightarrow}  p(\+Z_{t}| {\+Y}_{1:t-1}) \overset{\mathrm{update}}{\longrightarrow} p(\+Z_{t}| {\+Y}_{1:t}).
\end{equation*}
\subsubsection*{Approximating the prediction step when $(\tau_r)_{r\geq1}=\mathbb{N}$}
For $\+Z\in\mathbb{N}_0^{m\times m}$ and a length-$m$ probability vector $\bs \eta$, let $\bar{M}_t(\+Z, \bs \eta, \cdot)$ be the probability mass function of a random $m \times m$ matrix, say $\tilde{\+ Z}$, such that $\+1_m^\top \+ Z = (\tilde{\+ Z} \+1_m )^\top$ with probability $1$ and such that given the row sums $\tilde{\+Z}\+1_m = \+ x$, the rows of $\tilde{\+Z}$ are conditionally independent with the conditional distribution of the $i^{th}$ row being $\mathrm{Mult}(x^{(i)}, \+K_{t,\bs \eta}^{(i,\cdot)})$. By construction $\bar{M}_t(\+Z_{t-1}, \bs \eta(\+1_m^\top \+Z_{t-1}), \+Z_t)$ is equal to $p(\+Z_{t}|\+Z_{t-1})$ for case (II), hence
\begin{align}
p(\+Z_t | \+Y_{1:t-1}) &= \sum_{\+Z_{t-1}\in \mathbb{N}_0^{m\times m}}p(\+Z_{t-1} | \+Y_{1:t-1})p(\+Z_t | \+Z_{t-1}) \\
&= \sum_{\+Z_{t-1}\in \mathbb{N}_0^{m\times m}}p(\+Z_{t-1} | \+Y_{1:t-1})M_t(\+Z_{t-1}, \bs \eta(\+1_m^\top \+Z_{t-1}), \+Z_t). \label{eq:approx_Z}
\end{align}
Assuming we have already computed $\bar{\bs\Lambda}_{t-1}$ such that $p(\+Z_{t-1} | \+Y_{1:t-1}) \approx \mathrm{Pois}(\bar{\bs\Lambda}_{t-1})$, we substitute this approximation in to \eqref{eq:approx_Z} and replace $\bs \eta(\+1_m^\top \+Z_{t-1})$ by $\bs \eta(\E[\+1_m^\top \+Z_{t-1}])$ where this expectation is under $\+Z_{t-1} \sim \mathrm{Pois}(\bar{\bs\Lambda}_{t-1})$. Lemma \ref{lem:Zpred} explains the rationale for then making the approximation:
\begin{equation*}
p(\+Z_{t} | \+Y_{1:t-1}) \approx \mathrm{Pois}(\bs\Lambda_{t}),\quad
\+\Lambda_{t}  \coloneqq (\bar{\bs\lambda}_{t-1} \otimes \+1_m)\circ \+K_{t,\bs \eta(\bar{\bs\lambda}_{t-1})}, \quad\bar{ \bs\lambda}_{t-1}^\top \coloneqq \+ 1_m^\top \bar{\bs  \Lambda}_{t-1}.
\end{equation*}
\begin{lemma}\label{lem:Zpred}
If for a given $m \times m$ matrix $\+\Lambda$, $\bar \mu$ is the probability mass function associated with $\mathrm{Pois}(\+ \Lambda)$ and $\E_{\bar \mu}[ \+1_m^\top\+Z]$ is the expected value of $\+1^\top_m \+Z$ where $\+Z \sim \bar \mu$, then $\sum_{\+Z\in \mathbb{N}_{0}^{m\times m}}\bar \mu(\+Z) \bar M_t(\+Z,\bs \eta(\E_{\bar \mu}[\+1_m^\top\+Z]),\cdot)$ is the probability mass function associated with $\mathrm{Pois}((\bs \lambda \otimes \+1_m)\circ \+K_{t,\bs \eta(\bs \lambda)})$, where $\bs \lambda^\top \coloneqq \+ 1_m^\top \bs \Lambda$.
\end{lemma}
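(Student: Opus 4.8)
The plan is to reduce the claim to two facts that already underpin Lemma~\ref{lem:xpred}: the Poisson--multinomial splitting (thinning) property, and the mean-normalization identity for vector-Poisson distributions. The only genuinely new bookkeeping is that $\bar M_t$ retains the \emph{full} transition matrix $\tilde{\+Z}$, whereas $M_t$ returned only its column sums; so I would carry the same computation at the level of the whole matrix, tracking independence across both rows and columns.

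First I would identify the row sums of $\tilde{\+Z}$. Writing $\+x\coloneqq(\+1_m^\top\+Z)^\top$ for the column sums of $\+Z\sim\mathrm{Pois}(\bs\Lambda)$, each $x^{(j)}=\sum_i Z^{(i,j)}$ is a sum of independent Poisson variables, and distinct $j$ involve disjoint entries of $\+Z$; hence $\+x\sim\mathrm{Pois}(\bs\lambda)$ with independent components and $\bs\lambda=(\+1_m^\top\bs\Lambda)^\top$. By the defining constraint $\+1_m^\top\+Z=(\tilde{\+Z}\+1_m)^\top$ of $\bar M_t$, this $\+x$ is exactly the vector of row sums of $\tilde{\+Z}$. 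Next I would dispose of the kernel argument: since $\+1_m^\top\+Z=\+x^\top$ with $\+x\sim\mathrm{Pois}(\bs\lambda)$, the mean-normalization identity used for Lemma~\ref{lem:xpred} applies verbatim. Conditioning on the total count being positive and using that, given the total, $\+x$ is multinomial with cell probabilities $\bs\eta(\bs\lambda)$, one obtains $\E_{\bar\mu}[\bs\eta(\+1_m^\top\+Z)]=\bs\eta(\bs\lambda)$. Thus the fixed probability vector fed into the kernel in the mixture equals $\bs\eta(\bs\lambda)$, so $\+K_{t,\E_{\bar\mu}[\bs\eta(\+1_m^\top\+Z)]}=\+K_{t,\bs\eta(\bs\lambda)}$ and the kernel may be treated as a fixed, $\+Z$-independent matrix throughout.

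With this fixed kernel, the law of $\tilde{\+Z}$ produced by $\sum_{\+Z}\bar\mu(\+Z)\bar M_t(\+Z,\bs\eta(\bs\lambda),\cdot)$ is that of a matrix whose row sums are $\+x\sim\mathrm{Pois}(\bs\lambda)$ with independent components and whose $i$th row, given $x^{(i)}$, is an independent $\mathrm{Mult}(x^{(i)},\+K_{t,\bs\eta(\bs\lambda)}^{(i,\cdot)})$ split. Applying Poisson--multinomial splitting to each row separately shows that, within row $i$, the entries $\tilde Z^{(i,j)}$ are independent with $\tilde Z^{(i,j)}\sim\mathrm{Pois}(\lambda^{(i)}K_{t,\bs\eta(\bs\lambda)}^{(i,j)})$; independence of the $x^{(i)}$ across rows, together with the conditional independence of the splits, makes all $m^2$ entries mutually independent. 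Since $(\bs\lambda\otimes\+1_m)^{(i,j)}=\lambda^{(i)}$, the intensity matrix is $\lambda^{(i)}K_{t,\bs\eta(\bs\lambda)}^{(i,j)}$ entrywise, i.e. $\tilde{\+Z}\sim\mathrm{Pois}((\bs\lambda\otimes\+1_m)\circ\+K_{t,\bs\eta(\bs\lambda)})$, as claimed.

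I expect the only delicate point to be the mean-normalization step, because $\bs\eta$ is undefined on the zero vector and the event of a zero total population carries positive probability; this must be handled by the same convention (or positivity assumption) invoked in Lemma~\ref{lem:xpred}. Everything else is routine once the row-wise independence afforded by the Poisson matrix structure is made explicit.
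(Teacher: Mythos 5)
Your proof is correct, and its skeleton matches the paper's: fix the kernel by establishing $\E_{\bar\mu}[\bs\eta(\+1_m^\top\+Z)]=\bs\eta(\bs\lambda)$, identify the row sums of $\tilde{\+Z}$ with the column sums of $\+Z$, and then show that the row-wise multinomial split of those counts yields independent Poisson entries with intensities $\lambda^{(i)}K^{(i,j)}_{t,\bs\eta(\bs\lambda)}$. Where you genuinely differ is the engine for the last step: the paper carries out an explicit moment generating function computation (conditioning on $\+Z$, inserting the multinomial m.g.f.\ of each row, summing against the Poisson law of the column sums, and recognising the product form), whereas you invoke the standard Poisson--multinomial splitting (thinning) property row by row and then argue mutual independence of all $m^2$ entries from independence of the $x^{(i)}$ together with conditional independence of the splits. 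The two are mathematically equivalent --- the paper's m.g.f.\ calculation is in effect an in-situ proof of the splitting property --- and your appeal to thinning is consistent with the paper's own toolkit, since that is precisely the fact taken from \cite{kingman1992poisson} in the proof of Lemma \ref{lem:xupdate}. What your route buys is modularity and explicitness on two points the paper leaves silent: first, that the column sums of a matrix with independent Poisson entries are themselves independent Poisson with intensity vector $(\+1_m^\top\bs\Lambda)^\top$ (the paper uses this when it replaces the expectation over $\+Z$ by a sum against $\prod_i e^{-\lambda^{(i)}}(\lambda^{(i)})^{x^{(i)}}/x^{(i)}!$, without comment); second, that the identity $\E_{\bar\mu}[\bs\eta(\+1_m^\top\+Z)]=\bs\eta(\bs\lambda)$ requires a convention on the positive-probability event $\+1_m^\top\+Z\+1_m=0$, where $\bs\eta$ is undefined --- the paper's opening line simply asserts the identity, so your flag is a fair criticism of the original rather than a gap in your argument. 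What the paper's computation buys in exchange is self-containedness: it never needs the independence bookkeeping across rows, since the factorised m.g.f.\ delivers joint independence of all entries in one stroke.
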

\noindent The proof is given in section \ref{sec:proofs_filtering} of the supplementary material.
\subsubsection*{Approximating the update step when $(\tau_r)_{r\geq1}=\mathbb{N}$}
We now apply Bayes' rule to $\mathrm{Pois}(\bs\Lambda_{t})$ and shall define  $\bar{\bs\Lambda}_t$ to be the mean vector of the resulting distribution. Lemma \ref{lem:Zupdate} shows how to do this, leading to:
$$
p(\+Z_{t} | \+Y_{1:t}) \approx  \mathrm{Pois}(\bar{\bs\Lambda}_{t}),\quad  \bar{\bs\Lambda}_{t}\coloneqq \+Y_t+  \+ \Lambda_t \circ(\+1_m \otimes \+1_m - \+Q_t),
$$
\begin{lemma} \label{lem:Zupdate}
Suppose that $\+ Z \sim \mathrm{Pois}(\+ \Lambda)$ for some  $\+ \Lambda\in\mathbb{R}_{\geq0}^{m\times m}$, and that for some $\+Q\in\mathbb{R}_{\geq0}^{m\times m}$, given $\+Z$, $\+Y$ is a matrix with conditionally independent entries distributed: $y^{(i,j)} \sim \mathrm{Bin}(Z^{(i,j)},q^{(i,j)})$, then the conditional distribution of $\+Z$ given $\+Y$ is that of $\+ Y + \+Z^*$ where:
\begin{equation*}
\+Z^* \sim \mathrm{Pois}\left( \+ \Lambda \circ(\+1_m \otimes \+1_m - \+Q) \right), 
\end{equation*}
i.e.,
\begin{equation*}
\mathbb{E}[\+Z|\+Y] = \+Y+  \+ \Lambda \circ(\+1_m \otimes \+1_m - \+Q),
\end{equation*}
and $\+Y\sim\mathrm{Pois}(\+\Lambda \circ  \+ Q)$, i.e, 
\begin{equation*}
\log p(\+Y) =  \+1_m^\top (\+ \Lambda \circ \+ Q)\+1_m +\+1_m^\top [\+ Y \circ \log(\+\Lambda\circ \+ Q)]\+1_m - \+1_m^\top \log(\+ Y!)\+1_m,
\end{equation*}
with the convention $0\log 0 \coloneqq 0$.
\end{lemma}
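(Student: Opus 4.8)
\emph{Proof proposal.} The plan is to exploit the complete factorization of the joint law of $(\+Z,\+Y)$ across the $m^2$ index pairs $(i,j)$ and thereby reduce the whole statement to the classical scalar Poisson-thinning identity applied entrywise. Since $\+Z\sim\mathrm{Pois}(\+\Lambda)$ has independent entries $Z^{(i,j)}\sim\mathrm{Pois}(\Lambda^{(i,j)})$, and given $\+Z$ the entries $Y^{(i,j)}\sim\mathrm{Bin}(Z^{(i,j)},Q^{(i,j)})$ are conditionally independent, the pairs $\{(Z^{(i,j)},Y^{(i,j)})\}_{(i,j)\in[m]\times[m]}$ are mutually independent. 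Hence it suffices to establish the one-dimensional result and then reassemble.

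\textbf{Scalar step.} First I would prove: if $Z\sim\mathrm{Pois}(\lambda)$ and $Y\mid Z\sim\mathrm{Bin}(Z,q)$ with $\lambda\geq0$, $q\in[0,1]$, then $Y\sim\mathrm{Pois}(\lambda q)$ and, conditionally on $Y=y$, the variable $Z-y$ is $\mathrm{Pois}(\lambda(1-q))$-distributed independently of $y$. Both facts follow from a direct summation: writing $z=y+k$,
\[
p(Y=y)=\sum_{k\geq0}e^{-\lambda}\frac{\lambda^{y+k}}{(y+k)!}\binom{y+k}{y}q^{y}(1-q)^{k}
=e^{-\lambda}\frac{(\lambda q)^{y}}{y!}\sum_{k\geq0}\frac{(\lambda(1-q))^{k}}{k!}
=e^{-\lambda q}\frac{(\lambda q)^{y}}{y!},
\]
which is the $\mathrm{Pois}(\lambda q)$ pmf; dividing the joint pmf $p(Z=y+k)\,p(Y=y\mid Z=y+k)$ by this marginal collapses the binomial coefficient and the $\lambda^{y}$ factor, leaving $e^{-\lambda(1-q)}(\lambda(1-q))^{k}/k!$, i.e. $Z-y\mid Y=y\sim\mathrm{Pois}(\lambda(1-q))$ with no residual dependence on $y$.

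\textbf{Reassembly.} Applying the scalar step entrywise and using independence across $(i,j)$ gives that $\+Y$ has independent entries $Y^{(i,j)}\sim\mathrm{Pois}(\Lambda^{(i,j)}Q^{(i,j)})$, i.e. $\+Y\sim\mathrm{Pois}(\+\Lambda\circ\+Q)$; summing the scalar Poisson log-pmf $-\mu+y\log\mu-\log(y!)$ over all entries with $\mu^{(i,j)}=\Lambda^{(i,j)}Q^{(i,j)}$ yields the stated $\log p(\+Y)$, where the convention $0\log0\coloneqq0$ covers entries with $\Lambda^{(i,j)}Q^{(i,j)}=0$ (for which $Y^{(i,j)}=0$ almost surely). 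Likewise, setting $\+Z^{*}\coloneqq\+Z-\+Y$, the entrywise conditional independence shows that given $\+Y$ the matrix $\+Z^{*}$ has independent $\mathrm{Pois}(\Lambda^{(i,j)}(1-Q^{(i,j)}))$ entries free of $\+Y$, i.e. $\+Z^{*}\sim\mathrm{Pois}(\+\Lambda\circ(\+1_m\otimes\+1_m-\+Q))$ independently of $\+Y$, and $\+Z=\+Y+\+Z^{*}$. Taking conditional expectations then gives $\E[\+Z\mid\+Y]=\+Y+\+\Lambda\circ(\+1_m\otimes\+1_m-\+Q)$, using that $\+1_m\otimes\+1_m$ is the all-ones matrix so that its $(i,j)$ entry is $1-Q^{(i,j)}$.

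\textbf{Main obstacle.} Conceptually there is little difficulty here: the result is the matrix-valued, entrywise form of Poisson thinning/splitting, and the only real work is bookkeeping---verifying that the entrywise intensities $\Lambda^{(i,j)}Q^{(i,j)}$ and $\Lambda^{(i,j)}(1-Q^{(i,j)})$ package exactly into the Hadamard-product expressions $\+\Lambda\circ\+Q$ and $\+\Lambda\circ(\+1_m\otimes\+1_m-\+Q)$, and that the double sums $\+1_m^{\top}(\cdot)\+1_m$ together with the element-wise $\log$ and factorial correctly encode the sum of scalar log-pmfs. Compared with Lemma~\ref{lem:xupdate}, the absence of any mis-reporting matrix $\+G$ and of cross-row aggregation means no marginalization over a multinomial split is required, so this is strictly the cleaner special case and I anticipate no genuine analytic obstruction.
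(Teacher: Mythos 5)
Your proposal is correct and follows essentially the same route as the paper's proof: the paper also computes the marginal $p(\+Y)$ by direct summation of the Poisson--binomial joint pmf (factored entrywise over $(i,j)$) and then divides the joint by the marginal to identify the conditional law of $\+Z-\+Y$ as $\mathrm{Pois}\left(\+\Lambda\circ(\+1_m\otimes\+1_m-\+Q)\right)$. Your explicit reduction to the scalar thinning identity followed by entrywise reassembly is just a cleaner packaging of the identical computation.
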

\noindent The proof is given in section \ref{sec:proofs_filtering} of the supplementary material.
\subsubsection*{Computing the PAL when $(\tau_r)_{r\geq1}=\mathbb{N}$}
Combining the above prediction and update approximations we arrive at algorithm \ref{alg:Z}.
\begin{algorithm}[H]
\caption{Filtering for case (II) when $(\tau_r)_{r\geq1}=\mathbb{N}$}\label{alg:Z}
\begin{algorithmic}[1]
  \Statex {\bf initialize:} $\bar{\bs \lambda}_{0} \leftarrow \bs \lambda_0$
  \State {\bf for}  $t \geq 1$:
  \State \quad $\+ \Lambda_t \leftarrow (\bar{\bs \lambda}_t \otimes \+1_m)\circ \+K_{t,\bs \eta(\bar{\bs \lambda}_t)} $
  \State \quad $\bar{\+ \Lambda}_t \leftarrow \+ Y_t+ (\+1_m \otimes \+1_m - \+ Q_t  )\circ \bs \Lambda_{t}$
  \State \quad $\mathcal{L}(\+Y_t|\+Y_{1:t-1}) \leftarrow - \+1_m^\top(\+ \Lambda_t \circ \+ Q_t)\+1_m +\+1_m^\top [\+ Y_t \circ \log(\+\Lambda_t\circ \+ Q_t)]\+1_m - \+1_m^\top \log(\+ Y_t!)\+1_m $
  \State \quad$\bar{\bs \lambda}_t \leftarrow (\+ 1_m^\top \bar{\+ \Lambda}_t)^\top$
  \State {\bf end for}
\end{algorithmic}
\end{algorithm}
\noindent In algorithm \ref{alg:Z} we adopt the same convention $0\log0\coloneqq 0$ as in algorithm \ref{alg:x}. The log PAL associated with algorithm \ref{alg:Z} is:
\begin{equation*}
\log p(\+Y_{1:t})\approx \sum_{s=1}^t \mathcal{L}(\+Y_s | \+Y_{1:s-1}).
\end{equation*}

We now consider general $(\tau_r)_{r\geq1}$. The filtering recursion is:
\begin{equation}\label{eq:case_II_recursion}
\begin{aligned}
p(\+Z_{\tau_{r-1}}|\bar{\+Y}_{1:r-1}) &\overset{\text{prediction}}{\longrightarrow} p(\+Z_{\tau_{r-1}+1}| \bar{\+Y}_{1:r-1}) \overset{\text{prediction}}{\longrightarrow} \dots   \\
&\overset{\text{prediction}}{\longrightarrow} p(\+Z_{\tau_r}|\bar{\+Y}_{1:r-1}) \overset{\text{update}}{\longrightarrow} p(\+Z_{\tau_r}|\bar{\+Y}_{1:r}).
\end{aligned}
\end{equation}
\subsubsection*{Approximating the prediction and update steps for general $(\tau_r)_{r\geq1}$}
Assuming that we are given $\+\Lambda_{\tau_{r-1}}$ such that $p(\+Z_{\tau_{r-1}}|\bar{\+Y}_{1:r-1})\approx \mathrm{Pois}(\+\Lambda_{\tau_{r-1}})$, each of the prediction steps in 
\eqref{eq:case_II_recursion} is approximated by applying lemma \ref{lem:Zpred}, leading to lines 2-6 of algorithm \ref{alg:Ztagg}. To approximate the update step, applying lemma \ref{lem:Zaggup} leads to lines  7-10 of algorithm \ref{alg:Ztagg}.
\begin{lemma}\label{lem:Zaggup}
For $\bs \lambda_0 \in \mathbb{R}_{\geq0}^m$ and $\tau \in \mathbb{N}$,  define:
\begin{equation*}
\bs \Lambda_t \coloneqq (\bs \lambda_{t -1}\otimes \+1_m)\circ \+K_{t, \bs \eta(\bs \lambda_{t -1})},\quad
\bs \lambda_t \coloneqq (\+1^\top_m \bs \Lambda_t)^\top,\quad t = 1, \dots, \tau,
\end{equation*}
and let $(\+ Z_t)_{t=1}^\tau$ be independent with $\+ Z_t \sim \mathrm{Pois}(\bs \Lambda_t)$. Suppose that given $\+Z_t$, $\+Y_t$ is a matrix with conditionally independent entries distributed  $Y_t^{(i,j)}\sim\mathrm{Bin}(Z_t^{(i,j)}, Q^{(i,j)})$, and let $\bar{\+ Y}\coloneqq \sum_{s=1}^\tau \+Y_s$. Then:
\begin{equation*}
\E\left[ \+ Z_{\tau} | \bar{ \+ Y}\right] = (\+1_m \otimes \+1_m - \+Q_\tau)\circ \bs \Lambda_{t} + \bar{\+Y} \circ \bs \Lambda_\tau\circ \+ Q_\tau  \oslash \left( \sum_{t=1}^\tau \bs \Lambda_t\circ \+ Q_t \right)  ,
\end{equation*}
and $\bar{\+Y}\sim\mathrm{Pois}(\sum_{t =1}^{\tau} \bs \Lambda_t \circ \+ Q_t)$, i.e.,
$$
\log p(\bar{\+Y}) =  \+1_m^\top\+M\+1_m +\+1_m^\top (\bar{\+ Y} \circ \log\+M)\+1_m - \+1_m^\top \log(\bar{\+ Y}!)\+1_m,
$$
where $\+M\coloneqq \sum_{t =1}^{\tau} \bs \Lambda_t \circ \+ Q_t$ and by convention $0\log0\coloneqq 0$.
\end{lemma}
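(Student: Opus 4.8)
The plan is to reduce the whole statement to a scalar, per-entry computation and then reassemble, since every matrix operation involved ($\circ$, $\oslash$, and the entrywise binomial thinning) acts coordinatewise and the entries of the $\+Z_t$, hence of the $\+Y_t$, are mutually independent. Fix a coordinate $(i,j)$ and write $Z_t,Y_t,\Lambda_t,Q_t$ for the corresponding scalars, so that $Z_t\sim\mathrm{Pois}(\Lambda_t)$ independently over $t$ and $Y_t\mid Z_t\sim\mathrm{Bin}(Z_t,Q_t)$. The lemma then amounts to two scalar facts about the aggregate $\bar Y=\sum_{t=1}^\tau Y_t$: its marginal law, and the conditional mean $\E[Z_\tau\mid\bar Y]$.

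First I would invoke Poisson thinning and splitting, which is exactly the single-time statement already established in Lemma \ref{lem:Zupdate}, to write for each $t$ that $Y_t\sim\mathrm{Pois}(\Lambda_t Q_t)$ and $W_t\coloneqq Z_t-Y_t\sim\mathrm{Pois}(\Lambda_t(1-Q_t))$ with $Y_t\perp W_t$. Because the split of $Z_\tau$ into $(Y_\tau,W_\tau)$ uses auxiliary randomness independent of the remaining $Z_t$, $t<\tau$, the residual $W_\tau$ is independent not merely of $Y_\tau$ but of the entire collection $\{Y_1,\dots,Y_\tau\}$, hence of $\bar Y$. Superposition of independent Poissons then gives $\bar Y\sim\mathrm{Pois}\!\big(\sum_{t=1}^\tau\Lambda_tQ_t\big)$, which coordinatewise is precisely $\bar{\+Y}\sim\mathrm{Pois}\big(\sum_{t}\bs\Lambda_t\circ\+Q_t\big)$; the stated log-pmf is then just the logarithm of the vector-Poisson mass function with intensity $\+M$.

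For the conditional mean I would decompose $Z_\tau=W_\tau+Y_\tau$. The independence $W_\tau\perp\bar Y$ noted above yields $\E[W_\tau\mid\bar Y]=\E[W_\tau]=\Lambda_\tau(1-Q_\tau)$, supplying the first term of the claimed formula. For the remaining piece I would use the standard fact that independent Poisson variables conditioned on their total are multinomially distributed: given $\bar Y$, the vector $(Y_1,\dots,Y_\tau)$ is $\mathrm{Mult}\big(\bar Y,\,(\Lambda_tQ_t/\textstyle\sum_s\Lambda_sQ_s)_t\big)$, so that $\E[Y_\tau\mid\bar Y]=\bar Y\,\Lambda_\tau Q_\tau/\sum_{s=1}^\tau\Lambda_sQ_s$. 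Adding the two contributions and rewriting coordinatewise with the element-wise product and division gives $\E[\+Z_\tau\mid\bar{\+Y}]=\bs\Lambda_\tau\circ(\+1_m\otimes\+1_m-\+Q_\tau)+\bar{\+Y}\circ\bs\Lambda_\tau\circ\+Q_\tau\oslash\big(\sum_{t=1}^\tau\bs\Lambda_t\circ\+Q_t\big)$, as required.

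The main obstacle is the independence bookkeeping in the aggregated setting: one must verify that the unobserved residual $W_\tau=Z_\tau-Y_\tau$ is independent of the whole aggregate $\bar Y$, and not merely of its own contribution $Y_\tau$. This is what allows the prediction part $\Lambda_\tau(1-Q_\tau)$ to decouple cleanly from the data, and it is the only point where the multi-step structure $\tau>1$ genuinely differs from the single-step Lemma \ref{lem:Zupdate}. Everything else is a matter of packaging the scalar thinning, superposition, and Poisson-to-multinomial conditioning identities into the element-wise matrix notation; the recursive definitions of $\bs\Lambda_t$ and $\bs\lambda_t$ play no role in these particular identities beyond fixing the intensities, so I would not need to unfold them.
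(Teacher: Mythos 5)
Your proof is correct and takes essentially the same route as the paper's: both arguments rest on the single-time thinning result of Lemma \ref{lem:Zupdate}, Poisson superposition for the law of $\bar{\+Y}$, and the binomial/multinomial conditional law of $\+Y_\tau$ given $\bar{\+Y}$, and your explicit decomposition $\+Z_\tau = \+W_\tau + \+Y_\tau$ with $\+W_\tau$ independent of $\bar{\+Y}$ is precisely the independence fact that justifies the paper's tower-law step $\E[\+Z_\tau\mid\bar{\+Y}] = \E[\E[\+Z_\tau\mid\+Y_\tau]\mid\bar{\+Y}]$ (indeed you make that step more explicit than the paper does). The only point you gloss over is the degenerate entries where $\sum_{t=1}^\tau \Lambda_t^{(i,j)}Q_t^{(i,j)} = 0$, in which case $\bar{Y}^{(i,j)} = 0$ almost surely and the $0/0$ convention leaves only the prediction term $(1-Q_\tau^{(i,j)})\Lambda_\tau^{(i,j)}$ — the paper treats this case explicitly, but its omission is a presentational detail rather than a gap in the argument.
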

\noindent The proof is given in section \ref{sec:proofs_filtering} of the supplementary material.
\subsubsection*{Computing the PAL for general $(\tau_r)_{r\geq1}$}
\begin{algorithm}[H]
\caption{Filtering for case (II) with general $(\tau_r)_{r\geq1}$}\label{alg:Ztagg}
\begin{algorithmic}[1]
  \Statex {\bf initialize:} $\bar{\bs \lambda}_{0} \leftarrow \bs \lambda_0$.
  \State {\bf for}  $r  \geq 1$:
  \State \quad  {\bf for}  $  t = \tau_{r-1}+1, \dots,  \tau_r - 1$:
  \State \quad \quad$\bs \Lambda_{t} \leftarrow  (\bar{ \bs \lambda}_{t-1} \otimes \+1_m)\circ \+K_{t,\bs \eta\left(\bar{ \bs \lambda}_{t-1}\right)}$
  \State \quad \quad$\bar{\bs \lambda}_{t} \leftarrow  (\+1_m^\top \bs \Lambda_{t})^\top$
  \State \quad {\bf end for}

  \State  \quad $\bs \Lambda_{\tau_r}\leftarrow  (\bs \lambda_{\tau_r -1} \otimes \+1_m)\circ \+K_{\tau_r,\bs \eta(\bs \lambda_{\tau_r -1})}$
  \State \quad $\+M_r \leftarrow   \sum_{t = \tau_{r-1}+1}^{\tau_{r}} \bs \Lambda_{t}\circ \+ Q_t$
  \State \quad $\bar{\bs \Lambda}_{\tau_r}\leftarrow (\+1_m \otimes \+1_m - \+Q_{\tau_r})\circ \bs \Lambda_{\tau_r}+ \bar{\+Y}_{{r}} \circ \bs \Lambda_{\tau_r} \circ \+ Q_{\tau_r}  \oslash \+M_r $
  \State\quad $\mathcal{L}(\bar{\+ Y}_r |\bar{\+ Y}_{1:r-1} )\leftarrow  - \+1_m^\top\+M_r\+1_m +\+1_m^\top (\bar{\+ Y}_r \circ \log\+M_r)\+1_m - \+1_m^\top \log(\bar{\+ Y}_r!)\+1_m $
  \State \quad $\bar{\bs \lambda}_{ \tau_{r}} \leftarrow (\+ 1_m^\top \bar{ \+ \Lambda}_{ \tau_{r}})^\top$
  \State {\bf end for}
\end{algorithmic}
\end{algorithm}
In algorithm \ref{alg:Ztagg} we adopt the same conventions concerning $0/0$ and $0\log0\coloneqq 0$ as in algorithm \ref{alg:x}. The log PAL associated with algorithm \ref{alg:Ztagg} is:
\begin{equation}
\log p(\bar{\+Y}_{1:r})\approx \sum_{s=1}^r \mathcal{L}(\bar{\+Y}_s | \bar{\+Y}_{1:s-1}),\label{eq:pal_case_II}
\end{equation}
where, as per line of 9 of  algorithm \ref{alg:Ztagg}, each term $\mathcal{L}(\bar{\+Y}_r | \bar{\+Y}_{1:r-1})$ is the log probability mass function of $\mathrm{Pois}(\+M_r)$ evaluated at $\bar{\+Y}_r$.

\section{Consistency of maximum PAL estimators}\label{sec:consistency}
Whilst the results in section \ref{sec:filtering} explain how the steps in algorithms \ref{alg:x}-\ref{alg:Ztagg} and the associated PALs are motivated by recursive vector-Poisson approximations, so far nothing we have stated quantifies the quality of these approximations, nor the PALs. In this section we present consistency results for parameter estimators defined by maximising PALs. Section \ref{sec:interpretation} of the supplementary material contains a simulation-based example to empirically illustrate our theoretical results.

\subsection{Notation and definitions for the consistency results}
We now introduce explicit notation for dependence of various quantities on a parameter vector $\bs\theta$; we allow   $\mathbb{P}_{0,n},\+ K_{t ,\bs \eta}, \+q_t, \+Q_t, \+G_t, \bs \delta_t$ to depend on $\bs\theta$, and reflect this throughout section \ref{sec:consistency} with notation  $\mathbb{P}_{0,n}^{\bs\theta},\+ K_{t ,\bs \eta}(\bs \theta),\+q_t(\bs \theta), \+Q_t(\bs \theta),\+G_t(\bs \theta),\bs \delta_t(\bs \theta)$. We  allow $\bs \kappa_{t}$ and $\bs \alpha_{t}$ to depend on $\bs \theta$, as well as the expected initial  population size $n$, with notation  $\bs \kappa_{t,n}(\bs \theta)$ and $\bs \alpha_{t,n}(\bs \theta)$.  We also need to make explicit the dependence on $n$ and $\bs\theta$ of the quantities computed in algorithms \ref{alg:x} and \ref{alg:Ztagg}; we write these as: $\bs \lambda_{t,n}(\bs \theta)$, $\bar{\bs \lambda}_{t,n}(\bs \theta), \bs\mu_{t,n}(\bs\theta)$; and $\bs \Lambda_{t,n}(\bs \theta)$, $\bar{\bs \Lambda}_{t,n}(\bs \theta)$, $\+M_{r,n}(\bs \theta)$. 

In either case (I) or (II), one can think of the expected initial population size $n$ as a global model index. We write $(\Omega_n, \mathcal{F}_n, \mathbb{P}^{\bs \theta}_n)$ for a probability space underlying each of these cases with expected initial population size $n$; in the context of case (I), $ \mathbb{P}^{\bs \theta}_n$ is the joint distribution of $(\+x_t)_{t\geq 0}$ and $(\+y_t)_{t\geq 1}$ (as formulated in section \ref{sec:models})  whilst in the context of case (II), $ \mathbb{P}^{\bs \theta}_n$ is the joint distribution of $(\+Z_t)_{t\geq 1}$ and $(\bar{\+Y}_r)_{r\geq 1}$. In either case the overall probability space we shall work with is $(\Omega, \mathcal{F}, \mathbb{P}^{\bs \theta}) \coloneqq (\prod_{n \geq 1}\Omega_n, \bigotimes_{n \geq 1}\mathcal{F}_n, \bigotimes_{n \geq 1}\mathbb{P}^{\bs \theta}_n)$. From henceforth we denote by  $\bs \theta^*\in\Theta$ an arbitrarily chosen but then fixed data-generating parameter (DGP). Almost sure convergence under $\mathbb{P}^{\bs\theta^*}$ is denoted $\asls$.

We now fix a time horizon $T\geq 1$ where for case (I), $T$ is any positive integer, whilst for case (II), we assume $T=\tau_{R}$ for some $R\geq 1$. Since this time horizon is fixed, it will not appear explicitly in some of the notation for our consistency results. However, in order to state and prove various results, we need to make the dependence on $\bs\theta$ and  $n$ of the PALs computed using algorithms \ref{alg:x} and \ref{alg:Ztagg} explicit. To do so we define
\begin{equation*}\label{eq:log_like_defns_consistency}
\ell_n(\bs\theta)\coloneqq\sum_{t=1}^T \ell(\+y_t|\+y_{1:t-1}),\qquad \mathcal{L}_n(\bs\theta)\coloneqq \sum_{r=1}^{R} \mathcal{L}(\bar{\+Y}_r |\bar{\+Y}_{1:r-1}),
\end{equation*}
where it is to be understood that each of the terms $ \ell(\+y_t|\+y_{1:t-1})$ and $\mathcal{L}(\bar{\+Y}_r |\bar{\+Y}_{1:r-1})$ are computed using respectively algorithms \ref{alg:x} and \ref{alg:Ztagg} with parameter value $\bs\theta$ and expected initial population size $n$, and where the distribution of the random variables $\+y_{1:T}$ and $\bar{\+Y}_{1:R}$ is specified by the DGP $\bs\theta^*$ and the expected initial population size $n$. The fact that $\ell_n(\bs\theta)$ and $\mathcal{L}_n(\bs\theta)$ are functions of respectively $\+y_{1:T}$ and $\bar{\+Y}_{1:R}$ is not shown in the notation.  
\subsection{Assumptions}
\begin{assumption}\label{as:compact}
The parameter space $\Theta \subset \mathbb{R}^d$ is compact.
\end{assumption}
\begin{assumption} \label{as:params}  For all probability vectors $\bs \eta$, $t \geq 1$, and $n \geq 1$,
$\+K_{t, \bs \eta}(\bs \theta),\+q_t(\bs \theta), \+Q_t(\bs \theta),$ $\+G_t(\bs \theta),$ $\bs \delta_t(\bs \theta),\bs \kappa_{t,n}(\bs \theta)$ and $\bs \alpha_{t,n}(\bs \theta)$ are continuous functions of $\bs \theta$, and the supports of these vectors and the supports of each matrix row do not depend on $\bs \theta$ or $n$. For all $\bs \theta \in \Theta$ and $t\geq1$,  $\mathrm{supp}(\bs \delta_t(\bs \theta)) = [m]$, i.e. $\bs \delta_t(\bs \theta)$ has no entries equal to $0$. Furthermore, there exist continuous functions of $\bs \theta$ mapping $\Theta\rightarrow\mathbb{R}_{\geq 0}^m$, $\bs \kappa_{t, \infty}({\bs \theta})$ and $\bs \alpha_{t, \infty}({\bs \theta})$,  such that $\mathrm{supp}(\bs \kappa_{t, \infty}({\bs \theta})) = \mathrm{supp}(\bs \kappa_{t, n}({\bs \theta}))$ and $\mathrm{supp}(\bs \alpha_{t, \infty}({\bs \theta})) = \mathrm{supp}(\bs \alpha_{t, n}({\bs \theta}))$ for all $n$, and for each $\bs \theta \in \Theta$ there exist $a_1>0$, $a_2>0$, $\gamma_1>0$, and $\gamma_2>0$ such that:
\begin{align*}
\|n^{-1}\bs \kappa_{t,n}({\bs \theta}) -\bs \kappa_{t, \infty}({\bs \theta}) \|_\infty &< a_1n^{-( \frac{1}{4}+ \gamma_1)},\\
\|n^{-1}\bs \alpha_{t,n}({\bs \theta}) -\bs \alpha_{t, \infty}({\bs \theta})\|_\infty &< a_2n^{-( \frac{1}{4}+ \gamma_2)}.
\end{align*}
\end{assumption}
\begin{assumption}\label{as:1}
For all $\bs \theta \in \Theta$, there exists a constant $c>0$  such that for all $t \geq 1$, all vectors $\bs f_1, \bs f_2 \in \mathbb{R}^m$, and all probability vectors $\bs \eta, \bs \eta'$:
\begin{equation*}|\bs f_1^\top \+K_{t, \bs \eta}(\bs \theta) \bs f_2 - \bs f_1^\top \+K_{t, \bs \eta'}(\bs \theta)\bs f_2 | \leq c \|\bs f_1\|_\infty\|\bs f_2\|_\infty\|\bs \eta - \bs {\eta'}\|_\infty.
\end{equation*}
Furthermore, if $\mathrm{supp}(\bs \eta) \subseteq \mathrm{supp}(\bs \eta')$ then $\mathrm{supp}(\+K^{i,\cdot}_{t,\bs \eta}(\bs \theta) )\subseteq \mathrm{supp}(\+K^{i,\cdot}_{t,\bs \eta'}(\bs \theta))$ for all  $i\in[m]$.
\end{assumption}
\begin{assumption} \label{as:init}
Let $\bs \theta \in \Theta$, $n \in \mathbb{N}$, and $\+x_{0} \sim \mathbb{P}^{\bs \theta}_{0,n}$. There exists $\bs \lambda_{0,\infty}({\bs \theta})$ which is a continuous mapping $\Theta\to\mathbb{R}^m_{\geq 0}$ such that the support of $\bs \lambda_{0,\infty}({\bs \theta})$, which is not the empty set, does not depend on $\bs \theta$, and  there exists $\gamma_0>0$ such that for any $\bs f \in \mathbb{R}^m$ there exists  a $c_0>0$ such that:
\begin{equation*}
\E \left[\left| n^{-1}\bs f^\top\+x_{0} - \bs f^\top \bs \lambda_{0,\infty}({\bs \theta})\right|^4\right]^\frac{1}{4} < c_0n^{-(\frac{1}{4} + \gamma_0)}.
\end{equation*}
Furthermore, there exists some $c>0$ and $\gamma>0$ such that:
\begin{equation*}
\|n^{-1} \bs \lambda_{0,n}(\bs \theta) -  \bs \lambda_{0,\infty}(\bs \theta)\|_\infty <cn^{-(\frac{1}{4}+\gamma)}
\end{equation*}
and $\mathrm{supp}(\bs \lambda_{0,n}(\bs \theta) ) = \mathrm{supp}(\bs \lambda_{0,\infty}(\bs \theta))$ for all $\bs \theta \in \Theta$ and $n \in \mathbb{N}$.
\end{assumption}

The compactness of $\Theta$ in assumption \ref{as:compact} and the continuity in $\bs\theta$ of various quantities in assumption \ref{as:params} are fairly standard assumptions in proofs of consistency of maximum likelihood estimators.  The conditions on the supports of various vectors in assumptions \ref{as:params}-\ref{as:init} are used to rule out the possibility that different parameter values may induce mutually singular distributions over observations, this helps us ensure well-defined contrast functions in our consistency proofs. Assumption \ref{as:init} asserts that the scaled initial population configuration, $ n^{-1}\+x_{0}$, obeys a law of large numbers.
\subsection{Main consistency theorem and outline of the proof}\label{sec:consistency_result}
In order to state and explain our main consistency result, theorem  \ref{theo:consistency}, we now summarize some intermediate results concerning the asymptotic behaviour of the models and quantities calculated using algorithms \ref{alg:x} and \ref{alg:Ztagg}. Precise statements and proofs of these intermediate results are in section \ref{sec:proofs} of the supplementary material.
\paragraph*{Laws of large numbers.} The first step is to establish laws of large numbers for the Latent Compartmental Model, and hence for the observations, these results are stated and proved in section \ref{sec:lln} of the supplementary material. In case (I) we show that for certain deterministic vectors $\bs\nu_t(\bs\theta^*)$, $t\geq1$,
\begin{align}
&\frac{1}{n} \+x_t\asls \bs\nu_t(\bs\theta^*),\qquad \frac{1}{n} \+y_t\asls [(\bs \nu_t(\bs \theta^*) \circ \+ q_t(\bs \theta^*) )^\top \+ G_t(\bs \theta^*)]^\top + \bs \kappa_{t,\infty}(\bs \theta^*),\label{eq:lln_summary_x}
\end{align}
and in case (II), for certain deterministic matrices $\+N_t(\bs\theta^*)$, $t\geq1$, 
\begin{align}
&\frac{1}{n}\+Z_t \asls \+N_t(\bs\theta^*),\qquad \frac{1}{n} \bar{\+Y}_r\asls \sum_{t=\tau_{r-1}+1}^{\tau_r}\+N_t(\bs\theta^*)\circ \+Q_t(\bs\theta^*).\label{eq:lln_summary_z}
\end{align}
The vectors $\bs\nu_t(\bs\theta^*)$ and matrices $\+N_t(\bs\theta^*)$ satisfy recursive (in time) formulae and the convergence of $\frac{1}{n} \+x_t$ and $\frac{1}{n}\+Z_t$ as $n\to\infty$ is a discrete time analogue of the convergence of the continuous time, stochastic model to the solution of the ODE in \eqref{eq:kurtz_lln}, i.e. a discrete-time counterpart of the results  of \citep{kurtz1970solutions}. 
\paragraph*{Filtering intensity limits and asymptotic filtering accuracy.} Making use of the laws of large numbers for the observations, the next step is to establish convergence to deterministic limits of intensity vectors and matrices computed using respectively algorithms \ref{alg:x} and \ref{alg:Ztagg} and which thus define the PALs \eqref{eq:pal_case_I} and \eqref{eq:pal_case_II}. This is the subject of section \ref{sec:filtering_limits} of the supplementary material. In case (I) we find deterministic vectors $\bs\lambda_{t,\infty}(\bs\theta^*, \bs\theta)$ and $\bs\mu_{t,\infty}(\bs\theta^*, \bs\theta)$, $t\geq1$, $\bs\theta\in\Theta$, where $\bs\mu_{t,\infty}(\bs\theta^*, \bs\theta)$ is a function of $\bs\lambda_{t,\infty}(\bs\theta^*, \bs\theta)$, such that: 
\begin{align*}
&\frac{1}{n} \bs\lambda_{t,n}(\bs\theta)\asls \bs\lambda_{t,\infty}(\bs\theta^*, \bs\theta),\qquad \frac{1}{n} \bs\mu_{t,n}(\bs\theta)\asls \bs\mu_{t,\infty}(\bs\theta^*, \bs\theta).
\end{align*}
In case (II) we find deterministic matrices $\bs\Lambda_{t,\infty}(\bs\theta^*, \bs\theta)$ and $\+M_{r,\infty}(\bs\theta^*, \bs\theta)$, $t\geq1$, $r\geq1$, $\bs\theta\in\Theta$, where $\+M_{r,\infty}(\bs\theta^*, \bs\theta)$ is a function of   $\bs\Lambda_{t,\infty}(\bs\theta^*, \bs\theta)$ for $t=\tau_{r-1}+1,\ldots,\tau_r$, such that:
\begin{align*}
&\frac{1}{n} \bs\Lambda_{t,n}(\bs\theta)\asls \bs\Lambda_{t,\infty}(\bs\theta^*, \bs\theta),\qquad \frac{1}{n} \+M_{r,n}(\bs\theta)\asls \+M_{r,\infty}(\bs\theta^*, \bs\theta).
\end{align*}
A notable fact about the limiting filtering intensities $\bs\lambda_{t,\infty}(\bs\theta^*, \bs\theta)$ and $\bs\Lambda_{t,\infty}(\bs\theta^*, \bs\theta)$ that we uncover (see remarks \ref{rem:lam_mu} and \ref{rem:Lam_M} in section \ref{sec:filtering_limits} of the supplementary material) is that:
\begin{equation*}
\bs\lambda_{t,\infty}(\bs\theta^*, \bs\theta^*) = \bs\nu_t(\bs\theta^*),\qquad \bs\Lambda_{t,\infty}(\bs\theta^*, \bs\theta^*) = \+N_t(\bs\theta^*),
\end{equation*}
where $\bs\nu_t(\bs\theta^*)$ and $\+N_t(\bs\theta^*)$ are as in \eqref{eq:lln_summary_x} and \eqref{eq:lln_summary_z}. In this sense, running algorithms \ref{alg:x} and \ref{alg:Ztagg} with the model specified by the DGP $\bs\theta\leftarrow\bs\theta^*$ is asymptotically accurate as $n\to\infty$, in spite of the recursive Poisson approximations involved in these procedures.
\paragraph*{Contrast functions.} We then construct contrast functions associated with the PALs. This is the subject of section \ref{sec:contrast_functions} of the supplementary material. The contrast functions turn out to be in the form of Kullback-Liebler divergences.  In case (I),
\begin{equation}\label{eq:contx}
\frac{1}{n}\ell_n(\bs\theta) - \frac{1}{n}\ell_n(\bs\theta^*) \asls -\sum_{t=1}^T\mathrm{KL}\left(\mathrm{Pois}[\bs\mu_{t,\infty}(\bs\theta^*,\bs\theta^*)]\,\|\,\mathrm{Pois}[\bs\mu_{t,\infty}(\bs\theta^*,\bs\theta)]\right),
\end{equation}
and in case (II),
\begin{equation} \label{eq:contZ}
\frac{1}{n}\mathcal{L}_n(\bs\theta) - \frac{1}{n}\mathcal{L}_n(\bs\theta^*) \asls -\sum_{r=1}^R\mathrm{KL}\left(\mathrm{Pois}[\+M_{r,\infty}(\bs\theta^*,\bs\theta^*)]\,\|\,\mathrm{Pois}[\+M_{r,\infty}(\bs\theta^*,\bs\theta)]\right),
\end{equation}
where in each case the convergence is established to be uniform in $\bs\theta$. 
\paragraph*{Convergence of the maximum PAL estimators.}With:
\begin{align*}
\Theta^{*}_{(I)} &\coloneqq \{\bs \theta \in \Theta: \bs \mu_{t, \infty}(\bs \theta^*, \bs \theta) = \bs \mu_{t, \infty}(\bs \theta^*,\bs \theta^*) \text{ for all } t = 1, \dots T\},\\
\Theta^{*}_{(II)} &\coloneqq \{\bs \theta \in \Theta: \+ M_{r, \infty}(\bs \theta^*, \bs \theta) = \+ M_{r, \infty}(\bs \theta^*,\bs \theta^*) \text{ for all } r = 1, \dots R\},
\end{align*}
uniform convergence to the contrast functions as well as standard continuity and compactness arguments are used to complete the proof of our main consistency result:
\begin{theorem}\label{theo:consistency}
Let assumptions \ref{as:compact}-\ref{as:init} hold and let $\hat{ \bs\theta}_n$ be a maximiser of $  \ell_n(\bs \theta)$ (resp. $\mathcal{L}_n(\bs \theta)$). Then  $\hat{ \bs\theta}_n$ converges to $\Theta^*_{(I)}$ (resp. $\Theta^*_{(II)}$) as $n \rightarrow \infty$, $\mathbb{P}^{\bs \theta^*}$-almost surely.
\end{theorem}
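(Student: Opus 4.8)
The plan is to prove Theorem \ref{theo:consistency} as a standard argmax-consistency (M-estimator) argument, treating cases (I) and (II) in parallel since only the form of the limiting contrast differs. The hard analytic work has already been front-loaded into the preceding sections: the laws of large numbers \eqref{eq:lln_summary_x}--\eqref{eq:lln_summary_z}, the filtering-intensity limits, and above all the \emph{uniform} almost-sure convergence of the normalized log-PALs to the Kullback--Leibler contrasts \eqref{eq:contx}--\eqref{eq:contZ}. The only remaining task is to deduce set-convergence of the maximizers from these facts, which is routine given the uniform convergence.

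First I would fix the DGP $\bs\theta^*$ and carry out the argument for case (I); case (II) is identical after replacing $\bs\mu_{t,\infty}$ and $T$ by $\+M_{r,\infty}$ and $R$, and $\Theta^*_{(I)}$ by $\Theta^*_{(II)}$. Define the limiting contrast
\[
K(\bs\theta)\coloneqq\sum_{t=1}^T\mathrm{KL}\!\left(\mathrm{Poi}[\bs\mu_{t,\infty}(\bs\theta^*,\bs\theta^*)]\,\|\,\mathrm{Poi}[\bs\mu_{t,\infty}(\bs\theta^*,\bs\theta)]\right),
\]
so that \eqref{eq:contx} reads $n^{-1}\ell_n(\bs\theta)-n^{-1}\ell_n(\bs\theta^*)\asls -K(\bs\theta)$, uniformly in $\bs\theta$. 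Two elementary properties of $K$ drive the proof. First, nonnegativity with a characterized zero set: KL divergence is nonnegative and, for vector-Poisson laws, vanishes exactly when the intensities agree, so $K(\bs\theta)\geq 0$ with $K(\bs\theta)=0$ iff $\bs\mu_{t,\infty}(\bs\theta^*,\bs\theta)=\bs\mu_{t,\infty}(\bs\theta^*,\bs\theta^*)$ for all $t$, i.e.\ precisely when $\bs\theta\in\Theta^*_{(I)}$. Second, continuity: the limiting intensities $\bs\mu_{t,\infty}(\bs\theta^*,\cdot)$ are continuous in $\bs\theta$ (they are built recursively from the model ingredients, continuous by Assumption \ref{as:params}, via the continuous filtering maps), and the support conditions in Assumptions \ref{as:params}--\ref{as:init} keep every KL summand finite; hence $K$ is continuous on $\Theta$.

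Next I would establish that the maximum of the limit function $-K$ is well separated over $\Theta^*_{(I)}$. Because $\Theta$ is compact (Assumption \ref{as:compact}), for any open set $U\supseteq\Theta^*_{(I)}$ the set $\Theta\setminus U$ is compact and disjoint from the zero set of $K$, so $\varepsilon\coloneqq\inf_{\bs\theta\in\Theta\setminus U}K(\bs\theta)>0$ by continuity. I would then combine this with the uniform convergence exactly as in the classical proof. On the probability-one event where $\sup_{\bs\theta}\bigl|n^{-1}(\ell_n(\bs\theta)-\ell_n(\bs\theta^*))+K(\bs\theta)\bigr|\to 0$, choose $N$ so that this supremum is below $\varepsilon/2$ for $n\geq N$. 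Writing $M_n(\bs\theta)\coloneqq n^{-1}(\ell_n(\bs\theta)-\ell_n(\bs\theta^*))$ and noting $M_n(\bs\theta^*)=0$, the maximizer satisfies $M_n(\hat{\bs\theta}_n)\geq M_n(\bs\theta^*)=0$; but if $\hat{\bs\theta}_n\notin U$ then $M_n(\hat{\bs\theta}_n)\leq -K(\hat{\bs\theta}_n)+\varepsilon/2\leq-\varepsilon/2<0$, a contradiction. Hence $\hat{\bs\theta}_n\in U$ for all $n\geq N$; since $U$ was an arbitrary neighborhood of $\Theta^*_{(I)}$, this yields $d(\hat{\bs\theta}_n,\Theta^*_{(I)})\to 0$, $\mathbb{P}^{\bs\theta^*}$-almost surely, which is the claim.

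I do not expect a genuine obstacle at this final stage, precisely because the difficulty was absorbed into the uniform-convergence-to-contrast result. The one point that genuinely requires care is that $\Theta^*_{(I)}$ need not be a singleton, since no identifiability is assumed; the conclusion must therefore be stated and proved as convergence to a \emph{set}, which the neighborhood argument above handles cleanly without any extra structure. The remaining items are bookkeeping consequences of Assumptions \ref{as:params}--\ref{as:init} rather than new work: verifying continuity of $K$ through the normalization map $\bs \eta(\cdot)$ (continuous wherever the limiting total population is positive, which holds along the trajectory), and checking that the $0\log 0$ conventions together with the support conditions render each KL summand finite and continuous.
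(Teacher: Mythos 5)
Your proposal is correct and follows essentially the same route as the paper: both deduce set-convergence of the maximizers from the uniform almost-sure convergence of $n^{-1}\ell_n(\bs\theta)-n^{-1}\ell_n(\bs\theta^*)$ to the negative KL contrast, using compactness of $\Theta$, continuity of the limiting contrast (which rests on the support lemmas guaranteeing the $0\log 0$ terms behave consistently across $\bs\theta$), and the fact that the contrast vanishes exactly on $\Theta^*_{(I)}$ (resp.\ $\Theta^*_{(II)}$). The only cosmetic difference is ordering: you establish well-separation ($\inf_{\Theta\setminus U} K>0$) first and then exclude maximizers from $\Theta\setminus U$, whereas the paper first shows $\mathcal{C}(\bs\theta^*,\hat{\bs\theta}_n)\to\mathcal{C}(\bs\theta^*,\bs\theta^*)$ via a sandwich bound and then derives the same contradiction through the extreme value theorem on $B_\delta(\Theta^*_{(I)})^c$.
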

\noindent The proof is in section \ref{sec:conv_pal_estimators} of the supplementary material. Section \ref{sec:interpretation} of the supplementary material illustrates the main results through a simulation study.
\paragraph*{Identifiability.}
We now provide some further insight into the sets $\Theta^{*}_{(I)} $ and $\Theta^{*}_{(II)}$ in order to explain in what sense the model is identified under theorem \ref{theo:consistency}. In section \ref{sec:identifiabiliy} of the supplementary material we show that for any $\bs \theta\in\Theta$,
\begin{align*}
\bs \theta \in \Theta^*_{(I)}&\quad\Longleftrightarrow\quad \bs \mu _{t,\infty}(\bs \theta,\bs \theta) = \bs \mu _{t,\infty}(\bs \theta^*,\bs \theta^*),\quad\forall t=1,\ldots,T,\\ 
\bs \theta \in \Theta^*_{(II)}&\quad\Longleftrightarrow\quad \+{M}_{r,\infty}(\bs \theta,\bs \theta) = \+{M} _{r,\infty}(\bs \theta^*,\bs \theta^*),\quad\forall r=1,\ldots,R.
\end{align*}
The vector $\bs \mu_{t,\infty}(\bs \theta^*,\bs \theta^*)$ turns out (see remark \ref{rem:lam_mu}) to be equal to the r.h.s. of the second $\mathbb{P}^{\bs\theta^*}$-almost sure limit in \eqref{eq:lln_summary_x}. Thus for case (I),  the convergence to $\Theta^*_{(I)}$ in theorem \ref{theo:consistency} tells us that as $n\to\infty$,  $\hat{ \bs\theta}_n$ approaches the set of $\bs\theta$ such that the $\mathbb{P}^{\bs\theta}$-almost sure limit of $\frac{1}{n} \+y_t$ is the same as the $\mathbb{P}^{\bs\theta^*}$-almost sure limit of $\frac{1}{n} \+y_t$, for all $t=1,\ldots,T$. Similarly for case (II), $\+M_{t,\infty}(\bs \theta^*,\bs \theta^*)$  turns out (see remark \ref{rem:Lam_M}) to be equal to the r.h.s. of the second limit in \eqref{eq:lln_summary_z}, and the convergence to $\Theta^*_{(II)}$ in theorem \ref{theo:consistency} tells us that as $n\to\infty$,  $\hat{ \bs\theta}_n$ approaches the set of $\bs\theta$ such that the $\mathbb{P}^{\bs\theta}$-almost sure limit of $\frac{1}{n} \bar{\+Y}_r$ is the same as the $\mathbb{P}^{\bs\theta^*}$-almost sure limit of $\frac{1}{n} \bar{\+Y}_r$, for all $r=1,\ldots,R$.


\section{Dealing with over-dispersion}\label{sec:over-dispersion}
Over-dispersion is an important modelling consideration in many epidemiological contexts and may have substantial implications for model fit and predictive uncertainty.  The models we have considered so far are equi-dispersed in the sense of \cite{breto2011compound}. For compartmental models in general, over-dispersion can be incorporated in  either the transition or observation models, or both, see for example \citep{stocks2020model}. In the context of the models from section \ref{sec:models}, a natural approach would be to replace the binomial and Poisson-distributed elements of the latent compartmental model (section \ref{sec:lat_comp_model}) and/or observation models (sections \ref{sec:obs_model_x}-\ref{sec:obs_model_Z_agg}) with over-dispersed counterparts, such as beta-binomial and negative binomial distributions. It appears that analytically tractable PAL-style approximations cannot be derived for such models. However, one can often construct over-dispersed distributions as compound distributions through introduction of latent variables, e.g. placing a beta prior on $q_t^{(i)}$ in \eqref{eq:obsx} and then integrating out would result in a marginally beta-binomial observation model. Similarly priors could be placed on parameters which specify the matrix $\+ K_{t ,\bs \eta}$, the immigration and emigration parameters $\bs \alpha_t$, $\bs \delta_t$, the spurious observation intensity $\bs \kappa_t$,  and so on. It is through this latent variable perspective that we extend the use of the PAL to deal with over-dispersion.

Consider  the latent compartmental model from section \ref{sec:lat_comp_model} combined with observation mechanism from section \ref{sec:obs_model_x} with parameter $\bs \theta$ (the observation models from sections \ref{sec:obs_model_Z} and \ref{sec:obs_model_Z_agg} can be handled in a very similar manner). We consider $\bs \theta$ to be partitioned into two components: $\bs \theta = [{\bs \vartheta} \; \bar{\bs \theta}_{1:T}]$, where ${\bs \vartheta}$ consists of parameters which are either fixed or to be estimated, and $\bar{\bs \theta}_{1:T}\sim f(\cdot | \bs \varphi)$ are to be integrated out, for some density $f$ and hyperparameter $\bs \varphi$. A default approach would be for $\bar{\bs \theta}_{1:T}$ to be independent under $f(\cdot | \bs \varphi)$, but Markovian or other dependence could be incorporated. 

We assume that the elements of the model are parameterised such that:
\begin{align*}
\bs \alpha_t(\bs \theta) &= \bs \alpha(\bs \vartheta, \bar{\bs \theta}_{t}),\quad \bs \delta_t(\bs \theta) = \bs \delta(\bs \vartheta, \bar{\bs \theta}_t), \quad\+ K_{t ,\bs \eta}(\bs \theta) = \+ K_{\bs \eta}(\bs \vartheta, \bar{\bs \theta}_t),\\
\quad \bs \kappa_t(\bs \theta) &= \bs \kappa(\bs \vartheta, \bar{\bs \theta}_{t}),  \quad \+q_t(\bs \theta) = \+q(\bs \vartheta, \bar{\bs \theta}_t),\quad\+G_t(\bs \theta) = \+G(\bs \vartheta, \bar{\bs \theta}_t),
\end{align*}
for some given functions $\bs \alpha$, $\bs \delta$, etc., which implies that:
$$
p(\+{x}_t | \+{x}_{t-1},\bs\theta) = p(\+{x}_t | \+{x}_{t-1},\bs\vartheta, \bar{\bs \theta}_t),\qquad p(\+{y}_t | \+{x}_{t},\bs\theta) = p(\+{y}_t | \+{x}_{t},\bs\vartheta, \bar{\bs\theta}_t),
$$
and in turn that $\bar{\bs \theta}_{t}$ is conditionally independent of ${\+ y}_{1:t-1}$ given $\bar{\bs \theta}_{1:t-1}$, $\bs\vartheta$ and $\bs\varphi$.


Let us derive the marginal likelihood for the parameters $[\bs \vartheta \; \bs \varphi]$ with $\bar{\bs \theta}_{1:T}$ integrated out. Momentarily regarding $[\bs \vartheta \; \bs \varphi]$ as fixed and suppressing it from notation, consider the recursive relationship:
\begin{align}
p(\+ y_{1:t}, \bar{\bs \theta}_{1:t}) &= p(\+ y_{t},  \bar{\bs \theta}_{t}  | \+ y_{1:t-1}, \; \bar{\bs \theta}_{1:t-1} )p(\+ y_{1:t-1}, \; \bar{\bs \theta}_{1:t-1} ) \\
&= p(\+ y_{t} |  \bar{\bs \theta}_{t}  , \+ y_{1:t-1},  \bar{\bs \theta}_{1:t-1})p(\bar{\bs \theta}_t | \+ y_{1:t-1},  \bar{\bs \theta}_{1:t-1} )p(\+ y_{1:t-1},\bar{\bs \theta}_{1:t-1} ) \\
&= p(\+ y_{t} |  \bar{\bs \theta}_{t}  , \+ y_{1:t-1},  \bar{\bs \theta}_{1:t-1})f(\bar{\bs \theta}_t | \bar{\bs \theta}_{1:t-1} )p(\+ y_{1:t-1},\bar{\bs \theta}_{1:t-1}),
\end{align}
where the third equality holds due to the aforementioned conditional independence.  Now, re-introducing $[\bs \vartheta \; \bs \varphi]$ to the notation, we have:
\begin{align}
p(\+ y_{1:T}| \bs \vartheta, \bs \varphi) &= \int p(\+ y_{1:T}, \; \bar{\bs \theta}_{1:T}  | \bs \vartheta, \bs \varphi )d\bar{ \bs \theta}_{1:T}  \\
 &= \int \prod _{t=1}^Tp(\+ y_{t}| \+ y_{1:t-1}, {\bs \vartheta},\bar{\bs \theta}_{1:t})f(\bar{\bs \theta}_t | \bar{\bs \theta}_{1:t-1}, \bs \varphi)d\bar{\bs \theta}_{1:T}.
\end{align}
We can approximate this using the PAL:
\begin{align}
\label{eq:SMCapproxlik}
p(\+ y_{1:T}| \bs \vartheta, \bs \varphi) \approx \int \prod _{t=1}^T \exp \left\{ \ell(\+y_{t}| \+ y_{1:t-1}, {\bs \vartheta},\bar{\bs \theta}_{1:t})\right\}f( \bar{\bs \theta}_t | \bar{\bs \theta}_{1:t-1}, \bs \varphi)d\bar{\bs \theta}_{1:T},
\end{align}
where $\ell$ is defined as per algorithm \ref{alg:x}. The right-hand side of \eqref{eq:SMCapproxlik} can be efficiently numerically approximated by embedding PAL computations within sequential Monte Carlo -- see \citep{chopin2020introduction} for an introduction to this family of Monte Carlo algorithms. Such a scheme is given by algorithm \ref{alg:overSMC} and its subroutine algorithm \ref{alg:smcupdate}.

In line \ref{propline} of algorithm \ref{alg:overSMC} we take the convention $\pi(\cdot | \bar{\bs \theta}^{(i)}_{1:0},  \bar{\bs \lambda}_{t-1}^{(i)}, \+y_{1:1}) \coloneqq \pi(\cdot | \bar{\bs \lambda}_{0}^{(i)}, \+y_{1})$. Algorithm \ref{alg:overSMC} yields a Monte Carlo approximation to the r.h.s. of \eqref{eq:SMCapproxlik}, so overall we obtain:
\begin{align}
\log p(\+y_{1:t}| \bs \vartheta, \bs \varphi) \approx \sum_{s=1}^t \widehat\ell(\+ y_s| \+ y_{1:s-1},\bs \vartheta, \bs \varphi).
\end{align}
We stress there are two ingredients to this approximation: the Monte Carlo approximation and the PAL approximation. Whilst the main emphasis above regarding $\bar{\bs\theta}_{1:t}$ is that they are to be integrated out, a benefit of algorithm \ref{alg:overSMC} is that it also yields the approximation:
\begin{equation}
p(\bar{\bs \theta}_{t} |\+{y}_{1:t},\bs\vartheta, \bs\varphi) \approx \sum_{i=1}^{n_{part}} \bar{w}_t^{(i)}\delta_{\bar{\bs\theta}_{t}^{(i)}},\label{eq:SMC_filtering_approx}
\end{equation}
which enables inference for $\bar{\bs\theta}_t$ on the basis of observations $\+y_{1:t}$.
\begin{algorithm}[H]
\caption{PALSMC subroutine}\label{alg:smcupdate}
\begin{algorithmic}[1]
  \Statex {\bf input:} $\bar{\bs \lambda}_{t-1}$ and $[\bs \vartheta \; \bar{\bs \theta}_t]$
  \State $\bs \alpha_t\leftarrow \bs \alpha_t(\bs \vartheta, \bar{\bs \theta}_{t}), \bs \delta_t\leftarrow \bs \delta_t(\bs \vartheta, \bar{\bs \theta}_t), \+ K_{t ,\bs \eta} \leftarrow \+ K_{t ,\bs \eta}(\bs \vartheta, \bar{\bs \theta}_t),
  \+q_t \leftarrow \+q_t(\bs \vartheta, \bar{\bs \theta}_t),\newline\bs \kappa_t \leftarrow \bs \kappa_t(\bs \vartheta, \bar{\bs \theta}_{t}),\+G_t\leftarrow \+G_t(\bs \vartheta, \bar{\bs \theta}_t)$ 
  \State  $ \bs \lambda_{t} \leftarrow [(\bar{\bs {\lambda}}_{t-1}\circ \bs{\delta}_t)^\top\+{K}_{t, \bs{\eta}( \bar{\bs \lambda}_{t-1}\circ \bs{\delta}_t)}]^\top + \bs{\alpha}_t $
   \State $\bar{ \bs \lambda}_{t} \leftarrow [\+ 1_m - \+q_t +  (\{\+ y_t^\top \oslash[(\+ q_t \circ \bs \lambda_t)^\top \+ G_t + \bs \kappa_t^\top ] \}[(\+ 1_m \otimes \+ q_t)\circ \+ G_t^\top] )^\top ]\circ \bs \lambda_t$
   \State  $\bs\mu_t\leftarrow [(\bs \lambda_t\circ \+ q_t)^\top \+G_t]^\top + \bs \kappa_t$
   \State $\ell(\+y_t | \+y_{1:t-1}) \leftarrow -\bs\mu_t^\top\+1_m + \+y_t^\top \log(\bs\mu_t) - \+1_m^\top\log(\+y_t!)$
   \Statex {\bf return} $\ell(\+y_t | \+y_{1:t-1})$ and $\bar{ \bs \lambda}_{t}$
\end{algorithmic}
\end{algorithm}
\begin{algorithm}[H]
\caption{PALSMC}\label{alg:overSMC}
\begin{algorithmic}[1]
 \Statex {\bf input:} proposal distribution $\pi(\cdot | \cdot)$, number of particles $n_{part}$, parameter $[\bs \vartheta \; {\bs \varphi}]$.
  \Statex {\bf initialize:} $\bar{\bs \lambda}_{0}^{(i)} \leftarrow {\bs \lambda}_0$ for $i = 1,\dots, n_{part}$
  \State {\bf for}  $t  \geq 1$: 
  \State \quad  {\bf for}  $i = 1,\dots, n_{part}$: 
  \State \quad \quad \label{propline} $\bar{\bs \theta}^{(i)}_t \sim \pi(\cdot | \bar{\bs \theta}^{(i)}_{1:t-1},  \bar{\bs \lambda}_{t-1}^{(i)}, \+y_{1:t})$
  \State \quad \quad Obtain  $\ell(\+y_{t}| \+ y_{1:t-1},{\bs \vartheta}, \bar{\bs \theta}^{(i)}_{1:t})$ and $\bar{\bs \lambda}_t^{(i)}$ from algorithm \ref{alg:smcupdate} with input $\bar{\bs \lambda}_{t-1}^{(i)}$ and $\left[{\bs \vartheta},\bar{\bs \theta}_t^{(i)}\right]$  
 \State \quad \quad  $\log w_{t}^{(i)} \leftarrow\ell(\+y_{t}| \+ y_{1:t-1},{\bs \vartheta}, \bar{\bs \theta}^{(i)}_{1:t}) + \log f(\bar{\bs \theta}^{(i)}_t  | \bar{\bs \theta}^{(i)}_{1:t-1} , \bs \varphi) - \log\pi(\bar{\bs \theta}^{(i)}_t  | \bar{\bs \theta}^{(i)}_{1:t-1},\bar{\bs \lambda}_{t-1}^{(i)}, \+y_{1:t})$
  \State \quad {\bf end for}
 \State \quad $\widehat\ell(\+ y_t| \+ y_{1:t-1}, \bs \vartheta, \bs \varphi) \leftarrow \log \left( \frac{1}{n_{part}}\sum_{i=1}^{n_{part}} w_{t}^{(i)}\right)$
 \State \quad $\bar w_{t}^{(i)} \leftarrow w_{t}^{(i)}/\sum_{j=1}^{n_{part}}w_{t}^{(j)}$
 \State \quad resample $\left\{ \bar{\bs \theta}^{(i)}_{1:t}, \bar{\bs \lambda}_{t}^{(i)} \right\}_{i=1}^{n_{part}}$ according to the weights $\left\{\bar w_t^{(i)}\right\}_{i=1}^{n_{part}}$
 \State {\bf end for}
\end{algorithmic}
\end{algorithm}
In section \ref{sec:examples} we explore  ways in which the large population theory from section \ref{sec:consistency} is relevant to the construction and behaviour of PALSMC algorithms for over-dispersed models:
\begin{itemize}
\item It is well known that the efficiency of sequential Monte Carlo  methods can be highly sensitive to the choice of the proposal distribution, $\pi$ in algorithm \ref{alg:overSMC}.  If we could choose $\pi(\bar{\bs \theta}_t  | \bar{\bs \theta}^{(i)}_{1:t-1},\bar{\bs \lambda}_{t-1}^{(i)} \+y_{1:t})$ to be proportional (as a function of $\bar{\bs \theta}_t$) to:
\begin{equation}
\exp\left[\ell(\+y_{t}| \+ y_{1:t-1},{\bs \vartheta}, \bar{\bs \theta}^{(i)}_{1:t-1}, \bar{\bs \theta}_{t})\right]f(\bar{\bs \theta}_t  | \bar{\bs \theta}^{(i)}_{1:t-1} , \bs \varphi), \label{eq:optimal_proposal}
\end{equation}
then the weight $w_{t}^{(i)}$ would have no dependence on $\bar{\bs \theta}_{t}^{(i)}$. Consequently the variability of the weight would be reduced and the overall efficiency of the PALSMC algorithm likely improved. This ``optimal'' choice of $\pi$ is often not analytically tractable, but inspired by our consistency theory we suggest Laplace approximation to it. We demonstrate such proposals in simulation-based and real data examples in sections \ref{sec:SEIRover} -- \ref{sec:measles} and find them to be very efficient in practice.
\item Through a simulation example in section \ref{sec:SEIRover}, we  illustrate that even for our over-dispersed models, where one might expect estimation consistency to be ruled out, increasing population size can in fact increase the accuracy of point estimates of  $\bar{\bs \theta}_{t}$ obtained from the r.h.s. of \eqref{eq:SMC_filtering_approx}. The explanation for this is that whilst the model may be over-dispersed once $\bar{\bs \theta}_{1:t}$ are integrated out, it is equi-dispersed \emph{conditional} on $\bar{\bs \theta}_{1:t}$.
\end{itemize}
In the examples in section \ref{sec:examples} we also expand on  algorithm \ref{alg:overSMC}  to include sophisticated resampling schemes and block particle filtering techniques \citep{rebeschini2015can}.

\section{Discussion and examples}\label{sec:examples}
Code for all examples is available at: \href{https://github.com/LorenzoRimella/PAL}{https://github.com/LorenzoRimella/PAL}.
\subsection{Inference using automatic differentiation and HMC for an age-structured model of 'flu}\label{sec:age_struct}
In this example, we demonstrate PALs for an age-structured model of a 1957 outbreak of influenza in Wales. Computation is performed using the probabilistic programming language Stan \citep{carpenter2017stan}, taking advantage of automatic differentiation to implement Hamiltonian Monte Carlo (HMC). This example also highlights how the general Latent Compartmental Model can accommodate discrete or discretisable covariates associated with subpopulations: in this case the covariates are indicators of the age-group which individuals belong to and this is reflected in the compartment structure of the model.
\subsubsection*{Data and Model}
The data consist of 19 weeks of incidence data in the form of GP symptom reports for a town with population size $8000$ across $4$ age groups: $00-04$, $05-14$, $15-44$, and $45+$. The data were analysed by  \cite{vynnycky2008analyses} and are available via the Github page associated with \citep{andrade2020evaluation}. For each age group $k = 1, \dots, 4$, 
\begin{alignat*}{2}
S_{k,t+1} &= S_{k,t} - B_{k,t}, \qquad &&E_{k,t+1} = E_{k,t} + B_{k,t} - C_{k,t}, \\
I_{k,t+1} &= I_{k,t} + C_{k,t} - D_{k,t}, \qquad &&R_{k,t+1} = R_{k,t} - D_{k,t},
\end{alignat*}
with conditionally independent increments: $B_{k,t} \sim {\text{Bin}(S_{k,t}, 1-e^{-h\bar \beta_{k,t}})}$,  $C_{k,t} \sim {\text{Bin}(E_{k,t}, 1-e^{-h\rho})}$,  $D_{k,t} \sim \text{Bin}(I_{k,t}, 1-e^{-h\gamma}),$ where
\begin{equation}\label{eq:beta_bar}
\left [ \begin{array}{c}
\bar \beta_{1,t} \\
\vdots \\
\bar \beta_{4,t} \\
\end{array} \right ]
= 
\underbrace{
\left [ \begin{array}{ccc}
\beta_{11} & \dots &\beta_{14} \\
\vdots & \ddots & \vdots \\
\beta_{14} & \dots & \beta_{44}\\
\end{array} \right ]}_{=:\+B}
\left [ \begin{array}{c}
I_{1,t} \\
\vdots \\
I_{4,t} \\
\end{array} \right ]\frac{1}{n}.
\end{equation}
$\+B$ is a symmetric matrix with element $\beta_{ij}$ representing the rate at which two individuals, one from the susceptible compartment of the $i$th age group and the other from the infective compartment of the $j$th age group come into effective contact. 

The mean time spent in the exposed compartment $1/\rho$ and the mean recovery time $1/\gamma$ are taken to be independent of age group and set to be $1.5$ days, following \cite{andrade2020evaluation}. We assume that the model evolves daily with $h=1/7$ and that observations consist of cumulative weekly transitions from the $E$ to $I$ compartments for each age group, that is we have observation times at times $\tau_r = 7r$ for $r = 1, \dots,R$ corresponding to the end of each week. In the setting of case (II) we denote observations $\bar{\+Y}_{k,r} = \sum_{s=\tau_{r-1}+1}^{\tau_{r}} {\+Y}_{k,t}$ where each element of each $ {\+Y}_{k,t}$ is equal to zero except the $(2,3)$th element corresponding to transitions from compartment $E$ to $I$ which, conditional on $C_{k,t}$, is distributed ${Y}^{(2,3)}_{k,t} \sim \text{Bin}(C_{k,t}, Q_{k,t}^{(2,3)}),$ where $\+Q_{k,t} \in [0,1]^{4\times 4}$ has elements equal to zero except for the $(2,3)$th entry which is equal to an age group dependant under reporting parameter $q_k \in (0,1)$ which is to be estimated. We give details of how this model is written as an instance of the Latent Compartmental Model and the algorithm used to calculate the PAL in the supplementary material.
\subsubsection*{Hamiltonian Monte Carlo with automatic differentiation in Stan}
We now consider MCMC sampling to approximate the posterior $p(\bs \theta | \bar{\+Y}_{1:R})$.
The probabilistic programming language Stan  \citep{carpenter2017stan} provides a framework for implementing HMC -- a type of MCMC algorithm which uses auxiliary ``momentum'' variables to help explore the posterior -- in which the user only needs to specify priors and provide a function which evaluates the likelihood for the model. Stan uses Automatic Differentiation (AD) to compute gradients and update the auxiliary HMC variables without the need for user input. Since the PAL consists of recursive compositions of elementary linear algebra operations, it is a natural candidate for AD. 
\subsubsection*{Results}
We implemented a Stan program incorporating the PAL, details of which are given in section \ref{sec:age_struct_supp} of the supplementary material. We stress that here we do not correct for the fact that the PAL is only an approximation to the true likelihood, so Stan is targeting an approximation to the true intractable posterior, although in a separate example in the supplementary material we explore corrections using Delayed-Acceptance MCMC methods. 
 
The parameters to be estimated are $\bs \theta = [\beta_{11}\; \cdots\; \beta_{44}\; q_1\;\cdots\;q_4]^\top$, the initial state for each age group is assumed known as $\+x_{1,0}=[948\;0\;1\;0]^\top$, $\+x_{2,0}=[1689\;0\;1\;0]^\top$, $\+x_{3,0}=[3466\;0\;1\;0]^\top$, $\+x_{4,0}=[1894\;0\;1\;0]^\top$.  We used vague gamma priors $\beta_{ij} \sim \text{Gamma}(5,1)$ for ${i,j = 1, \dots,4}$ and a vague truncated normal prior $q_k \sim \mathcal{N}(0.5,0.5)_{\geq 0, \leq 1}$ for $k = 1,\dots,4$. The HMC sampler was run to produce a chain of length $5\times 10^{5}$ iterations, a burn-inperiod of size $10^5$ was discarded and the remaining was thinned to produce a sample of $2.5 \times 10^4$. We report approximate posterior distributions and trace plots in section \ref{sec:age_struct_supp} of the supplementary material, these show no signs of unsatisfactory mixing. Figure \ref{ppplotsstoch} reports the posterior predictive distributions and credible intervals, we see good coverage of observed data. 




We repeated the analysis using an ODE version of the same age-structured SEIR model, from \cite{andrade2020evaluation}, with a Poisson reporting model: we use as emission distribution a Poisson distribution with rate given by the ODE solution scaled by an under-reporting parameter. This was implemented in the Stan framework using the code available in \cite{andrade2020evaluation}, we again sampled a chain of length $5\times 10^{5}$ iterations, discarded a burn-inperiod of size $10^5$, and thinned the remaining to produce a sample of $2.5 \times 10^4$. To calculate the reproduction number $R_0$ for stratified models such as this, one must calculate the so called \textit{next generation matrix} \citep{van2002reproduction} which has elements given by $\frac{n_i\beta_{ij}}{n\gamma}$ where $n_i$ is the population size of the $i$th age group.  $R_0$ is then given by the largest modulus of the eigenvalues of the next generation matrix \citep{diekmann1990definition}. Using this definition, we can produce approximate posterior distributions of $R_0$ using each of the PAL and ODE procedures, which we report in figure \ref{R0age}. The approximate posteriors concentrate around $1.42$ using the PAL and $1.82$ using the ODE model. This disparity in estimates can be related to the features of the posterior predictive distributions reported in figures \ref{ppplotsstoch} and \ref{ppplotdet}: the distribution of trajectories in figure \ref{ppplotdet} appears to `overshoot' the data in comparison to those in figure \ref{ppplotsstoch}, reflecting the higher force of infection implied by the ODE procedure in contrast to the PAL procedure. These posterior predictive plots also exhibit the inherent inflexibility of the ODE model: since the latent process is deterministic, random variations in the data away from the ODE trajectory must be explained as observation error. As is apparent in the $45+$ age group, this rigidity in modelling results in overconfidence and a poor fit compared to that of the stochastic model combined with the PAL procedure.
%
%
%
%
%
%

\begin{figure}[h!]
\centering
    \includegraphics[width=\textwidth, clip=true, trim={40 10 30 40}]{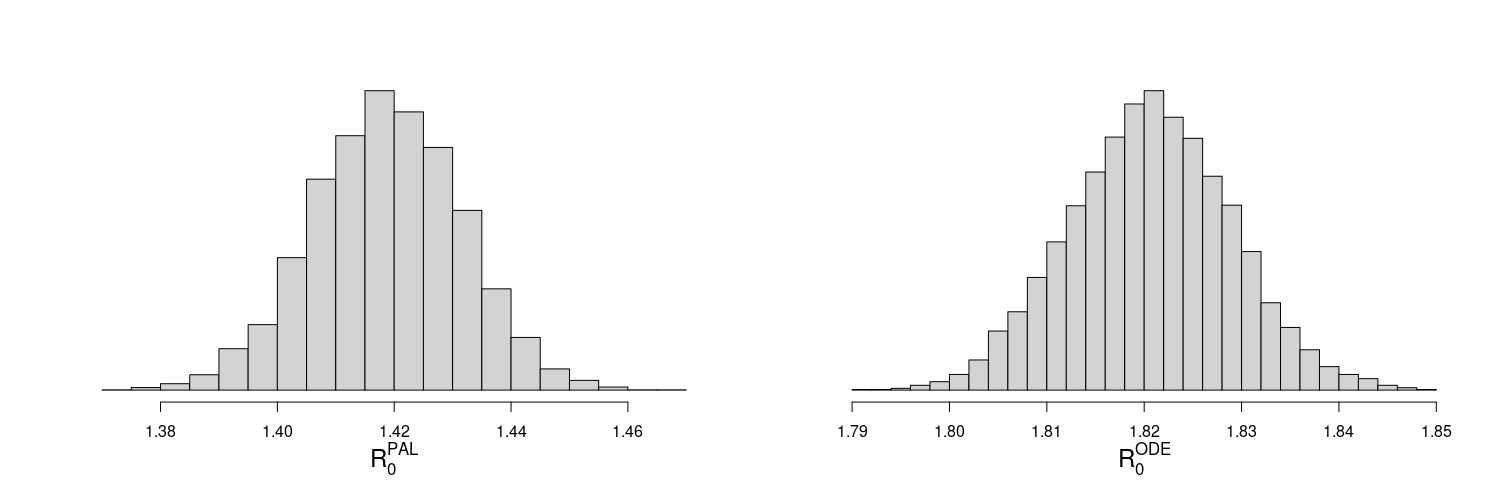}
    \caption{Age-structured 'flu example. Approximate posterior distributions for $R_0$ under the PAL and ODE procedures.\label{R0age}}
\end{figure}
\begin{figure}[h!]
    \centering
    \includegraphics[width=\textwidth]{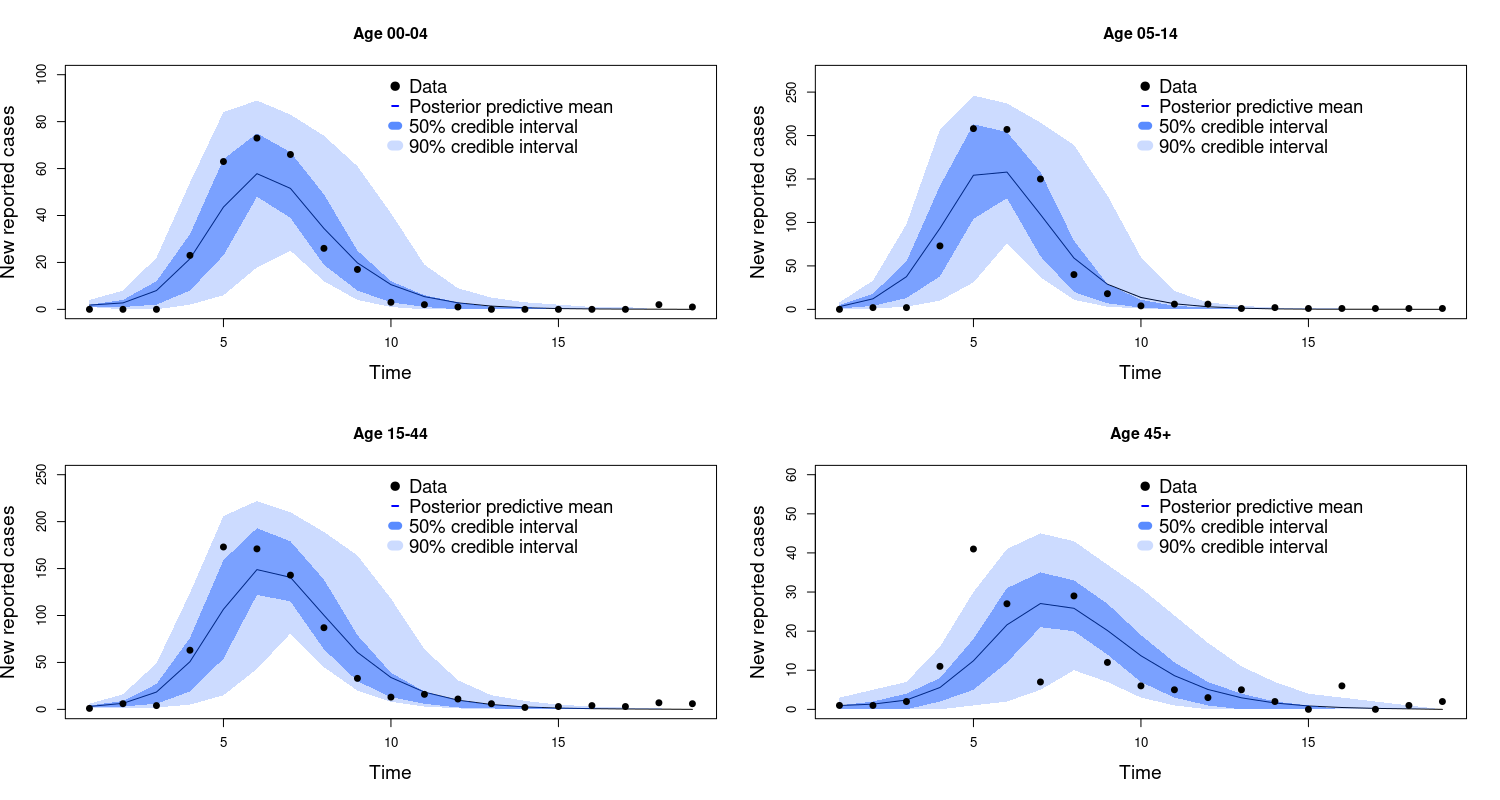}
    \caption{Age-structured 'flu example. Posterior predictive distributions obtained from inference under the stochastic model using the PAL within Stan.}
    \label{ppplotsstoch}
\end{figure}
\begin{figure}[h!]
    \centering
    \includegraphics[width = \textwidth]{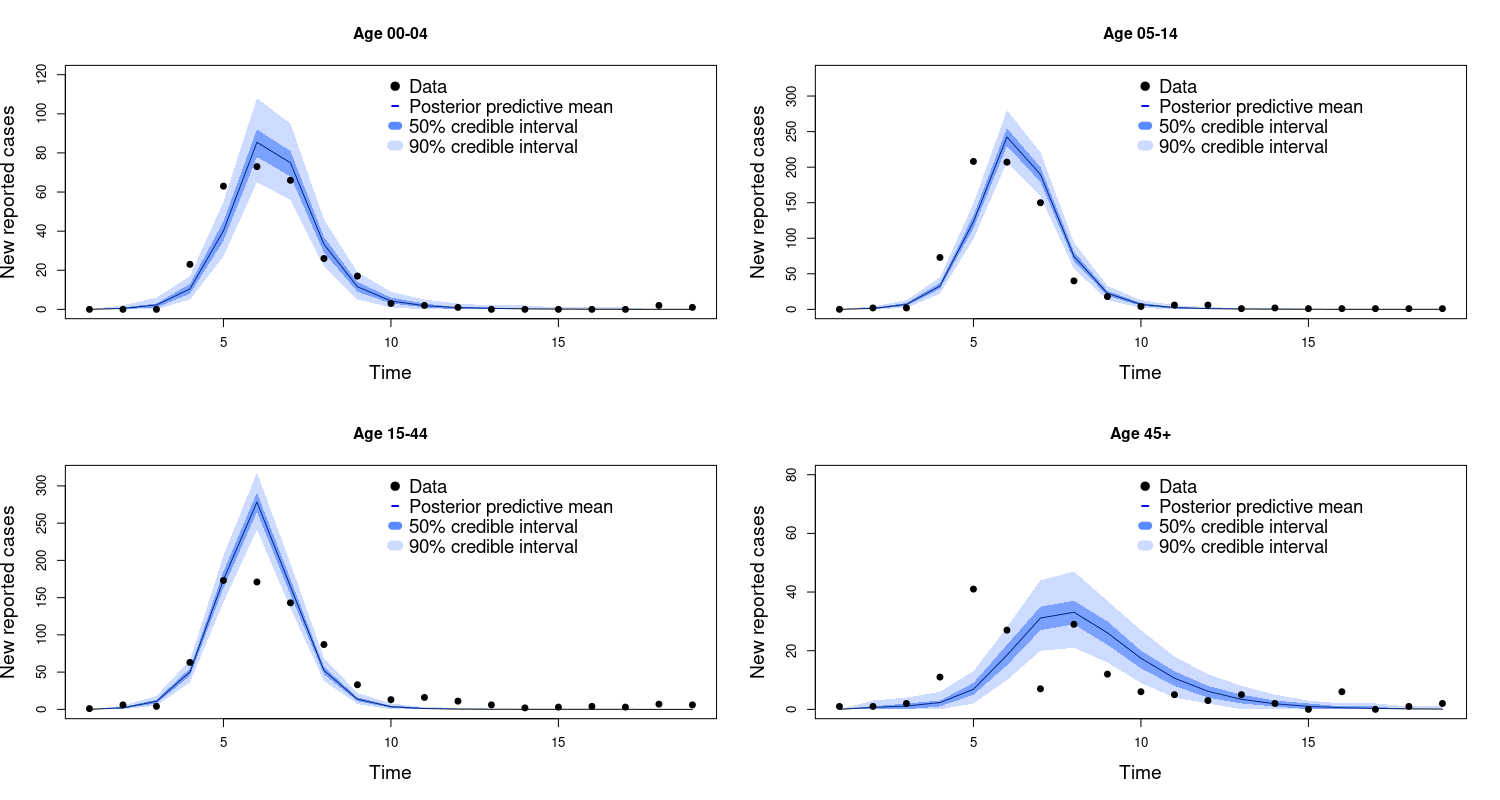}
    \caption{Age-structured 'flu example. Posterior predictive distributions obtained from inference under the ODE model using Stan.}
    \label{ppplotdet}
\end{figure}

\subsection{Pedagogical over-dispersed SEIR example} \label{sec:SEIRover}
To demonstrate inference for an over-dispersed model using PALSMC we consider a simple SEIR model for which the latent population $\+{x}_t\equiv[S_t\;E_t\;I_t\;R_t]^\top$ evolves according to transition matrix \eqref{eq:SEIRexample}, with immigration and emigration parameters, $\bs \alpha_t$ and $\bs \delta_t$, combined with the observation model $y_t\sim \text{Binom}(I_t,q_t)$. We assume $\bs \alpha_t$ and $\bs \delta_t$ are known. We can cast this model in the form discussed in section \ref{sec:over-dispersion} by identifying $\bs \vartheta = [\beta \; \rho\; \gamma ]$, $\bar{\bs \theta}_{1:T} = q_{1:T}$, and choosing $f(\cdot|\bs\varphi)$ to make $q_{1:T}$ i.i.d. according to a truncated normal distribution  $q_t\sim\mathcal{N}(\mu_q, \sigma^2_q)_{\geq 0, \leq 1}$,  with $\bs \varphi = [\mu_q \; \sigma_q^2]$, $\mu_q \in [0,1]$ and $\sigma^2_q>0$. We give the details of a PALSMC scheme for this model in section \ref{sec:SIRov_supp} of the supplementary material, including the design efficient, data-informed proposals by Laplace approximation to \eqref{eq:optimal_proposal}, inspired by the theory from section \ref{sec:consistency}.

\begin{figure}[h!]
\includegraphics[width = \textwidth]{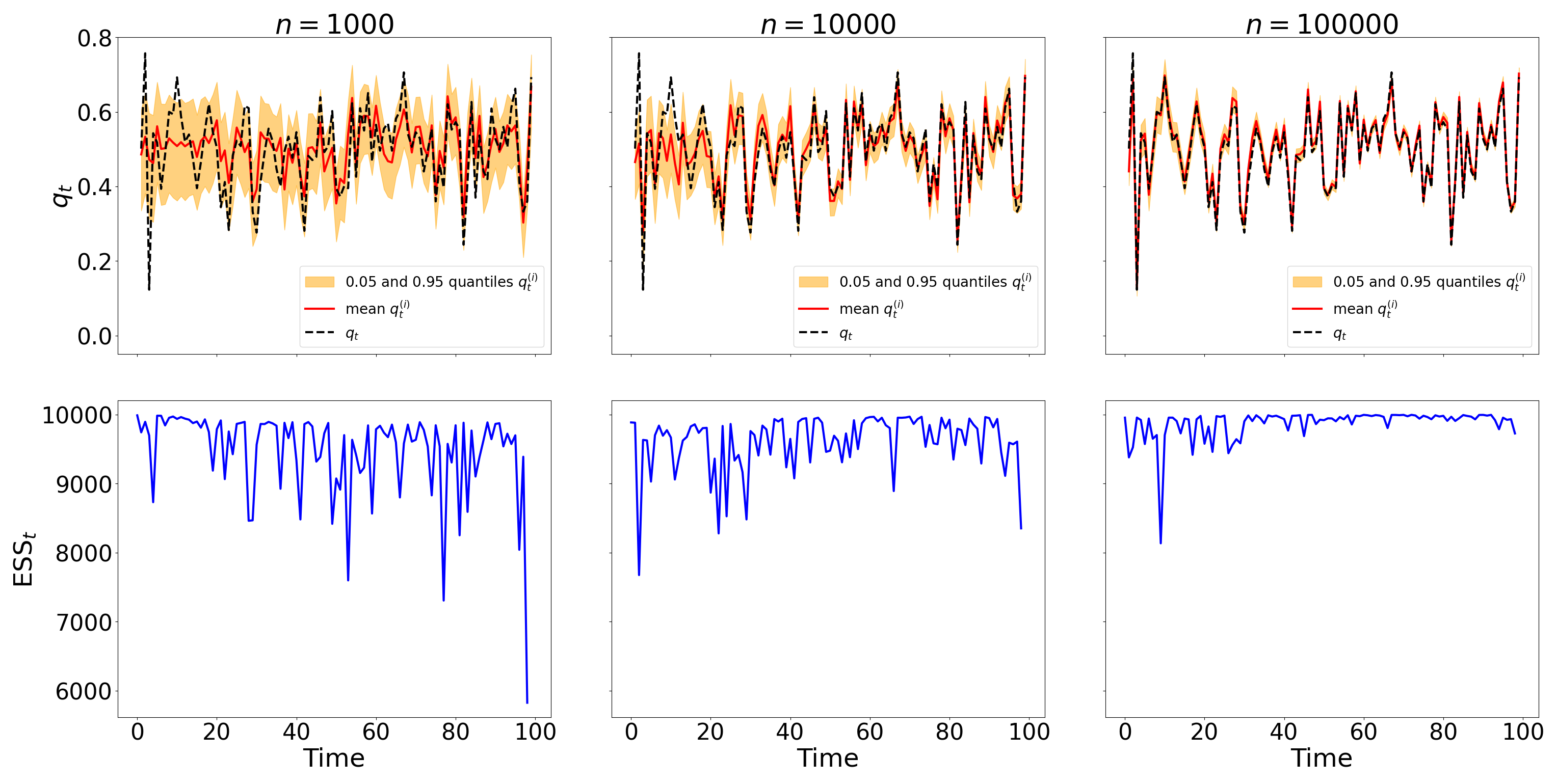}
\includegraphics[width = \textwidth]{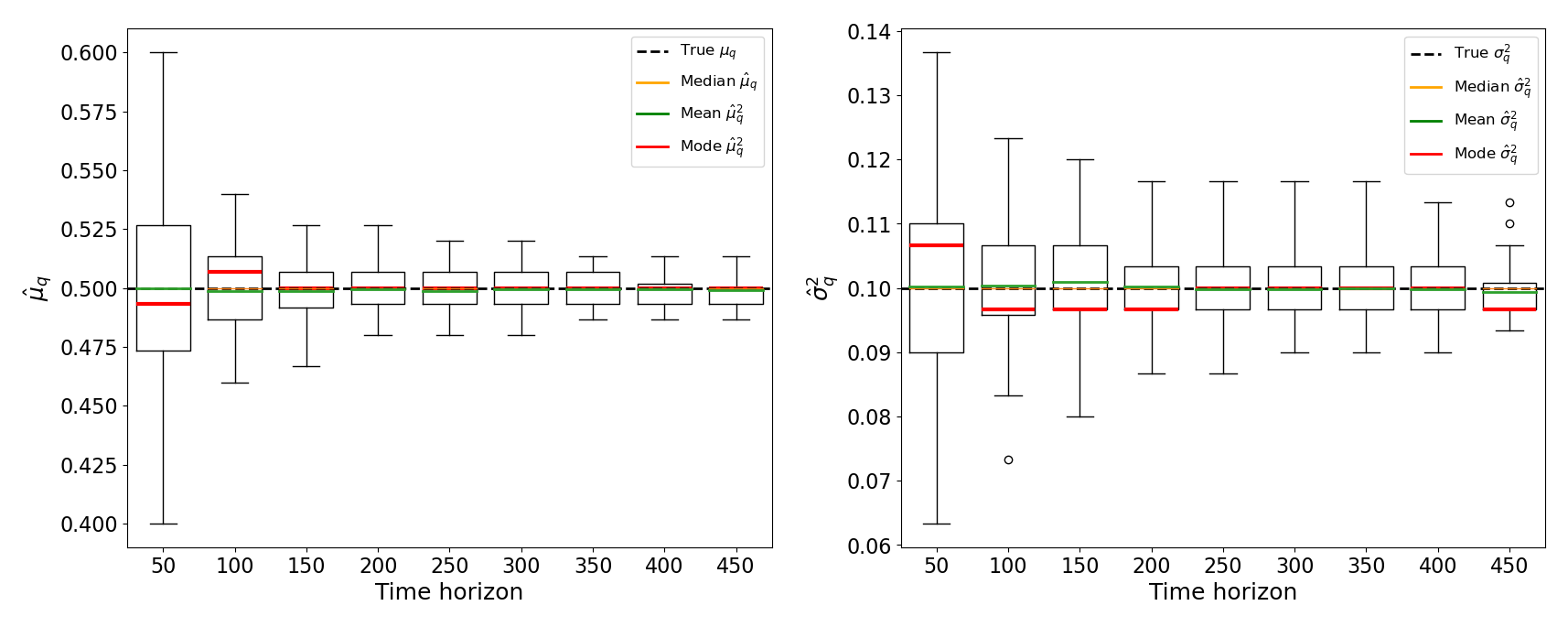}
\caption{Pedagogical over-dispersed SEIR example. Top two rows: filtering distribution approximations and ESS obtained from PALSMC with $n_{part}=10^4$ particles and increasing model population size $n$. Bottom row: maximum PALSMC estimation of hyper-parameters $\bs \varphi = [ \mu_q \; \sigma_q^2]$ over increasing time horizons. Each boxplot summarises $100$ hyper-parameter estimates.}
\label{SEIRoverplt}
\end{figure}

\subsubsection*{Filtering and parameter estimation simulation study}
To assess the ability of the PALSMC scheme to recover ground truth quantities, we simulated data from the model with $[\beta\; \rho \; \gamma \; \mu_q \; \sigma_q^2] = [0.8\; 0.1 \; 0.2 \; 0.5 \; 0.1]$, $\bs \pi_0 = [0.99 \;0 \; 0.01 \; 0]^\top$, $\bs \alpha_t = 0.05\bs \pi_0$ and $\bs \delta_t = [0.95\;0.95\;0.95\;0.95]^\top$. The first two rows of figure \ref{SEIRoverplt} explore the performance of PALSMC with increasing population size $n$ and using the data-generating values of $[\bs \vartheta \; \bs \varphi]$.  This collection of plots was created by first sampling a single draw of latent variables $q_{1:100}\sim f(\cdot|\bs\varphi)$, then for each value of $n=10^3, 10^4,10^5$, generating data $\+y_{1:100}$ from the model conditional on $q_{1:100}$, and running the PALSMC algorithm.  We see that the effective sample size (ESS) is high across all values of population size $n$, indicating a good approximation to the r.h.s. of \eqref{eq:SMCapproxlik}; this reflects the careful choice of proposal distribution. As in \eqref{eq:SMC_filtering_approx}, for each $t\geq1$, the PALSMC algorithm yields a Monte Carlo approximation $p(q_t | y_{1:t}) \approx\sum_{i=1}^{n_{part}}\bar{w}_{t}^{(i)}\delta_{ q_t^{(i)}}  $.  The first row of plots in figure \ref{SEIRoverplt} demonstrates that these PALSMC filtering approximations concentrate on the true $q_{1:t}$ as the population size $n$ grows.  This is in keeping with the theory of section \ref{sec:consistency}, which tells us that $\text{argmax}_{q_{1:t}, \bs \vartheta}\ell(y_{1:t} | q_{1:t}, \bs \vartheta)$ converges to the data generating $[q_{1:t} \; \bs \vartheta]$ in the large population limit $n \rightarrow \infty$.  

We also explored the ability of the procedure to recover the data generating hyperparameters $\bs \varphi = [\mu_q\;\sigma_q^2]$; in the bottom plots of figure \ref{SEIRoverplt}. Here each boxplot summarises $100$ estimates, each estimate was obtained as follows:  (1) simulate $q_{1:450}\sim f(\cdot|\bs\varphi)$ and data $\+y_{1:450}$ from the model with population size $n=10^4$, (2) construct a 2-dimensional grid of candidate values for estimation of $[ \mu_q \; \sigma_q^2]$, (3) run PALSMC with input $\+y_{1:450}$ for each grid point, with $n_{part}=10^4$ particles and $\bs \vartheta$ set to the DGP, (4) at time-steps $t=50,100,150,...$ report as an estimate of $[ \mu_q \; \sigma_q^2]$ the value on the grid for which the largest value of $\widehat{\ell}(y_{1:t}| \mu_q,  \sigma_q^2,\bs \vartheta)$ was obtained across the PALSMC runs. We see from these boxplots that, for increasing time horizon $T$, the maximum PALSMC estimators obtained across $100$ simulations converge towards the data generating $\bs \varphi$ with little bias. 

Overall, these simulation results illustrate that, even in an over-dispersed setting, a large population can be useful in estimating $\bar{\bs\theta}_{1:t}$, whilst a large time horizon can be useful in recovering hyperparameters $\bs \varphi$.

\subsection{Comparison of over-dispersion mechanisms in a model of rotavirus}\label{sec:rotavirus}
In this section we explore a model selection task in which an equi-dispersed model is nested within a larger class of models including over-dispersion, using the approach of section \ref{sec:over-dispersion}. The rotavirus data and model we consider are inspired by \cite{stocks2020model},  who assessed the fit of a family of continuous time, stochastic models with varying degrees of over-dispersion using the Akaike Information Criterion (AIC). 
\subsubsection*{Models}
The data considered consist of weekly incidence counts of rotavirus infections in Germany for $3$ age groups over the $8$ year period $2001$-$2008$. We consider a discrete-time version of the model of \cite{stocks2020model} which compartmentalises a population of $n=82,372,825$ into an age stratified SIR model $\{S_{k,t},I_{k,t},R_{k,t}\}_{k=1}^3$ comprising $3$ age groups: $0-4$, $5-59$, and $60-99$. Given the number of susceptibles in age group $k$ at time $t$ after immigration, which we denote $\bar{S}_{k,t}$, and the number of infected individuals in each age group $\bs I_t = [I_{t,1}\;I_{t,2}\;I_{t,3}]^\top$, for $t \geq1$ the number of new infected individuals in each age group $k=1,2,3$ at time step $t$ is conditionally distributed:
\begin{equation} \label{eq:eqrotavirus}
B_{k,t}\sim \text{Binom}\left(\bar S_{k,t},1-\exp\left\{- \frac{\bs{\beta}_k^\top\bs I_{t}}{n}\chi_t\right\} \right),
\end{equation}
where  $\bs \beta_k = {[\beta_k\; \beta_k\; \beta_k]^\top}$ with $ \beta_k >0$  denotes the force of infection experienced by age group $k$, and $\chi_t = {\left(1+\rho \cos(2\pi t /w + \phi)\right)}$ denotes a deterministic seasonality component with amplitude $\rho \in [0,1]$, phase $\phi\in [0,2\pi]$, and period length $w>0$, which we set to correspond to $1$ year. Other details of the latent compartmental model are given in section \ref{sup:rotavirus_supp} of the supplementary materials. We assume an aggregated transmission model, with weekly observations coming at times $\tau_r=4r$ for $r=1,\dots,R$. For each age group observations are conditionally distributed $Y_{k,r}\sim \text{Binom}\left(\sum_{t=\tau_{r-1}}^{\tau_r} B_{k,t}, q_{k,r}\right)$. 

We consider three variants of this model: 
\setlist[description]{font=\normalfont }
\begin{description}[itemsep=-3pt,topsep=2pt]
\item[EqEq:] a fully equi-dispersed model, in which  $q_{k,r} = \mu_{q} \in [0,1]$, and $\mu_{q}$ is assumed known as in \cite{stocks2020model};
\item[EqOv:]  an equi-dispersed latent compartmental model and an over-dispersed observation model, the same as EqEq except that $q_{k,r} \stackrel{\text{iid}}{\sim} \mathcal{N}(\mu_{q}, \sigma^2_q)_{\geq 0, \leq 1}$ where $\sigma^2_q >0$ is to be estimated;
\item[OvOv:]  over-dispersion in both the latent and observation models, the same as EqOv except that  we augment $\chi_t$ in equation \eqref{eq:eqrotavirus}  to $\chi_t\xi_r$
where for $r \geq1$, $\xi_{r} \stackrel{\text{i.i.d}}{\sim} \text{Gamma}(\sigma_\xi,\sigma_\xi)$ are multiplicative disturbances with mean $1$ and $\sigma_\xi>0$ is to be estimated.
\end{description}
\subsubsection*{Inference}
The parameters we estimate in each instance of the model are given by: EqEq: $\bs \theta = [\beta_1 \; \beta_2\; \beta_3 \; \phi \; \rho] $; EqOv: $\bs \vartheta = [\beta_1 \; \beta_2\; \beta_3 \; \phi \; \rho]$ with $\{ \bar{\bs \theta}_r \}_{r\geq 1} = \{\+ q_r\}_{r \geq 1}$ and $\bs \varphi = \sigma_q$; OvOv: $\bs \vartheta = [\beta_1 \; \beta_2\; \beta_3 \; \phi \; \rho]$ with $\{ \bar{{\bs \theta}}_r \}_{r\geq 1} = \{ [ \+ q_r \; \xi_r]\}_{r \geq 1}$ and $\bs \varphi = [\sigma_q \; \sigma_\xi]$. The PALSMC algorithm for this model is given section \ref{sup:rotavirus_supp} in the supplementary material. For parameter estimation the approximate likelihoods of each of the models EqEq, EqOv, OvOv, obtained from PALSMC were maximised using a finite-difference coordinate ascent algorithm; we ran the optimisation $100$ times, initialised randomly over a range of feasible values. Plots evidencing convergence are in section \ref{sup:rotavirus_supp} of the supplementary materials. The algorithm was implemented using R and Rcpp on a node of the University of Bristol’s BluePebble cluster, although we did exploit parallelization.

We note that a PAL, e.g. the exponential of the r.h.s. of \eqref{eq:pal_case_I}, is a valid likelihood function associated with a product of vector-Poisson distributions whose intensity parameters are defined through the corresponding filtering algorithm, e.g. algorithm \ref{alg:x}. Similarly the output from PALSMC, e.g. \eqref{eq:SMC_filtering_approx} from algorithm \ref{alg:overSMC}, is a Monte Carlo approximation to a valid likelihood for a mixture of products of vector Poisson distributions. This validity justifies the use of AIC for model comparison but with the log-PAL, or the log-output from PALSMC, substituted in place of the usual log-likelihood.

\hspace*{-0.1\parindent}%
\begin{minipage}[!h]{\textwidth}
 \noindent\hspace{-0.7cm}\begin{minipage}{0.58\textwidth}

 \begin{longtable}{ c c c c}
\captionsetup{justification=centering}
 \\

 \hline
 Model & AIC & Ave. comp. time\\
 \hline

 EqEq 	& 98866.65	& 30 sec\\
 EqOv 	    & 15154.75  & 2 hr  \\
 OvOv		& 13778.08    & 3 hr \\
\cite{stocks2020model}	    & 20134.38 & 11 hr  \\

 \hline
  \caption{Rotavirus example. Model assessment \\ and computation time.}\label{rotavirusscores}
 \end{longtable}
    \end{minipage}
\begin{minipage}{0.3\textwidth}
    \centering

 \begin{longtable}{ c c c c}\\
\captionsetup{justification=centering}

 \hline
 Parameter & EqEq & EqOv & OvOv \\
 \hline

 $\beta_{1}$ 	& 12.15	& 12.74	& 11.48 \\
 $\beta_2$ 	    & 0.22  & 0.21  & 0.25  \\
 $\beta_3$ 		& 0.34  & 0.31  &  0.35 \\
 $\phi$ 	    & 0.017 & 0.14  &  0.14 \\
 $\rho$ 		& 0.022 & 0.19  &  0.16 \\
 $\sigma_q^2$ 	& n/a   & 0.042 &  0.021 \\
 $\sigma_\xi$ 	& n/a   & n/a   &  66.89 \\

 \hline
  \caption{Rotavirus example. Parameter estimates. \label{rotavirusest}}
 \end{longtable}

  \end{minipage}
  \end{minipage}

As a benchmark comparison, we fitted an ARMA(2,0,1) model to the log-transformed data, which gives an AIC of 23043 (details are given  in the supplementary materials section \ref{sup:rotavirus_supp}). Table \ref{rotavirusscores} gives the AIC values for each of our models, along with the best AIC value reported by \cite{stocks2020model}, which was for a model with over-dispersion in the transition model in the form of multiplicative gamma distributed noise, and over-dispersion in the observation model through negative-binomial reporting. This model of \cite{stocks2020model} is therefore qualitatively most similar to our model OvOv. The average computation times were calculated over $100$ runs of the coordinate ascent procedure. We find that, whilst we can fit EqEq with high computational efficiency, our two over-dispersed models achieve a substantially better AIC score, indicating a much better fit with increasing over-dispersion. Both EqOv and OvOv outperform \cite{stocks2020model} AIC and computation time, although of course the latter is implementation-dependent. Figure \ref{fig:rotacoverage} demonstrates the increase in goodness of fit that an over-dispersed model provides for the rotavirus data, we see that prediction intervals for OvOv drastically outperform those for EqEq in terms of coverage. 

The estimated values of $\beta_1$ and $\beta_2$ we find for all three models EqEq, EqOv and OvOv (table \ref{rotavirusest}) are quite similar to those reported by \cite{stocks2020model}, but we find a slightly lower value of $\beta_3$. For EqOv  and OvOv we find a similar seasonal amplitude $\rho$ but slightly larger phase $\phi$ than \cite{stocks2020model}. The seasonal $R_0$ ranges for each model are:  EqEq $(0.98,1.027)$ ,EqOv  $(0.83,1.22)$, and OvOv  $(0.82,1.14)$ compared to $(0.855, 1.152)$ obtained by \cite{stocks2020model}. The better fit of EqOv and OvOv compared to \citep{stocks2020model} may thus be attributed to some combination of quite subtle differences in estimates of parameters related to disease transmission, together with the difference between the negative binomial observation model in \citep{stocks2020model} and the way EqOv and OvOv treat the $q_{k,r}$ as latent variables.

\begin{figure} 
\includegraphics[width = \textwidth]{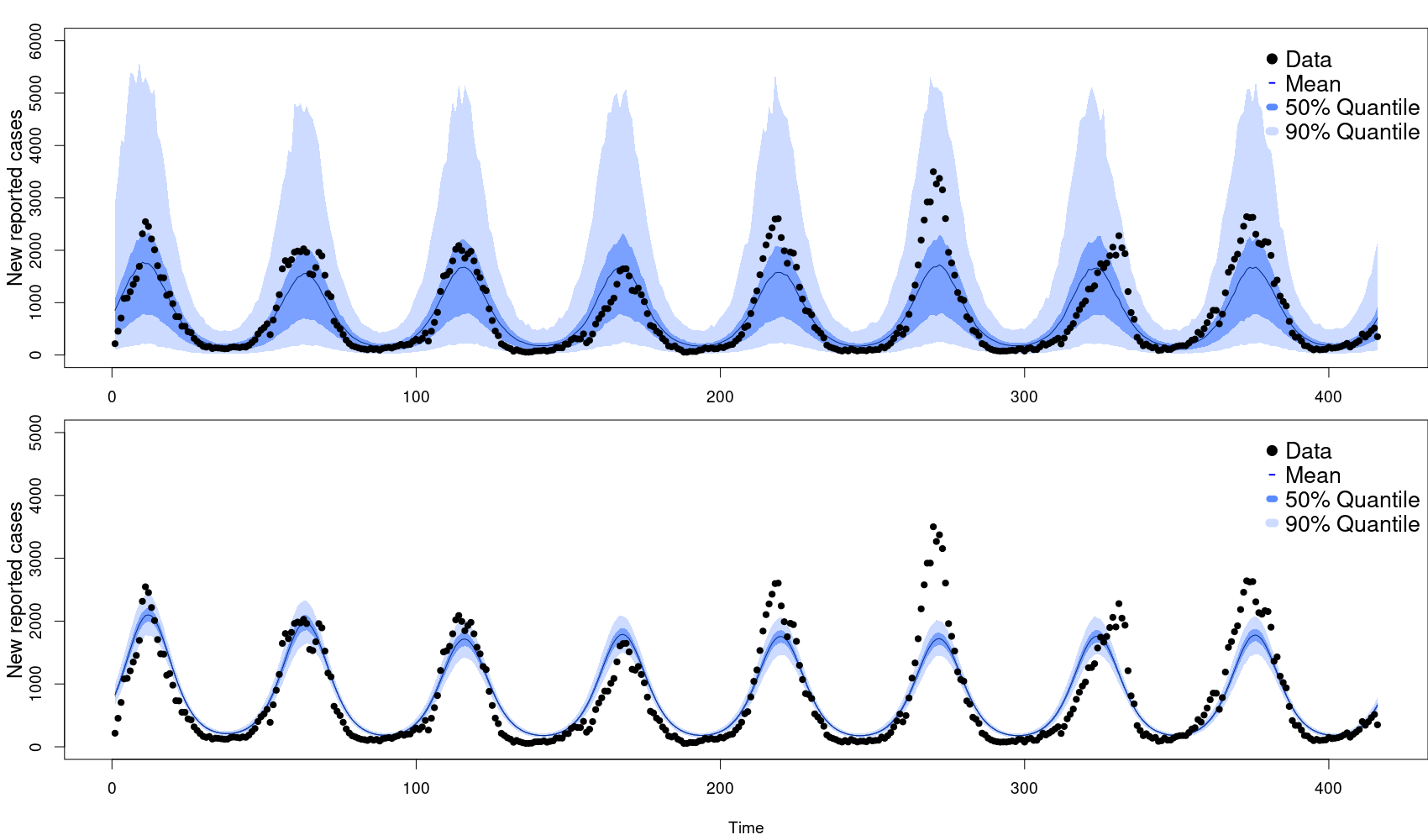}
\caption{Rotavirus example. Prediction intervals for age group $0-4$ corresponding to $1000$ realisations of OvOv (top panel) and EqEq (bottom panel), using maximum PALSMC parameter estimates.}
\label{fig:rotacoverage}
\end{figure}

\subsection{Evaluating the role of unit-specific parameters in a meta-population model of measles}\label{sec:measles}
In this section we illustrate how the PAL framework can be used to calibrate a more complex, larger-scale model, and compare the fit of sub-models with different levels of unit-specific parameters.
\subsubsection*{Model}
We consider a discrete time version of a measles model originally presented by \cite{xia2004measles}, subsequently extended into a spatio-temporal framework by \cite{he2010plug} and recently explored by \cite{park2020inference} using guided intermediate resampling filter (GIRF) techniques. 

The model describes the evolution of recurrent pre-vaccination measles epidemics in $J=40$ cities across the UK over the $15$ year period $1950-1965$. The model has susceptible $(S)$, exposed $(E)$, infective $(I)$, and removed $(R)$ compartments for each of the $J=40$ cities. For each city $k = 1, \dots, J$ the initial state of the epidemic is given by  $[S_{k,0}\;E_{k,0}\;I_{k,0}\;R_{k,0}]^\top \sim \text{Mult}(n_{k,0},\bs \pi_{k,0})$, where the probability vector $\bs \pi_{k,0}$ is a possibly city -specific initial distribution parameter, and $n_{k,t}$ for $t\geq 0$ denotes time varying population size. For each city $k=1, \dots, J$ the population  evolves twice per week with the following dynamic:
\begin{alignat*}{2}
    S_{k,t+1} &= S_{k,t} - B_{k,t}- F^{(S)}_{k,t} + A_{k,t}, \qquad &&E_{k,t+1} = E_{k,t} + B_{k,t} - C_{k,t}- F^{(E)}_{k,t}, \\
    I_{k,t+1} &= I_{k,t} + C_{k,t} - D_{k,t}- F^{(I)}_{k,t}, \qquad &&R_{k,t+1} = R_{k,t} + D_{k,t}- F^{(R)}_{k,t},
\end{alignat*}
where $F^{(\cdot)}_{t,k}$ and $A_{k,t}$ model emigration (deaths) and immigration (births), respectively; and $C_{k,t}$ and $D_{k,t}$ are binomially distributed (details in the supplementary material). The term $B_{k,t}$ represents the number of new infections in the $k$th city and is distributed 
$$B_{k,t} \sim \text{Bin}\left(S_{k,t} - F^{(S)}_{k,t}, 1 - e^{ -hb_{k,r} } \right),$$ 
where:
\begin{equation}\label{eq:betameasles}
b_{k,r} = {\beta_{k,r}\xi_{k,r} \cdot\left[ \left( \frac{I_{k,\tau_r}}{n_{k,\tau_r}} \right)+ \sum_{l \neq k} \frac{v_{k,l}}{n_{k,\tau_r}}\left\{ \left(\frac{I_{l,\tau_r}}{n_{l,\tau_r}} \right) -  \left(\frac{I_{k,\tau_r}}{n_{k,\tau_r}} \right) \right\} \right]},
\end{equation}
for $r\geq 1$, $t = \tau_r, \dots, \tau_{r+1}-1$. Here $\beta_{k,r} $ denotes a possibly city-specific seasonal transmission coefficient and $\xi_{k,r} \stackrel{\text{iid}}{\sim} \text{Gamma}(\sigma_\xi,\sigma_\xi)$, for $\sigma_\xi>0$, is mean-$1$ multiplicative noise which achieves over-dispersion in the marginal distribution of $B_{k,t}$. 

The summation term in \eqref{eq:betameasles} encodes the intercity interaction under a `gravity model' -- see \citep{truscott2012evaluating} for background on these kind of models in epidemiology. The strength of the interaction $v_{k,l}$ is computed as:
\begin{equation*}
v_{k,l} = g  \frac{\bar s}{\bar n} \frac{n_{k,0}n_{l,0}}{s_{k,l}},
\end{equation*}
where $g$ is called the `gravitational' constant parameter, $\bar n$ is the average of the initial populations, $\bar s$ is the average inter-city distance and $s_{k,l}$ denotes the distance between cities $k$ and $l$. The interpretation of the gravity model is thus that the strength of the interaction between two cities is directly proportional to their populations and inversely proportional to their distance. 

The observations are aggregated incidence data in the form of cumulative fortnightly transitions from infective to recovered for each of the $40$ cities, at times $\tau_r=4r$ for $r = 1,\dots, R$. Our observation model, which allows for over-dispersion, is described in section \ref{sec:measles_supp} of the supplementary material, along with the distributions of $C_{k,t}$, $D_{k,t}$, $F^{(\cdot)}_{k,t}$, and $A_{k,t}$, and an explanation of how we write the model as an instance of the Latent Compartmental model with $h=3.5$ days, corresponding to bi-weekly transitions. 

We consider three variants of this model all with over-dispersion in both the dynamics and observation mechanisms, but with increasing levels of city-specific parameters:
\setlist[description]{font=\normalfont }
\begin{description}[itemsep=-3pt,topsep=2pt]
\item[A:] the initial distribution vectors $\bs \pi_{k,0}$ and force of infection parameters $\beta_{k,r}$ are shared across cities, i.e. constant in $k$;
\item[B:] $\bs \pi_{k,0}$ is city-specific and $\beta_{k,r}$ is shared across cities;
\item[C:] $\bs \pi_{k,0}$ and $\beta_{k,r}$ are city-specific.
\end{description}
Here we are inspired by an investigation conducted by \cite{ionides2022iterated}, where sub-models with  increasing numbers of city-specific parameters were fitted to a dataset on a smaller spatial scale, comprising $20$ cities compared to the $40$ we consider. \cite{ionides2022iterated} suggested that approximation techniques may be needed to analyse larger data sets, our application of the PAL framework is a step in that direction. However we note that the $20$-city dataset analysed by \cite{ionides2022iterated} is not a subset of the $40$-city dataset we consider here, so direct comparisons of model fit may not be made. Never-the-less we shall compare our results to those obtained by \citep{park2020inference} for a model in which parameters are shared across cities, fitted to the same $40$-city dataset we consider.

\subsubsection*{Inference}
In section \ref{sec:measles_supp} of the supplementary material we give the details of a PALSMC algorithm in which the PAL is embedded within a block particle filter \citep{rebeschini2015can,ionides2022iterated}, to numerically approximate the log-likelihood. We used data-informed proposals and lookahead resampling to improve efficiency.  For each of the models A,B,C, the approximate log-likelihood obtained from this PALSMC algorithm with $5000$ particles was maximized with respect to the model parameters through Sequential Least Squares Programming. The procedures were implemented using Python and TensorFlow on a 32gb Tesla V100 GPU available on the HEC (High-End Computing) facility from Lancaster University.

Table \ref{measlesresults} details PALSMC approximate log-likelihood and AIC values for each of the models A,B,C, along with an approximate log-likelihood reported by \cite{park2020inference} for comparison. The GIRF used by \cite{park2020inference} consists of a simulator for a continuous in time latent process combined with a particle filter  which uses guide functions for intermediate propagation and resampling, parameters of the model are estimated via an iterated filtering scheme. Together with Monte Carlo adjusted profile methodology \citep{ionides2017monte} they are able to generate profile likelihood estimates for confidence interval estimation. Frequentist uncertainty interval calculation is out of the scope of the current work and would require results on the asymptotic distribution of the maximum PAL estimator, see section \ref{sec:future} for a discussion.

Our model A is similar to that of \cite{park2020inference} in the sense that both these models have parameters shared across cities, but we find model A performs better  in terms of log-likelihood and AIC. As we move to from model A to models B and C, by making more parameters city-specific, we see an improvement in  log-likelihood and AIC. We also note that the computation time for fitting model A is orders of magnitude smaller than that of \cite{park2020inference}. The computation time is of course implementation-dependent, but we note that we have not devised a bespoke optimization algorithm to maximize the PALSMC approximation, but rather applied a standard ‘black-box’ optimizer. As prompted by an anonymous reviewer, we fitted an ARMA(2,0,1) model to the log-transformed data for a benchmark comparison; this gave a log-likelihood of -69168 (details are given  in the supplementary materials section \ref{sec:measles_supp}).

Estimates of the city-specific parameters $\beta_{k,r}$ in model C can be used to estimate city-specific $R_0$ values, calculated as in \cite{ionides2022iterated}. We find that across the $40$ cities these estimated $R_0$ values lie in the range $5.63-16.65$. The fitted mean latent and infective periods for model $C$ were $8.49$ and $9.53$  respectively; these values are in line with previous inferences on the behaviour of measles epidemics \citep{guerra2017basic}, \citep{delamater2019complexity}. Full details of our numerical results are in the supplementary materials section \ref{sec:measles_supp}.
\vspace{-0.5cm}
\begin{figure}[h!]
 \begin{longtable}{ c c c c c}
\label{measlesresults}\\
 \hline
 Model & No. parameters& Log-likelihood\;(sd) & AIC& Comp. time \\
 \hline

 A & $11$	& $-63579\;(62)$  & $127180$ & 45 min\\
 B &	 $128$   &  $-61257\;(28)$  & $122770$ & 10 hr  \\
 C &	$167$	&   $-61169\;(34)$ & $122672$& 24 hr  \\
\cite{park2020inference}	& $12$   & $-70000^\dagger$  & $140024^\dagger$ & 30 hr$^*$\\

 \hline
 \caption{Measles example. Mean log-likelihood values for models A, B, and C, with Monte Carlo standard deviation (sd) over $100$ runs of PALSMC with $5000$ particles. `No. parameters' is the number of parameters estimated by maximising the log-likelihood for each model. $\dagger$Approximate values read from figure 3 in \citep{park2020inference}. $^*$We note that the $30$hr reported by  \cite{park2020inference} includes confidence interval calculation via Monte Carlo adjusted profile methodology.}

\end{longtable}
\end{figure}

Figure \ref{measlesmap} shows projected case numbers for the $4$ fortnights following the end of the data record, obtained using model C with  parameters fixed to the estimated values, full details are in the supplementary materials section \ref{sec:measles_supp}. We see a general increase in forecast uncertainty as the time horizon increases, this reflecting the over-dispersed nature of model C. We also see that the forecasts generally exhibit higher certainty for cities with a larger population, as might be expected if a larger sub-population size allows   latent variables and parameters which are specific to that sub-population to be estimated more accurately. 
\begin{figure}[h!]
    \centering
    \includegraphics[width=\textwidth]{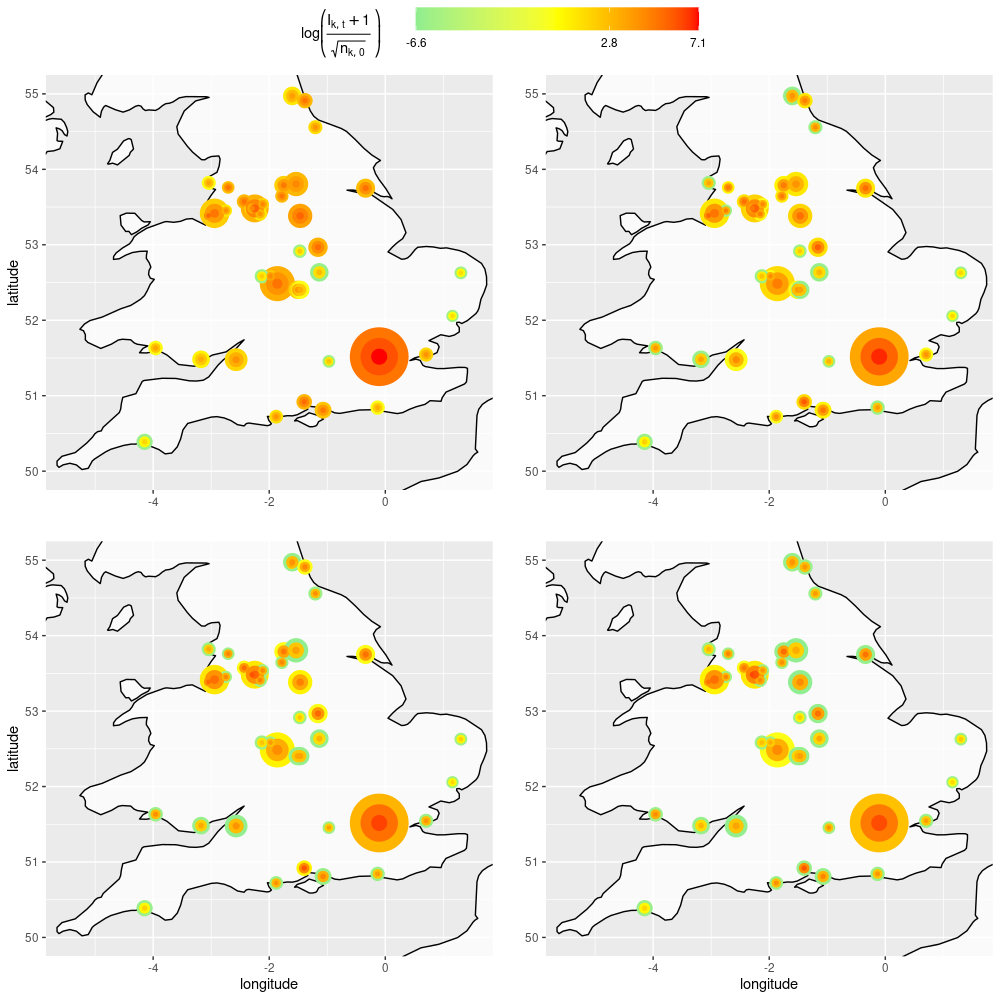}
    \caption{Measles example. Projected case numbers for the $4$ fortnights (ordered top-left, top-right, bottom-left, bottom-right) following the end of the data record. For each town/city, the diameter of the outer-most concentric ring represents log-population size. The shade of the outer concentric ring corresponds to the lower $5\%$ quantile of the simulated case numbers, the shade of the middle concentric ring to the mean, and the inner concentric ring to the upper $95\%$ quantile.  }
    \label{measlesmap}
\end{figure}

\subsection{Opportunities for further research}\label{sec:future}

In the examples from section \ref{sec:rotavirus} and \ref{sec:measles} we have not devised special optimisation techniques to estimate parameters, but rather just plugged PALSMC likelihood function evaluations into `black-box' optimisers. There may be opportunities here to even further increase computational efficiency, for example by embedding PALSMC within an iterated filtering scheme \citep{ionides2011iterated}.

Recently \cite{ju2021sequential} devised sophisticated SMC algorithms to fit agent-based models in which individuals in the population each carry covariates influencing, for example, the probabilities that they come into contact, and hence the probabilities of disease spreading from one individual to the next. When these covariates are discrete and take only finitely many distinct values, or can be discretised into that form, for example the subdivision of the population into age groups as in the age-structured example from section \ref{sec:rotavirus},   they can be handled in the latent compartmental modelling framework by introducing extra compartments and specifying an appropriate observation model. However, covariates taking infinitely many distinct values cannot be handled this way, or necessitate further approximations. \cite{rimella2023optimal}  have suggested methods related to PALs to construct efficient proposal distributions for SMC in individual-based models. Further research may expand the applicability of PAL-like approximations in this direction.

\section{Acknowledgements}
Michael Whitehouse is supported by a studentship from Compass – the EPSRC Centre for Doctoral
Training in Computational Statistics and Data Science. Lorenzo Rimella is supported by EP-
SRC Grant EP/R018561/1 (Bayes for Health). The authors are grateful to Nikolas Kantas for pointing them to the measles model, Patrick Cannon and colleagues at Improbable for discussion of automatic differentiation and David Greenwood for discussions of agent-based models.
\newpage
\bibliographystyle{plain}
\bibliography{consistency_paper}
\newpage
\appendix

\section{Proofs and supporting results for section \ref{sec:filtering}}\label{sec:proofs_filtering}

\begin{proof}[Proof of Lemma \ref{lem:xpred}]
For the first result, consider the probability mass function of ${\+x}$:
\begin{equation*}
p({\+x}) = \prod_{j=1}^m \frac{e^{-\lambda^{(j)}} (\lambda^{(j)})^{(x^{(j)})}}{x^{(j)}!},
\end{equation*}
and for $0 \leq \bar x^{(j)} \leq x^{(j)}$ $j = 1, \dots m$,
\begin{equation*}
p(\bar{\+x}\mid \+x) = \prod_{j=1}^m \frac{x^{(j)}!}{\bar{x}^{(j)}!(x^{(j)} - \bar{x}^{(j)})!}(\delta^{(j)})^{\bar{x}^{(j)}}(1 - \delta^{(j)})^{x^{(j)} - \bar{x}^{(j)}},
\end{equation*}
So that
\begin{equation*}
p(\+x, \bar{\+x}) = \prod_{j=1}^m \frac{e^{-\lambda^{(j)}}(\lambda^{(j)})^{x^{(j)}}(\delta^{(j)})^{\bar{x}^{(j)}}(1 - \delta^{(j)})^{x^{(j)} - \bar{x}^{(j)}}}{\bar{x}^{(j)}!(x^{(j)} - \bar{x}^{(j)})!},
\end{equation*}
and
\begin{align*}
p(\bar{\+x}) &= \sum_{x^{(i)} \geq \bar{x}^{(i)};i\in[m]} \prod_{j=1}^m  \frac{e^{-\lambda^{(j)}}(\lambda^{(j)})^{x^{(j)}}(\delta^{(j)})^{\bar{x}^{(j)}}(1 - \delta^{(j)})^{x^{(j)} - \bar{x}^{(j)}}}{\bar{x}^{(j)}!(x^{(j)} - \bar{x}^{(j)})!} \\
&=\left( \prod_{j=1}^m \frac{e^{-\lambda^{(j)}}(\delta^{(j)}\lambda^{(j)})^{\bar{x}^{(j)}}}{\bar{x}^{(j)}!} \right) \sum_{x^{(i)} \geq \bar{x}^{(i)};i\in[m]} \prod_{j=1}^m\frac{(\lambda^{(j)})^{(x^{(j)} - \bar x^{(j)})}(1 - \delta^{(j)})^{(x^{(j)} - \bar x^{(j)})}}{{(x^{(j)} - \bar{x}^{(j)})!}} \\
& = \left( \prod_{j=1}^m \frac{e^{-\lambda^{(j)}}(\delta^{(j)}\lambda^{(j)})^{\bar{x}^{(j)}}}{\bar{x}^{(j)}!} \right) e^{\lambda^{(j)}(1 - \delta^{(j)})} \\
&= \prod_{j=1}^m \frac{e^{-\lambda^{(j)}\delta^{(j)}}(\delta^{(j)}\lambda^{(j)})^{\bar{x}^{(j)}}}{\bar{x}^{(j)}!},
\end{align*}
which is the probability mass function associated with $\mathrm{Pois}(\bs \lambda \circ \bs \delta)$.
\par
Now consider $\+x' \sim {M}_t(\bar{\+x}, \bs \eta(\bs \lambda \circ \bs \delta), \cdot)$ where $\bar{\+ x} \sim \mu$, so that  $\sum_{\bar{\+x}\in \mathbb{N}_{0}^m} \mu(\bar{\+x})M_t(\bar{\+x}, \bs \eta (\bs \lambda \circ \bs \delta), \cdot )$ is the marginal probability mass function of $\+x'$. By the definition of $M_t$, $\+x' = (\+1_m^\top\+Z)$, where the rows of $\+Z$ are conditionally independent given $\bar{\+x}$, and the $i$th row of $\+Z$ is distributed $\mathrm{Mult}(\bar{x}^{(i)}, \+ K_{t,\bs \eta (\bs \lambda \circ \bs \delta)}^{(i, \cdot)})$. Now we can write the moment generating function (m.g.f.) of $\+ x'$ as:
\begin{equation*}
\begin{aligned}\mathcal{M}_{\+x'}(\+b) &= \E\left[\exp\left(\+1_m^\top \+ Z^\top\+ b \right) \right] \\
 &= \E \left[ \exp\left(\sum^m_{i,j=1} Z^{(i,j)}b^{(j)}\right) \right]\\
 & = \E \left[\prod_{j=1}^m \exp\left(\sum^m_{i=1} Z^{(i,j)}b^{(j)}\right) \right]\\
 &= \E \left\{ \prod_{i=1}^m  \E \left[ \exp\left(\sum^m_{j=1} Z^{(i,j)}b^{(j)}\right) \Bigg| \bar{\+ x}  \right] \right\}.
 \end{aligned}
\end{equation*}
Now we notice that $ \E \left[  \exp\left(\sum^m_{i=1} Z^{(i,j)}b^{(i)}\right) \Bigg| \bar{\+ x}  \right]$ is the m.g.f. of $\mathrm{Mult}(\bar{x}^{(i)}, \+ K_{t,\bs \eta (\bs \lambda \circ \bs \delta)}^{(i, \cdot)})$ so that
\begin{equation*}
\begin{aligned}
\mathcal{M}_{\+x'}(\+b) &= \E \left\{\prod_{i=1}^m  \left[ \sum_{j=1}^m  K_{t,\bs \eta (\bs \lambda \circ \bs \delta)}^{(i, j)}e^{b^{(i)}} \right]^{\bar{ x} ^{(j)}} \right\} \\
 &= \sum_{\bar{ x} ^{(1)}, \dots, \bar{ x} ^{(m)} \in \mathbb{N}_0^m} \prod_{i=1}^m  \left[ \sum_{j=1}^m  K_{t,\bs \eta (\bs \lambda \circ \bs \delta)}^{(i,j)}e^{b^{(j)}} \right]^{\bar{x} ^{(i)}} \frac{ e^{- \lambda^{(i)} \delta^{(i)}}( \lambda^{(i)} \delta ^{(i)})^{\bar{ x} ^{(i)}}}{\bar{ x} ^{(i)}!} \\
 &= \left(\prod_{i=1}^m e^{-\lambda^{(i)}\delta^{(i)}}\right)\sum_{\bar{ x} ^{(1)}, \dots, \bar{ x} ^{(m)} \in \mathbb{N}_0^m}\prod_{i=1}^m\frac{1}{\bar{x}^{(i)}!}\left[ \sum_{j=1}^m  \lambda^{(i)} \delta^{(i)} K_{t,\bs \eta (\bs \lambda \circ \bs \delta)}^{(i,j)}e^{b^{(j)}} \right]^{\bar{x} ^{(i)}} \\Y
 &=\prod_{i=1}^m\exp\left(- \lambda^{(i)}\delta^{(i)} + \sum_{j=1}^m \lambda^{(i)} \delta^{(i)} K_{t,\bs \eta (\bs \lambda \circ \bs \delta)}^{(i,j)}e^{b^{(j)}}\right) \\
 &= \exp\left\{ \sum_{i=1}^m\left( -\lambda^{(i)} \delta^{i)} \sum_{j=1}^m K_{t,\bs \eta (\bs \lambda \circ \bs \delta)}^{(i,j)} + \sum_{j=1}^m \lambda^{(i)}\delta^{(i)} K_{t,\bs \eta (\bs \lambda \circ \bs \delta)}^{(i,j)} e^{b^{(i)}}\right)\right\} \\
 &= \prod_{j=1}^m \exp\left\{\left( (\bs \lambda \circ \bs \delta)^\top \+ K^{(\cdot, j)}_{t, \bs \eta(\bs \lambda \circ \bs \delta)}\right)\left(e^{b^{(j)}}-1\right)\right\}.
\end{aligned}
\end{equation*}
We recognise this is the moment generating function of a $\mathrm{Pois}\left((\bs \lambda \circ \bs \delta)^\top \+ K_{t, \bs \eta(\bs \lambda \circ \bs \delta)} \right)$ random vector.
\end{proof}

\begin{proof}[Proof of Lemma \ref{lem:xupdate}]
We have $\bar{\+ y} \sim \mathrm{Pois}(\bs \lambda \circ \+q)$ by the same reasoning as lemma \ref{lem:xpred}. By definition $\hat{\+y} =\+1_m^\top \+M $, hence the moment generating function of $\hat{\+y}$ is:
\begin{equation*}
\begin{aligned}
\mathcal{M}_{\hat{\+y}}(\+b) &= \E\left[\exp\left(\+1_m^\top \+ M^\top\+ b \right) \right] \\
 &= \E \left\{ \prod_{i=1}^m  \E \left[ \exp\left(\sum^m_{j=1} M^{(i,j)}b^{(j)}\right) \Bigg| \bar{\+ y}  \right] \right\} \\
&= \E \left\{\prod_{i=1}^m  \left[ \sum_{j=1}^m  G^{(i, j)}e^{b^{(i)}} \right]^{\bar{ y} ^{(j)}} \right\} \\
 &= \sum_{\bar{ x} ^{(1)}, \dots, \bar{ x} ^{(m)} \in \mathbb{N}_0^m} \prod_{i=1}^m  \left[ \sum_{j=1}^m  G^{(i,j)}e^{b^{(j)}} \right]^{\bar{x} ^{(i)}} \frac{ e^{- \lambda^{(i)} q^{(i)}}( \lambda^{(i)} q ^{(i)})^{\bar{ x} ^{(i)}}}{\bar{ x} ^{(i)}!} \\
 &= \prod_{j=1}^m \exp\left\{\left( (\bs \lambda \circ \bs q)^\top \+ G^{(\cdot, j)}\right)\left(e^{b^{(j)}}-1\right)\right\}.
\end{aligned}
\end{equation*}
Which we recognise as the moment generating function of the $\mathrm{Pois}((\bs \lambda \circ \+ q)^\top \+G)$, the first result of the lemma then follows from applying element-wise the fact that the intensity of the sum of two independent Poisson random variables is the sum of the intensities.

We start the proof of \eqref{eq:condexp} by considering the decomposition of $\+x$ into the sum of random variables $\bar{\+ y}$ and $\breve{\+x}$ where $\breve{\+x} = \+x - \bar{\+y}$. Then, $\bar{\+ y}$ and $ \breve{ \+ x}$ are independent Poisson with intensity vectors $\+ q \circ \bs\lambda$ and $( \+ 1_m - \+ q) \circ \bs \lambda$ respectively, see \cite{kingman1992poisson}[Sec. 1.2]. Since $\breve{\+x}$ is independent of $\+y$, we have that:
\begin{equation}\label{exgivy}
\E\left[\+x \mid \+y \right] = \left[\+ 1_m - \+q\right]\circ \bs \lambda + \E\left[\bar{\+y} \mid \+y \right].
\end{equation}
So, we need to characterise the distribution of $\bar{\+y}$ given $\+y$. Construct the random variable $\+ \Xi \in \mathbb{N}_0^{(m+1) \times m}$ such that for $i,j\in[m]$, $\Xi^{(i,j)} = M^{(i,j)}$ and row $m+1$ of $\+ \Xi$ are the counts $\hat{\+ y}\sim \mathrm{Pois}(\bs \kappa)$. By this construction, $\sum_{j=1}^m \Xi^{(i,j)} = \bar{y}^{(i)}$ for $i= 1, \dots, m$ and $\sum_{i=1}^{m+1} \Xi^{(i,j)} = y^{(j)}$ for $j=1 ,\dots, m$. Furthermore, the elements of $\+ \Xi$ are independently Poisson, see \cite{kingman1992poisson}[Sec. 1.2], with intensity matrix $\+\Lambda \in \mathbb{R}^{(m+1) \times m}$ defined as follows:
\begin{align*}
\Lambda^{(i,j)} &=  \lambda^{(i)} q^{(i)} G^{(i,j)} \quad \mathrm{for } i = 1, \dots m, \quad j = 1, \dots m \\
\Lambda^{(m+1,j)} &=  \kappa^{(j)} \quad \mathrm{for } j = 1, \dots m.
\end{align*}
If, for some $j,k \in[m]$, $\sum_{i=1}^{m+1}  \Lambda^{(i,j)} = 0$, then we must have that $ \Lambda^{(i,j)} = 0$ for all $i = 1, \dots, m+1$ so that $\Xi^{(i,j)} = 0 \quad a.s.$. Otherwise we have that for $i = 1, \dots, m+1$ and $j\in[m]$, $\Xi^{(i,j)}$ conditioned on $\sum_{k=1}^{m+1} \Xi^{(k,j)} = y^{(j)}$ is distributed \begin{equation*}\mathrm{Bin}\left( y^{(j)}, \frac{ \Lambda^{(i,j)}}{\sum_{k=1}^{m+1}  \Lambda^{(k,j)}}\right).\end{equation*}
Hence, given $\+y$, $\bar y^{(i)}$ has a Poisson-Binomial distribution with mean:
\begin{equation}\label{eq:x_update_proof_y}
\E\left[\bar{y}^{(i)} \mid \+y \right] = \E\left[\sum_{j=1}^m \Xi^{(i,j)} \mid \+y \right] = \sum_{j=1}^m y^{(j)}\frac{ \lambda^{(i)} q^{(i)} G^{(i,j)}}{\sum_{k=1}^{m+1} \lambda^{(k)} q^{(k)} G^{(k,j)} + \kappa^{(j)}},
\end{equation}
for $i = 1, \dots, m$, where we set the $j$th term of the outer sum on the r.h.s to $0$ if ${\sum_{k=1}^{m+1} \lambda^{(k)} q^{(k)} G^{(k,j)} + \kappa^{(j)}} = 0$ since that achieves \begin{equation*}
    \E\left[\Xi^{(i,j)}\mid \+y\right] = 0.
\end{equation*}
Writing \eqref{eq:x_update_proof_y} in vector form and substituting into \eqref{exgivy} completes the proof.

\end{proof}

\begin{proof}[Proof of Lemma \ref{lem:Zpred}]
Note $\bs \eta(\E_{\bar \mu}[\+1_m^\top\+Z]) = \bs \eta(\+ 1_m^\top \bs \Lambda) = \bs \eta(\bs \lambda^\top)$. Let $\tilde{\+ Z} \sim \bar M_t(\+Z, \bs \eta(\bs \lambda), \cdot)$, then the moment generating function for $\tilde{\+Z}$ is:

\begin{equation*}
\begin{aligned}
\E \left[ \exp(\+{1}_m^\top (\tilde{\+Z} \circ \+{B}) \+1_m) \right] &= \E \left[ \prod_{i=1}^m \exp \left(\sum_{j=1}^m \tilde{Z}^{(i,j)}b^{(i,j)}\right) \right] \\
&= \E \left[ \E \left[ \prod_{i=1}^m \exp \left(\sum_{j=1}^m \tilde{Z}^{(i,j)}b^{(i,j)}\right)  \bigg| \+ Z \right] \right] \\
&= \E \left[ \prod_{i=1}^m \E \left[  \exp \left(\sum_{j=1}^m \tilde{Z}^{(i,j)}b^{(i,j)}\right)  \bigg| x^{(i)} \right] \right] \\
&= \E \left[ \prod_{i=1}^m  \left(\sum_{j=1}^m K_{t, \bs \eta(\bs \lambda)}^{(i,j)}e^{b^{(i,j)}}\right)^{x^{(i)}}\right] \\
&=\left(\prod_{i=1}^m e^{-\lambda^{(i)}} \right) \sum_{(x^{(1)}, \dots, x^{(m)}) \in \mathbb{N}_0} \left(\sum_{j=1}^m K_{t, \bs \eta(\bs \lambda)}^{(i,j)}e^{b^{(i,j)}}\lambda^{(i)}\right)^{x^{(i)}} \frac{1}{x^{(i)}!}  \\
&= \prod_{i=1}^m\exp\{- \lambda^{(i)}\sum_{j=1}^m K_{t, \bs \eta(\bs \lambda)}^{(i,j)}(1- e^{b^{(i,j)}})\} \\
&= \prod_{i,j=1}^m\exp\{- \lambda^{(i)}K_{t, \bs \eta(\bs \lambda)}^{(i,j)}(1- e^{b^{(i,j)}})\},
\end{aligned}
\end{equation*}
which we recognise as the moment generating function of a $\text{Pois}((\bs \lambda \otimes \+1_m)\circ \+K_{t,\bs \eta(\bs \lambda)})$ random matrix.
\end{proof}

\begin{proof}[Proof of Lemma \ref{lem:Zupdate}]
We have:
\begin{equation*}
p(\+ Z) = \prod_{i,j=1}^m \frac{e^{- \Lambda^{(i,j)}}( \Lambda^{(i,j)})^{Z^{(i,j)}}}{Z^{(i,j)}!},
\end{equation*}
furthermore:

\begin{equation*}
p( \+Y | \+ Z) = \prod_{i,j=1}^m \frac{Z^{(i,j)}!}{Y^{(i,j)}!(Z^{(i,j)}-Y^{(i,j)})!}{Q^{(i,j)}}^{Y^{(i,j)}}(1-Q^{(i,j)})^{Z^{(i,j)}-Y^{(i,j)}}.
\end{equation*}
So that:

\begin{equation*}
p(\+ Z, \+Y) = \prod_{i,j=1}^m \frac{{Q^{(i,j)}}^{Y^{(i,j)}}(1-Q^{(i,j)})^{Z^{(i,j)}-Y^{(i,j)}}e^{-\Lambda^{(i,j)}}( \Lambda^{(i,j)})^{Z^{(i,j)}}}{Y^{(i,j)}!(Z^{(i,j)}-Y^{(i,j)})!},
\end{equation*}
and

\begin{equation*}
\begin{aligned}
p(\+Y) &= \sum_{\{Z^{(i,j)} : Z^{(i,j)} \geq Y^{(i,j)}\}} \prod_{i,j=1}^m\frac{{Q^{(i,j)}}^{Y^{(i,j)}}(1-Q^{(i,j)})^{Z^{(i,j)}-Y^{(i,j)}}e^{-\Lambda^{(i,j)}}( \Lambda^{(i,j)})^{Z^{(i,j)}}}{Y^{(i,j)}!(Z^{(i,j)}-Y^{(i,j)})!} \\
&= \prod_{i,j=1}^m\frac{e^{-\Lambda^{(i,j)}}({Q^{(i,j)}  \Lambda^{(i,j)}})^{Y^{(i,j)}}}{Y^{(i,j)}!}   \sum_{Z^{(i,j)} -Y^{(i,j)} \geq 0}\frac{( \Lambda^{(i,j)}(1-Q^{(i,j)}))^{Z^{(i,j)}-Y^{(i,j)}}}{(Z^{(i,j)} - Y^{(i,j)})!} \\
&= \prod_{i,j=1}^m\frac{e^{- \Lambda^{(i,j)}}({Q^{(i,j)}  \Lambda^{(i,j)}})^{Y^{(i,j)}}}{Y^{(i,j)}!} e^{ \Lambda^{(i,j)}(1-Q^{(i,j)})}\\
&= \prod_{i,j=1}^m\frac{e^{- \Lambda^{(i,j)}Q^{(i,j)}}({Q^{(i,j)}  \Lambda^{(i,j)}})^{Y^{(i,j)}}}{Y^{(i,j)}!} .
\end{aligned}
\end{equation*}
Dividing $p(\+Z, \+ Y)$ by $p(\+Y)$ gives:

\begin{equation*}
p(\+ Z \mid \+ Y) = \prod_{i,j = 1}^m \frac{e^{-\Lambda^{(i,j)}(1 - Q^{(i,j)})}}{(Z^{(i,j)}- Y^{(i,j)})!}(\Lambda^{(i,j)}(1 - Q^{(i,j)}))^{Z^{(i,j)}- Y^{(i,j)}}.
\end{equation*}
Giving the desired probability mass function of $\+Y + \+Z^*$.
\end{proof}

\begin{proof}[Proof of Lemma \ref{lem:Zaggup}]
By lemma \ref{lem:Zupdate} we have that for each $t=1,\ldots,\tau$, $\+Y_t \sim \text{Pois}(\bs \Lambda_t \circ \+Q_t)$ and:
\begin{equation*}
\E\left[ \+ Z_\tau \mid \+ Y_\tau\right] = (\+1_m \otimes \+1_m - \+Q_\tau)\circ \bs \Lambda_{\tau} + \+Y_\tau.
\end{equation*}
Since $\bar{Y}^{(i,j)}$ is the sum of independent Poisson random variables $Y_t^{(i,j)}$,  we have $\+Y\sim \mathrm{Pois}(\sum_{t=1}^\tau \bs \Lambda_t\circ \+ Q_t  )$ and given $\bar{Y}^{(i,j)}$, $Y_\tau^{(i,j)}$ is distributed  $\mathrm{Bin}(\bar{Y}^{(i,j)}, \Lambda_\tau^{(i,j)}Q^{(i,j)}_t/ \sum_{t=1}^\tau \Lambda_t^{(i,j)}Q^{(i,j)}_t)$. Hence by the tower law:
\begin{align*}
 \E\left[ \+ Z_\tau \mid \+ Y\right] &=  \E\left[ \E\left[ \+ Z_\tau \mid \+Y_\tau \right] \mid \+ Y\right] \\
 &=  (\+1_m \otimes \+1_m - \+Q_t)\circ \bs \Lambda_{\tau} +  \E\left[ \+ Y_\tau \mid \+ Y\right] \\
 &=  (\+1_m \otimes \+1_m - \+Q_t)\circ \bs \Lambda_{\tau} + \bar{\+Y} \circ \bs \Lambda_\tau\circ \+ Q_\tau  \oslash \left( \sum_{t=1}^\tau \bs \Lambda_t\circ \+ Q_t  \right),
\end{align*}
in the case that all elements of $\sum_{t=1}^\tau \bs \Lambda_t\circ \+ Q_t$ are strictly positive. Otherwise we have
$ \E[ Z^{(i,j)}_\tau \mid \+ Y] = (1 - Q^{(i,j)}_t) \Lambda^{(i,j)}_{\tau} $ for any $(i,j)$ such that  $ \left[\sum_{t=1}^\tau  \bs\Lambda_t\circ \+Q_t \right]^{(i,j)}= 0$, since 
the latter equality implies $\left[ \bs\Lambda_\tau\circ \+Q_\tau\right]^{(i,j)}=0$, which in turn implies $Y_{\tau}^{(i,j)} = 0$ almost surely.
%


\end{proof}

\section{Proofs and supporting results for section \ref{sec:consistency}}\label{sec:proofs}

\subsection{Laws of Large Numbers}\label{sec:lln}
\subsubsection{Preliminaries}
\begin{lemma} \label{poissonlemma}
Let $\lambda,\lambda_n \in \mathbb{R}_{\geq 0}$ and $X_n \sim \text{Pois}(\lambda_n)$ for $n = 1,2,\dots$. Assume that for all $n$, $|\frac{\lambda_n}{n} - \lambda|< cn^{-(\frac{1}{4}+\gamma)}$ for some $c>0$ and $\gamma>0$, then there exist constants $b$ and $\bar \gamma$ such that:

\begin{equation*}
\E\left[ \left|\frac{X_n}{n} - \lambda  \right|^4\right]^\frac{1}{4} \leq b  n^{-(\frac{1}{4}+ \bar \gamma)},
\end{equation*}
furthermore
\begin{equation*}
\frac{X_n}{n} \overset{a.s.}{\rightarrow} \lambda.
\end{equation*}
\end{lemma}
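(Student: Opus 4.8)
The plan is to split the error into a stochastic fluctuation term and a deterministic bias term, bound each separately in $L^4$, and then pass from the resulting fourth-moment bound to almost sure convergence via Markov's inequality and the first Borel--Cantelli lemma.

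First I would write $\frac{X_n}{n} - \lambda = \frac{X_n - \lambda_n}{n} + \left(\frac{\lambda_n}{n} - \lambda\right)$ and apply Minkowski's inequality for the $L^4$ norm, giving
\[
\E\left[\left|\frac{X_n}{n}-\lambda\right|^4\right]^{1/4} \le \frac{1}{n}\,\E\left[\left|X_n - \lambda_n\right|^4\right]^{1/4} + \left|\frac{\lambda_n}{n}-\lambda\right|.
\]
The second term is bounded by $c\,n^{-(1/4+\gamma)}$ directly by hypothesis. For the first term, the one genuinely non-mechanical ingredient is the exact fourth central moment of a Poisson variable: since every cumulant of $\mathrm{Pois}(\lambda_n)$ equals $\lambda_n$, one has $\E[(X_n-\lambda_n)^4] = \lambda_n + 3\lambda_n^2$. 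The hypothesis yields $\lambda_n \le n(\lambda + c\,n^{-(1/4+\gamma)}) \le n(\lambda + c)$ for $n\ge 1$, so $\lambda_n = O(n)$ and hence $\E[(X_n-\lambda_n)^4] \le C n^2$ for a suitable constant $C$. Taking fourth roots and dividing by $n$ shows the fluctuation term is $O(n^{-1/2})$.

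To package both terms under a single rate, I would write $n^{-1/2} = n^{-(1/4+1/4)}$ and set $\bar\gamma := \min(\gamma, 1/4)$, so that both $n^{-1/2}$ and $n^{-(1/4+\gamma)}$ are at most $n^{-(1/4+\bar\gamma)}$. This gives the claimed inequality with $b := C^{1/4}+c$. For the almost sure convergence I would feed this fourth-moment bound into Markov's inequality: for any $\varepsilon > 0$,
\[
\mathbb{P}\left(\left|\frac{X_n}{n}-\lambda\right|>\varepsilon\right) \le \varepsilon^{-4}\,\E\left[\left|\frac{X_n}{n}-\lambda\right|^4\right] \le \varepsilon^{-4} b^4 \, n^{-(1+4\bar\gamma)}.
\]
Since $1+4\bar\gamma > 1$, the right-hand side is summable in $n$, so the first Borel--Cantelli lemma (which requires no independence) gives $\mathbb{P}(|X_n/n - \lambda| > \varepsilon \text{ i.o.}) = 0$; intersecting over a sequence $\varepsilon_k \downarrow 0$ yields $\frac{X_n}{n} \to \lambda$ almost surely.

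I do not anticipate a serious obstacle: the argument is essentially a bias--variance decomposition combined with the standard fourth-moment Borel--Cantelli criterion. The only points requiring mild care are recalling the precise value $\lambda_n + 3\lambda_n^2$ of the Poisson fourth central moment, and verifying that a single exponent $\bar\gamma$ controls both the $O(n^{-1/2})$ fluctuation and the $O(n^{-(1/4+\gamma)})$ bias, which is handled by taking $\bar\gamma = \min(\gamma,1/4)$.
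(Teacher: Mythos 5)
Your proof is correct and follows essentially the same route as the paper's: the same split of $\tfrac{X_n}{n}-\lambda$ into fluctuation plus bias via Minkowski, the same Poisson fourth central moment $\lambda_n+3\lambda_n^2$ (you obtain it from cumulants, the paper from moment recurrences), the same choice $\bar\gamma=\min(\gamma,\tfrac14)$, and the same Markov--Borel--Cantelli conclusion. Your explicit intersection over $\varepsilon_k\downarrow 0$ at the end is a small touch of added rigor, but the argument is otherwise identical.
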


\begin{proof}
By recurrence relations for the central moments of Poisson random variables, see e.g \cite{kendall1946advanced}, we can write
\begin{align}
\E\left[ \left|\frac{X_n}{n} - \frac{\lambda_n}{n}  \right|^4\right] &= n^{-4}\lambda_n \sum_{k=0}^2 \binom{3}{k}\E\left[ \left({X_n} - {\lambda_n}  \right)^k\right] \\
&= n^{-4}\lambda_n\left\{1+0+\lambda_n \frac{3!}{2!1!}\right\} \\
&=n^{-2}\left\{ n^{-2}\lambda_n + 3\left(\frac{\lambda_n}{n}\right)^2 \right\}
\leq an^{-2}, \label{convseq}
\end{align}
for some $a>0$ since the curly bracketed term in \eqref{convseq} defines a convergent sequence. Hence, by the Minkowski inequality:
\begin{align*}
\E\left[ \left|\frac{X_n}{n} - \lambda  \right|^4\right]^\frac{1}{4} &\leq \E\left[ \left|\frac{X_n}{n} - \frac{\lambda_n}{n}  \right|^4\right]^\frac{1}{4} + \left|\frac{\lambda_n}{n} - \lambda \right| \\
&\leq a^{\frac{1}{4}}n^{-\frac{1}{2}} + cn^{-(\frac{1}{4}+ \gamma)} \\
&\leq b \max\left(n^{-\frac{1}{2}}, n^{-(\frac{1}{4}+ \gamma)}\right) \\
&\leq b  n^{-(\frac{1}{4}+ \bar \gamma)},
\end{align*}
where b = $a^{\frac{1}{4}}+c$ and $\bar \gamma = \min\left(\gamma,\frac{1}{4}\right)$. Now let $\varepsilon>0$, by Markov's inequality:
\begin{equation*}
\mathbb{P}\left(\left|\frac{X_n}{n} - \lambda\right| > \varepsilon\right) \leq \varepsilon ^{-4}\E\left[\left|\frac{X_n}{n} - \lambda \right|^4\right] \leq  \epsilon^{-4}b^4  n^{-(1+ \bar 4\gamma)},
\end{equation*}
So that:
\begin{equation*}\label{finsumPois}
\sum_{n=1}^\infty \mathbb{P}\left(\left|\frac{X_n}{n} - \lambda \right| > \varepsilon\right)  \leq \epsilon^{-4}b^4  \sum_{n=1}^\infty n^{-(1+ 4\bar \gamma)} < \infty.
\end{equation*}
Then $n^{-1}X_n \rightarrow \lambda$ almost surely by the Borel-Cantelli lemma.
\end{proof}

\begin{corollary} \label{vecpoicor}
If $\+x_n \sim \mathrm{Pois}(\bs \lambda_n)$ for a sequence $(\bs \lambda_n)_{n\geq 1} \in \mathbb{R}^m$ such that there exists $c>0$ and $\gamma>0$ such that $\|n^{-1}\bs \lambda_n - \bs \lambda\|_\infty< cn^{-(\frac{1}{4}+\gamma)}$ for some $\bs \lambda \in \mathbb{R}^m$, then for any vector $\bs f \in \mathbb{R}^m$ there exists constants $b>0$ and $\bar{\gamma}>0$ such that:
\begin{equation*}
\E\left[ \left|\frac{1}{n}\+ x_n^\top \bs f - \bs \lambda^\top \bs f  \right|^4\right]^\frac{1}{4} \leq bn^{-(\frac{1}{4} + \bar \gamma)}.
\end{equation*}
\begin{proof}
Apply lemma \ref{poissonlemma} in an element-wise fashion.
\end{proof}
\end{corollary}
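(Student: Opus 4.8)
The plan is to reduce the multivariate claim to $m$ scalar claims, each dispatched by Lemma~\ref{poissonlemma}, and then recombine the resulting bounds via the triangle inequality in $L^4$. First I would expand the linear functional coordinatewise,
\[
\frac{1}{n}\+x_n^\top \bs f - \bs\lambda^\top \bs f \;=\; \sum_{i=1}^m f^{(i)}\left(\frac{1}{n}x_n^{(i)} - \lambda^{(i)}\right).
\]
Since $\+x_n\sim\mathrm{Pois}(\bs\lambda_n)$, each coordinate $x_n^{(i)}$ is $\mathrm{Pois}(\lambda_n^{(i)})$, and the hypothesis $\|n^{-1}\bs\lambda_n - \bs\lambda\|_\infty < cn^{-(\frac{1}{4}+\gamma)}$ yields, for each $i$, the scalar bound $|n^{-1}\lambda_n^{(i)}-\lambda^{(i)}| < cn^{-(\frac{1}{4}+\gamma)}$. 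Because $\lambda_n^{(i)}\geq 0$ and $n^{-1}\lambda_n^{(i)}\to\lambda^{(i)}$, the limit $\lambda^{(i)}$ is necessarily nonnegative, so the hypotheses of Lemma~\ref{poissonlemma} are met for every coordinate.

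Next I would apply Minkowski's inequality (the triangle inequality for the $L^4$ norm) to the sum:
\[
\E\left[\left|\frac{1}{n}\+x_n^\top \bs f - \bs\lambda^\top \bs f\right|^4\right]^{\frac{1}{4}} \;\leq\; \sum_{i=1}^m |f^{(i)}|\;\E\left[\left|\frac{1}{n}x_n^{(i)} - \lambda^{(i)}\right|^4\right]^{\frac{1}{4}}.
\]
For each fixed $i\in[m]$, Lemma~\ref{poissonlemma} supplies constants $b_i>0$ and $\bar\gamma_i>0$ with $\E[\,|n^{-1}x_n^{(i)}-\lambda^{(i)}|^4\,]^{1/4}\leq b_i\, n^{-(\frac{1}{4}+\bar\gamma_i)}$.

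Finally I would consolidate the finitely many rates and constants. Setting $\bar\gamma \coloneqq \min_{i\in[m]}\bar\gamma_i>0$ and $b \coloneqq \sum_{i=1}^m |f^{(i)}|\, b_i$, each summand is bounded above by $|f^{(i)}|\,b_i\, n^{-(\frac{1}{4}+\bar\gamma)}$, giving the claimed estimate $\E[\,\cdot\,]^{1/4}\leq b\, n^{-(\frac{1}{4}+\bar\gamma)}$. I do not expect a genuine obstacle here: the only points requiring any care are writing $|f^{(i)}|$ rather than $f^{(i)}$ (since the entries of $\bs f$ may be negative) and observing that $\bar\gamma$ is a minimum over the \emph{finite} index set $[m]$, which is what guarantees $\bar\gamma>0$ — a step that would break down if $m$ were permitted to grow with $n$, but $m$ is fixed throughout.
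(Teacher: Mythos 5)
Your proof is correct and is precisely the element-wise application of Lemma~\ref{poissonlemma} that the paper's one-line proof intends: the coordinatewise decomposition, Minkowski's inequality with the weights $|f^{(i)}|$, and the consolidation $\bar\gamma=\min_{i\in[m]}\bar\gamma_i$, $b=\sum_{i=1}^m|f^{(i)}|b_i$ are exactly the details the paper leaves implicit. Your two side remarks — that the entries of $\bs f$ may be negative, and that $\lambda^{(i)}\geq 0$ is forced by $\lambda_n^{(i)}\geq 0$ so Lemma~\ref{poissonlemma}'s hypotheses hold coordinatewise — are careful touches, not deviations from the paper's route.
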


\begin{lemma} \label{MZlemma}
Let  $\mathcal{F}$ be a filtration and $\Delta^{(i)}$ for $i= 1,2,\dots$ be random variables which are conditionally independent given  $\mathcal{F}$, are bounded by a constant $|\Delta^{(i)}|\leq M<\infty$ almost surely, and satisfy   $\E\left[\Delta^{(i)} \mid \mathcal{F}\right]=0$. Let $a_n$ be a non-negative integer valued random variable such that $\sigma(a_n) \subseteq \mathcal{F}$ and assume there exist constants $a>0$, $b>0$, and $\gamma>0$ such that for all $n \in \mathbb{N}$:
\begin{equation*}
    \E\left[\left|\frac{a_n}{n} - a \right|^4\right]^\frac{1}{4} \leq bn^{-(\frac{1}{4} + \gamma)}.
\end{equation*}
Then there exists a constant $d>0$ such that:
\begin{equation*}
    \E\left[\left|\frac{1}{n}\sum_{i=1}^{a_n} \Delta^{(i)}\right|^4\right] \leq  d n^{-2}.
\end{equation*}
\end{lemma}
\begin{proof}
Recalling that a sum over an empty set is equal to zero by convention, we have that: 
\begin{align*}
\E\left[\left( \sum_{i=1}^{a_n} \Delta^{(i)} \right)^4 \Big{|} \mathcal{F}  \right] &= \E\left[ \sum_{\{k_1+\dots+k_{a_n}=4 ;  k_i\geq 0 \} }\binom{4}{k_1,\dots,k_{a_n}}\prod_{i=1}^{a_n}\left(\Delta^{(i)}\right)^{k_i}  \Big{|} \mathcal{F} \right]\\
&= \sum_{\{k_1+\dots+k_{a_n}=4  ;   k_i\geq 0 \} }\binom{4}{k_1,\dots,k_{a_n}}\prod_{i=1}^{a_n}\E\left[\left(\Delta^{(i)}\right)^{k_i}  \Big{|} \mathcal{F} \right]\\
& = \sum_{i=1}^{a_n}\E\left[\left(\Delta^{(i)}\right)^{4}\Big{|} \mathcal{F} \right] + 6\sum_{\{(i,j)\in [a_n]^2; i \neq j \} }\E\left[\left(\Delta^{(i)}\right)^{2}\Big{|} \mathcal{F} \right]\E\left[\left(\Delta^{(j)}\right)^{2}\Big{|} \mathcal{F} \right] \\
&\leq  a_nM^4 + 3a_n(a_n-1)M^4 \\
&\leq ca_n^2,
\end{align*}
for some constant $c>0$. The first equality holds by the multinomial theorem. The second equality holds through conditional independence of the $\Delta^{(i)}$. The third equality comes from the fact that all terms of the sum where $k_i=1$ for some $i$ disappear since $\E\left[\left(\Delta^{(i)}\right)^{1}  \Big{|} \mathcal{F} \right] = 0$;  hence, we need only count the terms with exclusively even $k_i$'s. The first term after the inequality arises since there are $a_n$ terms with a $4$th power, each of which we can bound $\E\left[\left(\Delta^{(j)}\right)^{4}  \Big{|} \mathcal{F} \right] <M^4$ . The term $3a_n(a_n-1)$ comes from counting the number of terms with exactly 2 of the $k_i$'s equal to $2$ with the rest equalling $0$; there are $\binom{a_n}{2} = a_n(a_n-1)/2$ such pairs, multiplying this by $\binom{4}{k_1,\dots,k_{a_n}} = \binom{4}{2,2,0,\dots} = 6$ gives a total of $3a_n(a_n-1)$, then we bound each of the $\E\left[\left(\Delta^{(i)}\right)^{2}  \Big{|} \mathcal{F} \right]\E\left[\left(\Delta^{(j)}\right)^{2}  \Big{|} \mathcal{F} \right]\leq M^2M^2 =M^4$ for all $(i,j)\in [a_n]^2$. So we have:

\begin{equation}
\E\left[\left.\left|\frac{1}{n} \sum_{i=1}^{a_n} \Delta^{(i)} \right|^4 \right| \mathcal{F}    \right]\leq c \left( \frac{a_n}{n} \right)^{2}n^{-2},
\end{equation}
by the Lyapunov inequality:

\begin{equation}
\begin{split}
\E\left[ \bigg| \frac{a_n}{n} \bigg|^2 \right]^\frac{1}{2} &\leq \E\left[ \bigg|\frac{a_n}{n} -a + a\bigg|^4\right]^\frac{1}{4} \\
&\leq \E\left[ \bigg|\frac{a_n}{n} -a \bigg|^4\right]^\frac{1}{4}  +  a  \\
&\leq  bn^{-(\frac{1}{4}+ \gamma)} +   a  \\
&\leq b +  a
 < \infty.
\end{split}
\end{equation}
We now apply a tower law argument to the above to see that, for constant ${d = c\left(b+a \right)^2}$:

\begin{equation} \label{-range}
\E\left[\left|\frac{1}{n} \sum_{i=1}^{a_n} \Delta^{(i)} \right|^4  \right]\leq d n^{-2}.
\end{equation}
\end{proof}

\begin{lemma} \label{lem:ratiolem}
Let $\+ x\in \mathbb{R}^m_{\geq 0}$, $\bs f \in \mathbb{R}^m$ , $c>0$, and $n \in \mathbb{N}$. Then:
\begin{equation*}
\left|\bs \eta(\+x)^\top \bs f -  \frac{n^{-1}\+x^\top \bs f}{c}\right| \leq \left|\bs \eta(\+x)^\top \bs f \right| c^{-1}\left| n^{-1}\+1_m^\top\+x - c\right|
\end{equation*}
\end{lemma}
\begin{proof}
If $\+x = \bs 0$ then the result is trivial. Now, for $\+x \neq \bs 0$ we have:

\begin{align*}
\left|\bs \eta(\+x)^\top \bs f -  \frac{n^{-1}\+x^\top \bs f}{c}\right| &= \left|\frac{\+x^\top}{\+1_m^\top \+x} \bs f -  \frac{n^{-1}\+x^\top \bs f}{c}\right| \\
&\leq\left|\+x^\top \bs f \right| \left| \frac{1}{\+1_m^\top \+x} - \frac{n^{-1}}{c} \right| \\
&= \left|\+x^\top \bs f \right| \left|\frac{c - n^{-1}\+1_m^\top\+x}{c\+1_m^\top \+ x} \right| \\
&\leq \left|\frac{\+x^\top \bs f}{\+1_m^\top \+x} \right|c^{-1}\left|n^{-1}\+1_m^\top\+x - c\right| \\
&= \left|\bs \eta(\+x)^\top \bs f \right| c^{-1}\left| n^{-1}\+1_m^\top\+x - c\right|.
\end{align*}
\end{proof}

\subsubsection{Case (I)}
 Define the sequence of vectors:
\begin{equation*}
\begin{aligned}
\bs \nu_0(\bs \theta^*) &:= {\bs \lambda}_{0,\infty}(\bs \theta^*),\\
\bs \nu_{t+1}(\bs \theta^*) &:= \left[\left( \bs \nu_t(\bs \theta^*) \circ \bs \delta_{t+1}(\bs \theta^*) \right)^\top\mathbf{K}_{t+1, \boldsymbol{\eta}( \bs \nu_t(\bs \theta^*) \circ \bs \delta_{t+1}(\bs \theta^*))}\right]^\top + {\bs \alpha}_{t+1, \infty}(\bs \theta^*).
\end{aligned}
\end{equation*}

\begin{lemma}\label{lem:llnx}
Let assumptions \ref{as:params}-\ref{as:init} hold. For all $t \geq 0$ there exists $\gamma_t>0$ and for all $\bs f \in \mathbb{R}^m$ there exists $c_t >0$ such that:
\begin{equation} \label{xconvstat}
\mathbb{E}\left[  \left| \frac{ \+ x_{t}^\top}{n}\bs f  - \bs \nu_t(\bs \theta^*)^\top \bs f \right|^4  \right]^{\frac{1}{4}} \leq c_tn^{- (\frac{1}{4} + \gamma_t)}.
\end{equation}
\end{lemma}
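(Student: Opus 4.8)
The plan is to prove \eqref{xconvstat} by induction on $t$, propagating the $L^4$ bound through one step of the Latent Compartmental Model dynamics. The base case $t=0$ is immediate: since $\bs\nu_0(\bs\theta^*)=\bs\lambda_{0,\infty}(\bs\theta^*)$ and $\+x_0\sim\mathbb{P}^{\bs\theta^*}_{0,n}$, the claim is exactly the first display of assumption \ref{as:init}. For the inductive step I would decompose $\+x_{t+1}=\tilde{\+x}_{t+1}+\hat{\+x}_{t+1}$, where $\hat{\+x}_{t+1}\sim\mathrm{Pois}(\bs\alpha_{t+1,n}(\bs\theta^*))$. The immigration contribution is handled directly: Corollary \ref{vecpoicor} together with the rate assumption on $\bs\alpha_{t+1,n}$ in assumption \ref{as:params} gives an $L^4$ bound on $n^{-1}\hat{\+x}_{t+1}^\top\bs f-\bs\alpha_{t+1,\infty}(\bs\theta^*)^\top\bs f$ at the required rate. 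It then remains to show that $n^{-1}\tilde{\+x}_{t+1}^\top\bs f$ converges in $L^4$ to $\bar{\bs\lambda}^\top\+K_{t+1,\bs\eta(\bar{\bs\lambda})}\bs f$, where I abbreviate $\bar{\bs\lambda}\coloneqq\bs\nu_t(\bs\theta^*)\circ\bs\delta_{t+1}(\bs\theta^*)$, since adding the two limits reproduces the recursion defining $\bs\nu_{t+1}(\bs\theta^*)$.

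First I would treat emigration. Writing $\bar{x}_t^{(i)}$ as $\delta_{t+1}^{(i)}(\bs\theta^*)\,x_t^{(i)}$ plus a sum of $x_t^{(i)}$ i.i.d.\ centred $\mathrm{Bernoulli}(\delta_{t+1}^{(i)}(\bs\theta^*))$ increments, the inductive hypothesis controls the mean part while Lemma \ref{MZlemma} (with $\mathcal{F}=\sigma(\+x_t)$ and $a_n=x_t^{(i)}$) controls the fluctuation at order $n^{-1/2}$, so that for every $\bs g$ one obtains an $L^4$ bound on $n^{-1}\bar{\+x}_t^\top\bs g-\bar{\bs\lambda}^\top\bs g$ at rate $n^{-(1/4+\gamma)}$. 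Next, conditional on $\bar{\+x}_t$, the $i$th row of $\+Z_{t+1}$ is $\mathrm{Mult}(\bar{x}_t^{(i)},\+K^{(i,\cdot)}_{t+1,\bs\eta(\bar{\+x}_t)})$, so $\tilde{\+x}_{t+1}^\top\bs f$ has conditional mean $\bar{\+x}_t^\top\+K_{t+1,\bs\eta(\bar{\+x}_t)}\bs f$. I would split $n^{-1}\tilde{\+x}_{t+1}^\top\bs f$ into this conditional mean plus a fluctuation; per source compartment the fluctuation is a conditionally independent sum of centred terms bounded by $2\|\bs f\|_\infty$, so Lemma \ref{MZlemma} applied with $\mathcal{F}=\sigma(\bar{\+x}_t)$ (its moment hypothesis being supplied by the emigration step) makes the fluctuation $O(n^{-1/2})$ in $L^4$ after summing over the $m$ compartments with Minkowski.

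The conditional-mean term is where the real work lies. I would write
\begin{align*}
& n^{-1}\bar{\+x}_t^\top\+K_{t+1,\bs\eta(\bar{\+x}_t)}\bs f-\bar{\bs\lambda}^\top\+K_{t+1,\bs\eta(\bar{\bs\lambda})}\bs f \\
&\quad = (n^{-1}\bar{\+x}_t-\bar{\bs\lambda})^\top\+K_{t+1,\bs\eta(\bar{\+x}_t)}\bs f + \bar{\bs\lambda}^\top\big(\+K_{t+1,\bs\eta(\bar{\+x}_t)}-\+K_{t+1,\bs\eta(\bar{\bs\lambda})}\big)\bs f.
\end{align*}
The first summand is bounded by $\|\bs f\|_\infty\sum_j|n^{-1}\bar{x}_t^{(j)}-\bar\lambda^{(j)}|$ (using row-stochasticity of $\+K$), hence is controlled at the right rate by the emigration step; and assumption \ref{as:1} bounds the second by $c\|\bar{\bs\lambda}\|_\infty\|\bs f\|_\infty\|\bs\eta(\bar{\+x}_t)-\bs\eta(\bar{\bs\lambda})\|_\infty$. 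The main obstacle is therefore an $L^4$ bound on $\|\bs\eta(\bar{\+x}_t)-\bs\eta(\bar{\bs\lambda})\|_\infty$, which is delicate because $\bs\eta$ normalizes by the total count $\+1_m^\top\bar{\+x}_t$, a random quantity that could be small. I would dispatch this by truncation. Note $\bar L\coloneqq\+1_m^\top\bar{\bs\lambda}>0$ (it follows inductively from $\+1_m^\top\bs\lambda_{0,\infty}(\bs\theta^*)=1$ and the support conditions in assumptions \ref{as:params}--\ref{as:init}). On the event $A_n\coloneqq\{|n^{-1}\+1_m^\top\bar{\+x}_t-\bar L|\le \bar L/2\}$ the denominator exceeds $\bar Ln/2$, and a standard quotient estimate gives
\[
\|\bs\eta(\bar{\+x}_t)-\bs\eta(\bar{\bs\lambda})\|_\infty \le C\max_j\big|n^{-1}\bar{x}_t^{(j)}-\bar\lambda^{(j)}\big| + C\big|n^{-1}\+1_m^\top\bar{\+x}_t-\bar L\big|,
\]
controlled at rate $n^{-(1/4+\gamma)}$ by the emigration step. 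On $A_n^c$ I use the trivial bound $\|\bs\eta(\bar{\+x}_t)-\bs\eta(\bar{\bs\lambda})\|_\infty\le 1$ together with Markov's inequality on the fourth moment of $n^{-1}\+1_m^\top\bar{\+x}_t-\bar L$, which yields $\mathbb{P}(A_n^c)\le Cn^{-(1+4\gamma)}$ and hence an $L^4$ contribution of order $n^{-(1/4+\gamma)}$.

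Finally I would collect the immigration, emigration, fluctuation and conditional-mean estimates via Minkowski's inequality to obtain \eqref{xconvstat} at time $t+1$, with $\gamma_{t+1}=\min(\gamma_t,\gamma_2,\tfrac14,\dots)>0$ where $\gamma_2$ comes from assumption \ref{as:params} and the $\tfrac14$ from Lemma \ref{MZlemma}. Because the horizon is finite, all the constants $c_t$ remain finite and all the $\gamma_t$ strictly positive, which completes the induction.
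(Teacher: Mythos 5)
Your proposal is correct and would assemble into a complete proof. It shares the paper's skeleton — induction on $t$, the base case from assumption \ref{as:init}, Lemma \ref{poissonlemma}/Corollary \ref{vecpoicor} for the Poisson immigration term, Lemma \ref{MZlemma} for the martingale-type fluctuations, assumption \ref{as:1} for the $\+K$-perturbation, and Minkowski to combine — but it departs from the paper's proof in two implementation choices. First, the conditioning structure: the paper centres $\+x_t$ at its conditional mean given the pre-emigration population, using a single per-individual difference $\Delta_t^{(i)}$ that combines the survival indicator $\phi_t^{(i)}$ with the transition draw (Lemma \ref{MZlemma} applied once with $a_n=n_{t-1}$), and then handles the emigration fluctuation a second time inside its term \eqref{timetpart2}; you instead condition in two stages (on $\+x_t$ for emigration, then on $\bar{\+x}_t$ for the multinomial transitions), and your comparison of conditional means exploits row-stochasticity of the random matrix $\+K_{t+1,\bs\eta(\bar{\+x}_t)}$, which lets you bypass the paper's term \eqref{timetpart3} and its appeal to the induction hypothesis with the transformed test vector $[\+1_m\otimes\bs\delta_t]\circ\+K_{t,\bs\eta(\bs\nu_{t-1}\circ\bs\delta_t)}\bs f$. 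Second, and this is the main divergence, the control of $\|\bs\eta(\bar{\+x}_t)-\bs\eta(\bar{\bs\lambda})\|_\infty$: you truncate on the event that the survivor total is within $\bar L/2$ of its limit and pay for the complement with Markov's inequality on the fourth moment, whereas the paper avoids truncation entirely via the identity $\tfrac{a}{b}-\tfrac{c}{d}=\tfrac{a}{b}\cdot\tfrac{d-b}{d}+\tfrac{a-c}{d}$, in which the only random ratio, $a/b=\bs\eta(\bar{\+x}_{t-1})^\top\bs f$, is bounded deterministically by a multiple of $\|\bs f\|_\infty$ because $\bs\eta(\cdot)$ is a probability vector, so that only the deterministic denominator $\+1_m^\top(\bs\nu_{t-1}\circ\bs\delta_t)$ ever appears. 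The paper's trick is shorter and yields the $L^4$ rate in one line; your truncation is more routine but equally valid, since the discarded event has probability $O(n^{-(1+4\gamma)})$ and the integrand there is bounded. Note finally that both arguments rest on the same implicit non-degeneracy requirement $\+1_m^\top(\bs\nu_t(\bs\theta^*)\circ\bs\delta_{t+1}(\bs\theta^*))>0$ — your $\bar L>0$, and the paper's division by this very quantity — which the stated assumptions do not literally guarantee, so this is a shared caveat rather than a gap in your argument.
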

\begin{proof}
Explicit dependence of some quantities on $\bs \theta^*$ and $n$ is omitted throughout the proof to avoid over-cumbersome notation where the dependence is unambiguous. We proceed to prove the above by induction on $t$. At time $0$ we have for some $c_0>0$ and $\gamma_0>0$:

\begin{equation}
\E\left[ \left|\frac{\+x_0^\top}{n}\bs f - \bs{\nu}_0^\top \bs f \right|^4\right]^\frac{1}{4} \leq c_0n^{-(\frac{1}{4} + \gamma_0)},
\end{equation}
by assumption \ref{as:init}. Now for $t \geq 1$ assume \eqref{xconvstat} holds for $t-1$. Recall $\+x_t = \tilde{\+x}_t + \hat{\+x}_t$, $\tilde{ x}^{(j)}_t = \sum_{i=1}^{n_{t-1}}\mathbb{I}\{\phi_t^{(i)} = 1 \}\mathbb{I}\{\xi_t^{(i)} = j \}$ and $\bar{x}_{t-1}^{(j)} =\sum_{i=1}^{n_{t-1}} \mathbb{I}[\xi_{t-1}^{(i)}=j]\mathbb{I}[\phi^{(i)}_t=1]$.  We make the following decomposition:
\begin{align}\label{timetpart1}
\frac{ \+ x_t^\top}{n}\bs f  - \bs \nu_t^\top \bs f &= \frac{ \+ x_{t}^\top}{n}\bs f  -\left[ \frac{ \+ {\bar x}_{t-1}}{n} ^\top\+ K_{t, \bs \eta(\bar{\+ x}_{t-1})} + {\bs \alpha}_{t,\infty}^\top \right] \bs f \\
&+ \left[\frac{\+ {\bar x}_{t-1}}{n} - \bs{\nu}_{t-1}\circ \bs \delta_t\right]^\top\left[\+ K_{t, \bs \eta(\bar{\+ x}_{t-1})} \bs f \right] \label{timetpart2} \\
 & +(\bs \nu_{t-1} \circ \bs \delta_t )^\top\left[\+ K_{t, \bs \eta(\bar{\+ x}_{t-1})} - \+ K_{t, \bs \eta(\bs \nu_{t-1}\circ \bs \delta_t)} \right] \bs f \label{timetpart3}.
\end{align}
Consider \eqref{timetpart1}. We make the further decomposition:
\begin{align}
\frac{ \+ x_{t}^\top}{n}\bs f &-\left[ \frac{\+ {\bar x}_{t-1}}{n}^\top\+ K_{t, \bs \eta(\bar{\+ x}_{t-1})} + {\bs \alpha}_{t,n}^\top \right] \bs f \\
&= \frac{ (\tilde{\+ x}_t+ \hat{\+x}_t)^\top}{n}\bs f -\left[ \frac{\+ {\bar x}_{t-1}}{n}^\top\+ K_{t, \bs \eta(\bar{\+ x}_{t-1})} + {\bs \alpha}_{t,n}^\top \right] \bs f \\
&= \frac{ \tilde{\+ x}_{t}^\top}{n}\bs f -\left[ \frac{\+ {\bar x}_{t-1}}{n}^\top\+ K_{t, \bs \eta(\bar{\+ x}_{t-1})} \right] \bs f + \left[\frac{\hat{\+ x}^\top_t}{n} - {\bs \alpha}_{t,n}^\top \right]\bs f\\
&= \frac{1}{n}\sum_{j=1}^m \left(\sum_{i=1}^{n_{t-1}}\mathbb{I}\{\phi_t^{(i)}=1\} \mathbb{I}\{\xi_t^{(i)} = j\}\right)f^{(j)} - \frac{1}{n}\sum_{j=1}^m \left( \sum_{i=1}^{n_{t-1}}\mathbb{I}\{\xi_{t-1}^{(i)} = j\}\mathbb{I}\{\phi_t^{(i)}=1\}\right)\+K_{t, \bs \eta(\bar{\+ x}_{t-1})}^{j, \cdot} \bs f \\
&+ \left[\frac{\hat{\+ x}^\top_t}{n} - {\bs \alpha}_{t,n}^\top \right]\bs f.\\
&= \frac{1}{n}\sum_{j=1}^m \left(\sum_{i=1}^{n_{t-1}}\mathbb{I}\{\phi_t^{(i)}=1\} \mathbb{I}\{\xi_t^{(i)} = j\}\right)f^{(j)} - \frac{1}{n} \sum_{i=1}^{n_{t-1}}\mathbb{I}\{\phi_t^{(i)}=1\}\+K_{t, \bs \eta(\bar{\+ x}_{t-1})}^{\xi_{t-1}^{(i)}, \cdot} \bs f + \left[\frac{\hat{\+ x}^\top_t}{n} - {\bs \alpha}_{t,n}^\top \right]\bs f.\\
&= \frac{1}{n}\sum_{i=1}^{n_{t-1}}\sum_{j=1}^m \left \{  \mathbb{I}\{\phi_t^{(i)}=1\}\mathbb{I}\{\xi_t^{(i)} = j\}- \mathbb{I}\{\phi_t^{(i)}=1\}K_{t, \bs \eta(\bar{\+ x}_{t-1})}^{(\xi_{t-1}^{(i)}),j}\right \}f ^{(j)} \label{decomp} \\
&+ \left[\frac{\hat{\+ x}^\top_t}{n} - {\bs \alpha}_{t,n}^\top \right]\bs f.
\end{align}
 The term $\left[\frac{\hat{\+ x}^\top_t}{n} - {\bs \alpha}_{t,n}^\top \right]\bs f$  converges to $0$ in $L^4$ by assumption \ref{as:params} and lemma \ref{poissonlemma}, that is there exists an $\hat{c}_t>0$ and $\hat{\gamma}_t>0$ such that:

\begin{equation}
\E\left[ \left| \left[\frac{\hat{\+ x}^\top_t}{n} - {\bs \alpha}_{t,n}^\top \right]\bs f \right|^4 \right]^\frac{1}{4} \leq \hat{c}_tn^{-(\frac{1}{4} + \hat{\gamma}_t)}.
\end{equation}
Now, turning to \eqref{decomp}, let $\mathcal{G}_t := \sigma(\{\xi_t^{(i)} \}_{i=1,\dots, n_t})$ and $\mathcal{F}_t := \sigma(\{\phi_t^{(i)}\}_{i=1,\dots, n_t})$. See that:
\begin{equation}
\begin{aligned}
&\E\left[ \sum_{j=1}^m \left \{  \mathbb{I}\{\phi_t^{(i)}=1\}\mathbb{I}\{\xi_t^{(i)} = j\}- \mathbb{I}\{\phi_t^{(i)}=1\}K_{t, \bs \eta(\+{\bar{x}}_{t-1})}^{(\xi_{t-1}^{(i)}),j}\right \}f ^{(j)} \Big{|} \mathcal{G}_{t-1} \vee \mathcal{F}_{t} \right] \\
= & \sum_{j=1}^m \left \{ \E\left[ \mathbb{I}\{\phi_t^{(i)}=1\}\mathbb{I}\{\xi_t^{(i)} = j\}\Big{|} \mathcal{G}_{t-1} \vee \mathcal{F}_{t} \right] - \mathbb{I}\{\phi_t^{(i)}=1\}K_{t, \bs \eta(\+{\bar{x}}_{t-1})}^{(\xi_{t-1}^{(i)}),j}\right \}f ^{(j)} \\
= & \sum_{j=1}^m \left \{  \mathbb{I}\{\phi_t^{(i)}=1\}K_{t, \bs \eta(\+{\bar{x}}_{t-1})}^{(\xi_{t-1}^{(i)}),j} - \mathbb{I}\{\phi_t^{(i)}=1\}K_{t, \bs \eta(\+{\bar{x}}_{t-1})}^{(\xi_{t-1}^{(i)}),j}\right \}f ^{(j)} \\
= &0,
\end{aligned}
\end{equation}
 since, given $\xi_{t-1}^{(i)}$, $\phi_t^{(i)}\sim \mathrm{Bernoulli}\left(\delta_t^{(\xi_{t-1}^{(i)})}\right)$ and, conditional on $\phi_t^{(i)} = 1$ and $\mathcal{G}_{t-1}$,  $\xi^{(i)}_{t}$ is a draw from the $\xi^{(i)}_{t-1}$th row of $\+ K_{t, \bs \eta(\bar{x}_{t-1})}$; and if $\phi_t^{(i)} = 0$ then $\xi_{t}^{(i)}=0$. Moreover:
\begin{equation} \label{Mbound}
\bigg|\sum_{j=1}^m \left \{  \mathbb{I}\{\phi_t^{(i)}=1\}\mathbb{I}\{\xi_t^{(i)} = j\}- \mathbb{I}\{\phi_t^{(i)} = 1 \}K_{t, \bs \eta(\bar{\+x}_{t-1})}^{(\xi_{t-1}^{(i)}),j}\right \}f ^{(j)} \bigg|\leq m\|\bs f \|_\infty =: M.
\end{equation}

Define:
\begin{equation*}\Delta_t^{(i)} := \sum_{j=1}^m \left \{ \mathbb{I}\{\phi_t^{(i)}=1\} \mathbb{I}\{\xi_t^{(i)} = j\}- \mathbb{I}\{\phi_t^{(i)} = 1 \}K_{t, \bs \eta(\bar {\+x}_{t-1})}^{(\xi_{t-1}^{(i)}),j}\right \}f ^{(j)}.
\end{equation*}
The $\Delta_t^{(i)}$ are conditionally independent and mean zero given $\mathcal{G}_{t-1}\vee\mathcal{F}_t$, and $\sigma(n_{t-1}) \subset \mathcal{G}_{t-1}\vee\mathcal{F}_t$. Also note that, since $\frac{n_{t-1}}{n}$ is equal to $\frac{\+x_{t-1}^\top}{n}\+1_m$, we can invoke the induction hypothesis with test vector $\+1_m$ to see there exist constants $c_{t-1}$ and $\gamma_{t-1}$ such that:
\begin{equation*}
\E\left[ \left|\frac{n_{t-1}}{n} - \+1_m^\top \bs \nu_{t-1} \right|^4 \right]^\frac{1}{4} \leq c_{t-1}n^{-(\frac{1}{4} + \gamma_{t-1})}
\end{equation*}
so that we satisfy the conditions of lemma \ref{MZlemma}. Hence there exists a constant $\tilde{c}_t>0$:

\begin{equation}
\E\left[\left|\frac{1}{n} \sum_{i=1}^{n_{t-1}} \Delta_t^{(i)} \right|^4  \right]^\frac{1}{4}\leq \tilde{c}_t n^{-\frac{1}{2}}.
\end{equation}
Before analysing \eqref{timetpart2} and \eqref{timetpart3} we will prove an intermediary result. Consider the decomposition:

\begin{align}
\label{deltap1}
\left| \frac{\bar{\+x}_{t-1}}{n}^\top \bs f - (\bs{\nu}_{t-1}\circ \bs \delta_t)^\top \bs f \right|  &\leq   \bigg|\frac{\bar{\+x}_{t-1}}{n}^\top \bs f - \left(\frac{{\+x}_{t-1}}{n}\circ \bs \delta_t \right)^\top \bs f \bigg| \\
 \label{deltap2}
 &+ \bigg|\left(\frac{{\+x}_{t-1}}{n}\circ \bs \delta_t \right)^\top \bs f - (\bs{\nu}_{t-1}\circ \bs \delta_t)^\top \bs f \bigg|.
 \end{align}
The term in \eqref{deltap2} converges to 0 in $L^4$ at the required rate by the induction hypothesis with test vector $\bs \delta_t \circ \bs f$. Now for \eqref{deltap1} see that:

\begin{equation} \label{xbar}
\frac{\bar{\+x}_{t-1}}{n}^\top \bs f - \left(\frac{{\+x}_{t-1}}{n}\circ \bs \delta_t \right)^\top \bs f = \frac{1}{n}\sum_{i=1}^{n_{t-1}} \underbrace{\sum_{j=1}^m \mathbb{I}\left\{ \xi_{t-1}^{(i)} = j\right\}\left(
\mathbb{I}\left\{ \phi_t^{(i)} = 1\right\}
  - \delta_t^{(\xi_{t-1}^{(i)})}\right)f^{(j)}}_{\bar \Delta_t^{(i)}},
\end{equation}
and note that the $\bar \Delta_t^{(i)}$ are mean $0$, bounded, and independent given $\mathcal{G}_{t-1}$ so that by lemma \ref{MZlemma} we have that \eqref{xbar} converges to $0$ in $L^4$ at the required rate. Combining this with the Minkowski inequality, we have that for some positive constants $\bar{c}_t$ and $\bar{\gamma}_t$:

\begin{equation} \label{nudelta}
\E\left[\left| \frac{\bar{\+x}_{t-1}}{n}^\top \bs f - (\bs{\nu}_{t-1}\circ \bs \delta_t)^\top \bs f \right|^4 \right]^\frac{1}{4} \leq \bar{c}_tn^{-(\frac{1}{4}+ \bar{\gamma})}.
\end{equation}
Now we look at \eqref{timetpart2}:

\begin{equation}
\begin{aligned}
 \left| \left[\frac{\+{\bar x}_{t-1}}{n} - \bs{\nu}_{t-1}\circ \bs \delta_t\right]^\top\left[\+ K_{t, \bs \eta(\bar{\+ x}_{t-1})} \bs f\right] \right| &\leq \left\| \frac{\bar{\+ x}_{t-1}}{n} - \bs{\nu}_{t-1}\right\|_1\left\|\+ K_{t, \bs \eta(\bar{\+ x}_{t-1})} \bs f\right\|_\infty \\
 &\leq \|\bs f \|_\infty \sum_{i=1}^m\left|\frac{\bar{x}_{t-1}^{(i)}}{n} - \nu_{t-1}^{(i)}\delta_t^{(i)}\right|\\
 &\leq  \|\bs f \|_\infty \sum_{i=1}^m\left|\frac{\+ {\bar{x}}_{t-1}^\top}{n}\bs e_i - (\bs \nu_{t-1}\circ \bs \delta_t)^\top \bs e_i\right|.
\end{aligned}
\end{equation} 
The first inequality here uses Holder's inequality and the second uses the fact that the row sums of the matrix $\+ K_{t, \bs \eta(\bar{\+ x}_{t-1})}$ are equal to $1$.  By \eqref{nudelta} with $\bs f = \bs e_i$ in conjunction with the Minkowski inequality there exists $\breve c_t>0$ and $\breve \gamma_t>0$ such that:

\begin{equation}
\E\left[  \left| \left[\frac{\+{\bar x}_{t-1}}{n} - \bs{\nu}_{t-1}\circ \bs \delta_t\right]^\top\left[\+ K_{t, \bs \eta(\bar{\+ x}_{t-1})} \bs f\right] \right|^4 \right]^\frac{1}{4} \leq \breve c_tn^{-(\frac{1}{4}+ \breve \gamma_t)}.
\end{equation}
Now looking at \eqref{timetpart3} we see using assumption \ref{as:1} that there exists a $c>0$ such that:

\begin{equation*}
\begin{aligned}
\left|\left[\bs \nu_{t-1} \circ \bs \delta_t \right]^\top\left[\+ K_{t, \bs \eta(\bar{\+ x}_{t-1})} - \+ K_{t, \bs \eta(\bs \nu_{t-1}\circ \bs \delta_t)} \right] \bs f\right| &\leq c \| \bs \nu_{t-1} \circ \bs \delta_t \|_\infty\|\bs f \|_\infty\|\bs \eta(\bar{\+ x}_{t-1}) -   \bs \eta(\bs \nu_{t-1}\circ \bs \delta_t)\|_\infty \\
&\leq c \| \bs \nu_{t-1} \circ \bs \delta_t \|_\infty\|\bs f \|_\infty \sum_{i=1}^m\left|(\bs \eta(\bar{\+ x}_{t-1}) -   \bs \eta(\bs \nu_{t-1}\circ \bs \delta_t) )^\top \bs e_i \right| \\
\end{aligned}
\end{equation*}
If $\+1_m^\top(\bs{\nu}_{t-1}\circ \bs \delta_t) = 0$ then $\+1_m^\top \bar{\+x}_{t-1} = 0$ $\mathbb{P}^{\theta^*}-a.s.$ by lemmas \ref{muylemma} and \ref{mumunlemma}, which we state and prove in section \ref{sec:filtering_limits}, in which case the right hand side of the above is $0$ and therefore satisfies all positive bounds. Henceforth, assume $\+1_m^\top(\bs{\nu}_{t-1}\circ \bs \delta_t) > 0$. Consider:

\begin{align}
\bigg| \bs \eta (\+{\bar x_{t-1}})^\top\bs f &-   \bs \eta(\bs{\nu}_{t-1}\circ \bs \delta_t)^\top \bs f\bigg| \\
& = \bigg| \bs \eta (\+{\bar x_{t-1}})^\top\bs f - \frac{(\bs{\nu}_{t-1}\circ \bs \delta_t)^\top \bs f}{\+1_m^\top (\bs{\nu}_{t-1}\circ \bs \delta_t)}\bigg|  \\ 
&=\bigg| \bs \eta (\+{\bar x_{t-1}})^\top\bs f  +\frac{n^{-1}\bar{\+x}_{t-1}^\top }{\+1_m^\top(\bs{\nu}_{t-1}\circ \bs \delta_t)}\bs f  - \frac{n^{-1}\bar{\+x}_{t-1}^\top }{\+1_m^\top(\bs{\nu}_{t-1}\circ \bs \delta_t)}\bs f  -   \bs \eta(\bs{\nu}_{t-1}\circ \bs \delta_t)^\top \bs f \bigg| \\
&\leq \bigg| \bs \eta (\+{\bar x_{t-1}})^\top\bs f  -\frac{n^{-1}\bar{\+x}_{t-1}^\top }{\+1_m^\top(\bs{\nu}_{t-1}\circ \bs \delta_t)}\bs f\bigg| + \bigg |\frac{n^{-1}\bar{\+x}_{t-1}^\top }{\+1_m^\top(\bs{\nu}_{t-1}\circ \bs \delta_t)}\bs f  -   \bs \eta(\bs{\nu}_{t-1}\circ \bs \delta_t)^\top \bs f \bigg| \\
&\leq  \left|\bs \eta (\+{\bar x_{t-1}})^\top \bs f \right|(\+1_m^\top (\bs{\nu}_{t-1}\circ \bs \delta_t))^{-1}\left|\frac{\+1_m^\top \bar{\+x}_{t-1}}{n} - \+1_m^\top(\bs{\nu}_{t-1}\circ \bs \delta_t)\right| \label{eq:ratiolem} \\
 &+ (\+1_m^\top(\bs{\nu}_{t-1}\circ \bs \delta_t))^{-1}\left|\frac{\bar{\+x}_{t-1}^\top }{n}\bs f - (\bs{\nu}_{t-1}\circ \bs \delta_t)^\top \bs f\right| \\
 & \leq \frac{m\|\bs f \|_\infty}{\+1_m^\top(\bs{\nu}_{t-1}\circ \bs \delta_t)}\left|\frac{\+1_m^\top \bar{\+x}_{t-1}}{n} - \+1_m^\top(\bs{\nu}_{t-1}\circ \bs \delta_t)\right| \label{xbar1}\\
  &+ (\+1_m^\top(\bs{\nu}_{t-1}\circ \bs \delta_t))^{-1}\left|\frac{\bar{\+x}_{t-1}^\top }{n}\bs f - (\bs{\nu}_{t-1}\circ \bs \delta_t)^\top \bs f\right|.\label{xbarf}
\end{align}
Where we use lemma \ref{lem:ratiolem} in line \eqref{eq:ratiolem}. We can again invoke \eqref{nudelta} to give $L^4$ convergence of \eqref{xbar1} and \eqref{xbarf} at the required rate. We can now combine all of the above, along with the Minkowski inequality to show that:

\begin{equation}
\E\left[ \left| \frac{ \+ x_t^\top}{n}\bs f  - \bs \nu_t^\top \bs f \right|^4 \right]^\frac{1}{4} \leq c_tn^{-(\frac{1}{4} + \gamma_t)},
\end{equation}
where $c_t = \hat{c}_t + \tilde{c}_t + \bar{c}_t + \breve{c}_t$, and $\gamma_t = \min(\hat{\gamma}_t, \frac{1}{4}, \bar{\gamma}_t, \breve{\gamma}_t)$.
\end{proof}


\begin{lemma}\label{lem:lpy}
Let assumptions \ref{as:params} - \ref{as:init} hold. Then there exists a constant $\rho_t>0$ for each $\bs f \in \mathbb{R}^m$ and $t\geq 1$, and a constant $a_t>0$ such that:
\begin{equation*}
\E\left[\left|\frac{\+y^\top_{t}}{n}\bs f - \left([\bs \nu_{t}(\bs \theta^*) \circ \+ q_t(\bs \theta^*) ]^\top \+ G_t(\bs \theta^*) + \bs \kappa_{t, \infty}(\bs \theta^*)^\top\right) \bs f\right|^4\right]^\frac{1}{4} \leq a_tn^{-(\frac{1}{4}+ {\rho}_t)}.
\end{equation*}
\end{lemma}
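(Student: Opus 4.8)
The plan is to mimic the proof of Lemma \ref{lem:llnx}, treating the observation $\+y_t = \tilde{\+y}_t + \hat{\+y}_t$ as a thinning-and-relabelling of the population $\+x_t$ plus independent Poisson noise; as there, I suppress explicit dependence on $\bs\theta^*$ and $n$. First I would split the error via Minkowski's inequality,
\begin{equation*}
\left\|\tfrac{1}{n}\+y_t^\top\bs f - \big((\bs\nu_t\circ\+q_t)^\top\+G_t + \bs\kappa_{t,\infty}^\top\big)\bs f\right\|_{L^4} \le \left\|\tfrac{1}{n}\tilde{\+y}_t^\top\bs f - (\bs\nu_t\circ\+q_t)^\top\+G_t\bs f\right\|_{L^4} + \left\|\tfrac{1}{n}\hat{\+y}_t^\top\bs f - \bs\kappa_{t,\infty}^\top\bs f\right\|_{L^4},
\end{equation*}
and bound the two terms separately. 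The additive-noise term is immediate: since $\hat{\+y}_t\sim\mathrm{Pois}(\bs\kappa_{t,n})$ and Assumption \ref{as:params} gives $\|n^{-1}\bs\kappa_{t,n} - \bs\kappa_{t,\infty}\|_\infty < a_1 n^{-(1/4+\gamma_1)}$, Corollary \ref{vecpoicor} applied with test vector $\bs f$ supplies a bound of the required order.

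For the reported-count term I would work at the level of individuals. Conditional on $\mathcal{G}_t := \sigma(\{\xi_t^{(i)}\}_{i=1,\ldots,n_t})$, each member $l$ of the population is detected independently with probability $q_t^{(\xi_t^{(l)})}$ and, if detected, reported in some compartment $R_t^{(l)}$ drawn from row $\xi_t^{(l)}$ of $\+G_t$; writing $D_t^{(l)}$ for the detection indicator,
\begin{equation*}
\tilde{\+y}_t^\top\bs f = \sum_{l=1}^{n_t} D_t^{(l)} f^{(R_t^{(l)})}, \qquad \E\!\left[ D_t^{(l)} f^{(R_t^{(l)})} \,\middle|\, \mathcal{G}_t\right] = q_t^{(\xi_t^{(l)})}\,\+G_t^{(\xi_t^{(l)},\cdot)}\bs f .
\end{equation*}
Defining the centred increments $\Delta_t^{(l)} := D_t^{(l)} f^{(R_t^{(l)})} - q_t^{(\xi_t^{(l)})}\+G_t^{(\xi_t^{(l)},\cdot)}\bs f$, these are $\mathcal{G}_t$-conditionally mean-zero and, since $\+q_t\in[0,1]^m$ and $\+G_t$ is row-stochastic, bounded by $2\|\bs f\|_\infty$; they thus play exactly the role of the $\Delta_t^{(i)}$ in Lemma \ref{lem:llnx}. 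Because $n_t = \+1_m^\top\+x_t$, Lemma \ref{lem:llnx} with test vector $\+1_m$ furnishes the moment bound on $n_t/n$ needed to invoke Lemma \ref{MZlemma}, which then yields $\big\|n^{-1}\sum_{l=1}^{n_t}\Delta_t^{(l)}\big\|_{L^4} \le \tilde a_t\, n^{-1/2}$.

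It remains to treat the conditional-mean part. Collecting individuals by compartment gives
\begin{equation*}
\frac{1}{n}\sum_{l=1}^{n_t} q_t^{(\xi_t^{(l)})}\+G_t^{(\xi_t^{(l)},\cdot)}\bs f = \frac{1}{n}\sum_{i=1}^m x_t^{(i)} q_t^{(i)}(\+G_t\bs f)^{(i)} = \frac{1}{n}\+x_t^\top\big(\+q_t\circ\+G_t\bs f\big),
\end{equation*}
and Lemma \ref{lem:llnx} applied with the fixed test vector $\+q_t\circ\+G_t\bs f$ shows this converges in $L^4$ to $\bs\nu_t^\top(\+q_t\circ\+G_t\bs f) = (\bs\nu_t\circ\+q_t)^\top\+G_t\bs f$ at the required rate. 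Assembling the three bounds through Minkowski's inequality and taking $\rho_t$ to be the minimum of the exponents appearing completes the argument. I anticipate no genuine difficulty here: the one point requiring care is the individual-level detection/reporting bookkeeping and the verification that $\Delta_t^{(l)}$ is bounded and conditionally centred, which directly parallels the treatment of the transition increments in Lemma \ref{lem:llnx}.
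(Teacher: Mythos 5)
Your proposal is correct and follows essentially the same route as the paper's proof: the same three-way split (Poisson error term via Corollary \ref{vecpoicor} and Assumption \ref{as:params}, conditionally centred detection/reporting increments via Lemma \ref{MZlemma}, and the conditional-mean part via Lemma \ref{lem:llnx} with test vector $\+q_t \circ \+G_t \bs f$), assembled with Minkowski. Your explicit verification that $n_t/n$ satisfies the moment hypothesis of Lemma \ref{MZlemma} via Lemma \ref{lem:llnx} with test vector $\+1_m$ is a point the paper leaves implicit, and your bound $2\|\bs f\|_\infty$ on the increments is valid (indeed slightly sharper than the paper's $m\|\bs f\|_\infty$).
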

\begin{proof}
Explicit dependence of some quantities on $\bs \theta^*$ and $n$ is omitted throughout the proof to avoid over-cumbersome notation where the dependence is unambiguous. First note that:
\begin{equation}
\frac{\+y_t}{n} = \frac{\tilde{\+y}_t}{n} + \frac{\hat{\+y}_t}{n},
\end{equation}
and
\begin{equation} \label{ybound1}
\E\left[\left|\frac{\hat{\+y}^\top_t}{n} \bs f - \bs \kappa_{t,\infty}^\top \bs f\right|^4\right]^\frac{1}{4}< \hat a_{t}n^{-(\frac{1}{4}+ \hat \rho_{t})},
\end{equation}
for some $\hat a_t>0$ and $\hat \rho_t>0$ by corollary \ref{vecpoicor} and assumption \ref{as:params}. Write:
\begin{align}
\label{part1G}\frac{\tilde{\+y}^\top_t}{n}\bs f - (\bs \nu_t \circ \+ q_t )^\top \+ G_t\bs f&= \frac{\tilde{\+y}^\top_t}{n}\bs f -  \left(\frac{{\+ x}_t}{n}\circ \+ q_t \right)^\top \+ G_t \bs f \\
\label{part2G} &+ \left(\frac{{\+ x}_t}{n}\circ \+ q_t \right)^\top \+ G_t \bs f -  \left(\bs \nu_t\circ \+ q_t \right)^\top \+ G_t\bs f.
\end{align}
We have that
\begin{equation} \label{ybound2}
\E\left[ \left| \left(\frac{{\+ x}_t}{n}\circ \+ q_t \right)^\top \+ G_t \bs f -  \left(\bs \nu_t\circ \+ q_t \right)^\top \+ G_t \bs f \right|^4\right]^\frac{1}{4} \leq \bar a_t n^{-(\frac{1}{4} + \bar \rho_t)}
\end{equation}
for some $\bar a_t >0$ and $\bar \rho_t>0$ by lemma \ref{lem:llnx} using test function $[\left(\+q_t \otimes \+1_m\right)\circ \+G_t] \bs f$.

 Furthermore, we have:
\begin{equation}
\begin{split}
\frac{\tilde{\+y}_t^\top}{n}\+ f &= \frac{1}{n}\sum_{j=1}^m \tilde{\+y}^{(j)}\bs f^{(j)} \\
&= \frac{1}{n}\sum_{i=1}^{n_{t}}\sum_{j=1}^m\sum_{k=1}^m\mathbb{I}\{\xi_t^{(i)} = k\}\mathbb{I}\{\zeta_t^{(i)} = 1\}\mathbb{I}\{\varsigma_t^{(i)} = j \}\bs f^{(j)},
\end{split}
\end{equation}
where $\zeta_t^{(i)} \sim \text{Bernoulli}(\+q_t^{(\xi_t^{(i)})})$ and $\varsigma_t^{(i)} \sim \text{Categorical}(\+G^{(\xi_t^{(i)},\cdot)})$ indicates the compartment in which it is observed.
Notice that for \eqref{part1G} :

\begin{equation}
\begin{aligned}
\frac{\tilde{\+y}^\top_t}{n}\bs f -  &\left(\frac{{\+ x}_t}{n}\circ \+ q_t \right)^\top \+ G_t \bs f  \\
&= \frac{1}{n}\sum_{i=1}^{n_t}\underbrace{\sum_{j=1}^m\sum_{k=1}^m\left[\mathbb{I}\{\xi_t^{(i)} = k\}\mathbb{I}\{\zeta_t^{(i)} = 1\}\mathbb{I}\{\varsigma_t^{(i)} = j \} -  \mathbb{I}\{\xi_t^{(i)} = k\}q_t^{(k)}G_t^{(k,j)} \right]\bs f^{(j)}}_{=: \Xi_t^{(i)}}.
\end{aligned}
\end{equation}
The $\Xi_t^{(i)}$ are mean zero and independent conditioned on $\mathcal{G}_t$. Furthermore:
\begin{equation}
|\Xi_t^{(i)}|\leq m^2\|\bs f\|_\infty,
\end{equation}
almost surely and $\sigma(n_t) \subseteq \mathcal{G}_t$ where $ \mathcal{G}_t$ is defined as in lemma \ref{lem:llnx}. An application of lemma \ref{MZlemma}, yields:
\begin{equation} \label{ybound3}
\E \left[\left|\frac{\tilde{\+y}^\top_t}{n}\bs f -  \left(\frac{{\+ x}_t}{n}\circ \+ q_t \right)^\top \+ G_t \bs f\right|^4\right]^\frac{1}{4} \leq \tilde a_{t}n^{(\frac{1}{4} + \tilde \rho_t)},
\end{equation}
for some constants $\tilde a_t>0$ and $\tilde{\rho}_t>0$. Combining \eqref{ybound1}, \eqref{ybound2}, and \eqref{ybound3} with the Minkowski inequality yields:

\begin{equation}
\E\left[\left|\frac{\+y^\top_t}{n}\bs f - \left((\bs \nu_t \circ \+ q_t )^\top \+ G_t + \bs \kappa_t^\top\right) \bs f\right|^4\right]^\frac{1}{4} \leq a_tn^{-(\frac{1}{4}+ {\rho}_t)},
\end{equation}
where $a_t = \hat a_t + \bar{a}_t + \tilde{a}_t$ and $\rho_t = \min(\hat{\rho}_t, \bar \rho_t, \tilde \rho_t)$.

\end{proof}
\begin{proposition}\label{prop:yas}
Let assumptions \ref{as:params} - \ref{as:init} hold. Then for all $t\geq 1$:
\begin{equation}\label{eq:y_lln}
\frac{ \+ y_t^\top}{n}  \asls [\bs \nu_t(\bs \theta^*) \circ \+ q_t(\bs \theta^*) ]^\top \+ G_t(\bs \theta^*) + \bs \kappa_{t,\infty}(\bs \theta^*)^\top.
\end{equation}
\end{proposition}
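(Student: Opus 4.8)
The plan is to upgrade the $L^4$ convergence already established in Lemma \ref{lem:lpy} to almost-sure convergence, using exactly the Markov-inequality-plus-Borel--Cantelli argument employed in the proof of Lemma \ref{poissonlemma}. The crucial feature is that the $L^4$ rate supplied by Lemma \ref{lem:lpy} carries an exponent strictly larger than $\tfrac14$, so that raising it to the fourth power produces a sequence summable in $n$, which is what Borel--Cantelli requires.

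First I would fix a test vector $\bs f \in \mathbb{R}^m$ and abbreviate the proposed deterministic limit as $c_{\bs f} \coloneqq \left([\bs \nu_t(\bs \theta^*)\circ \+ q_t(\bs \theta^*)]^\top \+ G_t(\bs\theta^*) + \bs\kappa_{t,\infty}(\bs\theta^*)^\top\right)\bs f$. By Lemma \ref{lem:lpy} there exist constants $a_t>0$ and $\rho_t>0$ with
\[
\E\left[\left|\tfrac{1}{n}\+y_t^\top\bs f - c_{\bs f}\right|^4\right] \leq a_t^4\, n^{-(1+4\rho_t)}.
\]
Fixing $\varepsilon>0$, Markov's inequality then gives
\[
\mathbb{P}^{\bs\theta^*}\!\left(\left|\tfrac{1}{n}\+y_t^\top\bs f - c_{\bs f}\right| > \varepsilon\right) \leq \varepsilon^{-4}a_t^4\, n^{-(1+4\rho_t)},
\]
and since $1+4\rho_t>1$ the right-hand side is summable over $n$. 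Because the underlying probability space is the product space $(\Omega,\mathcal F,\mathbb{P}^{\bs\theta^*}) = \left(\prod_{n\geq 1}\Omega_n,\bigotimes_{n\geq1}\mathcal F_n,\bigotimes_{n\geq1}\mathbb{P}^{\bs\theta^*}_n\right)$, the events indexed by different $n$ live on a common space, so the Borel--Cantelli lemma applies and yields $\tfrac{1}{n}\+y_t^\top\bs f \asls c_{\bs f}$.

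Finally I would pass from this scalar statement to the vector statement \eqref{eq:y_lln}: applying the above with $\bs f$ ranging over the standard basis vectors $\+e_1,\ldots,\+e_m$ shows that every coordinate of $\tfrac{1}{n}\+y_t$ converges $\mathbb{P}^{\bs\theta^*}$-almost surely to the corresponding coordinate of the claimed limit, and since $[m]$ is finite the intersection of these $m$ almost-sure events is again almost sure. I do not anticipate any genuine obstacle here: all the substantive work is already contained in the $L^4$ bound of Lemma \ref{lem:lpy}, and the only thing that must be verified is that its rate has exponent exceeding one after the fourth power, which guarantees summability; everything else is routine bookkeeping.
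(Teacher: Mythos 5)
Your proposal is correct and follows essentially the same route as the paper's proof: the $L^4$ bound from Lemma \ref{lem:lpy} is converted to almost-sure convergence via Markov's inequality and Borel--Cantelli, exactly as the paper does. The only (harmless) additions are your explicit remarks about the product probability space and about passing from scalar statements with basis test vectors to the vector limit, which the paper leaves implicit.
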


\begin{proof}
By lemma \ref{lem:lpy} there exists constants $a_t>0$ and $\rho_t>0$ such that:
\begin{equation*}
\E\left[ \left| \frac{ \+ y_t^\top}{n}\bs f - ((\bs \nu_t \circ \+ q_t )^\top \+ G_t + \bs \kappa_t)^\top \bs f \right|^4 \right] \leq a_t^4n^{-(1 + 4\rho_t)}.
\end{equation*}
By Markov's inequality:
\begin{equation*}
\begin{split}
\mathbb{P}^{\bs\theta^*} \left[ \left| \frac{ \+ y_t^\top}{n}\bs f - ((\bs \nu_t \circ \+ q_t )^\top \+ G_t + \bs \kappa_t^\top) \bs f  \right| > \varepsilon \right] &\leq \varepsilon^{-4}\E\left[ \left| \frac{ \+ y_t^\top}{n}\bs f - ((\bs \nu_t \circ \+ q_t )^\top \+ G_t + \bs \kappa_t^\top) \bs f  \right|^4 \right] \\
 &\leq \varepsilon^{-4}a_t^4n^{-(1+4\rho_t)}.
\end{split}
\end{equation*}
This implies that:
\begin{equation} \label{BClem}
\sum_{n=1}^\infty \mathbb{P}^{\bs\theta^*} \left[ \left| \frac{ \+ y_t^\top}{n}\bs f - ((\bs \nu_t \circ \+ q_t )^\top \+ G_t + \bs \kappa_t^\top) \bs f \right| > \varepsilon \right] < \infty.
\end{equation}
We now appeal to the Borel-Cantelli lemma which tells us that \eqref{BClem} implies  the event:
\begin{equation*}\left \{  \left| \frac{ \+ y_t^\top}{n}\bs f - ((\bs \nu_t \circ \+ q_t )^\top \+ G_t + \bs \kappa_t^\top) \bs f \right| > \varepsilon \right \}, \end{equation*}
happens for infinitely many $n$ with probability 0, and that:

\begin{equation*}\mathbb{P}^{\bs\theta^*}\left( \lim_{n \rightarrow \infty} \left| \frac{ \+ y_t^\top}{n}\bs f - ((\bs \nu_t \circ \+ q_t )^\top \+ G_t + \bs \kappa_t^\top) \bs f \right| > \varepsilon \right) = 0,\end{equation*}
for all $\varepsilon>0$. Hence we have shown that:

\begin{equation*}
\frac{ \+ y_t^\top}{n} \bs f \asls ((\bs \nu_t \circ \+ q_t )^\top \+ G_t + \bs \kappa_t^\top)\bs f .
\end{equation*}

\end{proof}

\subsubsection{Case (II)}

Define:
\begin{align*}
\bs \nu_0(\bs \theta^*) &\coloneqq \bs \lambda_{0,\infty}(\bs \theta^*), \\
\+ N_t(\bs \theta^*) &\coloneqq \left(\bs \nu_{t-1}(\bs \theta^*) \otimes \+1_m \right) \circ \+K_{t,\bs \eta\left(\bs \nu_{t-1}(\bs \theta^*)\right)}(\bs \theta^*), \\
\bs \nu_t(\bs \theta^*) &\coloneqq (\+1_m^\top \+ N_t(\bs \theta^*))^\top.
\end{align*}

\begin{lemma}\label{lem:llnZ}
Let assumptions \ref{as:params} - \ref{as:init} hold. For all $t \geq 1$ there exists a $\gamma_{t_z}>0$, and for all vectors $\bs f_1, \bs f_2 \in \mathbb{R}^m$ a constant $b_t>0$, such that:
\begin{equation*}
\E \left[ \left| n^{-1} \bs f_1^\top \+Z_t \bs f_2 - \bs f_1^\top\+ N_t(\bs \theta^*) \bs f_2 \right|^4 \right]^\frac{1}{4} \leq b_tn^{-(\frac{1}{4} + \gamma_{t_z})}, \text{ for all } t \geq 0.
\end{equation*}
\end{lemma}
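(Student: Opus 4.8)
The plan is to mirror the proof of Lemma \ref{lem:llnx}, using the identity $\bs f_1^\top\+Z_t\bs f_2 = \sum_{k=1}^{n_{t-1}} f_1^{(\xi_{t-1}^{(k)})}f_2^{(\xi_t^{(k)})}$ together with the fact that, since $\bs\delta_t=\+1_m$ and $\bs\alpha_t=\+0_m$ throughout case (II), the vector $\bs\nu_t(\bs\theta^*)$ defined here satisfies $\bs\nu_t = (\+1_m^\top\+N_t)^\top = [\bs\nu_{t-1}^\top\+K_{t,\bs\eta(\bs\nu_{t-1})}]^\top$ and so coincides with the limiting vector of case (I). Consequently Lemma \ref{lem:llnx} is directly available at time $t-1$, and no separate induction on the $\+Z$-lemma is needed: the argument is a single step built on the already-established law of large numbers for $\+x_{t-1}$. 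Explicit dependence on $\bs\theta^*$ and $n$ will be suppressed.

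First I would set $\mathcal{G}_{t-1}:=\sigma(\{\xi_{t-1}^{(i)}\})$ and decompose $n^{-1}\bs f_1^\top\+Z_t\bs f_2 - \bs f_1^\top\+N_t\bs f_2 = A+B+C$, where $A := \frac1n\sum_{k=1}^{n_{t-1}} f_1^{(\xi_{t-1}^{(k)})}\bigl[f_2^{(\xi_t^{(k)})} - \+K_{t,\bs\eta(\+x_{t-1})}^{(\xi_{t-1}^{(k)},\cdot)}\bs f_2\bigr]$ is the martingale-difference remainder, $B := \frac1n(\+x_{t-1}\circ\bs f_1)^\top[\+K_{t,\bs\eta(\+x_{t-1})} - \+K_{t,\bs\eta(\bs\nu_{t-1})}]\bs f_2$ captures the fluctuation of the kernel through its $\bs\eta$-argument, and $C := (n^{-1}\+x_{t-1} - \bs\nu_{t-1})^\top\bs g$ with $\bs g := \bs f_1\circ(\+K_{t,\bs\eta(\bs\nu_{t-1})}\bs f_2)$ a \emph{fixed} vector. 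Telescoping shows these sum to the required difference once one notes that $\frac1n\sum_k f_1^{(\xi_{t-1}^{(k)})}\+K^{(\xi_{t-1}^{(k)},\cdot)}\bs f_2 = \frac1n(\+x_{t-1}\circ\bs f_1)^\top\+K\bs f_2$.

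Then I would bound each term. For $A$: conditionally on $\mathcal{G}_{t-1}$ each $\xi_t^{(k)}$ is an independent draw from row $\xi_{t-1}^{(k)}$ of $\+K_{t,\bs\eta(\+x_{t-1})}$ (using $\bar{\+x}_{t-1}=\+x_{t-1}$ as there is no emigration), so the summands are conditionally independent, mean-zero under $\mathcal{G}_{t-1}$, and bounded in modulus by $2\|\bs f_1\|_\infty\|\bs f_2\|_\infty$; since $n_{t-1}=\+1_m^\top\+x_{t-1}$ satisfies the hypothesis of Lemma \ref{MZlemma} by Lemma \ref{lem:llnx} with test vector $\+1_m$, that lemma yields $\E[|A|^4]^{1/4}\le \tilde c_t n^{-1/2}$. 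For $C$: Lemma \ref{lem:llnx} applied with the fixed test vector $\bs g$ gives $\E[|C|^4]^{1/4}\le \breve c_t n^{-(1/4+\gamma_{t-1})}$. For $B$: Assumption \ref{as:1} gives $|B|\le c\|\bs f_1\|_\infty\|\bs f_2\|_\infty\,\tfrac{n_{t-1}}{n}\,\|\bs\eta(\+x_{t-1})-\bs\eta(\bs\nu_{t-1})\|_\infty$. Collecting the three bounds by Minkowski's inequality produces the claim with $b_t=\tilde c_t+\bar c_t+\breve c_t$ and $\gamma_{t_z}=\min(\tfrac14,\bar\gamma_t,\gamma_{t-1})$.

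The main obstacle is the term $B$. I would bound $D:=\|\bs\eta(\+x_{t-1})-\bs\eta(\bs\nu_{t-1})\|_\infty$ in $L^4$ at rate $n^{-(1/4+\gamma)}$ by the normalization manipulation used in Lemma \ref{lem:llnx}: for each coordinate $i$, writing $\bs f=\+e_i$, the difference $\eta_i(\+x_{t-1})-\eta_i(\bs\nu_{t-1})$ splits into a piece controlled by $\bigl|n^{-1}\+1_m^\top\+x_{t-1}-\+1_m^\top\bs\nu_{t-1}\bigr|$ and one controlled by $\bigl|n^{-1}\+x_{t-1}^\top\+e_i-\bs\nu_{t-1}^\top\+e_i\bigr|$, both at the required rate by Lemma \ref{lem:llnx}, and $\|\cdot\|_\infty$ is then dominated by the sum over $i$. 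The delicate point is controlling the product $\tfrac{n_{t-1}}{n}D$ without needing $L^8$ estimates; I would exploit that $D\le 2$ is bounded and split $\tfrac{n_{t-1}}{n}D = aD + (\tfrac{n_{t-1}}{n}-a)D$ with $a:=\+1_m^\top\bs\nu_{t-1}$, so that the first term inherits the rate of $\E[D^4]^{1/4}$ and the second, by boundedness of $D$, inherits the rate of $\E[|\tfrac{n_{t-1}}{n}-a|^4]^{1/4}$, both at $n^{-(1/4+\gamma)}$. This sidesteps the apparent need for higher moments and closes the estimate for $B$.
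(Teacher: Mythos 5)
Your proposal is correct and takes essentially the same route as the paper: your decomposition $A+B+C$ is exactly the paper's three-term split, with the martingale remainder $A$ handled via Lemma \ref{MZlemma}, the kernel fluctuation $B$ via Assumption \ref{as:1}, the term $C$ via Lemma \ref{lem:llnx} with test vector $\bs f_1 \circ (\+K_{t,\bs\eta(\bs\nu_{t-1})}\bs f_2)$, and the pieces assembled by Minkowski. The only (harmless) difference is your more elaborate treatment of the factor $n_{t-1}/n$ and of $\|\bs\eta(\+x_{t-1})-\bs\eta(\bs\nu_{t-1})\|_\infty$ in term $B$; the paper shortcuts this by noting that in case (II) the population size is constant, so that $n_{t-1}/n$ causes no difficulty and $\|\bs\eta(\+x_{t-1})-\bs\eta(\bs\nu_{t-1})\|_\infty \leq \|n^{-1}\+x_{t-1}-\bs\nu_{t-1}\|_\infty$ directly.
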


\begin{proof}
Recall from section \ref{sec:case_II} that in case (II) there is no immigration or emigration,   $n_t=n$  and hence also $\+x_t = \+{\bar{x}}_t$ with probability $1$ for all $t \geq 0$. 

Consider the decomposition:
\begin{align}
\big|n^{-1} \bs f_1^\top \+Z_t \bs f_2 &- \bs f_1^\top\left[\bs \nu_{t-1} \otimes \+1_m \right] \circ \+K_{t,\bs \eta( \bs \nu_{t-1})}\bs f_2\big| \\ \label{Zp1}
&\leq \left|n^{-1} \bs f_1^\top \+Z_t \bs f_2 -  \bs f_1^\top\left[\frac{\+x_{t-1}}{n} \otimes \+1_m \right] \circ \+K_{t, \bs \eta(\+ x_{t-1})}\bs f_2\right|\\ \label{Zp2}
&+ \left|\bs f_1^\top\left[\frac{\+x_{t-1}}{n} \otimes \+1_m \right] \circ \+K_{t, \bs \eta(\+ x_{t-1})}\bs f_2 - \bs f_1^\top\left[\frac{\+x_{t-1}}{n} \otimes \+1_m \right] \circ \+K_{t, \bs \eta(\bs \nu_{t-1})}\bs f_2 \right|\\ \label{Zp3}
&+ \left|\bs f_1^\top\left[\frac{\+x_{t-1}}{n} \otimes \+1_m \right] \circ \+K_{t, \bs \eta(\bs \nu_{t-1})}\bs f_2 - \bs f_1^\top\left[\bs \nu_{t-1} \otimes \+1_m \right] \circ \+K_{t, \bs \eta(\bs \nu_{t-1})}\bs f_2 \right|.
\end{align}
Notice that by assumption \ref{as:params} with vectors $\bs f_1$ and $\bs f_2$, there exists a constant $c>0$ such that the term \eqref{Zp2} satisfies:
\begin{align}
    &\left|\bs f_1^\top\left[ \left(\frac{\+x_{t-1}}{n} \otimes \+1_m \right)\circ\left( \+K_{t, \bs \eta(\+ x_{t-1})} - \+K_{t, \bs \eta(\bs \nu_{t-1})} \right) \right]\bs f_2 \right| \\
    =&\left| \left( \bs f_1\circ \frac{\+x_{t-1}}{n}\right)^\top\left( \+K_{t, \bs \eta(\+ x_{t-1})} - \+K_{t, \bs \eta(\bs \nu_{t-1})} \right) \bs f_2 \right| \\
    \leq &c\|\bs f_1\|_\infty \|\bs f_2\|_\infty \left\| \bs \eta(\+x_{t-1}) - \bs \eta(\bs\nu_{t-1}) \right\|_\infty\\
    \leq &c\|\bs f_1\|_\infty \|\bs f_2\|_\infty \left\| \frac{\+x_{t-1}}{n} - \bs \nu_{t-1}  \right\|_\infty \\
    \leq&c\|\bs f_1\|_\infty \|\bs f_2\| \sum_{i=1}^m\left|\left(\frac{\+x_{t-1}}{n} - \bs \nu_{t-1} \right)^\top \bs e_i \right|.
\end{align}
By the Minkowski inequality and lemma \ref{lem:llnx} there exist constants $\bar b_t>0$ and $\bar \gamma_{t_z}>0$ such that:
\begin{equation}
    \E\left[\left|\bs f_1^\top\left[ \left(\frac{\+x_{t-1}}{n} \otimes \+1_m \right)\circ\left( \+K_{t, \bs \eta(\+ x_{t-1})} - \+K_{t, \bs \eta(\bs \nu_{t-1})} \right) \right]\bs f_2 \right|^4\right]^\frac{1}{4} \leq \bar{b}_t n^{-(\frac{1}{4}+ \bar{\gamma}_{t_z})}
\end{equation}
Moreover, \eqref{Zp3} is equal to:

\begin{align}
    \left|\frac{\+x_{t-1}^\top}{n}\left[\bs f_1\otimes \+1_m \right] \circ \+K_{t, \bs \eta(\bs \nu_{t-1})}\bs f_2 - \bs \nu_{t-1} ^\top\left[\bs f_1\otimes \+1_m \right] \circ \+K_{t, \bs \eta(\bs \nu_{t-1})}\bs f_2 \right|,
\end{align}
therefore we can invoke lemma \ref{lem:llnx} with test vector $\left[\bs f_1\otimes \+1_m \right] \circ \+K_{t, \bs \eta(\bs \nu_{t-1})}\bs f_2 $, this tells us there exists constants $\hat b_t>0$ and $\hat \gamma_{t_z}>0$ such that:

\begin{equation}
    \E\left[ \left|\frac{\+x_{t-1}^\top}{n}\left[\bs f_1\otimes \+1_m \right] \circ \+K_{t, \bs \eta(\bs \nu_{t-1})}\bs f_2 - \bs \nu_{t-1} ^\top\left[\bs f_1\otimes \+1_m \right] \circ \+K_{t, \bs \eta(\bs \nu_{t-1})}\bs f_2 \right|^4\right]^\frac{1}{4}\leq \hat{b}_t n^{-(\frac{1}{4}+ \hat{\gamma}_{t_z})}.
\end{equation}
We now recall that $\+Z^{(j,k)}_t = \sum_{i=1}^n\mathbb{I}\{\xi_{t-1}^{(i)} = j, \xi_{t}^{(i)} = k \}$, so that the term \eqref{Zp1} is equal to:

\begin{equation}
 \frac{1}{n}\sum_{i=1}^{n} \underbrace{\left[ \sum_{j=1}^m\sum_{k=1}^m \left( \mathbb{I}\{\xi_{t-1}^{(i)} = j, \xi_{t}^{(i)} = k \} -  \mathbb{I}\{\xi_{t-1}^{(i)} = j\} K^{(j,k)}_{t,\bs \eta(\+x_{t-1})} \right) f_1^{(j)}f_2^{(k)} \right]}_{=:\Delta_t^{(i)}}.
\end{equation}
Since, conditioned on $\mathcal{G}_{t-1}\coloneqq \sigma(\{\xi_{t-1}^{(i)} \}_{i=1,\dots, n_t})$, $\xi_t^{(i)}$ is a draw from the $\xi_{t-1}^{(i)}$th row of $ \+ K_{t, \bs \eta(\+x_{t-1})} $, we have ${\E\left[ \Delta_t^{(i)} \mid \mathcal{G}_{t-1} \right] = 0}$. Furthermore, given $\mathcal{G}_{t-1}$ the $\Delta_t^{(i)}$ are independent and $\left| \Delta_t^{(i)} \right|\leq m^2\|\bs f\|_\infty^2$. An application of lemma \ref{MZlemma} yields that for some constants $\tilde b_t>0$ and $\tilde \gamma_{t_z}>0$:

\begin{equation}
\E\left[ \left|n^{-1} \bs f_1^\top \+Z_t \bs f_2 -  \bs f_1^\top\left[\frac{\+x_{t-1}}{n} \otimes \+1_m \right] \circ \+K_{t, \bs \eta(\+ x_{t-1})}\bs f_2\right|^4\right]^\frac{1}{4} \leq\tilde{b}_t n^{-(\frac{1}{4}+ \tilde{\gamma}_{t_z})}.
\end{equation}
Finally, use of the Minkowski inequality yields the result:

\begin{equation}
\E\left[\left|n^{-1} \bs f_1^\top \+Z_t \bs f_2 - \bs f_1^\top\left[\bs \nu_{t-1} \otimes \+1_m \right] \circ \+K_{t, \bs{\eta}(\bs \nu_t)}\bs f_2\right|^4\right]^\frac{1}{4} \leq {b}_t n^{-(\frac{1}{4}+ {\gamma}_{t_z})},
\end{equation}
where $b_t = \bar b_t+ \hat b_t+ \tilde b_t$ and $\gamma_{t_z}=\min(\bar \gamma_{t_z},\hat \gamma_{t_z}, \tilde \gamma_{t_z})$.

\end{proof}
\begin{lemma} \label{Ylp}
Let assumptions \ref{as:params} - \ref{as:init} hold. For all $t \geq 1$ there exists a $\bar{\gamma}_Y>0$, and for all vectors $\bs f_1, \bs f_2 \in \mathbb{R}^m$ a constant $c_Y>0$, such that:
\begin{equation*}
\E\left[\left| n^{-1} \bs f_1^\top \+Y_t \bs f_2 - \bs f_1^\top \left[\+ N_t(\bs \theta^*) \circ \+ Q_t(\bs \theta^*)\right]\bs f_2 \right| ^4\right]^\frac{1}{4} \leq c_Yn^{-(\frac{1}{4} + \bar{\gamma}_Y)}.
\end{equation*}
\end{lemma}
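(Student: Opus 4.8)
The plan is to follow the same programme as the proof of Lemma \ref{lem:lpy}, splitting $n^{-1}\bs f_1^\top \+Y_t\bs f_2$ into a term governed by Lemma \ref{lem:llnZ} and a conditionally-centred fluctuation term controlled by Lemma \ref{MZlemma}. Since, given $\+Z_t$, $\+Y_t$ is obtained by independent binomial thinning, $\E[n^{-1}\bs f_1^\top\+Y_t\bs f_2\mid\+Z_t]=n^{-1}\bs f_1^\top(\+Z_t\circ\+Q_t)\bs f_2$, so (suppressing the dependence on $\bs\theta^*$) I would write
\begin{align*}
n^{-1}\bs f_1^\top\+Y_t\bs f_2-\bs f_1^\top(\+N_t\circ\+Q_t)\bs f_2
&=\bigl(n^{-1}\bs f_1^\top\+Y_t\bs f_2-n^{-1}\bs f_1^\top(\+Z_t\circ\+Q_t)\bs f_2\bigr)\\
&\quad+\bigl(n^{-1}\bs f_1^\top(\+Z_t\circ\+Q_t)\bs f_2-\bs f_1^\top(\+N_t\circ\+Q_t)\bs f_2\bigr),
\end{align*}
bound the two brackets separately, and recombine with Minkowski's inequality.

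For the second bracket (the law-of-large-numbers term) the only subtlety is that $\+Z_t\circ\+Q_t$ is a Hadamard product, not the bilinear form to which Lemma \ref{lem:llnZ} directly applies. I would circumvent this by writing $\bs f_1^\top(\+Z_t\circ\+Q_t)\bs f_2=\sum_{j=1}^m(\bs f_1\circ\+Q_t^{(\cdot,j)})^\top\+Z_t\bs e_j$, where $\bs e_j$ is the $j$th standard basis vector. Because $\+Q_t\in[0,1]^{m\times m}$ we have $\|\bs f_1\circ\+Q_t^{(\cdot,j)}\|_\infty\le\|\bs f_1\|_\infty$, so Lemma \ref{lem:llnZ} applies to each of the $m$ summands with test vectors $\bs f_1\circ\+Q_t^{(\cdot,j)}$ and $\bs e_j$; summing the resulting bounds via Minkowski gives a bound of order $n^{-(\frac14+\gamma_{t_z})}$, with limit $\sum_j(\bs f_1\circ\+Q_t^{(\cdot,j)})^\top\+N_t\bs e_j=\bs f_1^\top(\+N_t\circ\+Q_t)\bs f_2$ as required.

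For the first bracket (the fluctuation term) I would realise the thinning at the level of individuals, exactly as $\tilde{\+y}_t$ was handled in Lemma \ref{lem:lpy}. Recalling that in Case (II) $n_t\equiv n$ almost surely and $Z_t^{(i,j)}=\sum_{k=1}^n\mathbb{I}[\xi_{t-1}^{(k)}=i,\xi_t^{(k)}=j]$, attach to each individual $k$ an independent reporting indicator $\omega_t^{(k)}\sim\mathrm{Bernoulli}(Q_t^{(\xi_{t-1}^{(k)},\xi_t^{(k)})})$, so that $Y_t^{(i,j)}=\sum_{k=1}^n\mathbb{I}[\xi_{t-1}^{(k)}=i,\xi_t^{(k)}=j]\,\omega_t^{(k)}$. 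Then the first bracket equals $\tfrac1n\sum_{k=1}^n\Xi_t^{(k)}$ with
\begin{equation*}
\Xi_t^{(k)}:=\bigl(\omega_t^{(k)}-Q_t^{(\xi_{t-1}^{(k)},\xi_t^{(k)})}\bigr)\,f_1^{(\xi_{t-1}^{(k)})}f_2^{(\xi_t^{(k)})}.
\end{equation*}
Writing $\mathcal{H}_t:=\sigma(\{(\xi_{t-1}^{(k)},\xi_t^{(k)})\}_{k})$, conditionally on $\mathcal{H}_t$ the $\omega_t^{(k)}$ are independent, so the $\Xi_t^{(k)}$ are conditionally independent, mean zero, and bounded by $\|\bs f_1\|_\infty\|\bs f_2\|_\infty$. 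Crucially, the constant-population feature of Case (II) makes the number of summands the deterministic quantity $a_n=n$, so the hypotheses of Lemma \ref{MZlemma} hold trivially (with $a=1$), yielding $\E[|\tfrac1n\sum_k\Xi_t^{(k)}|^4]^{1/4}\le\tilde c_t\,n^{-1/2}$.

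Combining the two brackets through Minkowski's inequality then gives the claim with $\bar\gamma_Y=\min(\gamma_{t_z},\tfrac14)$ and $c_Y$ a constant depending on $m$, $\|\bs f_1\|_\infty$, $\|\bs f_2\|_\infty$, $b_t$ and $\tilde c_t$. I expect the main obstacle to be the fluctuation term: one must introduce the per-individual reporting representation so that the summands are conditionally independent and mean-zero and check that $n_t\equiv n$ renders $a_n$ deterministic, whereas the law-of-large-numbers term becomes routine once the Hadamard product is rewritten as a finite sum of genuine bilinear forms in $\+Z_t$.
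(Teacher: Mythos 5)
Your proposal is correct and follows essentially the same route as the paper: the same two-bracket decomposition, Lemma \ref{lem:llnZ} for the law-of-large-numbers bracket, conditionally centred and bounded per-individual Bernoulli reporting indicators fed into Lemma \ref{MZlemma} for the fluctuation bracket (the paper's $\Xi_t^{(i)}$, written as a double sum over compartment pairs, is exactly your $\Xi_t^{(k)}$), and Minkowski to recombine. One slip worth fixing: your identity should carry the $f_2^{(j)}$ weights, i.e. $\bs f_1^\top(\+Z_t\circ\+Q_t)\bs f_2=\sum_{j=1}^m f_2^{(j)}(\bs f_1\circ\+Q_t^{(\cdot,j)})^\top\+Z_t\bs e_j$, after which your explicit reduction of the Hadamard product to genuine bilinear forms --- a step the paper silently skips when invoking Lemma \ref{lem:llnZ} --- goes through as intended.
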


\begin{proof}
Write

\begin{align}
| n^{-1} \bs f_1^\top \+Y_t \bs f_2 &- \bs f_1^\top \left[\left(\bs \nu_{t-1} \otimes \+1_m \right) \circ \+K_{t, \bs \nu_{t-1}}\circ \+ Q_t\right]\bs f_2 | \\\label{Yp1}
&\leq \left| n^{-1} \bs f_1^\top \+Y_t \bs f_2 - n^{-1}\bs f_1^\top  \+Z_t\circ \+ Q_t\bs f_2 \right| \\ \label{Yp2}
&+\left| n^{-1}\bs f_1^\top  \+Z_t\circ \+ Q_t\bs f_2  - \bs f_1^\top \left[\left(\bs \nu_{t-1} \otimes \+1_m \right) \circ \+K_{t, \bs \nu_{t-1}}\circ \+ Q_t\right]\bs f_2 \right|.
\end{align}
By lemma \ref{lem:llnZ} there exists $a_Y>0$ and $\gamma_{Y_1}>0$ such that:

\begin{equation}
\E\left[\left| n^{-1}\bs f_1^\top  \+Z_t\circ \+ Q_t\bs f_2  - \bs f_1^\top \left[\left(\bs \nu_{t-1} \otimes \+1_m \right) \circ \+K_{t, \bs \nu_{t-1}}\circ \+ Q_t\right]\bs f_2 \right|^4\right]^\frac{1}{4} \leq a_Yn^{-(\frac{1}{4} + \gamma_{Y_1})}
\end{equation}
Now, we can write \eqref{Yp2} as:

\begin{equation*}
n^{-1}\sum_{i=1}^{n} \underbrace{\sum_{j=1}^m \sum_{k=1}^m \left[ \mathbb{I}\{\xi_{t-1}^{(i)} = j, \xi_{t}^{(i)} = k \}\mathbb{I}\{\zeta^{(i)} = 1\} - \mathbb{I}\{\xi_{t-1}^{(i)} = j, \xi_{t}^{(i)} = k \}Q^{(j,k)} \right]f_1^{(j)}f_2^{(k)}}_{=: \Xi_t^{(i)}}.
\end{equation*}
Where $\zeta^{(i)}$ given $\mathcal{G}_{t-1} \vee \mathcal{G}_t$ (where $\mathcal{G}_t$ is defined as in lemma \ref{lem:llnx}) is distributed $\text{Bernoulli}(Q^{(\xi_{t-1}^{(i)}, \xi_{t}^{(i)})})$. Hence, ${\E\left[ \Xi^{(i)}_t \mid \mathcal{G}_{t-1}\vee \mathcal{G}_t \right] = 0}$. Furthermore, given $\mathcal{G}_{t-1} \vee \mathcal{G}_t$ the $\Xi_t^{(i)} $ are independent and $\left| \Xi_t^{(i)} \right| < m^2\|\bs f\|_\infty^2$. An application of lemma \ref{MZlemma} yields:

\begin{equation}
\E\left[\left| n^{-1} \bs f_1^\top \+Y_t \bs f_2 - n^{-1}\bs f_1^\top  \+Z_t\circ \+ Q_t\bs f_2 \right|^4\right]^\frac{1}{4} \leq b_Yn^{-(\frac{1}{4} + \gamma_{Y_2})},
\end{equation}
for some $b_y>0$ and $\gamma_{Y_2}>0$. Use of the Minkowski inequality yields the result:

\begin{equation}
\E\left[\left| n^{-1} \bs f_1^\top \+Y_t \bs f_2 - \bs f_1^\top \left[\left(\bs \nu_{t-1} \otimes \+1_m \right) \circ \+K_{t, \bs \nu_{t-1}}\circ \+ Q_t\right]\bs f_2 \right| ^4\right]^\frac{1}{4} \leq (a_Y + b_Y)n^{-(\frac{1}{4} + \bar{\gamma}_Y)},
\end{equation}
for $\bar{\gamma}_Y = \min(\gamma_{Y_1}, \gamma_{Y_2}).$

\end{proof}
\begin{proposition}\label{prop:Yas}
Let assumptions \ref{as:params} - \ref{as:init} hold. Then for all $t\geq 1$:
\begin{equation*}
 n^{-1} \+Y_t  \asls \+ N_t(\bs \theta^*)\circ \+Q_t(\bs \theta^*),
\end{equation*}
and for all $r\geq 1$,
\begin{equation}\label{eq:Y_bar_lln}
    n^{-1} \bar{\+Y}_r  \asls \sum_{t = \tau_{r-1}+1}^{\tau_r}\+ N_t(\bs \theta^*)\circ \+Q_t(\bs \theta^*).
\end{equation}
\end{proposition}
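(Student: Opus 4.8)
The plan is to upgrade the fourth-moment bound of Lemma \ref{Ylp} to almost sure convergence by precisely the Borel--Cantelli argument already used in the proof of Proposition \ref{prop:yas}, and then to obtain the aggregated statement \eqref{eq:Y_bar_lln} from the stability of almost sure convergence under finite sums.

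First I would fix arbitrary test vectors $\bs f_1, \bs f_2 \in \mathbb{R}^m$ and apply Markov's inequality to the bound supplied by Lemma \ref{Ylp}: for any $\varepsilon>0$,
\begin{equation*}
\mathbb{P}^{\bs\theta^*}\left[\left|n^{-1}\bs f_1^\top\+Y_t\bs f_2 - \bs f_1^\top[\+N_t(\bs\theta^*)\circ\+Q_t(\bs\theta^*)]\bs f_2\right|>\varepsilon\right]\leq \varepsilon^{-4}c_Y^4\, n^{-(1+4\bar\gamma_Y)}.
\end{equation*}
Since $\bar\gamma_Y>0$, the exponent $1+4\bar\gamma_Y$ exceeds $1$, so these probabilities are summable over $n$, and the Borel--Cantelli lemma yields
\begin{equation*}
n^{-1}\bs f_1^\top\+Y_t\bs f_2 \asls \bs f_1^\top[\+N_t(\bs\theta^*)\circ\+Q_t(\bs\theta^*)]\bs f_2 .
\end{equation*}

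To pass from this scalar statement to convergence of the whole matrix, I would specialise to $\bs f_1=\+e_i$ and $\bs f_2=\+e_j$ for each pair $(i,j)\in[m]\times[m]$, where $\+e_i$ denotes the $i$th standard basis vector. Each such choice gives almost sure convergence of the entry $n^{-1}Y_t^{(i,j)}$ to $[\+N_t(\bs\theta^*)\circ\+Q_t(\bs\theta^*)]^{(i,j)}$ off a $\mathbb{P}^{\bs\theta^*}$-null set; intersecting the finitely many resulting full-measure events over all $(i,j)$ establishes the entrywise, hence matrix-valued, convergence $n^{-1}\+Y_t\asls \+N_t(\bs\theta^*)\circ\+Q_t(\bs\theta^*)$, which is the first claim. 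The aggregated statement \eqref{eq:Y_bar_lln} then follows at once: because $\bar{\+Y}_r=\sum_{t=\tau_{r-1}+1}^{\tau_r}\+Y_t$ is a sum of finitely many terms, and almost sure convergence is preserved under finite linear combinations, intersecting the relevant full-measure events gives
\begin{equation*}
n^{-1}\bar{\+Y}_r=\sum_{t=\tau_{r-1}+1}^{\tau_r}n^{-1}\+Y_t \asls \sum_{t=\tau_{r-1}+1}^{\tau_r}\+N_t(\bs\theta^*)\circ\+Q_t(\bs\theta^*).
\end{equation*}

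I do not anticipate a serious obstacle here: the substantive work—the moment bound that is uniform in $n$—has already been done in Lemma \ref{Ylp} via the Marcinkiewicz--Zygmund inequality of Lemma \ref{MZlemma}, so what remains is the routine moment-to-almost-sure upgrade together with a finite-sum argument. The only points demanding slight care are the bookkeeping of null sets when moving from test-vector convergence to entrywise matrix convergence and when summing over $t$ in the window $\tau_{r-1}+1,\ldots,\tau_r$; both are handled simply by taking finite intersections of full-measure events.
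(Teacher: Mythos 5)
Your proposal is correct and follows essentially the same route as the paper: Markov's inequality applied to the fourth-moment bound of Lemma \ref{Ylp}, summability of the tail probabilities, Borel--Cantelli for the almost sure convergence (exactly as in Proposition \ref{prop:yas}), and the observation that $\bar{\+Y}_r$ is a finite sum of the $\+Y_t$ for the aggregated claim. The only difference is that you spell out the passage from test-vector convergence to entrywise matrix convergence via standard basis vectors and finite intersections of null sets, which the paper leaves implicit.
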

\begin{proof}
We have that by lemma \ref{Ylp} for all $t$ there exists $c>0$ and $\gamma>0$ such that:
\begin{align*}
\mathbb{P}^{\bs\theta^*}(\big| n^{-1} \bs f_1^\top \+Y_t \bs f_2 &- \bs f_1^\top \left[\left(\bs \nu_{t-1} \otimes \+1_m \right) \circ \+K_{t, \bs \nu_{t-1}}\circ \+ Q_t\right]\bs f_2 \big|> \varepsilon) \\
&\leq \varepsilon^{-4}\E\left[\left| n^{-1} \bs f_1^\top \+Y_t \bs f_2 - n^{-1}\bs f_1^\top  \+Z_t\circ \+ Q_t\bs f_2 \right|^4\right] \\
&\leq \varepsilon^{-4}cn^{-(1+\gamma)}.
\end{align*}
It follows that:
\begin{equation*}
\sum_{n=1}^\infty \mathbb{P}^{\bs\theta^*}\left(\left| n^{-1} \bs f_1^\top \+Y_t \bs f_2 - \bs f_1^\top \left[\left(\bs \nu_{t-1} \otimes \+1_m \right) \circ \+K_{t, \bs \nu_{t-1}}\circ \+ Q_t\right]\bs f_2 \right|> \varepsilon\right)< \infty.
\end{equation*}
This result along with a Borel-Cantelli argument, as in proposition \ref{prop:yas}, completes the proof of the first claim of the proposition. The second claim follows from the first since $\bar{\+Y}_r = \sum_{t =\tau_{r-1}+1}^{\tau_r}\+Y_t$. 
\end{proof}
%

\subsection{Filtering intensity limits}\label{sec:filtering_limits}

\subsubsection{Case (I)}\label{sec:filtering_limits_case_I}

 Define the vectors, or $t\geq1$:


\begin{align}
\bar{\bs \lambda}_{0, \infty}(\bs \theta^*, \bs \theta) &\coloneqq \bs \lambda_{0, \infty}(\bs \theta),\label{eq:lam_0_inf_defn}\\
\bs \lambda_{t, \infty}(\bs \theta^*, \bs \theta)&\coloneqq \left[(\bar{\bs \lambda}_{t-1, \infty}(\bs \theta^*, \bs \theta)\circ \bs \delta_t(\bs \theta))^\top\+ K_{t, \bs \eta(\bar{\bs \lambda}_{t-1, \infty}\left(\bs \theta^*, \bs \theta)\circ \bs \delta_t(\bs \theta)\right)}(\bs \theta)\right]^\top + \bs \alpha_{t, \infty}(\bs \theta),\nonumber\\
    \bs \mu_{t, \infty}(\bs \theta^*, \bs \theta) &\coloneqq \left[\left(\bs \lambda_{t, \infty}(\bs \theta^*, \bs \theta)\circ \+q_t(\bs \theta) \right)^\top \+G_t(\bs \theta)\right]^\top +\bs \kappa_{t,\infty}(\bs \theta),\nonumber\\
     \bar{\bs \lambda}_{t, \infty}(\bs \theta^*, \bs \theta) &\coloneqq \bigg[\+ 1_m - \+q_t(\bs \theta)\nonumber\\
 &+\left(\left[ \bs \mu_{t, \infty}(\bs \theta^*, \bs \theta^*)\oslash\bs \mu_{t, \infty}(\bs \theta^*, \bs \theta)\right]^\top\left(\left[\+1_m\otimes\+q_t(\bs \theta)\right]\circ\+G_t(\bs \theta)^\top\right) \right)^\top\bigg]\circ \bs \lambda_{t, \infty}(\bs \theta^*,\bs \theta),\label{eq:lam_bar_t_inf_defn}
\end{align}
where by convention, if we encounter $0/0$ in the element-wise division operation we replace that ratio by $0$.

Our main objective in section \ref{sec:filtering_limits_case_I} is to show  these vectors are the $\mathbb{P}^{\bs \theta^*}$-a.s. limits of the corresponding finite-$n$ quantities evaluated at $\bs \theta$, computed using algorithm \ref{alg:x}. This is the subject of proposition \ref{vecfilt}.

\begin{proposition}\label{vecfilt}
Let assumptions \ref{as:params} - \ref{as:init} hold.  Then for all $\bs\theta \in \Theta$ and $t\geq1$:
\begin{align*}
n^{-1}\bs \mu_{t,n}( \bs \theta) &\asls \bs \mu_{t, \infty}(\bs \theta^*, \bs \theta),\\
n^{-1}\bs \lambda_{t,n}( \bs \theta) &\asls  \bs\lambda_{t, \infty}(\bs \theta^*, \bs \theta),\\
n^{-1} \bar{\bs\lambda}_{t,n}( \bs \theta) &\asls \bar{\bs\lambda}_{t,\infty}(\bs \theta^*, \bs \theta).
\end{align*}
\end{proposition}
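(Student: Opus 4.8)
The plan is to argue by induction on $t$, exploiting the fact that, after dividing by $n$, each line of algorithm \ref{alg:x} expresses $n^{-1}\bs\lambda_{t,n}(\bs\theta)$, $n^{-1}\bs\mu_{t,n}(\bs\theta)$ and $n^{-1}\bar{\bs\lambda}_{t,n}(\bs\theta)$ as fixed, continuous transformations of quantities whose $\mathbb{P}^{\bs\theta^*}$-a.s.\ limits are already available: the previous-step intensity $n^{-1}\bar{\bs\lambda}_{t-1,n}(\bs\theta)$ (induction hypothesis), the scaled observation $n^{-1}\+y_t$ (proposition \ref{prop:yas}), and the scaled immigration and additive-error intensities $n^{-1}\bs\alpha_{t,n}(\bs\theta)$, $n^{-1}\bs\kappa_{t,n}(\bs\theta)$ (assumption \ref{as:params}). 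The conclusion then follows from the continuous mapping theorem for a.s.\ convergence, applied on the intersection of the full-measure events on which these input limits hold. For the base case $t=0$ we have $\bar{\bs\lambda}_{0,n}(\bs\theta)=\bs\lambda_{0,n}(\bs\theta)$, and assumption \ref{as:init} gives $n^{-1}\bs\lambda_{0,n}(\bs\theta)\to\bs\lambda_{0,\infty}(\bs\theta)=\bar{\bs\lambda}_{0,\infty}(\bs\theta^*,\bs\theta)$ deterministically, hence a.s.

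For the prediction and conditional-likelihood intensities I would divide lines 2 and 4 of algorithm \ref{alg:x} by $n$ and use that $\bs\eta$ is scale-invariant, so that $\bs\eta(\bar{\bs\lambda}_{t-1,n}(\bs\theta)\circ\bs\delta_t(\bs\theta))=\bs\eta(n^{-1}\bar{\bs\lambda}_{t-1,n}(\bs\theta)\circ\bs\delta_t(\bs\theta))$. By the induction hypothesis $n^{-1}\bar{\bs\lambda}_{t-1,n}(\bs\theta)\circ\bs\delta_t(\bs\theta)\asls\bar{\bs\lambda}_{t-1,\infty}(\bs\theta^*,\bs\theta)\circ\bs\delta_t(\bs\theta)$; the support conditions of assumptions \ref{as:params}--\ref{as:init} guarantee this limit has strictly positive total mass, so $\bs\eta$ is continuous there, while the Lipschitz bound of assumption \ref{as:1} (taking $\bs f_1,\bs f_2$ to range over standard basis vectors) gives continuity of each entry of $\+K_{t,\bs\eta}(\bs\theta)$ in $\bs\eta$. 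Hence the map $\+v\mapsto[(\+v\circ\bs\delta_t(\bs\theta))^\top\+K_{t,\bs\eta(\+v\circ\bs\delta_t(\bs\theta))}(\bs\theta)]^\top$ is continuous at the limit point, and together with $n^{-1}\bs\alpha_{t,n}(\bs\theta)\to\bs\alpha_{t,\infty}(\bs\theta)$ this yields $n^{-1}\bs\lambda_{t,n}(\bs\theta)\asls\bs\lambda_{t,\infty}(\bs\theta^*,\bs\theta)$. Feeding this into the linear map of line 4 and using $n^{-1}\bs\kappa_{t,n}(\bs\theta)\to\bs\kappa_{t,\infty}(\bs\theta)$ gives $n^{-1}\bs\mu_{t,n}(\bs\theta)\asls\bs\mu_{t,\infty}(\bs\theta^*,\bs\theta)$, matching the definitions.

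The delicate step is the update intensity $\bar{\bs\lambda}_{t,n}(\bs\theta)$ on line 3, because of the element-wise division. Observing that its bracketed denominator equals $\bs\mu_{t,n}(\bs\theta)^\top$, the ratio there is $\+y_t^\top\oslash\bs\mu_{t,n}(\bs\theta)^\top=(n^{-1}\+y_t^\top)\oslash(n^{-1}\bs\mu_{t,n}(\bs\theta)^\top)$, so the factors of $n$ cancel and the whole bracket is scale-free, multiplying $n^{-1}\bs\lambda_{t,n}(\bs\theta)$. By proposition \ref{prop:yas}, $n^{-1}\+y_t\asls[\bs\nu_t(\bs\theta^*)\circ\+q_t(\bs\theta^*)]^\top\+G_t(\bs\theta^*)+\bs\kappa_{t,\infty}(\bs\theta^*)$; a short parallel induction, setting $\bs\theta=\bs\theta^*$ in the definitions \eqref{eq:lam_0_inf_defn}--\eqref{eq:lam_bar_t_inf_defn} and using row-stochasticity of $\+G_t(\bs\theta^*)$, shows $\bs\lambda_{t,\infty}(\bs\theta^*,\bs\theta^*)=\bs\nu_t(\bs\theta^*)$ and hence that this a.s.\ limit is exactly $\bs\mu_{t,\infty}(\bs\theta^*,\bs\theta^*)$, the numerator appearing in \eqref{eq:lam_bar_t_inf_defn}. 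It then remains to pass to the limit in $\oslash$ under the convention $0/0:=0$, and this is where the support conditions do the work: since the supports of $\+q_t(\bs\theta),\+G_t(\bs\theta),\bs\kappa_{t,n}(\bs\theta)$ and, by support propagation, of $\bs\lambda_{t,n}(\bs\theta)$ are independent of $n$ and $\bs\theta$, so is $\mathrm{supp}(\bs\mu_{t,n}(\bs\theta))=\mathrm{supp}(\bs\mu_{t,\infty}(\bs\theta^*,\bs\theta))=\mathrm{supp}(\bs\mu_{t,\infty}(\bs\theta^*,\bs\theta^*))$. For an index $i$ in this common support the denominator converges to a strictly positive limit, so the $i$-th ratio converges to $\mu_{t,\infty}^{(i)}(\bs\theta^*,\bs\theta^*)/\mu_{t,\infty}^{(i)}(\bs\theta^*,\bs\theta)$; for an index $i$ outside it, $\mu_{t,n}^{(i)}(\bs\theta)=0$ for every $n$ and $y_t^{(i)}=0$ a.s.\ under the DGP, so the convention returns $0$ at every $n$ and in the limit. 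The scaled bracket therefore converges a.s.\ to its counterpart in \eqref{eq:lam_bar_t_inf_defn}, and multiplying by $n^{-1}\bs\lambda_{t,n}(\bs\theta)\asls\bs\lambda_{t,\infty}(\bs\theta^*,\bs\theta)$ gives $n^{-1}\bar{\bs\lambda}_{t,n}(\bs\theta)\asls\bar{\bs\lambda}_{t,\infty}(\bs\theta^*,\bs\theta)$, completing the induction.

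The main obstacle is precisely this last step. Unlike the prediction and likelihood intensities, the update intensity involves a ratio that is discontinuous where its denominator vanishes, so the $0/0:=0$ convention must be reconciled with the a.s.\ limits. The support conditions of assumptions \ref{as:params}--\ref{as:init} are exactly what make the division continuous on the relevant region and force the numerator to vanish wherever the denominator does; keeping careful track of these supports, together with the identity $\bs\mu_{t,\infty}(\bs\theta^*,\bs\theta^*)=\lim_n n^{-1}\+y_t$ linking proposition \ref{prop:yas} to the numerator in \eqref{eq:lam_bar_t_inf_defn}, is the crux. Everything else is a routine application of the continuous mapping theorem.
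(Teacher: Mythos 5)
Your overall strategy---induction on $t$, dividing each line of algorithm \ref{alg:x} by $n$ and applying the continuous mapping theorem, with the base case from assumption \ref{as:init}, the prediction and likelihood steps from assumptions \ref{as:params} and \ref{as:1}, and the update step fed by proposition \ref{prop:yas} together with the identity $\bs\lambda_{t,\infty}(\bs\theta^*,\bs\theta^*)=\bs\nu_t(\bs\theta^*)$---is exactly the paper's proof of proposition \ref{vecfilt}. You also correctly isolate the crux: reconciling the element-wise division and the $0/0\coloneqq0$ convention with the almost sure limits.

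However, your resolution of that crux contains a genuine gap. You claim that, by ``support propagation'', $\mathrm{supp}(\bs\mu_{t,n}(\bs\theta))$ is deterministic and equals $\mathrm{supp}(\bs\mu_{t,\infty}(\bs\theta^*,\bs\theta))=\mathrm{supp}(\bs\mu_{t,\infty}(\bs\theta^*,\bs\theta^*))$, independently of $n$ and $\bs\theta$. This is false in general: for $t\geq2$ the quantities $\bs\lambda_{t,n}(\bs\theta)$ and $\bs\mu_{t,n}(\bs\theta)$ are functions of the random observations $\+y_{1:t-1}$, so their supports are random. Concretely, if $q_s^{(k)}(\bs\theta)=1$ then line 3 of algorithm \ref{alg:x} gives $\bar\lambda^{(k)}_{s,n}(\bs\theta)=\lambda^{(k)}_{s,n}(\bs\theta)\sum_i y_s^{(i)}G_s^{(k,i)}(\bs\theta)/\mu^{(i)}_{s,n}(\bs\theta)$, which vanishes on the positive-probability event that $y_s^{(i)}=0$ for all $i\in\mathrm{supp}(\+G_s^{(k,\cdot)}(\bs\theta))$ and is positive on its complement; when the relevant entries of $\bs\alpha_{s+1,n}(\bs\theta)$ and $\bs\kappa_{s+1,n}(\bs\theta)$ are zero, this randomness propagates into $\mathrm{supp}(\bs\mu_{s+1,n}(\bs\theta))$. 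What your argument actually needs are the probabilistic versions of these support statements: (i) $\Ps\left(\mu^{(i)}_{t,n}(\bs\theta)=0\right)>0$ implies $y_t^{(i)}=0$ $\Ps$-a.s., so that the finite-$n$ recursion never divides a positive number by zero; (ii) $\mu^{(i)}_{t,\infty}(\bs\theta^*,\bs\theta)=0$ if and only if $\mu^{(i)}_{t,\infty}(\bs\theta^*,\bs\theta^*)=0$, so that the limiting expression \eqref{eq:lam_bar_t_inf_defn} is well defined; and (iii) $\mu^{(i)}_{t,\infty}(\bs\theta^*,\bs\theta)=0$ implies $\mu^{(i)}_{t,n}(\bs\theta)=0$ a.s.\ for every $n$, so that off the limiting support the finite-$n$ terms are identically zero under the convention, rather than ratios of two sequences both tending to zero, whose limit you could not otherwise control. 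These are precisely the paper's lemmas \ref{muylemma}, \ref{mumulemma} and \ref{mumunlemma}, each of which requires its own induction over $t$ exploiting the support conditions in assumptions \ref{as:params}--\ref{as:init}, including the second part of assumption \ref{as:1} on supports of the rows of $\+K_{t,\bs\eta}$. Asserting their conclusions via a deterministic-support claim that does not hold leaves the central step of the proof unsupported; with those three lemmas established, the rest of your argument goes through and coincides with the paper's.
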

\noindent The proof is postponed until later in section \ref{sec:filtering_limits_case_I}.

\begin{remark}\label{rem:lam_mu}
By writing out the above definitions it can be checked that $\bs\nu_t(\bs\theta^*)=\bs\lambda_{t, \infty}(\bs \theta^*, \bs \theta^*)$, hence lemma \ref{lem:llnx} implies by a Borel-Cantelli argument  $n^{-1}\+x_t\asls \bs\lambda_{t, \infty}(\bs \theta^*, \bs \theta^*)$; and that $ \bs \mu_{t, \infty}(\bs \theta^*, \bs \theta^*)$ is equal to the right hand side of \eqref{eq:y_lln} in  proposition \ref{prop:yas}, hence   $n^{-1}\+y_t \asls  \bs \mu_{t, \infty}(\bs \theta^*, \bs \theta^*)$. Therefore proposition \ref{vecfilt} implies that if algorithm \ref{alg:x} is run with the model specified by the DGP $\bs\theta^*$, thus computing $\bs \lambda_{t,n}( \bs \theta^*)$ and  $\bs \mu_{t,n}( \bs \theta^*) $, that when rescaled by $n^{-1}$ these vectors converge as $n\to\infty$ to the same $\mathbb{P}^{\bs\theta^*}$-almost sure limits as $n^{-1}\+x_t$ and  $n^{-1}\+y_t $. We provide empirical evidence for this remark in section \ref{sec:interpretation}.
\end{remark}

As preliminaries to the proof of proposition \ref{vecfilt} we need to verify that certain quantities in algorithm \ref{alg:x} and the vectors defined at the start of section \ref{sec:filtering_limits_case_I} are $\mathbb{P}^{\bs \theta^*}$-a.s. well-defined and finite. This is the purpose of lemma \ref{muylemma} and lemma \ref{mumulemma}.  In algorithm \ref{alg:x}, if $\bs\mu^{(i)}_{t,n}(\bs \theta) = 0 $ and $y_t^{(i)} > 0$, then line 3 would entail dividing a finite number by zero. Lemma \ref{muylemma} establishes that this happens with probability zero.
\begin{lemma}\label{muylemma}
Let assumptions \ref{as:params}-\ref{as:init} hold. For any $\bs \theta \in \Theta$, $n \in \mathbb{N}$, $i \in [m]$, and  $t \geq 1$,
$$\Ps\left(\mu_{t,n}^{(i)}( \bs \theta) = 0\right)>0 \implies y_t^{(i)} = 0, \quad \Ps\text{-}a.s.$$
\end{lemma}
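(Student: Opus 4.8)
The plan is to prove the pathwise form of the statement, namely that the event $\{\mu_{t,n}^{(i)}(\bs\theta)=0,\ y_t^{(i)}>0\}$ is $\Ps$-null; this is exactly what licenses the $0/0:=0$ convention in line 3 of algorithm \ref{alg:x} and is the content meant by ``this happens with probability zero''. First I would use nonnegativity: from line 4, $\mu_{t,n}^{(i)}(\bs\theta)=\sum_{j=1}^m \lambda_{t,n}^{(j)}(\bs\theta)\,q_t^{(j)}(\bs\theta)\,G_t^{(j,i)}(\bs\theta)+\kappa_{t,n}^{(i)}(\bs\theta)$ is a sum of nonnegative terms, so $\mu_{t,n}^{(i)}(\bs\theta)=0$ forces $\kappa_{t,n}^{(i)}(\bs\theta)=0$ and, for every $j$, $\lambda_{t,n}^{(j)}(\bs\theta)\,q_t^{(j)}(\bs\theta)\,G_t^{(j,i)}(\bs\theta)=0$. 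Because by assumption \ref{as:params} the supports of $\bs\kappa_{t,n}$, $\+q_t$ and of each row of $\+G_t$ are independent of the parameter, these support conditions transfer to $\bs\theta^*$: $\kappa_{t,n}^{(i)}(\bs\theta^*)=0$, and for each $j$ either $q_t^{(j)}(\bs\theta^*)=0$, or $G_t^{(j,i)}(\bs\theta^*)=0$, or $\lambda_{t,n}^{(j)}(\bs\theta)=0$.

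Next I would decompose the observation as $y_t^{(i)}=\hat y_t^{(i)}+\sum_{j=1}^m M_t^{(j,i)}$, where $\hat y_t^{(i)}\sim\mathrm{Pois}(\kappa_{t,n}^{(i)}(\bs\theta^*))$ and $M_t^{(j,i)}$ is the number of individuals residing in compartment $j$ that are detected and then reported in $i$, as in section \ref{sec:obs_model_x}. The additive term vanishes $\Ps$-a.s.\ since $\kappa_{t,n}^{(i)}(\bs\theta^*)=0$, while $M_t^{(j,i)}>0$ requires $x_t^{(j)}>0$, $q_t^{(j)}(\bs\theta^*)>0$ and $G_t^{(j,i)}(\bs\theta^*)>0$. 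Hence on $\{\mu_{t,n}^{(i)}(\bs\theta)=0\}$ it remains, for the indices $j$ with $q_t^{(j)}(\bs\theta^*)G_t^{(j,i)}(\bs\theta^*)>0$ (for which the trichotomy above gives $\lambda_{t,n}^{(j)}(\bs\theta)=0$ on this event), to exclude $x_t^{(j)}>0$. The crux is thus a pathwise support-containment claim: $\Ps$-a.s., $\mathrm{supp}(\+x_t)\subseteq\mathrm{supp}(\bs\lambda_{t,n}(\bs\theta))$, i.e.\ $\lambda_{t,n}^{(j)}(\bs\theta)=0\Rightarrow x_t^{(j)}=0$.

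I would establish this containment by induction on $t$, coupling the latent chain $(\+x_t)$ run under $\bs\theta^*$ with the intensities run under $\bs\theta$ on the same outcome $\omega$. At $t=0$ it holds because $\bar{\bs\lambda}_{0,n}(\bs\theta)=\bs\lambda_{0,n}(\bs\theta)=\E_{\bs\theta}[\+x_0]$ has parameter-independent support (assumption \ref{as:init}), matching that of $\mathbb{P}_{0,n}^{\bs\theta^*}$. For the prediction step, $\lambda_{t,n}^{(j)}(\bs\theta)=\sum_{j'}\bigl(\bar\lambda_{t-1,n}^{(j')}(\bs\theta)\,\delta_t^{(j')}(\bs\theta)\bigr)K_{t,\bs\eta(\bar{\bs\lambda}_{t-1,n}(\bs\theta)\circ\bs\delta_t(\bs\theta))}^{(j',j)}(\bs\theta)+\alpha_{t,n}^{(j)}(\bs\theta)$; since an individual moves $j'\to j$ only through a positive entry of $\+K_{t,\bs\eta(\bar{\+x}_{t-1})}(\bs\theta^*)$, and the induction hypothesis gives $\mathrm{supp}(\bar{\+x}_{t-1})\subseteq\mathrm{supp}(\bar{\bs\lambda}_{t-1,n}(\bs\theta)\circ\bs\delta_t(\bs\theta))$, the monotonicity of the row supports in the $\bs\eta$-argument (assumption \ref{as:1}) together with their parameter-independence (assumption \ref{as:params}) yields $K_{t,\bs\eta(\bar{\+x}_{t-1})}^{(j',j)}(\bs\theta^*)>0\Rightarrow K_{t,\bs\eta(\bar{\bs\lambda}_{t-1,n}(\bs\theta)\circ\bs\delta_t(\bs\theta))}^{(j',j)}(\bs\theta)>0$; so every compartment the state can reach carries positive prediction intensity.

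The main obstacle is the update step. Writing $\bar\lambda_{t,n}^{(j)}(\bs\theta)=\lambda_{t,n}^{(j)}(\bs\theta)\bigl[\,1-q_t^{(j)}(\bs\theta)+q_t^{(j)}(\bs\theta)\sum_k \tfrac{y_t^{(k)}}{\mu_{t,n}^{(k)}(\bs\theta)}\,G_t^{(j,k)}(\bs\theta)\bigr]$, the bracket is bounded below by $1-q_t^{(j)}(\bs\theta)$ (the sum being nonnegative under the $0/0:=0$ convention), so wherever detection is imperfect the prediction support is preserved and the coupling with the surviving counts $\bar{\+x}_t$ is carried forward; it is precisely here that the support-invariance of $\+q_t$ is needed, and the perfect-detection case $q_t^{(j)}(\bs\theta)=1$ is the delicate point requiring the detected term to be handled explicitly. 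Once the inductive support-containment is in hand, the remaining almost-sure bookkeeping is routine, and combining the three transferred support conditions gives $\sum_j M_t^{(j,i)}=0$ and hence $y_t^{(i)}=0$ on $\{\mu_{t,n}^{(i)}(\bs\theta)=0\}$, $\Ps$-a.s.
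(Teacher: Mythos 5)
Your strategy is structurally the same as the paper's: an induction over $t$ that propagates support information through the prediction and update steps, using the parameter-independence of supports in assumptions \ref{as:params}--\ref{as:init} and the support monotonicity in assumption \ref{as:1} to turn zeros of the filter intensities at $\bs \theta$ into zeros of the corresponding model quantities at $\bs \theta^*$, and hence into almost-sure vanishing of counts. Your pathwise reformulation ($\Ps(\mu_{t,n}^{(i)}(\bs\theta)=0,\;y_t^{(i)}>0)=0$) versus the paper's ``positive probability implies a.s.'' phrasing is a presentational difference, and your base case, prediction step, and decomposition $y_t^{(i)}=\hat y_t^{(i)}+\sum_j M_t^{(j,i)}$ all line up with the paper's proof. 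The problem is that you never actually prove the update step in the perfect-detection case: you announce that $q_t^{(j)}(\bs\theta)=1$ is ``the delicate point requiring the detected term to be handled explicitly'' and move on. That is a genuine gap, not deferrable bookkeeping, because it is exactly where the induction can fail to close.

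Concretely, when $q_t^{(j)}(\bs\theta)=1$ the lower bound $1-q_t^{(j)}(\bs\theta)$ on your bracket is zero and
\begin{equation*}
\bar\lambda_{t,n}^{(j)}(\bs\theta)\;=\;\lambda_{t,n}^{(j)}(\bs\theta)\sum_{k=1}^m \frac{y_t^{(k)}}{\mu_{t,n}^{(k)}(\bs\theta)}\,G_t^{(j,k)}(\bs\theta),
\end{equation*}
which vanishes not only when $\lambda_{t,n}^{(j)}(\bs\theta)=0$ but also whenever $y_t^{(k)}=0$ for every $k\in\mathrm{supp}(\+G_t^{(j,\cdot)}(\bs\theta))$. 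The latter is an event of positive $\Ps$-probability whenever under-reporting can produce all-zero observations --- for instance if $q_t^{(j)}(\bs\theta^*)<1$, which is fully compatible with $q_t^{(j)}(\bs\theta)=1$ since assumption \ref{as:params} constrains only supports, not where entries equal one --- and on that event $x_t^{(j)}$ need not vanish. So the containment you want to carry forward, $\bar\lambda_{t,n}^{(j)}(\bs\theta)=0\Rightarrow x_t^{(j)}=0$ a.s., cannot be extracted from the displayed identity alone, and without it $\lambda_{t+1,n}(\bs\theta)$ can acquire zeros that do not correspond to a.s.-empty compartments, breaking your inductive support containment at time $t+1$. The paper disposes of this case with a specific (one-line) claim: that $\Ps(\bar\lambda_{t,n}^{(j)}(\bs\theta)=0)>0$ forces $\Ps(\lambda_{t,n}^{(j)}(\bs\theta)\,G_t^{(j,k)}(\bs\theta)=0)>0$ for every $k$, whence, by row-stochasticity of $\+G_t$, $\Ps(\lambda_{t,n}^{(j)}(\bs\theta)=0)>0$ and the induction hypothesis applies. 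Your proposal does not engage with this step at all. Your instinct that the case is delicate is in fact well founded --- the paper's claim itself silently rules out the vanishing-by-chance of $\+y_t$ just described --- but a proof attempt that stops at ``to be handled explicitly'' stops precisely where the real work of this lemma lies, so as written the argument is incomplete.
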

\begin{proof}
Fix arbitrary $\bs \theta\in \Theta$ and $n \in \mathbb{N}$. All a.s. statements in the proof are with respect to $\mathbb{P}^{\bs \theta^*}_n$. We will show that for all $j \in[m]$ and $t\geq1$, the following two implications hold:
\begin{align}
\Ps\left(\lambda_{t,n}^{(j)}(\bs \theta) = 0\right)>0 & \implies x_t^{(j)} = 0,\quad a.s.,\label{eq:lamdbda_mu_induct1}\\
\Ps\left(\mu_{t,n}^{(i)}( \bs \theta) = 0\right)>0 &\implies y_t^{(i)} = 0, \quad a.s.\label{eq:lamdbda_mu_induct2}
\end{align}
The proof is inductive in $t$. To initialize the induction at $t=1$, let $j \in [m]$ and suppose that $\Ps(\lambda^{(j)}_{1,n}(\bs \theta) =0)>0$, i.e., 
\begin{equation*}
 \Ps\left( \sum_{k=1}^m \bar \lambda^{(k)}_{0,n}(\bs \theta)\delta_1^{(k)}(\bs \theta)K^{(k,j)}_{1, \bs \eta(\bar{\bs \lambda}_{0,n}(\bs \theta)\circ \bs \delta_1(\bs \theta))}(\bs \theta) + \alpha_{1,n}^{(j)}(\bs \theta) = 0\right) > 0,
\end{equation*}
then $\alpha_{1,n}^{(j)}(\bs \theta)=0$ which by assumption \ref{as:params} implies $\alpha_{1,n}^{(j)}(\bs \theta^*) = 0$ and hence $\hat{x}^{(j)}_1=0$, a.s. Furthermore, for all $k\in[m]$ we must have that either:
\begin{itemize}
    \item $\Ps\left(\lambda^{(k)}_{0,n}(\bs \theta) =0\right)>0$, which, since $\lambda^{(k)}_{0,n}(\bs \theta)$ is a deterministic quantity, implies $\lambda^{(k)}_{0,n}(\bs \theta)=0$, in turn by assumption \ref{as:init} this implies $\lambda^{(k)}_{0,n}(\bs \theta^*)=0$ so that $x_0^{(k)} = 0$ a.s. and $\bar x_0^{(k)} = 0$ a.s.; or
    \item $\delta_1^{(k)}(\bs \theta)= 0$,  which by assumption \ref{as:params} implies $\delta_1^{(k)}(\bs \theta^*)= 0$ which means $\bar x_0^{(k)} = 0$ a.s.; or
    \item $K^{(k,j)}_{1, \bs \eta(\bar{\bs \lambda}_{0,n}(\bs \theta)\circ \bs \delta_1(\bs \theta))}(\bs \theta)=0$, which by assumptions \ref{as:params}, \ref{as:1}, and \ref{as:init} implies $K^{(k,j)}_{1, \bs \eta(\bar{\+ x}_0)}(\bs \theta^*)=0$ a.s.
\end{itemize}
Hence we have for all $k\in[m]$ either $\bar x_0^{(k)} = 0$ a.s. or $K^{(k,j)}_{1, \bs \eta(\bar{\+ x}_0)}(\bs \theta^*)=0$ a.s. Since, given $\+x_0$, $\tilde{x}_1^{(j)}\sim\sum_{k=1}^m\text{Bin}\left(x_0^{(k)},K^{(k,j)}_{1, \bs \eta(\bar{\+ x}_0)}(\bs \theta^*)\right)$  we must have that $\tilde{x}_1^{(j)} = 0$ a.s., therefore we have that $x_1^{(j)} = \tilde x_1^{(j)} + \hat x_1^{(j)} = 0$ a.s. We have thus proved \eqref{eq:lamdbda_mu_induct1} in the case $t=1$. 

Now let us prove \eqref{eq:lamdbda_mu_induct2} in the case $t=1$. Suppose that for some $i \in [m]$, $\Ps\left(\mu^{(i)}_{1,n}(\bs \theta)= 0\right)>0$, i.e., 
\begin{equation*}
\Ps\left(\sum_{j=1}^m \lambda^{(j)}_{1,n}(\bs \theta)q_1^{(j)}(\bs \theta)G_1^{(j,i)}(\bs \theta) + \kappa^{(i)}_{1,n}(\bs \theta) = 0\right)>0.
\end{equation*}
Then $\kappa_{1,n}^{(i)}(\bs \theta)=0$ which by assumption \ref{as:params} implies $\kappa_{1,n}^{(i)}(\bs \theta^*) = 0$ and hence $\hat{y}^{(i)}_1=0$ a.s.  Furthermore, for all $j\in[m]$ we must have that either:
\begin{itemize}
    \item $\Ps\left(\lambda^{(j)}_{1,n}(\bs \theta) = 0\right)>0$, which  implies $x_1^{(j)} = 0$ a.s.  which implies $  \bar y_1^{(j)}=0$ a.s.; or
    \item $q_1^{(j)}(\bs \theta)= 0 $, which by assumption \ref{as:params} implies that $ q_1^{(j)}(\bs \theta^*)= 0 \implies \bar{y}_1^{(j)} = 0$ a.s.; or
    \item $G_1^{(j,i)}(\bs \theta)=0 $, which by assumption \ref{as:params} implies $G_1^{(j,i)}(\bs \theta^*)=0$.
\end{itemize}
Given, $\+ {\bar{y}}_1$ , $\tilde{y}_1^{(i)} \sim\sum_{j=1}^m \text{Bin}\left( \bar y_1^{(j)}, G_1^{(j,i)}(\bs \theta^*)\right)$. This means that $\tilde{y}_1^{(i)} = 0$  a.s., and furthermore that $y_1^{(i)} = \tilde{y}_1^{(i)} + \hat{y}_1^{(i)} = 0$ a.s. This completes the proof of \eqref{eq:lamdbda_mu_induct2} in the case $t=1$.

As an induction hypothesis suppose that \eqref{eq:lamdbda_mu_induct1} and \eqref{eq:lamdbda_mu_induct2} hold at $t$. We shall  show that $\Ps\left(\lambda^{(j)}_{t+1,n}(\bs \theta) = 0\right)>0 \implies x_{t+1}^{(j)} = 0$ a.s. Firstly we will show that, for all $k \in[m]$, $ \Ps\left(\bar\lambda^{(k)}_{t,n}(\bs \theta) = 0\right)>0 \implies \Ps\left(\lambda^{(k)}_{t,n}(\bs \theta) = 0\right)>0$ which, by the induction hypothesis, would imply $x^{(k)}_t = 0$ a.s. Suppose that for some $k \in[m]$ , $\Ps\left(\bar \lambda^{(k)}_{t,n}(\bs \theta) =0 \right)>0$, i.e., 
\begin{equation*}
\Ps\left((1 - q_t^{(k)}(\bs \theta)) \lambda^{(k)}_{t,n}(\bs \theta) + \sum_{j=1}^m y_t^{(j)}\frac{\lambda^{(k)}_{t,n}(\bs \theta)q_t^{(k)}(\bs \theta)G_t^{(k,j)}(\bs \theta)}{\mu^{(j)}_{t,n}(\bs \theta)} = 0\right)>0.
\end{equation*}
Firstly, $\bar \lambda^{(k)}_{t,n}(\bs \theta)$ is almost surely well defined by the induction hypothesis, since the event $\mu^{(j)}_{t,n}(\bs \theta) = 0$ and $y_t^{(j)} >0$ has probability $0$ for each $j \in[m]$. Now if the above displayed inequality holds we must have that either:
\begin{itemize}
    \item $q_t^{(k)}(\bs \theta)<1$, in which case we must have $\Ps\left(\lambda^{(k)}_{t,n}(\bs \theta) = 0\right)>0$; or
    \item $q_t^{(k)}(\bs \theta)=1$, in which case we must have $\Ps\left(\lambda^{(k)}_{t,n}(\bs \theta)G_t^{(k,j)}(\bs \theta) = 0\right) > 0$  for all $j$ so that the sum is equal to $0$ with positive probability, and  since $\+G_t$ is row-stochastic matrix,  there must exist a $j \in[m]$ such that $G_t^{(k,j)}(\bs \theta) > 0 $, hence $\Ps\left(\lambda^{(k)}_{t,n}(\bs \theta) = 0\right) > 0$.
\end{itemize}
We have thus shown $\Ps\left(\bar \lambda^{(k)}_{t,n}(\bs \theta) = 0\right)>0 \implies \Ps\left(\lambda^{(k)}_{t,n}(\bs \theta) = 0\right)>0$ which by the induction hypothesis implies $x_t^{(k)}=0$ a.s. so that further $\bar x_t^{(k)} = 0$ a.s. Now if for some $j \in [m]$, $\Ps\left(\lambda^{(j)}_{t+1,n}(\bs \theta) = 0\right) >0$, i.e.,
\begin{equation*}
\Ps\left(\sum_{k=1}^m \bar \lambda^{(k)}_{t,n}(\bs \theta)\delta_{t+1}^{(k)}(\bs \theta)K^{(k,j)}_{t+1, \bs \eta(\bar{\bs \lambda}_{t,n}(\bs \theta)\circ \bs \delta_{t+1}(\bs \theta))}(\bs \theta) + \alpha_{t+1,n}^{(j)}(\bs \theta) = 0\right)>0,
\end{equation*}
then $\alpha_{t+1,n}^{(j)}(\bs \theta)=0$, which by assumption \ref{as:params} implies $\alpha_{t+1,n}^{(j)}(\bs \theta^*) = 0$, hence $ \hat x_{t}^{(j)} = 0$ a.s. Furthermore, for all $k\in[m]$ we must have that either:
\begin{itemize}
    \item $\Ps\left(\bar \lambda^{(k)}_{t,n}(\bs \theta) = 0\right)>0$, which implies $x_{t}^{(k)} = 0$ a.s. $ \implies \bar x_{t}^{(k)} = 0$ a.s.; or
    \item $\delta_{t+1}^{(k)}(\bs \theta)= 0$, which by assumption \ref{as:params} implies $\delta_{t+1}^{(k)}(\bs \theta^*)= 0 \implies \bar x_{t}^{(k)} = 0$ a.s.; or
    \item $\Ps\left(K^{(k,j)}_{t+1, \bs \eta(\bar{\bs \lambda}_{t,n}(\bs \theta)\circ \bs \delta_{t+1}(\bs \theta))}(\bs \theta)=0\right)>0$. We claim this implies that $K^{(k,j)}_{t+1, \bs \eta(\bar{\+ x}_{t})}(\bs \theta)=0$, a.s. Suppose, for contradiction, that $\Ps\left(K^{(k,j)}_{t+1, \bs \eta(\bar{\bs \lambda}_{t,n}(\bs \theta)\circ \bs \delta_{t+1}(\bs \theta))}(\bs \theta)=0\right)>0$ and $\Ps\left(K^{(k,j)}_{t+1, \bs \eta(\bar{\+ x}_{t})}(\bs \theta)>0\right)>0$. Then there exist $E,E' \subseteq \Omega_n$ with $\Ps(E)>0$ and $\Ps(E')>0$ such that for all $\omega \in E$ and all $\omega' \in E'$:
    $$ K^{(k,j)}_{t+1, \bs \eta(\bar{\bs \lambda}_{t,n}(\bs \theta, \omega)\circ \bs \delta_{t+1}(\bs \theta))}(\bs \theta)=0 \text{ and } K^{(k,j)}_{t+1, \bs \eta(\bar{\+ x}_{t}(\omega'))}(\bs \theta)>0,$$
    which implies:
    $$ \text{supp}\left(\+K^{(k,\cdot)}_{t+1, \bs \eta(\bar{\bs \lambda}_{t,n}(\bs \theta, \omega)\circ \bs \delta_{t+1}(\bs \theta))}(\bs \theta)\right) \not \subseteq \text{supp}\left(\+K^{(k,\cdot)}_{t+1, \bs \eta(\bar{\+ x}_{t}(\omega'))}(\bs \theta) \right).$$
By assumption \ref{as:1} this implies:
$$ \text{supp}\left(\bar{\+ x}_{t}(\omega')\right) \not \subseteq \text{supp}\left(\bar{\bs \lambda}_{t,n}(\bs \theta, \omega)\circ \bs \delta_{t+1}(\bs \theta)\right),$$
i.e. there exists $l$ such that:
$$ \left(\bar{\bs \lambda}_{t,n}(\bs \theta, \omega)\circ \bs \delta_{t+1}(\bs \theta)\right)^{(l)} = 0 \text{ and } \left(\bar{\+ x}_{t}(\omega')\right)^{(l)} >0.$$
But since $\Ps(E)>0$ and $\Ps(E')>0$ this implies that:
$$\Ps(\bar{\lambda}^{(l)}_{t,n}(\bs \theta) = 0)>0 \text{ and } \Ps(\bar{x}^{(l)}_{t} > 0)>0.$$
This contradicts the observation in the first bullet point, hence $K^{(k,j)}_{t+1, \bs \eta(\bar{\+ x}_{t})}(\bs \theta)=0$ a.s. Then by assumption \ref{as:params} we have ${K^{(k,j)}_{t+1, \bs \eta(\bar{\+ x}_{t})}(\bs \theta^*)=0}$ a.s.
\end{itemize}
Hence, similarly to the argument used in the case $t=1$, we must have that $\tilde{x}_{t+1}^{(j)} =0$ a.s. so that $x_{t+1}^{(j)} = \tilde x_{t+1}^{(j)} + \hat x_{t+1}^{(j)} = 0$ a.s. Thus 
\eqref{eq:lamdbda_mu_induct1} holds with $t$ replace by $t+1$. 

It remains to show that \eqref{eq:lamdbda_mu_induct2} holds with $t$ replaced by $t+1$. So suppose that for some $i \in [m]$, $\Ps\left(\mu^{(i)}_{t+1,n}(\bs \theta)=0\right)>0$, i.e.,
\begin{equation*}
  \Ps\left(\sum_{j=1}^m \lambda^{(j)}_{t+1,n}(\bs \theta)q_{t+1}^{(j)}(\bs \theta)G_{t+1}^{(j,i)}(\bs \theta) + \kappa^{(i)}_{t+1,n}(\bs \theta) = 0\right)>0,
\end{equation*}
then we must have $\kappa^{(i)}_{t+1,n}(\bs \theta)=0$ which by assumption \ref{as:params} implies $\kappa^{(i)}_{t+1,n}(\bs \theta^*) = 0 $ hence $\hat{y}^{(i)}_{t+1} = 0$ a.s. Furthermore, for all $j\in[m]$ we must have either:
\begin{itemize}
    \item $\Ps \left(\lambda^{(j)}_{t+1,n}(\bs \theta) = 0\right)>0$, which implies that
    $x^{(j)}_{t+1} = 0 \implies \bar{y}_{t+1}^{(j)} = 0$ a.s.; or
    \item $q_{t+1}^{(j)}(\bs \theta)= 0 $, which by assumption \ref{as:params} implies $ q_{t+1}^{(j)}(\bs \theta^*)= 0 \implies \bar{y}_{t+1}^{(j)} = 0$ a.s.; or
    \item $G_{t+1}^{(j,i)}(\bs \theta)=0$,  which by assumption \ref{as:params} implies $G_{t+1}^{(j,i)}(\bs \theta^*)=0$.
\end{itemize}
Hence, using the same reasoning as in the $t=1$ case, we have $\tilde{y}_{t+1}^{(i)} = 0$  a.s. and furthermore ${y}_{t+1}^{(i)} = \tilde{y}_{t+1}^{(i)} + \hat{y}_{t+1}^{(i)} = 0$  a.s. This completes the proof of \eqref{eq:lamdbda_mu_induct2} with $t$ replaced by $t+1$. The induction is therefore complete.
\end{proof}

If $\mu^{(i)}_{t,\infty}(\bs \theta^*, \bs \theta) = 0$ and $\mu^{(i)}_{t, \infty}(\bs \theta^* , \bs \theta^*) > 0$ then $\bar{\bs \lambda}_{t, \infty}(\bs \theta^*, \bs \theta)$ would involve division of a finite number by zero. Lemma \ref{mumulemma} establishes that this situation cannot arise.
\begin{lemma}\label{mumulemma}
Let assumptions \ref{as:params} - \ref{as:init} hold. For any $\bs \theta,\bs\theta' \in \Theta$, $i \in[m]$ and $t\geq 1$,

\begin{align}
\bar \lambda^{(j)}_{t,\infty}(\bs \theta^*, \bs \theta) = 0 &\implies \bar \lambda^{(j)}_{t, \infty}(\bs \theta^*, \bs \theta') = 0,\\
\mu^{(i)}_{t,\infty}(\bs \theta^*, \bs \theta) = 0& \implies \mu^{(i)}_{t, \infty}(\bs \theta^*, \bs \theta') = 0.
\end{align}
\end{lemma}
\begin{proof}
Fix arbitrary $\bs \theta,\bs\theta' \in \Theta$. By symmetry we only need to prove the implication in one direction. We will show that the following two implications hold for all $i,j \in [m] $ and $t \geq 1$:
\begin{align}
\lambda^{(j)}_{t,\infty}(\bs \theta^*, \bs \theta) = 0 &\implies \lambda^{(j)}_{t, \infty}(\bs \theta^*, \bs \theta') = 0,\label{eq:mu_mu_induct_1}\\
\mu^{(i)}_{t,\infty}(\bs \theta^*, \bs \theta) = 0& \implies \mu^{(i)}_{t, \infty}(\bs \theta^*, \bs \theta') = 0.\label{eq:mu_mu_induct_2}
\end{align}
For the $t=1$ case, if $    \lambda^{(j)}_{1,\infty}(\bs \theta^*, \bs \theta)=0$, i.e., 
\begin{equation*}
 \sum_{k=1}^m\lambda^{(k)}_{0,\infty}(\bs \theta^*, \bs \theta)\delta_1^{(k)}(\bs \theta)K^{(k,j)}_{1, \bs \eta(\bs \lambda_{0,\infty}(\bs \theta^*, \bs \theta)\circ \bs \delta(\bs \theta))}(\bs \theta) + \alpha^{(j)}_{1,\infty}(\bs \theta) = 0,
\end{equation*}
then $\alpha^{(j)}_{1,\infty}(\bs \theta) = 0$ which by assumption \ref{as:params} implies $\alpha^{(j)}_{1,\infty}(\bs \theta') = 0$. Furthermore, for each $j\in [m]$ we must have either:
\begin{itemize}
    \item $\lambda^{(k)}_{0,\infty}(\bs \theta^*, \bs \theta) = \lambda^{(k)}_{0,\infty}(\bs \theta) = 0 $, which by assumption \ref{as:init} implies $ \lambda^{(k)}_{0,\infty}(\bs \theta') = 0$; or
    \item $\delta_1^{(k)}(\bs \theta) = 0$, which by assumption \ref{as:params} implies $\delta_1^{(k)}(\bs \theta')=0$; or
    \item $K^{(k,j)}_{1, \bs \eta(\bs \lambda_{0,\infty}(\bs \theta^*, \bs \theta)\circ \bs \delta(\bs \theta))}(\bs \theta) = 0$, which by assumptions \ref{as:params}, \ref{as:1}, and \ref{as:init} implies \newline $K^{(k,j)}_{1, \bs \eta(\bs \lambda_{0,\infty}(\bs \theta^*, \bs \theta')\circ \bs \delta(\bs \theta'))}(\bs \theta') = 0 $.
\end{itemize}
Hence we have:
\begin{equation*}
    \lambda^{(j)}_{1,\infty}(\bs \theta^*, \bs \theta') = \sum_{k=1}^m\lambda^{(k)}_{0,\infty}(\bs \theta^*, \bs \theta')\delta_1^{(k)}(\bs \theta')K^{(k,j)}_{1, \bs \eta(\bs \lambda_{0,\infty}(\bs \theta^*, \bs \theta')\circ \bs \delta_1(\bs \theta'))}(\bs \theta') + \alpha^{(j)}_{1,\infty}(\bs \theta') = 0,
\end{equation*}
so \eqref{eq:mu_mu_induct_1} holds with $t=1$. In order to establish \eqref{eq:mu_mu_induct_2} with $t=1$, consider
\begin{equation*}
    \mu^{(i)}_{1,\infty}(\bs \theta^*, \bs \theta) = \sum_{j=1}^m\lambda^{(j)}_{1,\infty}(\bs \theta^*, \bs \theta)q_1^{(j)}(\bs \theta)G_1^{(i,j)}(\bs \theta) + \kappa^{(i)}_{1,\infty}(\bs \theta) = 0,
\end{equation*}
hence $\kappa^{(i)}_{1,\infty}(\bs \theta) = 0 $, which by assumption \ref{as:params} implies  $\kappa^{(i)}_{1,\infty}(\bs \theta') = 0$. Furthermore, for each $j \in [m]$ we must have either:
\begin{itemize}
    \item $\lambda^{(j)}_{1,\infty}(\bs \theta^*, \bs \theta) = 0 $, which by the above implies $ \lambda^{(j)}_{1,\infty}(\bs \theta^*, \bs \theta') = 0$; or
    \item $q_1^{(j)}(\bs \theta)= 0 $, which by assumption \ref{as:params} implies $ q_1^{(j)}(\bs \theta') = 0$; or
    \item $G_1^{(i,j)}(\bs \theta) = 0 $, which by assumption \ref{as:params} implies $G_1^{(i,j)}(\bs \theta') = 0$.
\end{itemize}
Hence:
\begin{equation*}
    \mu^{(i)}_{1,\infty}(\bs \theta^*, \bs \theta') = \sum_{j=1}^m\lambda^{(j)}_{1,\infty}(\bs \theta^*, \bs \theta')q_1^{(j)}(\bs \theta')G_1^{(i,j)}(\bs \theta') + \kappa^{(j)}_{1,\infty}(\bs \theta') = 0.
\end{equation*}
Thus we have shown that \eqref{eq:mu_mu_induct_2} holds with $t=1$. 

For the induction hypothesis, assume that \eqref{eq:mu_mu_induct_1} and \eqref{eq:mu_mu_induct_2} with hold for some $t\geq1$. Then for each $k\in [m]$ write:
\begin{equation*}
    \bar \lambda^{(k)}_{t,\infty}(\bs \theta^*, \bs \theta) = (1- q^{(k)}_t(\bs \theta))\lambda^{(k)}_{t,\infty}(\bs \theta^*, \bs \theta) + \sum_{j=1}^m \mu^{(i)}_{t,\infty}(\bs \theta^*, \bs \theta^*) \frac{\lambda^{(k)}_{t,\infty}(\bs \theta^*, \bs \theta)q^{(k)}_t(\bs \theta)G^{(k,j)}_t(\bs \theta) }{\mu^{(i)}_{t,\infty}(\bs \theta^*, \bs \theta)} = 0.
\end{equation*}
This is well defined by the induction hypothesis choosing $\bs \theta' = \bs \theta^*$. Furthermore, we must have either:
\begin{itemize}
    \item $q_t^{(k)}(\bs \theta)<1$, in which case we must have $\lambda^{(k)}_{t,\infty}(\bs \theta^*, \bs \theta) = 0 \implies \lambda^{(k)}_{t,\infty}(\bs \theta^*, \bs \theta') = 0$; or
    \item $q_t^{(k)}(\bs \theta)=1$, in which case we must have $\lambda^{(k)}_{t,\infty}(\bs \theta^*, \bs \theta)G_t^{(k,j)}(\bs \theta) = 0$ a.s.  for all $j$ so that the sum is equal to $0$. Since $\+G_t$ is row-stochastic matrix, we know there must exist a $j \in[m]$ such that $G_t^{(k,j)}(\bs \theta) > 0 $ and so we must have $\lambda^{(k)}_{t,\infty}(\bs \theta^*, \bs \theta) = 0\implies \lambda^{(k)}_{t,\infty}(\bs \theta^*, \bs \theta') = 0$.
\end{itemize}
So we have $\bar \lambda^{(k)}_{t,\infty}(\bs \theta^*, \bs \theta) = 0\implies \lambda^{(k)}_{t,\infty}(\bs \theta^*, \bs \theta') = 0$, indeed the reverse implication is also true by definition of $\bar \lambda^{(k)}_{t,\infty}(\bs \theta^*, \bs \theta)$ so that $\bar \lambda^{(k)}_{t,\infty}(\bs \theta^*, \bs \theta) = 0\iff \lambda^{(k)}_{t,\infty}(\bs \theta^*, \bs \theta') = 0$. Now consider
\begin{equation*}
    \lambda^{(j)}_{t+1,\infty}(\bs \theta^*, \bs \theta) = \sum_{k=1}^m\bar \lambda^{(k)}_{t,\infty}(\bs \theta^*, \bs \theta)\delta_{t+1}^{(k)}(\bs \theta)K^{(k,j)}_{1, \bs \eta(\bs{\bar \lambda}_{t,\infty}(\bs \theta^*, \bs \theta)\circ \bs \delta_{t+1}(\bs \theta))}(\bs \theta) + \alpha^{(j)}_{t+1,\infty}(\bs \theta) = 0,
\end{equation*}
then $\alpha^{(j)}_{t+1,\infty}(\bs \theta) = \alpha^{(j)}_{t+1,\infty}(\bs \theta') = 0$ and for all $k \in [m]$ we must have either:
\begin{itemize}
    \item $\bar \lambda^{(k)}_{t,\infty}(\bs \theta^*, \bs \theta) = 0 $, which implies by the above that $ \lambda^{(k)}_{t,\infty}(\bs \theta^*, \bs \theta) = 0 \implies \lambda^{(k)}_{t,\infty}(\bs \theta^*, \bs \theta') = 0 \implies \bar \lambda^{(k)}_{t,\infty}(\bs \theta^*, \bs \theta')$; or
    \item $\delta_{t+1}^{(k)}(\bs \theta)= 0$, which by assumption \ref{as:params} that $ \delta_{t+1}^{(k)}(\bs \theta') = 0$; or
    \item $K^{(k,j)}_{1, \bs \eta(\bs{\bar \lambda}_{t,\infty}(\bs \theta^*, \bs \theta)\circ \bs \delta_{t+1}(\bs \theta))}(\bs \theta) = 0 $ which by assumptions \ref{as:params}, \ref{as:1}, and the induction hypothesis implies $ K^{(k,j)}_{1, \bs \eta(\bs{\bar \lambda}_{t,\infty}(\bs \theta^*, \bs \theta')\circ \bs \delta_{t+1}(\bs \theta'))}(\bs \theta') = 0$.
\end{itemize}
Hence
\begin{equation*}
    \lambda^{(j)}_{t+1,\infty}(\bs \theta^*, \bs \theta') = \sum_{k=1}^m\bar \lambda^{(k)}_{t,\infty}(\bs \theta^*, \bs \theta')\delta_{t+1}^{(k)}(\bs \theta')K^{(k,j)}_{1, \bs \eta(\bs{\bar \lambda}_{t,\infty}(\bs \theta^*, \bs \theta')\circ \bs \delta_{t+1}(\bs \theta'))}(\bs \theta') + \alpha^{(j)}_{t+1,\infty}(\bs \theta') = 0.
\end{equation*}
Now, if for some $i \in[m]$:
\begin{equation*}
    \mu^{(i)}_{t+1,\infty}(\bs \theta^*, \bs \theta) = \sum_{j=1}^m\lambda^{(j)}_{t+1,\infty}(\bs \theta^*, \bs \theta)q_{t+1}^{(j)}(\bs \theta)G_{t+1}^{(j,i)}(\bs \theta) + \kappa^{(i)}_{t+1,\infty}(\bs \theta) = 0,
\end{equation*}
then $\kappa^{(i)}_{t+1,\infty}(\bs \theta) = 0$, which by assumption \ref{as:params} implies  $\kappa^{(i)}_{t+1,\infty}(\bs \theta') = 0$ and for all $j \in [m]$ we have either:

\begin{itemize}
    \item $\lambda^{(j)}_{t+1,\infty}(\bs \theta^*, \bs \theta) = 0$, which by the above implies $\lambda^{(j)}_{t+1,\infty}(\bs \theta^*, \bs \theta') = 0$; or
    \item $q_{t+1}^{(j)}(\bs \theta) = 0$, which by assumption \ref{as:params} implies  $q_{t+1}^{(j)}(\bs \theta') = 0$; or
    \item $G_{t+1}^{(j,i)}(\bs \theta) = 0$, which by assumption \ref{as:params} implies  $G_{t+1}^{(j,i)}(\bs \theta') = 0$.
\end{itemize}
Hence
\begin{equation}
    \mu^{(i)}_{t+1,\infty}(\bs \theta^*, \bs \theta') = \sum_{j=1}^m\lambda^{(j)}_{t+1,\infty}(\bs \theta^*, \bs \theta')q_{t+1}^{(j)}(\bs \theta')G_{t+1}^{(i,j)}(\bs \theta') + \kappa^{(i)}_{t+1,\infty}(\bs \theta') = 0,
\end{equation}
and the inductive proof is complete.
\end{proof}
The following lemma will be used in the proof of lemma \ref{contrastfnx}.
\begin{lemma}\label{mumunlemma}
Let assumptions \ref{as:params}- \ref{as:init} hold. For all $\bs \theta  \in \Theta$, $n \in \mathbb{N}$, and $i \in[m]$:
\begin{align*}
\lambda^{(j)}_{t,\infty}(\bs \theta^*, \bs \theta) = 0& \implies \lambda^{(j
)}_{t, n}(\bs \theta) = 0 \quad a.s.,\\
  \mu^{(i)}_{t,\infty}(\bs \theta^*, \bs \theta) = 0 &\implies \mu^{(i)}_{t, n}(\bs \theta) = 0 \quad a.s..
\end{align*}
\end{lemma}
\begin{proof}
Fix arbitrary $\bs \theta,  \in \Theta$ and $n \in \mathbb{N} $. All almost sure statements in the proof are made with respect to $\Ps$.  We will show by induction  that the following two implications hold for all $t \geq 1$ and $i,j\in[m]$.
\begin{align}
\lambda^{(j)}_{t,\infty}(\bs \theta^*, \bs \theta) = 0& \implies \lambda^{(j
)}_{t, n}(\bs \theta) = 0 \quad a.s., \label{eq:mu_inf_mu_n_induct1}\\
  \mu^{(i)}_{t,\infty}(\bs \theta^*, \bs \theta) = 0 &\implies \mu^{(i)}_{t, n}(\bs \theta) = 0  \quad a.s.\label{eq:mu_inf_mu_n_induct2}
\end{align}
For  $t=1$ consider:
\begin{equation*}
    \lambda^{(j)}_{1,\infty}(\bs \theta^*, \bs \theta) = \sum_{k=1}^m\bar \lambda^{(k)}_{0,\infty}(\bs \theta^*, \bs \theta)\delta_1^{(k)}(\bs \theta)K^{(k,j)}_{1, \bs \eta(\bs \lambda_{0,\infty}(\bs \theta^*, \bs \theta)\circ \bs \delta_1(\bs \theta))}(\bs \theta) + \alpha^{(j)}_{1,\infty}(\bs \theta) = 0,
\end{equation*}
then $\alpha^{(j)}_{1,\infty}(\bs \theta) = 0$ which by assumption \ref{as:params} implies $\alpha^{(j)}_{1,n}(\bs \theta) = 0$, and for all $j \in [m]$ we must have either:
\begin{itemize}
    \item $\bar \lambda^{(k)}_{0,\infty}(\bs \theta^*, \bs \theta) = \lambda^{(k)}_{0,\infty}(\bs \theta) = 0 $, which by assumption \ref{as:init} implies $ \lambda^{(k)}_{0,n}(\bs \theta) = 0$; or
    \item $\delta_1^{(k)}(\bs \theta) = 0$; or
    \item $K^{(k,j)}_{1, \bs \eta(\bs \lambda_{0,\infty}(\bs \theta^*, \bs \theta)\circ \bs \delta(\bs \theta))}(\bs \theta) = 0 $ which by assumptions \ref{as:1} and  \ref{as:init} implies $ K^{(k,j)}_{1, \bs \eta(\bs \lambda_{0,n}(\bs \theta^*, \bs \theta)\circ \bs \delta(\bs \theta))}(\bs \theta) = 0 $.
\end{itemize}
Hence we have:
\begin{equation*}
    \lambda^{(j)}_{1,n}( \bs \theta) = \sum_{k=1}^m\lambda^{(k)}_{0,n}(\bs \theta')\delta_1^{(k)}(\bs \theta')K^{(k,j)}_{1, \bs \eta(\bs \lambda_{0,n}( \bs \theta')\circ \bs \delta_1(\bs \theta'))}(\bs \theta') + \alpha^{(j)}_{1,n}(\bs \theta') = 0,\quad a.s.
\end{equation*}
Now consider:
\begin{equation*}
    \mu^{(i)}_{1,\infty}(\bs \theta^*, \bs \theta) = \sum_{j=1}^m\lambda^{(j)}_{1,\infty}(\bs \theta^*, \bs \theta)q_1^{(j)}(\bs \theta)G_1^{(i,j)}(\bs \theta) + \kappa^{(i)}_{1,\infty}(\bs \theta) = 0,
\end{equation*}
then $\kappa^{(i)}_{1,\infty}(\bs \theta) = 0$, which by assumption \ref{as:params} implies $\kappa^{(i)}_{1,n}(\bs \theta) = 0$, furthermore for al $j \in[m]$ we must have either:
\begin{itemize}
    \item $\lambda^{(j)}_{1,\infty}(\bs \theta^*, \bs \theta) = 0$, which by the above implies $\lambda^{(j)}_{1,n}( \bs \theta') = 0 \quad \mathbb{P}^{\bs \theta^*}a.s.$; or
    \item $q_1^{(j)}(\bs \theta)= 0$; or
    \item $G_1^{(i,j)}(\bs \theta) = 0$.
\end{itemize}
Hence:
\begin{equation*}
    \mu^{(i)}_{1,n}( \bs \theta) = \sum_{j=1}^m\lambda^{(j)}_{1,n}( \bs \theta)q_1^{(j)}(\bs \theta)G_1^{(k,j)}(\bs \theta) + \kappa^{(i)}_{1,n}(\bs \theta) = 0.
\end{equation*}

For the induction hypothesis, assume \eqref{eq:mu_inf_mu_n_induct1} and \eqref{eq:mu_inf_mu_n_induct2} hold. Then for each $k \in [m]$,
\begin{equation*}
    \bar \lambda^{(k)}_{t,\infty}(\bs \theta^*, \bs \theta) = (1- q^{(k)}_t(\bs \theta))\lambda^{(k)}_{t,\infty}(\bs \theta^*, \bs \theta) + \sum_{j=1}^m \mu^{(i)}_{t,\infty}(\bs \theta^*, \bs \theta^*) \frac{\lambda^{(k)}_{t,\infty}(\bs \theta^*, \bs \theta)q^{(k)}_t(\bs \theta)G^{(k,j)}_t(\bs \theta) }{\mu^{(i)}_{t,\infty}(\bs \theta^*, \bs \theta)} = 0.
\end{equation*}
This is well defined by the induction hypothesis. Furthermore, in order for this equality with zero to hold we must have either:
\begin{itemize}
    \item $q_t^{(k)}(\bs \theta)<1$, in which case we must have $\lambda^{(k)}_{t,\infty}(\bs \theta^*, \bs \theta) = 0$ which by the induction hypothesis implies $\lambda^{(k)}_{t,n}(\bs \theta) = 0 \quad a.s.$; or
    \item $q_t^{(k)}(\bs \theta)=1$, in which case we must have $\lambda^{(k)}_{t,\infty}(\bs \theta^*, \bs \theta)G_t^{(k,j)}(\bs \theta) = 0 \quad a.s.$  for all $j$ so that the sum is equal to $0$. Since $\+G_t$ is row-stochastic matrix, we know there must exist a $j \in[m]$ such that $G_t^{(k,j)}(\bs \theta) > 0 $ and so we must have $\lambda^{(k)}_{t,\infty}(\bs \theta^*, \bs \theta) = 0$ which by the induction hypothesis implies $\lambda^{(k)}_{t,n}(\bs \theta) = 0$ a.s.
\end{itemize}
So we have $\bar \lambda^{(k)}_{t,\infty}(\bs \theta^*, \bs \theta) = 0\implies  \lambda^{(k)}_{t,n}(\bs \theta) = 0$ a.s., furthermore $\lambda^{(k)}_{t,n}(\bs \theta) = 0$ a.s. $\implies$ $\bar \lambda^{(k)}_{t,n}(\bs \theta) = 0$ a.s.. Now  if for some $j \in [m]$:
\begin{equation}
    \lambda^{(j)}_{t+1,\infty}(\bs \theta^*, \bs \theta) = \sum_{k=1}^m\bar \lambda^{(k)}_{t,\infty}(\bs \theta^*, \bs \theta)\delta_{t+1}^{(k)}(\bs \theta)K^{(k,j)}_{1, \bs \eta(\bs{\bar \lambda}_{t,\infty}(\bs \theta^*, \bs \theta)\circ \bs \delta_{t+1}(\bs \theta))}(\bs \theta) + \alpha^{(j)}_{t+1,\infty}(\bs \theta) = 0,
\end{equation}
then $\alpha^{(j)}_{t+1,\infty}(\bs \theta) = 0$ which by assumption \ref{as:params} implies $\alpha^{(j)}_{t+1,n}(\bs \theta) = 0$ and for each $k  \in [m]$ we must have either:
\begin{itemize}
    \item $\bar \lambda^{(k)}_{t,\infty}(\bs \theta^*, \bs \theta) = 0 $, which we have already shown implies $\lambda^{(k)}_{t,n}(\bs \theta) = 0 \quad a.s. \implies \bar \lambda^{(k)}_{t,n}(\bs \theta) = 0 \quad a.s.$; or
    \item $\delta_{t+1}^{(k)}(\bs \theta)= 0$; or
    \item $K^{(k,j)}_{1, \bs \eta(\bs{\bar \lambda}_{t,\infty}(\bs \theta^*, \bs \theta)\circ \bs \delta_{t+1}(\bs \theta))}(\bs \theta) = 0$, which together with assumption \ref{as:1} implies $K^{(k,j)}_{1, \bs \eta( \bar {\bs \lambda}_{t,n}(\bs \theta)\circ \bs \delta_{t+1}(\bs \theta))}(\bs \theta) = 0$.
\end{itemize}
Hence
\begin{equation*}
    \lambda^{(j)}_{t+1,n}( \bs \theta) = \sum_{k=1}^m\bar \lambda^{(k)}_{t,n} \bs \theta)\delta_{t+1}^{(k)}(\bs \theta)K^{(k,j)}_{1, \bs \eta(\bs{\bar \lambda}_{t,n}( \bs \theta)\circ \bs \delta_{t+1}(\bs \theta))}(\bs \theta) + \alpha^{(j)}_{t+1,n}(\bs \theta) = 0 \quad a.s.
\end{equation*}
Now, if
\begin{equation*}
    \mu^{(i)}_{t+1,\infty}(\bs \theta^*, \bs \theta) = \sum_{j=1}^m\lambda^{(j)}_{t+1,\infty}(\bs \theta^*, \bs \theta)q_{t+1}^{(j)}(\bs \theta)G_{t+1}^{(k,j)}(\bs \theta) + \kappa^{(i)}_{t+1,\infty}(\bs \theta) = 0,
\end{equation*}
then $\kappa^{(i)}_{t+1,\infty}(\bs \theta) = 0$ which by assumption \ref{as:init} implies $\kappa^{(i)}_{t+1,n}(\bs \theta) = 0$ and for all $j \in [m]$ we have either:
\begin{itemize}
    \item $\lambda^{(j)}_{t+1,\infty}(\bs \theta^*, \bs \theta) = 0$, which implies $ \lambda^{(j)}_{t+1,n}(\bs \theta) = 0$ a.s.; or
    \item $q_{t+1}^{(j)}(\bs \theta) = 0$; or
    \item $G_{t+1}^{(k,j)}(\bs \theta) = 0$.
\end{itemize}
Hence
\begin{equation*}
    \mu^{(i)}_{t+1,n}(\bs \theta) = \sum_{j=1}^m\lambda^{(j)}_{t+1,n}(\bs \theta^)q_{t+1}^{(j)}(\bs \theta)G_{t+1}^{(k,j)}(\bs \theta) + \kappa^{(j)}_{t+1,\infty}(\bs \theta) = 0 \quad a.s.,
\end{equation*}
and the inductive proof is complete.

\end{proof}

%

\begin{proof}[Proof of Proposition \ref{vecfilt}]
Fix any $ \bs \theta \in \Theta$. We proceed by induction to show that for all $t \geq1$,
\begin{equation*}
n^{-1} \bar{ \bs \lambda}_{t,n}( \bs \theta) \asls \bar{ \bs \lambda}_{t, \infty}(\bs \theta^*, \bs \theta),
\end{equation*}
with the other claims of the proposition proved along the way. 

Using assumption \ref{as:init} we have:
$${n^{-1}\bar{\bs \lambda}_{0,n}(\bs \theta) = n^{-1}{\bs \lambda}_{0,n}(\bs \theta)  \asls {\bs \lambda}_{0,\infty}(\bs \theta) =  \bs \lambda_{0, \infty}(\bs \theta^*, \bs \theta)}.$$ Now, for $t \geq 1$ assume that ${n^{-1}\bar{\bs \lambda}_{t-1,n}(\bs \theta) \asls \bar{\bs \lambda}_{t-1, \infty}(\bs \theta^*, \bs \theta)}$. We have:
\begin{equation*}
\begin{aligned} \label{lambdapred}
n^{-1}\bs \lambda_{t,n}(\bs \theta) &= \left[( n^{-1}\bar{\bs \lambda}_{t-1,n}(\bs \theta) \circ \bs \delta_t(\bs \theta))^\top \+K_{t, \bs \eta(\bar{\bs \lambda}_{t-1,n}(\bs \theta) \circ \bs \delta_t(\bs \theta))}\right]^\top + n^{-1}\bs \alpha_{t,n}(\bs \theta) \\
&\asls \left[(\bar{\bs \lambda}_{t-1,\infty}(\bs \theta^*, \bs \theta) \circ \bs \delta_t(\bs \theta))^\top \+K_{t, \bs \eta(\bar{\bs \lambda}_{t-1,\infty}(\bs \theta^*, \bs \theta) \circ \bs \delta_t(\bs \theta))}\right]^\top + \bs \alpha_{t,\infty}(\bs \theta) \\
&= \bs \lambda_{t,\infty}(\bs \theta^*, \bs \theta),
\end{aligned}
\end{equation*}
by the continuous mapping theorem (CMT) and assumptions \ref{as:params} and \ref{as:1}. A further application of the CMT and assumption \ref{as:params} yields:
\begin{equation*}
\begin{aligned}
    n^{-1}\bs \mu_{t,n}\left(\bs \theta \right) &= \left[\left(n^{-1}\bs \lambda_{t,n}(\bs \theta)\circ \+ q_t(\bs \theta) \right)^\top \+G_t(\bs \theta)\right]^\top + n^{-1} \bs \kappa_{t,n}(\bs \theta) \\
    &\asls\left[\left(\bs \lambda_{t,\infty}(\bs \theta^*,\bs \theta)\circ \+ q_t(\bs \theta) \right)^\top \+G_t(\bs \theta)\right]^\top + \bs \kappa_{t, \infty}(\bs \theta) \\
    &= \bs \mu_{t,\infty}\left(\bs \theta^*,\bs \theta \right)
\end{aligned}
\end{equation*}
Recalling from remark \ref{rem:lam_mu} that $n^{-1}\+y_t \asls\bs \mu_{t,\infty}\left(\bs \theta^*,\bs \theta^* \right)$ and applying the CMT, we have:
\begin{equation*}
\begin{aligned}
 n^{-1}\bar{\bs \lambda}_{t, n}( \bs \theta) = &\bigg[\+ 1_m - \+q_t(\bs \theta)  \\
 +&\left(\left[ n^{-1}\+ y_t \oslash n^{-1}\bs \mu_{t, n}( \bs \theta)\right]^\top\left(\left[\+1_m\otimes\+q_t(\bs \theta)\right)\circ\+G_t(\bs \theta)\right]^\top \right)^\top\bigg]\circ n^{-1} \bs \lambda_{t, n}(\bs \theta) \\
 \asls &\bigg[\+ 1_m - \+q_t(\bs \theta)\\
 &+\left(\left[ \bs \mu_{t, \infty}(\bs \theta^*, \bs \theta^*)\oslash\bs \mu_{t, \infty}(\bs \theta^*, \bs \theta)\right]^\top\left(\left[\+1_m\otimes\+q_t(\bs \theta)\right]\circ\+G_t(\bs \theta)^\top\right) \right)^\top\bigg]\circ \bs \lambda_{t, \infty}(\bs \theta^*,\bs \theta)\\
 & = \bar{\bs \lambda}_{t, \infty}(\bs \theta^*, \bs \theta)
\end{aligned}
\end{equation*}
We note this limit is almost surely well defined since by lemma \ref{mumulemma} for any $i \in[m]$, $$\mu^{(i)}_{t, \infty}(\bs \theta^*, \bs \theta^*) = 0 \iff \mu^{(i)}_{t, \infty}(\bs \theta^*, \bs \theta) = 0$$ and by lemma \ref{muylemma} if $\mu^{(i)}_{t, n}( \bs \theta) = 0$ with positive probability then $y_t=0$, $ \Ps$-a.s. In both these cases we are working under the convention $\frac{0}{0}\coloneqq0$. 
%
\end{proof}
\begin{lemma}\label{lem:contmu}
Let assumptions \ref{as:params}- \ref{as:init} hold. For all $\bs \theta^*  \in \Theta$ and $t\geq1$ the function $\bs \theta \mapsto \bs \mu_{t,\infty}(\bs \theta^*, \bs \theta)$ is continuous on $\Theta$.
\end{lemma}
\begin{proof}
Fix an arbitrary $\bs \theta^* \in \Theta$.  Note that $\bs {\bar \lambda}_{0, \infty}(\bs \theta^*,\bs \theta) := \bs {\bar \lambda}_{0, \infty}(\bs \theta)$ is continuous by assumption \ref{as:init}.  We will now show that for any $t\geq 1$,  continuity of $\bs {\bar \lambda}_{t-1, \infty}(\bs \theta^*,\bs \theta)$  implies continuity of $\bs { \lambda}_{t, \infty}(\bs \theta^*,\bs \theta)$, $\bs \mu_{t, \infty}(\bs \theta^*,\bs \theta)$, and $\bs {\bar \lambda}_{t, \infty}(\bs \theta^*,\bs \theta)$, from which the claim of the lemma follows. 

Henceforth assume that $\bs {\bar \lambda}_{t-1, \infty}(\bs \theta^*,\bs \theta)$  is continuous and recall that by definition of $\bs \lambda_{t, \infty}(\bs \theta^*, \bs \theta) $,
\begin{equation}
\bs \lambda_{t, \infty}(\bs \theta^*, \bs \theta) \coloneqq \left[(\bar{\bs \lambda}_{t-1, \infty}(\bs \theta^*, \bs \theta)\circ \bs \delta_t(\bs \theta))^\top\+ K_{t, \bs \eta(\bar{\bs \lambda}_{t-1, \infty}\left(\bs \theta^*, \bs \theta)\circ \bs \delta_t(\bs \theta)\right)}(\bs \theta)\right]^\top + \bs \alpha_{t, \infty}(\bs \theta).
\end{equation}
Continuity of $\bs \delta_t(\bs \theta)$ and $\bs \alpha_{t, \infty}(\bs \theta)$ in $\bs \theta$  holds directly by assumptions \ref{as:params} and \ref{as:init}. By assumption \ref{as:1} we know that $\+ K_{t,\bs \eta}(\bs \theta)$ is continuous in $\bs \theta$ and $\bs \eta$. Hence, to show continuity of $\bs \lambda_{t, \infty}(\bs \theta^*, \bs \theta)$ we shall show that $\bs \eta(\bar{\bs \lambda}_{t-1, \infty}\left(\bs \theta^*, \bs \theta)\circ \bs \delta_t(\bs \theta)\right)$ is continuous in $\bs \theta$. The function $\bs \eta:\mathbb{R}_{\geq 0}^m \rightarrow \mathbb{R}_{\geq 0}^m$ is continuous everywhere except at $\bs 0_m$, we now show that, by virtue of our assumptions, this discontinuity is immaterial. Consider the two following cases:
\begin{itemize}
\item There exists $\bs \theta' \in \Theta$ such that $\bar{\bs \lambda}_{t-1, \infty}\left(\bs \theta^*, \bs \theta'\right)\circ \bs \delta_t(\bs \theta') = \+ 0_m$. In this case,  by assumption \ref{as:params} and lemma \ref{mumulemma} we have that $\bar{\bs \lambda}_{t-1, \infty}(\bs \theta^*, \bs \theta)\circ \bs \delta_t(\bs \theta) = \+ 0_m$ for all $\bs \theta \in \Theta$, from which it follows that   $\bs \eta(\bar{\bs \lambda}_{t-1, \infty}\left(\bs \theta^*, \bs \theta)\circ \bs \delta_t(\bs \theta)\right)=\bs 0_m$ for all  $\bs \theta \in \Theta$, so that the continuity of $\bs \eta(\bar{\bs \lambda}_{t-1, \infty}\left(\bs \theta^*, \bs \theta)\circ \bs \delta_t(\bs \theta)\right)$ in $\bs\theta$ on $\Theta$ holds trivially;
\item For all $\bs \theta \in \Theta$, $\bar{\bs \lambda}_{t-1, \infty}\left(\bs \theta^*, \bs \theta \right)\circ \bs \delta_t(\bs \theta) \neq \+ 0_m$. In this case the continuity of $\bs\eta(\bar{\bs \lambda}_{t-1, \infty}\left(\bs \theta^*, \bs \theta \right)\circ \bs \delta_t(\bs \theta))$ in $\bs\theta$ on $\Theta$ follows from the continuity of $\bs\eta$ on $\mathbb{R}_{\geq 0}^m \setminus \{\bs 0_m\}$.
\end{itemize}
Hence, $\bs \theta \mapsto \bs \lambda_{t, \infty}(\bs \theta^*, \bs \theta)$ is continuous. Recall that:
$$    \bs \mu_{t, \infty}(\bs \theta^*, \bs \theta) \coloneqq \left[\left(\bs \lambda_{t, \infty}(\bs \theta^*, \bs \theta)\circ \+q_t(\bs \theta) \right)^\top \+G_t(\bs \theta)\right]^\top +\bs \kappa_{t,\infty}(\bs \theta).$$
Due to the continuity of $\lambda_{t, \infty}(\bs \theta^*, \bs \theta)$ and assumption \ref{as:params}, this is a composition of continuous functions and hence $\bs \theta \mapsto \bs \mu_ {t, \infty}(\bs \theta^*, \bs \theta)$  is itself continuous. Now consider 
\begin{align*}
\bar{\bs \lambda}_{t, \infty}(\bs \theta^*, \bs \theta) &\coloneqq \bigg[\+ 1_m - \+q_t(\bs \theta)\nonumber\\
 &+\left(\left[ \bs \mu_{t, \infty}(\bs \theta^*, \bs \theta^*)\oslash\bs \mu_{t, \infty}(\bs \theta^*, \bs \theta)\right]^\top\left(\left[\+1_m\otimes\+q_t(\bs \theta)\right]\circ\+G_t(\bs \theta)^\top\right) \right)^\top\bigg]\circ \bs \lambda_{t, \infty}(\bs \theta^*,\bs \theta).
\end{align*}
Each component of this function is trivially continuous on $\Theta$ except the $\bs \mu_{t, \infty}(\bs \theta^*, \bs \theta^*)\oslash\bs \mu_{t, \infty}(\bs \theta^*, \bs \theta)$ term, we will now prove its continuity. By lemma \ref{mumulemma}, for each $i \in [m]$ we need only consider the two cases:
\begin{itemize}
\item either $\mu^{(i)}_{t, \infty}(\bs \theta^*, \bs \theta) = 0$ for all $\bs \theta \in \Theta$, in which case we have by convention $\mu^{(i)}_{t, \infty}(\bs \theta^*, \bs \theta^*)/ \mu^{(i)}_{t, \infty}(\bs \theta^*, \bs \theta) \coloneqq  0$, which is continuous; or
\item $\mu^{(i)}_{t, \infty}(\bs \theta^*, \bs \theta) \neq 0$ for all $\bs \theta \in \Theta$, in which case $\mu^{(i)}_{t, \infty}(\bs \theta^*, \bs \theta^*)/ \mu^{(i)}_{t, \infty}(\bs \theta^*, \bs \theta)$ is continuous.
\end{itemize}
Hence we have elementwise continuity of $\bs \mu_{t, \infty}(\bs \theta^*, \bs \theta^*)\oslash\bs \mu_{t, \infty}(\bs \theta^*, \bs \theta)$ which gives us continuity of $\bs \theta \mapsto \bar{\bs \lambda}_{t, \infty}(\bs \theta^*, \bs \theta)$. 

We have shown that continuity of $\bs\theta\mapsto\bs {\bar \lambda}_{t-1, \infty}(\bs \theta^*,\bs \theta)$  on $\Theta$ implies continuity of   $\bs \theta \mapsto \bs { \lambda}_{t, \infty}(\bs \theta^*,\bs \theta)$, $ \bs \theta \mapsto \bs \mu_{t, \infty}(\bs \theta^*,\bs \theta)$ and $\bs \theta \mapsto \bar{\bs \lambda}_{t, \infty}(\bs \theta^*, \bs \theta)$ on $\Theta$, which completes the proof.
\end{proof}

%

\subsubsection{Case (II)}\label{sec:filtering_limits_case_II}
%
%
Define: 
 \begin{equation}
    \bs{\bar \lambda}_{0,\infty}(\bs \theta^*, \bs \theta) \coloneqq \bs \lambda_{0, \infty}(\bs \theta), \label{eq:Lam_0_inf_defn} 
\end{equation}
and for $r = 1, \dots, R$ and $t = \tau_{r-1}+1 , \dots, \tau_r -1 $, 
\begin{align*}
\bs \Lambda_{t, \infty}(\bs \theta^*, \bs \theta) &\coloneqq \left(\bs{\bar \lambda}_{\tau_r-1, \infty}(\bs \theta^*, \bs \theta) \otimes \+1_m \right) \circ \+K_{t,\bs \eta(\bs{\bar \lambda}_{t-1, \infty}(\bs \theta^*, \bs \theta))}(\bs \theta^*, \bs \theta),\\
\bs {\bar \lambda}_{t, \infty}(\bs \theta^*, \bs \theta)&\coloneqq (\+1_m^\top \bs \Lambda_{t, \infty}(\bs \theta^*, \bs \theta))^\top,
\end{align*}
and
\begin{align}
\bs \Lambda  _{\tau_r, \infty}(\bs \theta^*, \bs \theta)  &\coloneqq \left(\bar { \bs\lambda}_{\tau_r-1, \infty}(\bs \theta^*, \bs \theta)  \otimes \+1_m \right) \circ \+K_{t,\bs \eta\left(\bs {\bar \lambda}{\tau_r-1, \infty}(\bs \theta^*, \bs \theta)\right) }(\bs \theta), \nonumber\\
\+M_{r,\infty}(\bs \theta^*,\bs \theta) &\coloneqq \sum_{s = \tau_{r-1}+1}^{\tau_{r}} \+ \Lambda_{s,\infty}(\bs \theta) \circ \+ Q_s(\bs \theta), \nonumber\\
\bar{\bs \Lambda}_{\tau_r, \infty}(\bs \theta^*, \bs \theta)  &\coloneqq \left[\+1_m \otimes \+1_m - \+Q_{\tau_r}(\bs \theta) \right]\circ \bs \Lambda  _{\tau_r}(\bs \theta^*, \bs \theta) \nonumber\\
&+ \left[\+M_{r,\infty}(\bs \theta^*,\bs \theta^*)\oslash \+M_{r,\infty}(\bs \theta^*,\bs \theta)\right] \circ\left[ \bs \Lambda_{\tau_r, \infty}(\bs \theta^*, \bs \theta)\circ \+ Q  _{\tau_r}(\bs \theta) \right],\nonumber\\
\bar{\bs  \lambda}  _{\tau_r, \infty}(\bs \theta^*, \bs \theta)  &\coloneqq (\+1_m^\top \bar{\bs \Lambda}  _{\tau_r, \infty}(\bs \theta^*, \bs \theta) )^\top.\label{eq:Lam_bar_t_inf_defn} 
\end{align}
where if we encounter $0/0$ in the element-wise division operation we set the entry to $0$ by convention. The main result of section \ref{sec:filtering_limits_case_II} is proposition \ref{Yfilt} concerning the convergence to the above of the associated finite-$n$ quantities computed using algorithm \ref{alg:Ztagg}.

\begin{proposition}\label{Yfilt}
Let assumptions \ref{as:params} - \ref{as:init} hold. For any $\bs \theta \in \Theta$ and  $r\geq 1$ and $t\geq1$:
\begin{align*}
n^{-1}\+ M_{r,n}(\bs \theta) &\asls \+ M_{r, \infty}(\bs \theta^*, \bs \theta),\\
n^{-1}\bs \Lambda_{t,n}(\bs \theta) &\asls \+ \Lambda_{t, \infty}(\bs \theta^*,\bs \theta),
\end{align*}
\end{proposition}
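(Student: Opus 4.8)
The plan is to follow the same template as the proof of proposition \ref{vecfilt}, replacing the vector recursions of case (I) by the matrix recursions of algorithm \ref{alg:Ztagg}, and to proceed by induction over the observation index $r$ and, within each inter-observation block, over the time index $t = \tau_{r-1}+1, \dots, \tau_r$. Fix $\bs\theta \in \Theta$. The base case is supplied directly by assumption \ref{as:init}, which gives $n^{-1}\bar{\bs\lambda}_{0,n}(\bs\theta) = n^{-1}\bs\lambda_{0,n}(\bs\theta) \asls \bs\lambda_{0,\infty}(\bs\theta) = \bar{\bs\lambda}_{0,\infty}(\bs\theta^*,\bs\theta)$. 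The inductive claim carried along in the argument is that $n^{-1}\bar{\bs\lambda}_{t,n}(\bs\theta) \asls \bar{\bs\lambda}_{t,\infty}(\bs\theta^*,\bs\theta)$ for every $t$, from which the two displayed limits for $\bs\Lambda_{t,n}$ and $\+M_{r,n}$ emerge as intermediate by-products.

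The prediction steps (lines 3--6 and line 11 of algorithm \ref{alg:Ztagg}) are the routine part. Given $n^{-1}\bar{\bs\lambda}_{t-1,n}(\bs\theta) \asls \bar{\bs\lambda}_{t-1,\infty}(\bs\theta^*,\bs\theta)$, I would write $n^{-1}\bs\Lambda_{t,n}(\bs\theta) = (n^{-1}\bar{\bs\lambda}_{t-1,n}(\bs\theta) \otimes \+1_m) \circ \+K_{t,\bs\eta(\bar{\bs\lambda}_{t-1,n}(\bs\theta))}(\bs\theta)$ and invoke the continuous mapping theorem (CMT): the map $\bs\eta \mapsto \+K_{t,\bs\eta}(\bs\theta)$ is continuous by the Lipschitz bound in assumption \ref{as:1}, and $\bs\eta(\cdot)$ is continuous away from the origin, so the composition is continuous at the limit $\bar{\bs\lambda}_{t-1,\infty}(\bs\theta^*,\bs\theta)$. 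This yields $n^{-1}\bs\Lambda_{t,n}(\bs\theta) \asls \bs\Lambda_{t,\infty}(\bs\theta^*,\bs\theta)$, proving the second displayed limit; applying the linear map $\+A \mapsto (\+1_m^\top \+A)^\top$ and invoking the CMT again propagates the induction through the interior prediction steps, while summing the element-wise products $\bs\Lambda_{t,n}(\bs\theta) \circ \+Q_t(\bs\theta)$ over $t = \tau_{r-1}+1, \dots, \tau_r$ gives $n^{-1}\+M_{r,n}(\bs\theta) \asls \+M_{r,\infty}(\bs\theta^*,\bs\theta)$, the first displayed limit.

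The update step (line 8) is where the real work lies, since $\bar{\bs\Lambda}_{\tau_r,n}(\bs\theta)$ contains the element-wise division $\bar{\+Y}_r \circ \bs\Lambda_{\tau_r,n}(\bs\theta) \circ \+Q_{\tau_r}(\bs\theta) \oslash \+M_{r,n}(\bs\theta)$. Here I would use proposition \ref{prop:Yas}, which together with remark \ref{rem:Lam_M} identifies the limit $n^{-1}\bar{\+Y}_r \asls \sum_{t=\tau_{r-1}+1}^{\tau_r} \+N_t(\bs\theta^*) \circ \+Q_t(\bs\theta^*) = \+M_{r,\infty}(\bs\theta^*,\bs\theta^*)$, and then apply the CMT on the entries where the limiting denominator $\+M_{r,\infty}(\bs\theta^*,\bs\theta)$ is strictly positive. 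To legitimise the $0/0 \coloneqq 0$ convention, the plan is to establish matrix analogues of lemmas \ref{muylemma}, \ref{mumulemma} and \ref{mumunlemma}: an induction in the same spirit, tracking the supports of $\bs\Lambda_{t,\cdot}$, $\+K_{t,\bs\eta}$, $\+Q_t$ and using the support-monotonicity clause of assumption \ref{as:1}, should show that (i) whenever $M_{r,\infty}^{(i,j)}(\bs\theta^*,\bs\theta) = 0$ one has both $\bar{Y}_r^{(i,j)} = 0$ and $M_{r,n}^{(i,j)}(\bs\theta) = 0$, $\Ps$-almost surely, and (ii) $M_{r,\infty}^{(i,j)}(\bs\theta^*,\bs\theta) = 0 \iff M_{r,\infty}^{(i,j)}(\bs\theta^*,\bs\theta^*) = 0$, so that the numerator vanishes exactly where the denominator does. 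Feeding these support facts into the CMT renders $n^{-1}\bar{\bs\Lambda}_{\tau_r,n}(\bs\theta) \asls \bar{\bs\Lambda}_{\tau_r,\infty}(\bs\theta^*,\bs\theta)$ well defined, and one further application of $\+A \mapsto (\+1_m^\top \+A)^\top$ completes the induction at the observation time $\tau_r$.

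The main obstacle will be precisely the verification of these support-matching lemmas for the matrix recursion: the bookkeeping is heavier than in case (I) because each $\+M_{r,n}$ is an accumulation over an entire inter-observation block of the products $\bs\Lambda_{t,n} \circ \+Q_t$, so a zero entry of $\+M_{r,n}$ forces every summand to vanish there, and one must track how zeros propagate through the $(\bar{\bs\lambda} \otimes \+1_m)\circ \+K$ structure and across the multiple prediction steps inside a block before the single update. Once those support relations are in hand, the remaining continuity and CMT arguments are essentially identical to those in the proof of proposition \ref{vecfilt}.
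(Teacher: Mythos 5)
Your proposal follows the paper's proof route essentially step for step: induction on $r$ with an inner induction over $t$ within each block, the CMT for the prediction and accumulation steps, proposition \ref{prop:Yas} with remark \ref{rem:Lam_M} to identify the limit of $n^{-1}\bar{\+Y}_r$, and support lemmas to legitimise the $0/0\coloneqq 0$ convention at the update; in the paper those support lemmas are lemmas \ref{MY}, \ref{MM} and \ref{MMn}.

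There is, however, one concrete gap in the way you distil the support facts. Both of your statements (i) and (ii) take as hypothesis the vanishing of the \emph{limiting} intensity $M^{(i,j)}_{r,\infty}(\bs\theta^*,\bs\theta)$. What the induction additionally needs --- and what the paper's lemma \ref{MY}, the true matrix analogue of lemma \ref{muylemma}, supplies --- is a statement at \emph{finite} $n$: if $\Ps\left(M^{(i,j)}_{r,n}(\bs\theta)=0\right)>0$ then $\bar Y^{(i,j)}_r=0$, $\Ps$-a.s. This is not implied by your (i), because at finite $n$ the filter intensities are random, data-dependent quantities: $M^{(i,j)}_{r,n}(\bs\theta)$ can vanish with positive probability even when $M^{(i,j)}_{r,\infty}(\bs\theta^*,\bs\theta)>0$, for instance when $Q^{(i,j)}_{\tau_{r'}}(\bs\theta)=1$ at an earlier update time $\tau_{r'}$ and the observed count $\bar Y^{(i,j)}_{r'}$ happens to be zero --- a positive-probability event for any finite population --- which zeroes the corresponding entry of $\bar{\bs\Lambda}_{\tau_{r'},n}(\bs\theta)$ and can propagate forward through the prediction steps. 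On such events, line 8 of algorithm \ref{alg:Ztagg} divides $\bar Y^{(i,j)}_r$ by zero, and unless one shows that $\bar Y^{(i,j)}_r=0$ a.s. there, the quantity $\bar{\bs\Lambda}_{\tau_r,n}(\bs\theta)$ --- through which your induction must pass to reach block $r+1$ --- is not a.s. well defined, so the $0/0$ convention cannot be invoked and the argument stalls. Proving this finite-$n$ lemma is itself an induction that couples the filter's random zero events to zero events of the latent process (positive probability of $\Lambda^{(i,j)}_{s,n}(\bs\theta)=0$ forces $Z^{(i,j)}_s=0$ a.s., hence $Y^{(i,j)}_s=0$ a.s.), which is a more delicate argument than support propagation from a vanishing deterministic limit. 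Since you do announce matrix analogues of all of lemmas \ref{muylemma}, \ref{mumulemma} and \ref{mumunlemma}, the right ingredient is within your stated plan; the defect is that your formalisation (i)--(ii) drops exactly the finite-$n$ content of the analogue of lemma \ref{muylemma}. With that lemma restored, the rest of your argument matches the paper's proof.
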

\noindent The proof is postponed until later in section \ref{sec:filtering_limits_case_II}.

\begin{remark}\label{rem:Lam_M}
Similarly to properties of case (I) pointed out in remark \ref{rem:lam_mu}, by writing out the above definitions it can be checked that $\+N_t(\bs\theta^*)=\bs\Lambda_{t, \infty}(\bs \theta^*, \bs \theta^*)$,  thus $n^{-1}\+Z_t\asls \bs\Lambda_{t, \infty}(\bs \theta^*, \bs \theta^*)$; and that $ \+M_{r, \infty}(\bs \theta^*, \bs \theta^*)$ is equal to the right hand side of \eqref{eq:Y_bar_lln}, thus  $n^{-1}\bar{\+Y}_r \asls   \+M_{r, \infty}(\bs \theta^*, \bs \theta^*)$. 
\end{remark}

Similarly to as in section \ref{sec:filtering_limits_case_I}, in order to prove proposition \ref{Yfilt} we need to check that certain quantities are almost surely well defined. For the update step of algorithm \ref{alg:Ztagg} to be $\mathbb{P}^{\bs \theta^*} $-a.s. well  defined  for all $\bs \theta \in \Theta$ we need that if $M_{r,n}^{(i,j)}(\bs \theta) = 0$ occurs with positive probability then $\bar Y_r^{(i,j)} = 0$ $\mathbb{P}^{\bs \theta^*}$- a.s. This is established in the following lemma.

\begin{lemma}\label{MY}
Let assumptions \ref{as:params} - \ref{as:init} hold. For any $\bs \theta \in \Theta$, $n \in \mathbb{N}$, $(i,j) \in [m]^2$ and $r = 1, \dots, R$:
$$\Ps\left(M^{(i,j)}_{r,n}(\bs \theta) = 0 \right)>0 \implies \bar Y_r^{(i,j)} = 0, \quad  \Ps a.s.$$
\end{lemma}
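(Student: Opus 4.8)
The plan is to mirror the proof of lemma \ref{muylemma} for case (I), the essential new feature being that observations now arrive only at the times $\tau_r$ and are aggregated over each block. First I would reduce the statement to a claim about the one-step transition intensities. Write $M^{(i,j)}_{r,n}(\bs\theta)=\sum_{t=\tau_{r-1}+1}^{\tau_r}\Lambda^{(i,j)}_{t,n}(\bs\theta)Q^{(i,j)}_t(\bs\theta)$ and $\bar Y^{(i,j)}_r=\sum_{t=\tau_{r-1}+1}^{\tau_r}Y^{(i,j)}_t$. Since all summands are nonnegative, $\Ps(M^{(i,j)}_{r,n}(\bs\theta)=0)>0$ forces, for each $t$ in the block, $\Ps(\Lambda^{(i,j)}_{t,n}(\bs\theta)Q^{(i,j)}_t(\bs\theta)=0)>0$; as $Q^{(i,j)}_t(\bs\theta)$ is deterministic this means either $Q^{(i,j)}_t(\bs\theta)=0$ or $\Ps(\Lambda^{(i,j)}_{t,n}(\bs\theta)=0)>0$. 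In the first case assumption \ref{as:params} gives $Q^{(i,j)}_t(\bs\theta^*)=0$, whence $Y^{(i,j)}_t$ is degenerate at $0$ $\Ps$-a.s. Because $Y^{(i,j)}_t\le Z^{(i,j)}_t$ pointwise, everything therefore reduces to the analogue of \eqref{eq:lamdbda_mu_induct1}--\eqref{eq:lamdbda_mu_induct2}, namely the implication $\mathrm{B}_t$: $\Ps(\Lambda^{(i,j)}_{t,n}(\bs\theta)=0)>0\implies Z^{(i,j)}_t=0$ $\Ps$-a.s.; summing $Y^{(i,j)}_t=0$ a.s. over $t$ in the block then yields $\bar Y^{(i,j)}_r=0$ a.s.

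Next I would prove $\mathrm{B}_t$ by induction on $t$, carried jointly with the companion marginal claim $\mathrm{A}_t$: $\Ps(\bar\lambda^{(i)}_{t,n}(\bs\theta)=0)>0\implies x^{(i)}_t=0$ $\Ps$-a.s., exploiting that in case (II) one has $n_t=n$ and $\bar{\+x}_t=\+x_t$, that $x^{(j)}_t=\sum_i Z^{(i,j)}_t$ and $\bar\lambda^{(j)}_{t,n}(\bs\theta)=\sum_i\Lambda^{(i,j)}_{t,n}(\bs\theta)$, and that $Z^{(i,j)}_t\mid\+x_{t-1}\sim\mathrm{Bin}(x^{(i)}_{t-1},K^{(i,j)}_{t,\bs\eta(\+x_{t-1})}(\bs\theta^*))$. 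The base case $t=0$ uses $\bar{\bs\lambda}_{0,n}=\bs\lambda_{0,n}$ together with $\mathrm{supp}(\bs\lambda_{0,n}(\bs\theta))=\mathrm{supp}(\bs\lambda_{0,n}(\bs\theta^*))$ and $\E[\+x_0]=\bs\lambda_{0,n}(\bs\theta^*)$ from assumption \ref{as:init}. For the prediction step, write $\Lambda^{(i,j)}_{t,n}(\bs\theta)=\bar\lambda^{(i)}_{t-1,n}(\bs\theta)\,K^{(i,j)}_{t,\bs\eta(\bar{\bs\lambda}_{t-1,n}(\bs\theta))}(\bs\theta)$; a positive-probability zero of this product forces either $\Ps(\bar\lambda^{(i)}_{t-1,n}(\bs\theta)=0)>0$, which by $\mathrm{A}_{t-1}$ gives $x^{(i)}_{t-1}=0$ a.s. and hence $Z^{(i,j)}_t=0$ a.s., or $\Ps(K^{(i,j)}_{t,\bs\eta(\bar{\bs\lambda}_{t-1,n}(\bs\theta))}(\bs\theta)=0)>0$, which by exactly the support-contradiction argument of lemma \ref{muylemma} (using assumptions \ref{as:params} and \ref{as:1} and $\mathrm{A}_{t-1}$) gives $K^{(i,j)}_{t,\bs\eta(\+x_{t-1})}(\bs\theta^*)=0$ a.s. and hence $Z^{(i,j)}_t=0$ a.s. This establishes $\mathrm{B}_t$. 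At an intermediate time $t\ne\tau_r$ one then recovers $\mathrm{A}_t$ from $\mathrm{B}_t$ immediately, since a positive-probability zero of $\sum_i\Lambda^{(i,j)}_{t,n}(\bs\theta)$ forces $\Ps(\Lambda^{(i,j)}_{t,n}(\bs\theta)=0)>0$ for every $i$, whence $x^{(j)}_t=\sum_i Z^{(i,j)}_t=0$ a.s.

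The main obstacle is propagating $\mathrm{A}_t$ across an observation time $t=\tau_r$, where $\bar\lambda^{(j)}_{\tau_r,n}(\bs\theta)=\sum_i\bar\Lambda^{(i,j)}_{\tau_r,n}(\bs\theta)$ with $\bar\Lambda^{(i,j)}_{\tau_r,n}(\bs\theta)=(1-Q^{(i,j)}_{\tau_r}(\bs\theta))\Lambda^{(i,j)}_{\tau_r,n}(\bs\theta)+\bar Y^{(i,j)}_r\,\Lambda^{(i,j)}_{\tau_r,n}(\bs\theta)Q^{(i,j)}_{\tau_r}(\bs\theta)/M^{(i,j)}_{r,n}(\bs\theta)$ now involving the random observation $\bar Y^{(i,j)}_r$. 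A positive-probability zero of this, together with nonnegativity of both summands, splits into two cases. When $Q^{(i,j)}_{\tau_r}(\bs\theta)<1$ the first summand forces $\Ps(\Lambda^{(i,j)}_{\tau_r,n}(\bs\theta)=0)>0$, so $\mathrm{B}_{\tau_r}$ gives $Z^{(i,j)}_{\tau_r}=0$ a.s. The delicate case is $Q^{(i,j)}_{\tau_r}(\bs\theta)=1$, where the first summand vanishes identically and one must extract $Z^{(i,j)}_{\tau_r}=0$ a.s. from the vanishing of the second summand alone. Here the key facts are that $Q^{(i,j)}_{\tau_r}(\bs\theta)=1$ implies $Q^{(i,j)}_{\tau_r}(\bs\theta^*)=1$ (assumption \ref{as:params}), so that $Y^{(i,j)}_{\tau_r}=Z^{(i,j)}_{\tau_r}$ and hence $Z^{(i,j)}_{\tau_r}\le\bar Y^{(i,j)}_r$ pointwise, and that $M^{(i,j)}_{r,n}(\bs\theta)\ge\Lambda^{(i,j)}_{\tau_r,n}(\bs\theta)$; these reduce the vanishing of the second summand to either $\Ps(\Lambda^{(i,j)}_{\tau_r,n}(\bs\theta)=0)>0$ (again invoking $\mathrm{B}_{\tau_r}$) or a vanishing of $\bar Y^{(i,j)}_r$ that carries $Z^{(i,j)}_{\tau_r}$ with it. Ruling out a purely \emph{filter-driven} zero of $\bar\Lambda^{(i,j)}_{\tau_r,n}(\bs\theta)$ --- where the filtered intensity vanishes because the aggregated count happened to be zero rather than because the state is genuinely absent --- is the crux, and is where the deterministic support structure guaranteed by assumptions \ref{as:params} and \ref{as:1} must be pushed back through the block; this is the exact analogue of the $q^{(k)}_t(\bs\theta)=1$ bullet in lemma \ref{muylemma} and is the step I expect to require the most care. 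Once $Z^{(i,j)}_{\tau_r}=0$ a.s. is obtained for every $i$, summing gives $x^{(j)}_{\tau_r}=0$ a.s., completing $\mathrm{A}_{\tau_r}$ and closing the induction.
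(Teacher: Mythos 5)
Your proposal follows the paper's proof plan almost exactly: the same double induction (outer over blocks $r$, inner over times $s$ within a block), the same pair of coupled implications (your $\mathrm{B}_t$ is the paper's claim that $\Ps(\Lambda^{(i,j)}_{s,n}(\bs\theta)=0)>0$ implies $Z^{(i,j)}_s=0$ a.s., and your $\mathrm{A}_t$ is the column-sum form of it that the paper feeds into its prediction steps), the same base case from assumption \ref{as:init}, the same disposal of the $Q^{(i,j)}_t(\bs\theta)=0$ summands via support preservation in assumption \ref{as:params}, the same support-contradiction argument borrowed from lemma \ref{muylemma} for the prediction steps, and the same dichotomy $Q^{(i,j)}_{\tau_r}(\bs\theta)<1$ versus $Q^{(i,j)}_{\tau_r}(\bs\theta)=1$ at observation times. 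Up to and including the $Q<1$ case, your sketch is a faithful reconstruction of the paper's argument.

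The gap is the $Q^{(i,j)}_{\tau_r}(\bs\theta)=1$ case, which you correctly flag as the crux but do not prove, and neither of your proposed ingredients closes it. First, assumption \ref{as:params} preserves \emph{supports}, i.e.\ which entries are zero; it does not preserve the value one, so $Q^{(i,j)}_{\tau_r}(\bs\theta)=1$ yields only $Q^{(i,j)}_{\tau_r}(\bs\theta^*)>0$, and the identity $Y^{(i,j)}_{\tau_r}=Z^{(i,j)}_{\tau_r}$ underlying your domination $Z^{(i,j)}_{\tau_r}\le\bar Y^{(i,j)}_r$ is unavailable. Second, even granting that identity, the domination gives $Z^{(i,j)}_{\tau_r}=0$ only \emph{on the event} $\{\bar Y^{(i,j)}_r=0\}$, whereas $\mathrm{A}_{\tau_r}$, which you must hand to the next block, asserts $x^{(j)}_{\tau_r}=0$ unconditionally, $\Ps$-a.s. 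The ``filter-driven zero'' you describe is not merely a technical worry --- it is realizable: take $m=2$, $\+K_{t,\bs\eta}$ the identity matrix for all $\bs\eta$, $Q^{(1,1)}_t=1$ with all other entries of $\+Q_t$ zero, $\mathbb{P}_{0,n}=\mathrm{Mult}(n,[1/2\;\;1/2]^\top)$, $\tau_r=r$, and $\bs\theta=\bs\theta^*$ (permitted, since the lemma quantifies over all $\bs\theta\in\Theta$). Then $\bar\Lambda^{(1,1)}_{1,n}(\bs\theta)=\bar Y^{(1,1)}_1=x^{(1)}_0$, hence $M^{(1,1)}_{2,n}(\bs\theta)=x^{(1)}_0$ vanishes with probability $2^{-n}>0$, while $\bar Y^{(1,1)}_2=x^{(1)}_0$ is not $\Ps$-a.s.\ zero; so the unconditional claims $\mathrm{A}_{\tau_1}$, $\mathrm{B}_2$, and indeed the lemma's stated conclusion, all fail in this example, although the conditional versions (on $\{M^{(i,j)}_{r,n}(\bs\theta)=0\}$ one has $\bar Y^{(i,j)}_r=0$) survive. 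You should know that the paper's own proof does not resolve this point either: its $Q^{(i,j)}_{\tau_r}(\bs\theta)=1$ bullet simply asserts that $\Ps(\bar\Lambda^{(i,j)}_{\tau_r,n}(\bs\theta)=0)>0$ forces $\Ps(\Lambda^{(i,j)}_{\tau_r,n}(\bs\theta)=0)>0$, i.e.\ it excludes the filter-driven zero by fiat. So your instinct about where the difficulty sits is sharper than the paper's write-up, but no argument of the form you (or the paper) propose can succeed for the unconditional statement; a correct proof must carry event-conditional claims through the induction, and these weaker statements are in fact all that the lemma's downstream uses require (well-definedness of the division in line 8 of algorithm \ref{alg:Ztagg}, of $\mathcal{D}_{r,n}$ in proposition \ref{Zcontfn}, and of the update limit in proposition \ref{Yfilt}).
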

\begin{proof}
Fix arbitrary $ \bs \theta \in \Theta$ and $n \in \mathbb{N}$. All almost sure statements made throughout the proof are with respect to $\Ps$. We will prove by induction that for all $r = 1, \dots, R$ we have that for all $s \in \{\tau_{r-1}+1,\dots, \tau_r\}$ and $(i,j) \in [m]^2$, the following two implications hold.
\begin{align}
\Ps\left(\Lambda_{s,n}^{(i,j)}(\bs \theta) = 0\right)>0 \implies Z_s^{(i,j)} = 0, \quad a.s.,\label{eq:M_Y_induct1}\\
\Ps\left(M^{(i,j)}_{r,n}(\bs \theta) = 0\right)>0\implies \bar Y_r^{(i,j)} = 0 \quad  a.s.\label{eq:M_Y_induct2}
\end{align}
 Consider the case $r=1$. We will first show that for all $s \in \{\tau_0+1,\dots, \tau_1\}$ if, for some $(i,j) \in [m]^2$, $\Ps\left(\Lambda_{s,n}^{(i,j)}(\bs \theta) = 0\right)>0$ then $Z_s^{(i,j)} = 0$ a.s. by induction on $s$. Suppose that for some $(i,j) \in [m]^2$, $\Ps\left(\Lambda_{1,\infty}^{(i,j)}(\bs \theta) =0 \right) >0$, i.e,
\begin{equation*}
\Ps\left(\lambda_{0,n}^{(i)}(\bs \theta)K^{(i,j)}_{1,\bs \eta(\bs \lambda_{0,n}(\bs \theta) )}(\bs \theta) = 0\right)>0.
\end{equation*}
This implies that either:
\begin{itemize}
    \item $\Ps\left(\lambda^{(i)}_{0,n}(\bs \theta) = 0\right)>0$, which since  $\lambda^{(i)}_{0,n}(\bs \theta)$ is deterministic implies that \newline $\lambda^{(i)}_{0,n}(\bs \theta) = 0$  which by assumption \ref{as:init} implies that $\lambda^{(i)}_{0,n}(\bs \theta^*) = 0 \implies x^{(i)}_0 = 0$ a.s.;  or
    \item $K^{(i,j)}_{1,\bs \eta(\bs \lambda_{0,n}^{(i)})}(\bs \theta) = 0 $,  which implies $K^{(i,j)}_{1,\bs \eta(\+ x_0)}(\bs \theta^*) = 0$ by assumptions \ref{as:params}, \ref{as:1}, and \ref{as:init}.
\end{itemize}
Together this implies imply $Z^{(i,j)}_1 = 0$ a.s..  Now let $s \in \{\tau_0+1,\dots, \tau_1\}$ and assume that if, for some $(i,j) \in [m]^2$, $\Ps\left(\Lambda_{s-1,n}^{(i,j)}(\bs \theta) = 0\right)>0$ then $Z_{s-1}^{(i,j)} = 0$ a.s.. Now suppose for some $(i,j) \in [m]^2$, $\Ps\left(\Lambda_{s,n}^{(i,j)}(\bs \theta) = 0\right)>0$, i.e.,
\begin{equation*}
\Ps\left(\left(\+1_m^\top \bs \Lambda_{s-1,n}^{(\cdot,i)}(\bs \theta)\right)K^{(i,j)}_{1,\bs \eta\left(\+1_m^\top \bs \Lambda_{s-1,n}^{(\cdot,i)}(\bs \theta)\right)}(\bs \theta) = 0\right) >0,
\end{equation*}
This implies that either:
\begin{itemize}
    \item $\Ps\left(\+1_m^\top \bs \Lambda_{0,n}^{(\cdot,i)}(\bs \theta)= 0\right)>0$, which by the induction hypothesis implies  $\+1_m^\top \+Z_{s-1}^{(\cdot,i)} = 0$
    a.s., which in turn implies $ x^{(i)}_{s-1} = 0$ a.s.; or
    \item $\Ps\left(K^{(i,j)}_{1,\bs \eta\left(\+1_m^\top \bs \Lambda_{s-1,n}^{(\cdot,i)}(\bs \theta)\right)}(\bs \theta)  = 0\right) >0$ which by assumptions \ref{as:params} and \ref{as:1} and the induction hypothesis implies $K^{(i,j)}_{1,\bs \eta(\+ x_{s-1})}(\bs \theta^*) = 0$ a.s.,
\end{itemize}
which together imply $Z^{(i,j)}_s = 0$ a.s.. Now suppose for some $(i,j) \in [m]^2$, $\Ps\left(M^{(i,j)}_{1,n}(\bs \theta) = 0\right)>0$, i.e.,
 \begin{equation*}
\Ps\left(\sum_{s = \tau_{0}+1}^{\tau_{1}}  \Lambda^{(i,j)}_{s,n}(\bs \theta) \circ Q^{(i,j)}_s(\bs \theta) = 0\right)>0,
\end{equation*}
then for all $s = \tau_{0}+1, \dots, \tau_1$ either:
\begin{itemize}
    \item $ \Ps\left(\Lambda^{(i,j)}_{s,n}(\bs \theta)=0\right)>0$, which  implies $Z^{(i,j)}_s = 0$ hence $Y^{(i,j)}_s = 0$ a.s.; or
    \item $Q^{(i,j)}_s(\bs \theta) =0$ which by assumption \ref{as:params} implies $Q^{(i,j)}_s(\bs \theta^*) = 0$ hence $Y^{(i,j)}_s = 0$ a.s.,
\end{itemize}
and hence $\bar Y_1^{(i,j)} =\sum_{s = \tau_{0}+1}^{\tau_{1}}  Y^{(i,j)}_s = 0$ a.s., this completes the proof of \eqref{eq:M_Y_induct1} and \eqref{eq:M_Y_induct2} for $r=1$. 

For the induction hypothesis, suppose that \eqref{eq:M_Y_induct1} and \eqref{eq:M_Y_induct2} hold for some $r\geq 1$. Notice that:
\begin{align*}
\bar{\Lambda}^{(i,j)}_{\tau_r, n}( \bs \theta) = \left[1 - Q^{(i,j)}_{\tau_r}(\bs \theta) \right] \Lambda^{(i,j)}_{\tau_r,n}( \bs \theta)
+ \frac{\bar Y^{(i,j)}_{r}}{M^{(i,j)}_{r,n}(\bs \theta)}\left[\Lambda^{(i,j)}_{\tau_r, n}( \bs \theta)\circ  Q^{(i,j)}_{\tau_r}(\bs \theta) \right] = 0
\end{align*}
is almost surely well defined by the induction hypothesis since we divide positive $\bar Y^{(i,j)}_{r}$ by $0$ with probability 0. Now suppose, for some $(i,j) \in [m]$, that  ${\Ps\left(\bar{\Lambda}^{(i,j)}_{\tau_r, n}( \bs \theta) =0\right)>0}$, then either:
\begin{itemize}
    \item $Q^{(i,j)}_{\tau_r}(\bs \theta)<1$, which implies $\Ps\left(\Lambda^{(i,j)}_{\tau_r,n}( \bs \theta) = 0\right)>0$, so that the first term of the sum is $0$ with positive probability, which then implies $ Z^{(i,j)}_{\tau_r} = 0 \quad a.s.$ by the induction hypothesis; or
    \item $Q^{(i,j)}_{\tau_r}(\bs \theta) =1$, which implies $\Ps\left(\Lambda^{(i,j)}_{\tau_r,n}( \bs \theta) = 0\right)>0$, so that the second term in the sum is $0$ with positive probability, which then implies $ Z^{(i,j)}_{\tau_r} = 0$ a.s. by the induction hypothesis.
\end{itemize}
Using this and identical reasoning  to that in the $r=1$ case completes the induction. 
\end{proof}
If, for some $i,j\in[m]$, $  M^{(i,j)}_{r, \infty}(\bs \theta^*, \bs \theta) = 0$ and $M^{(i,j)}_{r, \infty}(\bs \theta^*, \bs \theta^*) >0$, then $\bar{\bs \Lambda}_{\tau_r, \infty}(\bs \theta^*, \bs \theta)$ would involve division of a finite number by zero. The following lemma implies this situation does not arise.
\begin{lemma}\label{MM}
Let assumptions \ref{as:params} - \ref{as:init} hold. Then for all $ \bs \theta, \bs \theta' \in \Theta$, $(i,j) \in [m]^2$ and $r =  1, \dots, R$:
\begin{equation*}
    M^{(i,j)}_{r, \infty}(\bs \theta^*, \bs \theta) = 0 \iff M^{(i,j)}_{r, \infty}(\bs \theta^*, \bs \theta') = 0.
\end{equation*}
\end{lemma}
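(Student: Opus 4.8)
The plan is to follow the template of the case (I) result, Lemma \ref{mumulemma}, adapting it to the matrix-valued quantities of case (II). By symmetry of the claimed equivalence in $(\bs\theta,\bs\theta')$ it suffices to prove one implication, and I will in fact establish jointly, by a double induction (outer on $r$, inner on the time index $t$ within each block $\{\tau_{r-1}+1,\dots,\tau_r\}$), the stronger family
\begin{align*}
\Lambda^{(i,j)}_{t,\infty}(\bs\theta^*,\bs\theta)=0 &\iff \Lambda^{(i,j)}_{t,\infty}(\bs\theta^*,\bs\theta')=0,\\
\bar\lambda^{(k)}_{t,\infty}(\bs\theta^*,\bs\theta)=0 &\iff \bar\lambda^{(k)}_{t,\infty}(\bs\theta^*,\bs\theta')=0,\\
M^{(i,j)}_{r,\infty}(\bs\theta^*,\bs\theta)=0 &\iff M^{(i,j)}_{r,\infty}(\bs\theta^*,\bs\theta')=0.
\end{align*}
The base case is \eqref{eq:Lam_0_inf_defn}: since $\bar{\bs\lambda}_{0,\infty}(\bs\theta^*,\bs\theta)=\bs\lambda_{0,\infty}(\bs\theta)$ and assumption \ref{as:init} asserts $\mathrm{supp}(\bs\lambda_{0,\infty}(\bs\theta))$ is independent of $\bs\theta$, the $\bar\lambda$ equivalence holds at $t=0$, seeding the first block.

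Two generic propagation steps recur throughout. For a prediction step, $\Lambda^{(i,j)}_{t,\infty}(\bs\theta^*,\bs\theta)=\bar\lambda^{(i)}_{t-1,\infty}(\bs\theta^*,\bs\theta)\,K^{(i,j)}_{t,\bs\eta(\bar{\bs\lambda}_{t-1,\infty}(\bs\theta^*,\bs\theta))}(\bs\theta)$ vanishes iff either $\bar\lambda^{(i)}_{t-1,\infty}(\bs\theta^*,\bs\theta)=0$, handled by the inductive $\bar\lambda$ equivalence, or the $\+K$-entry vanishes; the latter is transferred between $\bs\theta$ and $\bs\theta'$ exactly as in the analogous bullet of Lemma \ref{mumulemma}, combining the $\bs\theta$-invariance of row supports (assumption \ref{as:params}) with the support-monotonicity of $\+K$ in $\bs\eta$ (assumption \ref{as:1}), using that $\bar{\bs\lambda}_{t-1,\infty}(\bs\theta^*,\bs\theta)$ and $\bar{\bs\lambda}_{t-1,\infty}(\bs\theta^*,\bs\theta')$ share support by the inductive hypothesis, so their normalisations feed $\+K$ with equal $\bs\eta$-supports. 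The row-sum identities $\bar\lambda^{(k)}_{t,\infty}=\sum_i \Lambda^{(i,k)}_{t,\infty}$ (for non-observation $t$) and $M^{(i,j)}_{r,\infty}=\sum_{s=\tau_{r-1}+1}^{\tau_r} \Lambda^{(i,j)}_{s,\infty}Q^{(i,j)}_s$ then give their equivalences immediately from nonnegativity of all summands, using additionally the $\bs\theta$-invariance of $\mathrm{supp}(\+Q_s)$ for the $M$ line. The lemma's assertion is precisely this $M$ line.

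The delicate part is propagating $\bar\lambda$ across an observation time $\tau_r$, where the update \eqref{eq:Lam_bar_t_inf_defn} introduces the ratio $\+M_{r,\infty}(\bs\theta^*,\bs\theta^*)\oslash\+M_{r,\infty}(\bs\theta^*,\bs\theta)$. First, the $M$ equivalence just proved (taking $\bs\theta'=\bs\theta^*$) guarantees this ratio is well-defined, i.e.\ never divides a positive number by zero. Writing the update in factored form,
\begin{equation*}
\bar\Lambda^{(i,j)}_{\tau_r,\infty}(\bs\theta^*,\bs\theta)=\Lambda^{(i,j)}_{\tau_r,\infty}(\bs\theta^*,\bs\theta)\Bigl[\bigl(1-Q^{(i,j)}_{\tau_r}(\bs\theta)\bigr)+\tfrac{M^{(i,j)}_{r,\infty}(\bs\theta^*,\bs\theta^*)}{M^{(i,j)}_{r,\infty}(\bs\theta^*,\bs\theta)}\,Q^{(i,j)}_{\tau_r}(\bs\theta)\Bigr],
\end{equation*}
I case-split on whether $Q^{(i,j)}_{\tau_r}(\bs\theta)<1$ or $=1$, mirroring the $q<1/q=1$ dichotomy in Lemma \ref{mumulemma}. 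When $Q^{(i,j)}_{\tau_r}(\bs\theta)<1$ the bracket is strictly positive, so vanishing forces $\Lambda^{(i,j)}_{\tau_r,\infty}(\bs\theta^*,\bs\theta)=0$ and the $\Lambda$ equivalence finishes the job. When $Q^{(i,j)}_{\tau_r}(\bs\theta)=1$ the bracket equals the $M$-ratio, so vanishing forces either $\Lambda^{(i,j)}_{\tau_r,\infty}(\bs\theta^*,\bs\theta)=0$ or $M^{(i,j)}_{r,\infty}(\bs\theta^*,\bs\theta^*)=0$; in the second branch, since $Q^{(i,j)}_{\tau_r}(\bs\theta)=1>0$ implies $Q^{(i,j)}_{\tau_r}(\bs\theta^*)>0$ by assumption \ref{as:params}, the $s=\tau_r$ summand of $\+M_{r,\infty}(\bs\theta^*,\bs\theta^*)$ yields $\Lambda^{(i,j)}_{\tau_r,\infty}(\bs\theta^*,\bs\theta^*)=0$, which the $\Lambda$ equivalence again transfers. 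In every branch $\bar\Lambda^{(i,j)}_{\tau_r,\infty}(\bs\theta^*,\bs\theta')=0$ follows from the factored form, and summing over $i$ gives the $\bar\lambda_{\tau_r}$ equivalence that seeds the next block. The main obstacle, where the argument is genuinely more than a routine support-chase, is precisely this $Q=1$ branch: one must exploit the structure of $\+M_{r,\infty}$ as a sum over the whole block to recover information about $\Lambda_{\tau_r,\infty}(\bs\theta^*,\bs\theta^*)$ from $M^{(i,j)}_{r,\infty}(\bs\theta^*,\bs\theta^*)=0$, which is what keeps the induction closed and the update well-posed.
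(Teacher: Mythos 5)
Your proof is correct and follows essentially the same route as the paper's: a double induction (outer over observation blocks $r$, inner over the prediction steps within each block) in which vanishing of each factor of $\Lambda^{(i,j)}_{t,\infty}$, $\bar\lambda^{(k)}_{t,\infty}$ and $M^{(i,j)}_{r,\infty}$ is transferred between parameter values via the support-invariance parts of assumptions \ref{as:params}, \ref{as:1} and \ref{as:init}, with the same $Q^{(i,j)}_{\tau_r}(\bs\theta)<1$ versus $Q^{(i,j)}_{\tau_r}(\bs\theta)=1$ dichotomy at the update step. If anything, your handling of the $Q=1$ branch is slightly more careful than the paper's, since you explicitly dispose of the sub-case $M^{(i,j)}_{r,\infty}(\bs\theta^*,\bs\theta^*)=0$ by extracting the $s=\tau_r$ summand of the block sum, a point the paper's argument passes over tersely.
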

\begin{proof}
It is enough to establish the implication in one direction for arbitrary $ \bs \theta, \bs \theta' \in \Theta$. We will prove by induction that for all $r = 1, \dots, R$, $s = \tau_{r-1}+1,\dots, \tau_r $ and $i,j\in[m]$,
\begin{align*}
\Lambda_{s,\infty}^{(i,j)}(\bs \theta^*, \bs \theta) = 0 &\implies \Lambda_{s,\infty}^{(i,j)}(\bs \theta^*, \bs \theta') = 0,\\
M^{(i,j)}_{r,\infty}(\bs \theta^*, \bs \theta) = 0 &\implies M^{(i,j)}_{r,\infty}(\bs \theta^*, \bs \theta') = 0.
\end{align*}
 Consider the case $r=1$. We will first show that for all $s \in \{\tau_0+1,\dots, \tau_1\}$, $\Lambda_{s,\infty}^{(i,j)}(\bs \theta^*, \bs \theta) = 0 \implies \Lambda_{s,\infty}^{(i,j)}(\bs \theta^*, \bs \theta') = 0$ by induction on $s$. To this end suppose that for some $(i,j)\in [m]^2$:
\begin{equation*}
    \Lambda_{1,\infty}^{(i,j)}(\bs \theta^*, \bs \theta)  = \lambda_{0,\infty}^{(i)}(\bs \theta^*, \bs \theta)K^{(i,j)}_{1,\bs \eta(\bs \lambda_{0,\infty}^{(i)}(\bs \theta^*,\bs \theta))}(\bs \theta) = 0.
\end{equation*}
Then either:
\begin{itemize}
    \item $\lambda^{(i)}_{0,\infty}(\bs \theta^*,\bs \theta) =  \lambda^{(i)}_{0,n}(\bs \theta) = 0$, which by assumption \ref{as:init} implies $\lambda_{0,n}(\bs \theta') =  \lambda^{(i)}_{0,\infty}(\bs \theta^*,\bs \theta') = 0$; or
    \item $K^{(i,j)}_{1,\bs \eta(\lambda_{0,\infty}(\bs \theta^*,\bs \theta) }(\bs \theta) = 0$, which by assumptions \ref{as:params}, \ref{as:1}, and \ref{as:init} implies $K^{(i,j)}_{1,\bs \eta(\lambda_{0,\infty}(\bs \theta^*,\bs \theta')}(\bs \theta') = 0$.
\end{itemize}
Hence:
\begin{equation*}
    \Lambda_{1,\infty}^{(i,j)}(\bs \theta^*, \bs \theta')  = \lambda_{0,\infty}^{(i)}(\bs \theta^*, \bs \theta')K^{(i,j)}_{1,\bs \eta(\bs \lambda_{0,\infty}^{(i)}(\bs \theta^*,\bs \theta'))}(\bs \theta') = 0.
\end{equation*}
Now assume that for $s = tau_0+1,\dots, \tau_1 $ that $\Lambda_{s-1,\infty}^{(i,j)}(\bs \theta^*, \bs \theta) = 0 \implies \Lambda_{s-1,\infty}^{(i,j)}(\bs \theta^*, \bs \theta') = 0$, then if:
\begin{equation*}
    \Lambda_{s,\infty}^{(i,j)}(\bs \theta^*, \bs \theta) = \left(\+1_m^\top \bs \Lambda_{s-1,\infty}^{(\cdot,i)}(\bs \theta^*, \bs \theta)\right)K^{(i,j)}_{1,\bs \eta\left(\+1_m^\top \bs \Lambda_{s-1,\infty}(\bs \theta^*, \bs \theta)\right)}(\bs \theta) = 0,
\end{equation*}
we must have either:
\begin{itemize}
    \item $\left(\+1_m^\top \bs \Lambda_{s-1,\infty}^{(\cdot,i)}(\bs \theta^*, \bs \theta)\right)= 0$, which by the induction hypothesis implies $\left(\+1_m^\top \bs \Lambda_{s-1,\infty}^{(\cdot,i)}(\bs \theta^*, \bs \theta')\right) = 0$; or
    \item $K^{(i,j)}_{1,\bs \eta\left(\+1_m^\top \bs \Lambda_{s-1,\infty}^{(\cdot,i)}(\bs \theta^*, \bs \theta)\right)}(\bs \theta)  = 0$, which by the above and assumptions \ref{as:params} and \ref{as:1} implies  \newline $K^{(i,j)}_{1,\bs \eta\left(\+1_m^\top \bs \Lambda_{s-1,\infty}^{(\cdot,i)}(\bs \theta^*, \bs \theta')\right)}(\bs \theta') = 0$.
\end{itemize}
We therefore find:
\begin{equation*}
    \Lambda_{s,\infty}^{(i,j)}(\bs \theta^*, \bs \theta') = \left(\+1_m^\top \bs \Lambda_{s-1,\infty}^{(\cdot,i)}(\bs \theta^*, \bs \theta')\right)K^{(i,j)}_{1,\bs \eta\left(\+1_m^\top \bs \Lambda_{s-1,\infty}(\bs \theta^*, \bs \theta')\right)}(\bs \theta') = 0.
\end{equation*}
completing the intermediary induction on $s$. 
Now consider:
 \begin{align*}
M^{(i,j)}_{1,\infty}(\bs \theta^*, \bs \theta) &= \sum_{s = 1}^{\tau_{1}}  \Lambda^{(i,j)}_{s,\infty}(\bs \theta^*, \bs \theta) \circ Q^{(i,j)}_s(\bs \theta) = 0,
\end{align*}
then for all $s = \tau_{0}+1, \dots, \tau_1$ either:
\begin{itemize}
    \item $ \Lambda^{(i,j)}_{s,\infty}(\bs \theta^*, \bs \theta) =0 \implies \Lambda^{(i,j)}_{s,\infty}(\bs \theta^*, \bs \theta')  = 0$; or
    \item $Q^{(i,j)}_s(\bs \theta) =0 \implies Q^{(i,j)}_s(\bs \theta') = 0 $,
\end{itemize}
and hence:
 \begin{align*}
M^{(i,j)}_{1,\infty}(\bs \theta^*, \bs \theta') &= \sum_{s = 1}^{\tau_{1}}  \Lambda^{(i,j)}_{s,\infty}(\bs \theta^*, \bs \theta') \circ Q^{(i,j)}_s(\bs \theta') = 0,
\end{align*}
completing the $r=1$ case. 

Now assume that for all  $s \in \{\tau_{r-1}+1,\dots, \tau_r\}$ that  $\Lambda_{s,\infty}^{(i,j)}(\bs \theta^*, \bs \theta) = 0 \implies \Lambda_{s,\infty}^{(i,j)}(\bs \theta^*, \bs \theta') = 0$ and that $M^{(i,j)}_{r,\infty}(\bs \theta^*, \bs \theta) = 0 \implies M^{(i,j)}_{r,\infty}(\bs \theta^*, \bs \theta') = 0 $. Then we have that if:
\begin{align*}
\bar{ \Lambda}^{(i,j)}_{\tau_r, \infty}(\bs \theta^*, \bs \theta)  &= \left[1- Q^{(i,j)}_{\tau_r}(\bs \theta) \right]\circ  \Lambda^{(i,j)}  _{\tau_r, \infty}(\bs \theta^*, \bs \theta) \\
&+ \frac{M^{(i,j)}_{r,\infty}(\bs \theta^*,\bs \theta^*)}{M^{(i,j)}_{r,\infty}(\bs \theta^*,\bs \theta)} \circ\left[  \Lambda^{(i,j)}_{\tau_r, \infty}(\bs \theta^*, \bs \theta)\circ  Q ^{(i,j)} _{\tau_r}(\bs \theta) \right] = 0,
\end{align*}
which is well defined by the inductive hypothesis, then either:
\begin{itemize}
    \item $Q^{(i,j)}_{\tau_r}(\bs \theta)<1$, and then $\Lambda^{(i,j)}  _{\tau_r, \infty}(\bs \theta^*, \bs \theta) = 0 \implies  \Lambda^{(i,j)}  _{\tau_r, \infty}(\bs \theta^*, \bs \theta') \implies \bar   \Lambda^{(i,j)}  _{\tau_r,\infty}(\bs \theta^*, \bs \theta') = 0$, hence the first term is $0$, or
    \item $Q^{(i,j)}_{\tau_r}(\bs \theta) =1$, and then $  \Lambda^{(i,j)}  _{\tau_r}(\bs \theta^*, \bs \theta) = 0 \implies  \Lambda^{(i,j)}  _{\tau_r,\infty}(\bs \theta^*, \bs \theta') \implies \bar   \Lambda^{(i,j)}  _{\tau_r,\infty}(\bs \theta^*, \bs \theta') = 0$,  so that the right hand term is $0$.
\end{itemize}
This along with using the same reasoning used in the $r=1$ case gives:
\begin{equation*}
    \Lambda_{\tau_r,\infty}^{(i,j)}(\bs \theta^*, \bs \theta) = \left(\+1_m^\top \bs {\bar \Lambda}_{\tau_r-1,\infty}^{(\cdot,i)}(\bs \theta^*, \bs \theta)\right)K^{(i,j)}_{1,\bs \eta\left(\+1_m^\top \bs{\bar \Lambda}_{\tau_r-1,\infty}(\bs \theta^*, \bs \theta)\right)}(\bs \theta) = 0,
\end{equation*}
 implies
\begin{equation*}
    \Lambda_{\tau_r,\infty}^{(i,j)}(\bs \theta^*, \bs \theta') = \left(\+1_m^\top \bs {\bar \Lambda}_{\tau_r-1,\infty}^{(\cdot,i)}(\bs \theta^*, \bs \theta')\right)K^{(i,j)}_{1,\bs \eta\left(\+1_m^\top \bs{\bar \Lambda}_{\tau_r-1,\infty}(\bs \theta^*, \bs \theta')\right)}(\bs \theta') = 0.
\end{equation*}
Using this and further using identical inductive reasoning to the $r=1$ case we see that for all $s = \tau_{r}+1,\dots, \tau_{r+1}$, $\Lambda_{s,\infty}^{(i,j)}(\bs \theta^*, \bs \theta) = 0 \implies \Lambda_{s,\infty}^{(i,j)}(\bs \theta^*, \bs \theta') = 0$ and further that $M^{(i,j)}_{r+1,\infty}(\bs \theta^*, \bs \theta) = 0 \implies M^{(i,j)}_{r+1,\infty}(\bs \theta^*, \bs \theta') = 0 $. This completes the inductive proof.
\end{proof}
The following lemma is used in the proof of lemma \ref{Zcontfn}.
\begin{lemma}\label{MMn}
Let assumptions \ref{as:params} - \ref{as:init} hold. For all $ \bs \theta \in \Theta$, $n \in \mathbb{N}$, $(i,j) \in [m]^2$, and  $r \in \{1, \dots, R\}$:
\begin{equation*}
    M^{(i,j)}_{r, \infty}(\bs \theta^*, \bs \theta) = 0 \implies M^{(i,j)}_{r, n}(\bs \theta) = 0, \quad \Ps\text{-}a.s.
\end{equation*}
\end{lemma}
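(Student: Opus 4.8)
The plan is to mirror the inductive structure of lemma \ref{MM}, but now comparing the infinite-$n$ intensities against the finite-$n$ intensities produced by algorithm \ref{alg:Ztagg}, while carrying the almost-sure qualifiers exactly as in the case (I) analogue, lemma \ref{mumunlemma}. Fix $\bs\theta\in\Theta$ and $n\in\mathbb{N}$; all almost sure statements are with respect to $\Ps$. I would prove by induction on $r=1,\dots,R$ that for every $s\in\{\tau_{r-1}+1,\dots,\tau_r\}$ and $(i,j)\in[m]^2$ the two implications
\begin{align*}
\Lambda^{(i,j)}_{s,\infty}(\bs\theta^*,\bs\theta)=0 &\implies \Lambda^{(i,j)}_{s,n}(\bs\theta)=0\quad a.s.,\\
M^{(i,j)}_{r,\infty}(\bs\theta^*,\bs\theta)=0 &\implies M^{(i,j)}_{r,n}(\bs\theta)=0\quad a.s.,
\end{align*}
hold, the second being the assertion of the lemma. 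The engine throughout is that membership of an index in the support of any of $\+K_{t,\bs\eta}(\bs\theta)$, $\+Q_t(\bs\theta)$ or $\bs\lambda_{0,\cdot}(\bs\theta)$ is invariant in $n$ (assumptions \ref{as:params} and \ref{as:init}), combined with the support-monotonicity $\mathrm{supp}(\bs\eta)\subseteq\mathrm{supp}(\bs\eta')\Rightarrow\mathrm{supp}(\+K^{i,\cdot}_{t,\bs\eta})\subseteq\mathrm{supp}(\+K^{i,\cdot}_{t,\bs\eta'})$ from assumption \ref{as:1}.

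For the base case $r=1$ the quantities $\Lambda_{s,n}(\bs\theta)$ with $s\le\tau_1$ are deterministic, so I would run a deterministic inner induction on $s$. At $s=1$, $\Lambda^{(i,j)}_{1,\infty}=\lambda^{(i)}_{0,\infty}(\bs\theta)\,K^{(i,j)}_{1,\bs\eta(\bs\lambda_{0,\infty}(\bs\theta^*,\bs\theta))}(\bs\theta)=0$ forces either $\lambda^{(i)}_{0,\infty}(\bs\theta)=0$, whence $\lambda^{(i)}_{0,n}(\bs\theta)=0$ by the support equality in assumption \ref{as:init}, or the $\+K$-entry to vanish; since $\bs\lambda_{0,n}(\bs\theta)$ and $\bs\lambda_{0,\infty}(\bs\theta)$ share a support, so do their $\bs\eta$-normalisations, and assumption \ref{as:1} then gives equality of the associated $\+K$-row supports, so the $\+K$-entry also vanishes at $n$; either way $\Lambda^{(i,j)}_{1,n}(\bs\theta)=0$. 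The inductive step on $s$ is identical once one notes that $\bar\lambda^{(i)}_{s-1,\infty}=\+1_m^\top\bs\Lambda^{(\cdot,i)}_{s-1,\infty}=0$ implies, via the $s$-hypothesis, $\bar\lambda^{(i)}_{s-1,n}=0$, giving $\mathrm{supp}(\bar{\bs\lambda}_{s-1,n})\subseteq\mathrm{supp}(\bar{\bs\lambda}_{s-1,\infty})$ and hence the required containment of $\+K$-row supports through assumption \ref{as:1}. The $M$-implication for $r=1$ follows because $M^{(i,j)}_{r,\infty}=\sum_s\Lambda^{(i,j)}_{s,\infty}Q^{(i,j)}_s$ is a sum of nonnegative terms, so each summand is zero, and either $\Lambda^{(i,j)}_{s,\infty}=0$ (whence $\Lambda^{(i,j)}_{s,n}=0$) or $Q^{(i,j)}_s(\bs\theta)=0$; in both cases $\Lambda^{(i,j)}_{s,n}Q^{(i,j)}_s=0$, and summing gives $M^{(i,j)}_{r,n}(\bs\theta)=0$.

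The genuinely new ingredient, and the step I expect to be the main obstacle, is the transition across the update step from block $r$ to block $r+1$, where the finite-$n$ intensities first become random through their dependence on $\bar{\+Y}_r$. Here I would show $\bar\Lambda^{(i,j)}_{\tau_r,\infty}(\bs\theta^*,\bs\theta)=0\implies\bar\Lambda^{(i,j)}_{\tau_r,n}(\bs\theta)=0$ a.s.\ by reading off the definition \eqref{eq:Lam_bar_t_inf_defn}: vanishing of both nonnegative terms forces $\Lambda^{(i,j)}_{\tau_r,\infty}=0$ whether $Q^{(i,j)}_{\tau_r}(\bs\theta)<1$ or $=1$ (in the $Q=1$ case, were $\Lambda^{(i,j)}_{\tau_r,\infty}>0$ the denominator $M^{(i,j)}_{r,\infty}(\bs\theta^*,\bs\theta)$ would be positive, forcing $M^{(i,j)}_{r,\infty}(\bs\theta^*,\bs\theta^*)=0$, contradicting lemma \ref{MM}), whence $\Lambda^{(i,j)}_{\tau_r,n}=0$ a.s.\ by the already-established $\Lambda$-implication. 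Since $\Lambda^{(i,j)}_{\tau_r,n}$ is a common factor of both terms of $\bar\Lambda^{(i,j)}_{\tau_r,n}$ at line 8 of algorithm \ref{alg:Ztagg}, and that expression is a.s.\ well defined by lemma \ref{MY} (a positive $\bar Y^{(i,j)}_r$ is never divided by a vanishing $M^{(i,j)}_{r,n}$), it follows that $\bar\Lambda^{(i,j)}_{\tau_r,n}=0$ a.s.; summing over $i$ gives $\bar\lambda^{(j)}_{\tau_r,\infty}=0\implies\bar\lambda^{(j)}_{\tau_r,n}=0$ a.s., i.e.\ the support containment needed to restart the inner $s$-induction over $\{\tau_r+1,\dots,\tau_{r+1}\}$. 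Repeating the $r=1$ argument verbatim, now with the a.s.\ qualifier propagated from the random initial condition $\bar{\bs\lambda}_{\tau_r}$, yields both implications for $r+1$ and closes the outer induction. The delicacy lies entirely in respecting the $0/0\coloneqq0$ conventions in \eqref{eq:Lam_bar_t_inf_defn} and in algorithm \ref{alg:Ztagg} so that every quotient is legitimately interpreted, which is precisely what lemmas \ref{MM} and \ref{MY} are there to guarantee.
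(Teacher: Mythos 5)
Your proposal is correct and follows essentially the same route as the paper's proof: the same outer induction on $r$ with an inner induction on $s$, the same two-implication scheme for the $\bs \Lambda$'s and $\+M$'s, the same support arguments via assumptions \ref{as:params}, \ref{as:1} and \ref{as:init}, and the same case split on $Q^{(i,j)}_{\tau_r}(\bs \theta)<1$ versus $Q^{(i,j)}_{\tau_r}(\bs \theta)=1$ at the update step. Your explicit appeal to lemma \ref{MM} to rule out $\Lambda^{(i,j)}_{\tau_r,\infty}(\bs \theta^*,\bs \theta)>0$ in the $Q^{(i,j)}_{\tau_r}(\bs \theta)=1$ case, and to lemma \ref{MY} for almost-sure well-definedness of the finite-$n$ update, spells out details the paper's version leaves implicit, but the argument is the same.
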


\begin{proof}
Fix arbitrary $ \bs \theta \in \Theta$ and $n \in \mathbb{N}$. We will prove that for all $r = 1, \dots, R$, $s = \tau_{r-1}+1,\dots, \tau_r$ and $i,j\in[m]$ the following two implications hold:
\begin{align*}
\Lambda_{s,\infty}^{(i,j)}(\bs \theta^*, \bs \theta) = 0 &\implies \Lambda_{s,n}^{(i,j)}(\bs \theta) = 0,\quad a.s.\\
M^{(i,j)}_{r,\infty}(\bs \theta^*, \bs \theta) = 0 &\implies M^{(i,j)}_{r,n}(\bs \theta^*, \bs \theta') = 0,\quad a.s.
\end{align*}
The induction is on $r$ and $s$. Consider $r=1$. We will first show that for all $s \in \tau_0+1,\dots, \tau_1$ that  $\Lambda_{s,\infty}^{(i,j)}(\bs \theta^*, \bs \theta) = 0 \implies \Lambda_{s,\infty}^{(i,j)}(\bs \theta^*, \bs \theta') = 0$ by induction on $s$. We have for $t-1$ case:
\begin{equation*}
    \Lambda_{1,\infty}^{(i,j)}(\bs \theta^*, \bs \theta)  = \lambda_{0,\infty}^{(i)}(\bs \theta^*, \bs \theta)K^{(i,j)}_{1,\bs \eta(\bs \lambda_{0,\infty}(\bs \theta^*,\bs \theta))}(\bs \theta) = 0,
\end{equation*}
which implies that either:
\begin{itemize}
    \item $\lambda^{(i)}_{0,\infty}(\bs \theta^*,\bs \theta) =  \lambda^{(i)}_{0,\infty}(\bs \theta) = 0$, in which case $  \lambda^{(i)}_{0,n}(\bs \theta) = 0$ or
    \item $K^{(i,j)}_{1,\bs \eta(\lambda_{0,\infty}(\bs \theta^*,\bs \theta) }(\bs \theta) = 0$, in which case $K^{(i,j)}_{1,\bs \eta(\bs \lambda_{0,n}(\bs \theta))}(\bs \theta) = 0$,
\end{itemize}
so that:
\begin{equation*}
    \Lambda_{1,n}^{(i,j)}(\bs \theta^*, \bs \theta')  = \lambda^{(i)}_{0,n}(\bs \theta))K^{(i,j)}_{1,\bs \eta(\bs \lambda_{0,n}(\bs \theta))}(\bs \theta) = 0.
\end{equation*}

Now assume that given $\Lambda_{s-1,\infty}^{(i,j)}(\bs \theta^*, \bs \theta) = 0 \implies \Lambda_{s-1,n}^{(i,j)}(\bs \theta) =  0 \quad a.s.$, then:
\begin{equation*}
    \Lambda_{s,\infty}^{(i,j)}(\bs \theta^*, \bs \theta) = \left(\+1_m^\top \bs \Lambda_{s-1,\infty}^{(\cdot,i)}(\bs \theta^*, \bs \theta)\right)K^{(i,j)}_{1,\bs \eta\left(\+1_m^\top \bs \Lambda_{s-1,\infty}(\bs \theta^*, \bs \theta)\right)}(\bs \theta) = 0,
\end{equation*}
which in turn implies either:
\begin{itemize}
    \item $\left(\+1_m^\top \bs \Lambda_{s-1,\infty}^{(\cdot,i)}(\bs \theta^*, \bs \theta)\right)= 0_m$, which implies $\left(\+1_m^\top \bs \Lambda_{s-1,n}^{(\cdot,i)}(\bs \theta)\right) = 0$ a.s.; or
    \item $K^{(i,j)}_{1,\bs \eta\left(\+1_m^\top \bs \Lambda_{s-1,\infty}^{(\cdot,i)}(\bs \theta^*, \bs \theta)\right)}(\bs \theta)  = 0$, which implies $ K^{(i,j)}_{1,\bs \eta\left(\+1_m^\top \bs \Lambda_{s-1,n}(\bs \theta)\right)}(\bs \theta) = 0$ a.s..
\end{itemize}
Together we find:
\begin{equation*}
    \Lambda_{s,n}^{(i,j)}(\bs \theta)= \left(\+1_m^\top \bs \Lambda_{s-1,n}^{(\cdot,i)}(\bs \theta)\right)K^{(i,j)}_{1,\bs \eta\left(\+1_m^\top \bs \Lambda_{s-1,n}(\bs \theta)\right)}(\bs \theta) = 0,
\end{equation*}
completing the intermediary induction on $s$. Now consider:
 \begin{align*}
M^{(i,j)}_{1,\infty}(\bs \theta^*, \bs \theta) &= \sum_{s = 1}^{\tau_{1}}  \Lambda^{(i,j)}_{s,\infty}(\bs \theta^*, \bs \theta) \circ Q^{(i,j)}_s(\bs \theta) = 0,
\end{align*}
then for all $s = \tau_{0}+1, \dots, \tau_1$ either
\begin{itemize}
    \item $ \Lambda^{(i,j)}_{s,\infty}(\bs \theta^*, \bs \theta) =0$, which implies $ \Lambda^{(i,j)}_{s,n}(\bs \theta) = 0$; or
    \item $Q^{(i,j)}_s(\bs \theta) = 0 $,
\end{itemize}
and hence:
 \begin{align*}
M^{(i,j)}_{1,n}(\bs \theta^*, \bs \theta') &= \sum_{s = 1}^{\tau_{1}}  \Lambda^{(i,j)}_{s,n}(\bs \theta)\circ Q^{(i,j)}_s(\bs \theta) = 0.
\end{align*}
This completes the case $r=1$. 

Now assume that for all  $s = \tau_{r-1}+1,\dots, \tau_r$,  $\Lambda_{s,\infty}^{(i,j)}(\bs \theta^*, \bs \theta) = 0$, which implies $\Lambda_{s,\infty}^{(i,j)}(\bs \theta^*, \bs \theta') = 0$ and that $M^{(i,j)}_{r,\infty}(\bs \theta^*, \bs \theta) = 0$, which implies $M^{(i,j)}_{r,\infty}(\bs \theta^*, \bs \theta') = 0 $. Then
\begin{align*}
\bar{ \Lambda}^{(i,j)}_{\tau_r, \infty}(\bs \theta^*, \bs \theta)  &= \left[1- Q^{(i,j)}_{\tau_r}(\bs \theta) \right]\circ  \Lambda^{(i,j)}  _{\tau_r}(\bs \theta^*, \bs \theta) \\
&+ \frac{M^{(i,j)}_{r,\infty}(\bs \theta^*,\bs \theta)}{M^{(i,j)}_{r,\infty}(\bs \theta^*,\bs \theta)} \circ\left[  \Lambda^{(i,j)}_{\tau_r, \infty}(\bs \theta^*, \bs \theta)\circ  Q ^{(i,j)} _{\tau_r}(\bs \theta) \right] = 0,
\end{align*}
and either:
\begin{itemize}
    \item $Q^{(i,j)}_{\tau_r}(\bs \theta)<1$, which implies $  \Lambda^{(i,j)}  _{\tau_r,\infty}(\bs \theta^*, \bs \theta) = 0 \implies  \Lambda^{(i,j)}  _{\tau_r,n}(\bs \theta) \implies \bar   \Lambda^{(i,j)}  _{\tau_r,n}(\bs \theta) = 0$ a.s., so that the first term is $0$; or
    \item $Q^{(i,j)}_{\tau_r}(\bs \theta) =1$ which implies $ \Lambda^{(i,j)}  _{\tau_r,\infty}(\bs \theta^*, \bs \theta) = 0 \implies  \Lambda^{(i,j)}  _{\tau_r,n}(\bs \theta) \implies \bar   \Lambda^{(i,j)}  _{\tau_r,n}(\bs \theta) = 0$,  a.s., so that the right hand term is $0$.
\end{itemize}
This along with using the same reasoning used in the $r=1$ case tells us that given:
\begin{equation*}
    \Lambda_{\tau_r,\infty}^{(i,j)}(\bs \theta^*, \bs \theta) = \left(\+1_m^\top \bs {\bar \Lambda}_{\tau_r-1,\infty}^{(\cdot,i)}(\bs \theta^*, \bs \theta)\right)K^{(i,j)}_{1,\bs \eta\left(\+1_m^\top \bs{\bar \Lambda}_{\tau_r-1,\infty}(\bs \theta^*, \bs \theta)\right)}(\bs \theta) = 0,
\end{equation*}
 which implies
\begin{equation*}
    \Lambda_{\tau_r, n}^{(i,j)}(\bs \theta) = \left(\+1_m^\top \bs {\bar \Lambda}_{\tau_r-1,n}^{(\cdot,i)}(\bs \theta)\right)K^{(i,j)}_{1,\bs \eta\left(\+1_m^\top \bs{\bar \Lambda}_{\tau_r-1,n}(\bs \theta)\right)}(\bs \theta) = 0 \quad a.s.
\end{equation*}

Using this and further using identical inductive reasoning to the $r=1$ case we see that for all $s \in \{\tau_{r}+1,\dots, \tau_{r+1}\}$ we have $\Lambda_{s,\infty}^{(i,j)}(\bs \theta^*, \bs \theta) = 0 \implies \Lambda_{s,n}^{(i,j)}(\bs \theta) = 0$ a.s. and further that $M^{(i,j)}_{r+1,\infty}(\bs \theta^*, \bs \theta) = 0$, which implies $M^{(i,j)}_{r+1,n}(\bs \theta) = 0$ a.s. This completes the inductive proof.
\end{proof}

\begin{lemma}\label{lem:contM}
Let assumptions \ref{as:params}- \ref{as:init} hold. For all $\bs \theta^*  \in \Theta$ and $r\geq 1$, the function $\bs \theta \mapsto \+M_{r,\infty}(\bs \theta^*, \bs \theta)$ is continuous on $\Theta$.
\end{lemma}
\begin{proof}
The arguments are very similar to those in the proof of \ref{lem:contmu}, but making use of lemma \ref{MM}, so we omit them.
\end{proof}

\begin{proof}[Proof of Proposition \ref{Yfilt}]
The proof is by induction on $r$. Consider $r=1$. Note that $n^{-1} \bs {\bar{ \lambda}}_{0 , n}(\bs \theta) := n^{-1}\bs \lambda_{0,n}(\bs \theta) \asls \bs \lambda_{0,\infty}(\bs \theta^*, \bs \theta)$ by assumption \ref{as:init}.
Now let  $ t = 1, \dots, \tau_1 -1$ and assume that $n^{-1}\bar{\bs \lambda}_{t-1,n}(\bs \theta) \asls \bar{\bs \lambda}_{t-1,\infty}(\bs \theta,\bs \theta^*)$. Then:
\begin{align*}
n^{-1}\bs \Lambda_{t,n}(\bs \theta) &= n^{-1}( \bar{\bs \lambda}_{t-1,n}(\bs \theta) \otimes \+1_m)\circ \+K_{t, \bs \eta \left(\bar{\bs\lambda}_{t-1,n}(\bs \theta) \right)}\\
&\asls \left( \bar{\bs  \lambda}_{t, \infty}(\bs \theta,\bs \theta^*) \otimes \+1_m \right) \circ \+ K_{t, \bs \eta  \left( \bar{\bs  \lambda}_{t, \infty}(\bs \theta^*,\bs \theta) \right)} \\
&= \+ \Lambda_{t, \infty}(\bs \theta^*,\bs \theta).
\end{align*}
By the CMT, a further application yields:
\begin{equation*}
n^{-1}\bar{\bs \lambda}_{t,n}(\bs \theta) = n^{-1}(\+1_m^\top \bs \Lambda_{t,n}(\bs \theta))^\top \asls (\+1_m^\top\+ \Lambda_{t, \infty}(\bs \theta^*,\bs \theta))^\top = \bar{\bs \lambda}_{t,\infty}(\bs \theta^*, \bs \theta).
\end{equation*}
Then by induction on $t$ we have that:
\begin{align*}
n^{-1}\bs \Lambda_{t,n}(\bs \theta) \asls \+ \Lambda_{t, \infty}(\bs \theta^*,\bs \theta),
\end{align*}
for all $t = 1, \dots, \tau_1$, this means that:
\begin{align*}
n^{-1}\+ M_{1,n}(\bs \theta) = n^{-1}&\sum_{s = 1}^{\tau_1}\bs \Lambda_{t,n}(\bs \theta)\circ \+Q_s(\bs \theta)  \\
\asls &\sum_{s = 1}^{\tau_1} \bs \Lambda_{t,\infty}(\bs \theta^*, \bs \theta)\circ \+Q_s(\bs \theta) =  \+ M_{1, \infty}(\bs \theta^*, \bs \theta).
\end{align*}
Now for general $r = 1, \dots, R$ assume that $\bar{\bs \lambda}_{\tau_{r-1},n}(\bs \theta) \asls \bar{\bs \lambda}_{\tau_{r-1},\infty}(\bs \theta^*, \bs \theta)$. Using identical reasoning to the $r=1$ case, we find that for all $t = \tau_{r-1}+1, \dots, \tau_r$:
\begin{align*}
n^{-1}\bs \Lambda_{t,n}(\bs \theta) \asls \+ \Lambda_{t, \infty}(\bs \theta^*,\bs \theta),
\end{align*}
which in turn implies by the CMT that:
\begin{align*}
n^{-1}\+ M_{r,n}(\bs \theta) \asls \+ M_{r, \infty}(\bs \theta^*,\bs \theta),
\end{align*}
Writing out the definition of $M_{r, \infty}(\bs \theta^*,\bs \theta^*)$, proposition \ref{prop:Yas} gives $n^{-1}\bar{\+Y}_{{r}} \asls \+  M_{r, \infty}(\bs \theta^*,\bs \theta^*)$. Then by the CMT,
\begin{align*}
n^{-1}\bar{\bs \Lambda}_{\tau_r ,n}(\bs \theta) &= \left(\+1_m \otimes \+1_m - \+Q_{\tau_r}^* \right)\circ n^{-1}\bs \Lambda_{\tau_r,n}(\bs \theta) \\
&+   \left[ n^{-1}\bar{\+Y}_{{r}}\oslash n^{-1}\+ M_{r,n}(\bs \theta)\right]  \circ \left(\left[ n^{-1}\bs \Lambda_{\tau_{r},n}(\bs \theta) \circ \+ Q_{\tau_r}(\bs \theta) \right] \right) \\
&\asls \left(\+1_m \otimes \+1_m - \+Q_{\tau_r}(\bs \theta) \right)\circ \+ \Lambda_{\tau_r, \infty}(\bs \theta^*,\bs \theta) \\
&+  \left[\+  M_{r, \infty}(\bs \theta^*,\bs \theta^*)\oslash \+  M_{r, \infty}(\bs \theta^*,\bs \theta)\right] \circ \left[  \+ \Lambda_{\tau_r, \infty}(\bs \theta^*,\bs \theta)\circ \+ Q_{\tau_r}(\bs \theta) \right]  \\
&=  \bar{\+  \Lambda}_{\tau_r, \infty}(\bs \theta^*,\bs \theta).
\end{align*}
We note here that the left hand side of the above display is almost surely well defined since, by lemma \ref{MY}, for all $n \in \mathbb{N}$ and $i,j\in[m]$ if there is positive probability that $M^{(i,j)}_{r,n}(\bs \theta) = 0$ then $\bar{Y}^{(i,j)}_{{r}} = 0$ $\Ps$-a.s., in which case we invoke the convention $\frac{0}{0} \coloneqq 0$. The right hand side of the limit is well defined since for all $i,j \in [m]$ we have $M^{(i,j)}_{r, \infty}(\bs \theta^*,\bs \theta^*) = 0 \iff  M^{(i,j)}_{r, \infty}(\bs \theta^*,\bs \theta) = 0$ by lemma \ref{MM}, in which case we again invoke the convection $\frac{0}{0}\coloneqq 0$ . A further application of the CMT gives:
\begin{equation*}
n^{-1} \bar{\bs \lambda}_{\tau_r,n}(\bs \theta) = n^{-1}\left( \+1_m^\top \bar{\bs \Lambda}_{\tau_r}(\bs \theta)\right)^\top \asls (\+ 1_m^\top\bar{\+  \Lambda}_{\tau_r, \infty}(\bs \theta^*,\bs \theta))^\top = \bar{\bs \lambda}_{\tau_r, \infty}(\bs \theta^*, \bs \theta).
\end{equation*}
This completes the proof.
\end{proof}


\subsection{Contrast functions}\label{sec:contrast_functions}

\begin{definition}
Let $(\mathcal{H}_n)_{n \geq 1}$ be a sequence of random functions $\mathcal{H}_n:\theta\in\Theta \mapsto \mathcal{H}_n(\theta) \in \mathbb{R}$ where $\Theta$ is a metric space. We say that $(\mathcal{H}_n)_{n \geq 1}$ are stochastically equicontinuous if there exists an event $M$ of probability $1$, such that for all $\varepsilon>0$ and $\omega \in M$, there exists $N(\omega)$ and $\delta>0$ such that $n>N(\omega)$ implies:

\begin{equation*}
\sup_{|\theta_1-\theta_2|<\delta}| \mathcal{H}_n(\omega, \theta_1) - \mathcal{H}_n(\omega, \theta_2)| < \epsilon.
\end{equation*}

\end{definition}

\begin{lemma}\label{stochascoli}
Assume $\Theta$ is a compact metric space and let $(\mathcal{H}_n)_{n \geq 1}$ be a sequence of random functions $\mathcal{H}_n:\theta\in\Theta \rightarrow \mathcal{H}_n(\theta)\in\mathbb{R}$. If there exists a continuous function $\mathcal{H}$ such that for all $\theta \in \Theta$ we have  $| \mathcal{H}_n(\theta) - \mathcal{H}(\theta)| \overset{a.s.}{\rightarrow} 0$, and $(\mathcal{H}_n)_{n \geq 1}$ are stochastically equicontinuous, then:
\begin{equation*}
\sup_{\theta \in \Theta}| \mathcal{H}_n(\theta) - \mathcal{H}(\theta)| \overset{a.s.}{\rightarrow} 0.
\end{equation*}
That is $\mathcal{H}_n(\theta)$ converges to $\mathcal{H}(\theta)$ almost surely as $n \rightarrow \infty$, uniformly in $\theta$.
\end{lemma}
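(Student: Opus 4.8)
The plan is to run a standard $\varepsilon/3$ argument, combining three ingredients: the stochastic equicontinuity of $(\mathcal{H}_n)$, the uniform continuity of the limit $\mathcal{H}$, and pointwise almost sure convergence at a suitable finite collection of points. The care needed is that the modulus $\delta$ furnished by stochastic equicontinuity may depend on $\omega$, so the finite net one wants to use will itself depend on $\omega$; the way I would sidestep this is to fix a single countable dense set in advance and exploit convergence along all of it simultaneously.

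First I would fix a countable dense subset $D = \{\theta_1, \theta_2, \ldots\}$ of $\Theta$, which exists because a compact metric space is separable. For each $j$ let $A_j$ denote the probability-one event on which $|\mathcal{H}_n(\theta_j) - \mathcal{H}(\theta_j)| \to 0$, and set $A := \bigcap_j A_j$, which still has probability one as a countable intersection. Let $M$ be the probability-one equicontinuity event from the definition, and work throughout on $M \cap A$. Since $\mathcal{H}$ is continuous on the compact set $\Theta$ it is uniformly continuous, so for each $\varepsilon>0$ there is $\delta'>0$ with $|\mathcal{H}(\theta) - \mathcal{H}(\theta')| < \varepsilon$ whenever $|\theta - \theta'| < \delta'$.

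Next, fix $\omega \in M \cap A$ and $\varepsilon > 0$. Stochastic equicontinuity supplies $N_1(\omega)$ and $\delta = \delta(\omega) > 0$ such that $\sup_{|\theta - \theta'| < \delta}|\mathcal{H}_n(\omega, \theta) - \mathcal{H}_n(\omega, \theta')| < \varepsilon$ for $n > N_1(\omega)$. Put $\rho := \min(\delta, \delta')$. The balls $\{B(\theta_j, \rho/2) : \theta_j \in D\}$ cover $\Theta$ by density, so compactness extracts a finite subcover with centres $\theta_{j_1}, \ldots, \theta_{j_K} \in D$. Crucially, although this finite set of centres depends on $\omega$, each centre lies in the fixed set $D$, and on $A$ we have convergence at every point of $D$ at once; hence there is $N_2(\omega)$ with $|\mathcal{H}_n(\theta_{j_k}) - \mathcal{H}(\theta_{j_k})| < \varepsilon$ for all $k \leq K$ and all $n > N_2(\omega)$.

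Finally I would conclude by the triangle inequality: for $n > \max(N_1(\omega), N_2(\omega))$ and any $\theta \in \Theta$, choosing $k$ with $|\theta - \theta_{j_k}| < \rho/2 < \min(\delta, \delta')$ gives
$$|\mathcal{H}_n(\theta) - \mathcal{H}(\theta)| \leq |\mathcal{H}_n(\theta) - \mathcal{H}_n(\theta_{j_k})| + |\mathcal{H}_n(\theta_{j_k}) - \mathcal{H}(\theta_{j_k})| + |\mathcal{H}(\theta_{j_k}) - \mathcal{H}(\theta)| < 3\varepsilon.$$
Taking the supremum over $\theta$ shows $\limsup_n \sup_{\theta}|\mathcal{H}_n(\theta) - \mathcal{H}(\theta)| \leq 3\varepsilon$ on $M \cap A$, and letting $\varepsilon \downarrow 0$ yields uniform almost sure convergence since $\mathbb{P}(M \cap A) = 1$. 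The only genuine obstacle is the $\omega$-dependence of $\delta$ flagged above; once the countable dense net absorbs that dependence, the remaining steps are routine.
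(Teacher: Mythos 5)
Your proposal is correct. It is worth noting, however, that the paper does not actually supply an argument for this lemma: its ``proof'' is a one-line citation to Andrews (1992), whose generic uniform convergence theorem this statement paraphrases. So what you have done is fill in the content that the paper delegates to the reference, and you have done it by essentially the same route used in that literature: a finite-cover/triangle-inequality ($\varepsilon/3$) argument. Your key manoeuvre --- fixing a countable dense set $D$ in advance, so that the probability-one event for pointwise convergence is chosen \emph{before} the $\omega$-dependent equicontinuity modulus $\delta(\omega)$ selects a finite subcover whose centres lie in $D$ --- is exactly the right way to handle the quantifier subtlety you flagged, and without it the naive argument (choosing a net after seeing $\omega$) would indeed be circular. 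The one point that neither your proof nor the lemma statement addresses, but which the cited reference is partly there to absorb, is measurability of $\sup_{\theta\in\Theta}|\mathcal{H}_n(\theta)-\mathcal{H}(\theta)|$: for random functions with no assumed sample-path regularity this supremum need not be a random variable, and the standard fix is to interpret the conclusion via outer probability or to note, as your argument in fact delivers, that there is a single probability-one event on which the supremum tends to zero pointwise in $\omega$. With that reading, your proof is complete; its advantage over the paper's treatment is self-containedness, while the citation buys brevity and a framework in which the measurability caveat is dealt with formally.
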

\begin{proof}
See \cite{andrews1992generic}.
\end{proof}

\subsubsection{Case (I)}
We have:
\begin{equation}\label{contfin}
\begin{aligned}
    n^{-1} \ell_n(\bs \theta) - n^{-1}\ell_n(\bs \theta^*) = \sum_{t=1}^T\Bigg\{&\frac{\+y_{t}^\top}{n} \log\big(\bs \mu_{t,n}(\bs \theta) \oslash \bs \mu_{t,n}(\bs \theta^*) \big) \\
&- n^{-1} \+1_m^\top \left[\bs \mu_{t,n}( \bs \theta) - \bs \mu_{t,n}(\bs \theta^*)    \right]\Bigg\}.
\end{aligned}
\end{equation}
The following proposition details the limit of \eqref{contfin}.
\begin{proposition}\label{contrastfnx}
Let assumptions \ref{as:compact}-\ref{as:init} hold.  Then:
\begin{equation}\label{eq:case_I_contrast}
     n^{-1}  \ell_n(\bs \theta) - n^{-1}\ell_n(\bs \theta^*) \asls -\sum_{t=1}^T \mathrm{KL}\left(\mathrm{Pois}\left[\bs \mu_{t, \infty}(\bs \theta^*, \bs \theta^*)\right] \| \mathrm{Pois}\left[\bs \mu_{t, \infty}(\bs \theta^*,\bs \theta)\right]\right),
\end{equation}
 uniformly in $\bs \theta$.
\end{proposition}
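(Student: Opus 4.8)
The plan is to apply the stochastic Arzelà–Ascoli result, Lemma \ref{stochascoli}, to the random functions $\mathcal{H}_n(\bs\theta) := n^{-1}\ell_n(\bs\theta) - n^{-1}\ell_n(\bs\theta^*)$ on the compact set $\Theta$ (assumption \ref{as:compact}). This requires two ingredients: pointwise $\mathbb{P}^{\bs\theta^*}$-a.s. convergence of $\mathcal{H}_n(\bs\theta)$ to the candidate limit $\mathcal{H}(\bs\theta) := -\sum_{t=1}^T\mathrm{KL}(\mathrm{Poi}[\bs\mu_{t,\infty}(\bs\theta^*,\bs\theta^*)]\,\|\,\mathrm{Poi}[\bs\mu_{t,\infty}(\bs\theta^*,\bs\theta)])$, together with continuity of $\mathcal{H}$; and stochastic equicontinuity of $(\mathcal{H}_n)_{n\geq1}$. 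Working from the representation \eqref{contfin}, I first note that the $\bs\theta$-dependence of each summand enters only through the rescaled intensity $n^{-1}\bs\mu_{t,n}(\bs\theta)$, since $n^{-1}\+y_t$ and $n^{-1}\bs\mu_{t,n}(\bs\theta^*)$ do not depend on $\bs\theta$; the unbounded $\log n$ contributions have already cancelled in the difference, leaving only scale-invariant ratios.

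For the pointwise limit I would combine Proposition \ref{vecfilt}, which gives $n^{-1}\bs\mu_{t,n}(\bs\theta)\asls\bs\mu_{t,\infty}(\bs\theta^*,\bs\theta)$, with the asymptotic filtering accuracy recorded in Remark \ref{rem:lam_mu}, namely $n^{-1}\+y_t\asls\bs\mu_{t,\infty}(\bs\theta^*,\bs\theta^*)$, which coincides with the limit of $n^{-1}\bs\mu_{t,n}(\bs\theta^*)$. The continuous mapping theorem then sends the $t$-th summand of \eqref{contfin} to $\bs\mu_{t,\infty}(\bs\theta^*,\bs\theta^*)^\top\log(\bs\mu_{t,\infty}(\bs\theta^*,\bs\theta)\oslash\bs\mu_{t,\infty}(\bs\theta^*,\bs\theta^*)) - \+1_m^\top(\bs\mu_{t,\infty}(\bs\theta^*,\bs\theta)-\bs\mu_{t,\infty}(\bs\theta^*,\bs\theta^*))$, which, via the identity $\mathrm{KL}(\mathrm{Poi}(\bs\lambda_1)\,\|\,\mathrm{Poi}(\bs\lambda_2)) = \bs\lambda_1^\top\log(\bs\lambda_1\oslash\bs\lambda_2) - \+1_m^\top\bs\lambda_1 + \+1_m^\top\bs\lambda_2$, is exactly $-\mathrm{KL}(\mathrm{Poi}[\bs\mu_{t,\infty}(\bs\theta^*,\bs\theta^*)]\,\|\,\mathrm{Poi}[\bs\mu_{t,\infty}(\bs\theta^*,\bs\theta)])$. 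The delicate point is well-definedness at vanishing components, which the zero-bookkeeping lemmas settle: for any index $i$ with $\mu_{t,\infty}^{(i)}(\bs\theta^*,\bs\theta^*)=0$, Lemma \ref{mumulemma} forces $\mu_{t,\infty}^{(i)}(\bs\theta^*,\bs\theta)=0$, Lemma \ref{mumunlemma} then gives $\mu_{t,n}^{(i)}(\bs\theta)=0$ a.s., and $n^{-1}y_t^{(i)}\to0$; together with Lemma \ref{muylemma} (so $y_t^{(i)}=0$ a.s. wherever $\mu_{t,n}^{(i)}(\bs\theta)=0$) this makes every ratio and logarithm in \eqref{contfin} and in the limit obey the conventions $0\log0:=0$ and $0/0:=0$ consistently, so the term-by-term matching with the KL formula is valid.

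Continuity of $\mathcal{H}$ follows because assumption \ref{as:params} makes the model ingredients continuous in $\bs\theta$, the recursions \eqref{eq:lam_0_inf_defn}--\eqref{eq:lam_bar_t_inf_defn} defining $\bs\mu_{t,\infty}(\bs\theta^*,\cdot)$ are finite compositions of continuous maps, and Lemma \ref{mumulemma} shows $\mathrm{supp}(\bs\mu_{t,\infty}(\bs\theta^*,\bs\theta))$ is independent of $\bs\theta$, so no index crosses into or out of the support and each KL summand is continuous across all of $\Theta$. The main obstacle is stochastic equicontinuity. Here I would argue inductively in $t$ that the rescaled filtering maps $\bs\theta\mapsto n^{-1}\bs\lambda_{t,n}(\bs\theta),\,n^{-1}\bar{\bs\lambda}_{t,n}(\bs\theta),\,n^{-1}\bs\mu_{t,n}(\bs\theta)$ are stochastically equicontinuous on $\Theta$: on an a.s. event, for $n$ large these sequences are uniformly bounded (by Proposition \ref{vecfilt} and $n^{-1}\+y_t\to\bs\mu_{t,\infty}(\bs\theta^*,\bs\theta^*)$), and the one-step update maps of algorithm \ref{alg:x} are Lipschitz in $\bs\theta$ with constants not depending on $n$, using the continuity and Lipschitz content of assumptions \ref{as:params}--\ref{as:1} and the fixed-support property to keep the division by $\bs\mu_{t,n}(\bs\theta)$ and the logarithm bounded away from their singularities. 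Composing these equicontinuous ingredients with the element-wise $\log$ and the linear maps, which are Lipschitz on the relevant compact region bounded away from zero, then yields stochastic equicontinuity of $\mathcal{H}_n$. Lemma \ref{stochascoli} finally upgrades pointwise a.s. convergence to uniform a.s. convergence, establishing \eqref{eq:case_I_contrast}.
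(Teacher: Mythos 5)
Your overall skeleton --- pointwise a.s. convergence plus stochastic equicontinuity, then Lemma \ref{stochascoli} on the compact $\Theta$ --- is exactly the paper's, and your pointwise-limit step is essentially identical to the paper's proof: Proposition \ref{vecfilt} and Remark \ref{rem:lam_mu} feed the continuous mapping theorem, the Poisson KL identity produces the limit, and Lemmas \ref{muylemma}, \ref{mumulemma} and \ref{mumunlemma} police the vanishing components under the conventions $0\log 0 \coloneqq 0$ and $0/0\coloneqq 0$. Up to that point the proposal is sound.

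The gap is in your stochastic equicontinuity step. You propose to propagate equicontinuity inductively through the recursion of algorithm \ref{alg:x}, asserting that the one-step update maps are ``Lipschitz in $\bs\theta$ with constants not depending on $n$'' on the strength of assumptions \ref{as:params}--\ref{as:1}. No such Lipschitz structure is available: assumption \ref{as:params} gives only \emph{continuity} in $\bs\theta$ of $\+K_{t,\bs\eta}(\bs\theta)$, $\+q_t(\bs\theta)$, $\+G_t(\bs\theta)$, $\bs\delta_t(\bs\theta)$, $\bs\kappa_{t,n}(\bs\theta)$, $\bs\alpha_{t,n}(\bs\theta)$, and the Lipschitz bound in assumption \ref{as:1} is in the population-structure argument $\bs\eta$ (for each fixed $\bs\theta$, with a $\bs\theta$-dependent constant), not in $\bs\theta$. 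Nor can you repair this by replacing Lipschitz with uniform continuity via compactness, because the one-step maps are genuinely $n$-dependent: they contain $n^{-1}\bs\alpha_{t,n}(\bs\theta)$ and $n^{-1}\bs\kappa_{t,n}(\bs\theta)$, whose convergence to $\bs\alpha_{t,\infty}(\bs\theta)$, $\bs\kappa_{t,\infty}(\bs\theta)$ under assumption \ref{as:params} is pointwise in $\bs\theta$ with $\bs\theta$-dependent constants $a_1,a_2,\gamma_1,\gamma_2$; pointwise convergence of continuous functions to a continuous limit does not give equicontinuity of the family. The same uniformity issue undercuts your claim that Proposition \ref{vecfilt} yields boundedness of $n^{-1}\bs\lambda_{t,n}(\bs\theta)$, $n^{-1}\bs\mu_{t,n}(\bs\theta)$ uniformly over $\Theta$ for large $n$: that proposition is pointwise in $\bs\theta$. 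The paper avoids all of this by never differencing the algorithm in $\bs\theta$: it bounds $\left|n^{-1}\bs\mu_{t,n}(\bs\theta_1)^\top\bs f - n^{-1}\bs\mu_{t,n}(\bs\theta_2)^\top\bs f\right|$ by the two a.s.-vanishing discrepancies $\left|n^{-1}\bs\mu_{t,n}(\bs\theta_i)^\top\bs f - \bs\mu_{t,\infty}(\bs\theta^*,\bs\theta_i)^\top\bs f\right|$, $i=1,2$, plus the purely deterministic term $\left|\bs\mu_{t,\infty}(\bs\theta^*,\bs\theta_1)^\top\bs f - \bs\mu_{t,\infty}(\bs\theta^*,\bs\theta_2)^\top\bs f\right|$, handled by continuity of the limit alone; the $\log$ factors are then controlled using the convergence of $n^{-1}\+y_t$, continuity of $\log$ away from zero, and the same zero-bookkeeping lemmas. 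Substituting that three-term triangle-inequality device for your recursion-based Lipschitz argument (no Lipschitz constants are needed anywhere) turns your proposal into the paper's proof.
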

\begin{proof}
For $n \in \mathbb{N}$,  we define a random function $\mathcal{C}_n:\Theta \rightarrow \mathbb{R}$, as $\mathcal{C}_{n}(\bs \theta) \coloneqq \sum_{t=1}^T \mathcal{C}_{t,n}(\bs \theta)$  where:
\begin{align*}
\mathcal{C}_{t,n}(\bs \theta) &\coloneqq \frac{\+y_{t}^\top}{n} \log\big(\bs \mu_{t,n}( \bs \theta) \oslash \bs \mu_{t,n}( \bs \theta^*) \big)
- n^{-1} \+1_m^\top \left[\bs \mu_{t,n}( \bs \theta) - \bs \mu_{t,n}( \bs \theta^*)    \right] \\
&= \sum_{i=1}^m \frac{y_t^{(i)}}{n}\log \frac{\mu_{t, n}^{(i)}(\bs \theta)}{\mu_{t, n}^{(i)}(\bs \theta^*)} - n^{-1}\left[\mu_{t, n}^{(i)}(\bs \theta) - \mu_{t, n}^{(i)}(\bs \theta^*)\right],
\end{align*}
with the convention $0\log 0 \coloneqq 0$. To see that $\mathcal{C}_{t,n}(\bs \theta)$ is almost surely well defined, consider the following cases for each $i \in [m]$. If both $\mu_{t,n}^{(i)}(\bs \theta) > 0$ and $\mu_{t,n}^{(i)}(\bs \theta^*) > 0$, $\Ps$-a.s., then the $\log$ of the ratio of these terms is almost surely well defined. If $\mu_{t,n}^{(i)}(\bs \theta) = 0$ or $\mu_{t,n}^{(i)}(\bs \theta^*) = 0$ with positive probability, then  $y^{(i)}_t = 0$ $\Ps$-a.s. by lemma \ref{muylemma} and we invoke the convention $0\log 0 \coloneqq 0$.

We shall show that for $t = 1,\dots, T$,
\begin{equation*}
    \mathcal{C}_{t,n}(\bs \theta) \asls - \mathrm{KL}(\mathrm{Pois}[\bs \mu_{t, \infty}(\bs \theta^*, \bs \theta^*)] \| \mathrm{Pois}[\bs \mu_{t, \infty}(\bs \theta^*,\bs \theta)]) , \text{ uniformly in } \bs \theta.
\end{equation*}
The proof consists of showing pointwise convergence and then stochastic equicontinuity of $\mathcal{C}_{t,n}(\bs \theta)$. Uniform almost sure convergence then follows by lemma \ref{stochascoli}.

Fix $t \in \{1, \dots, T\}$ and note that $\frac{\+y_t}{n}\asls \bs \mu_{t,\infty}(\bs \theta^*, \bs \theta^*)$ by proposition \ref{prop:yas} (see remark \ref{rem:lam_mu}), and by proposition \ref{vecfilt}, 
$n^{-1}\bs\mu_{t,n}(\bs \theta) \asls \bs \mu_{t}(\bs \theta^*, \bs \theta)
$. We claim that by the CMT:
\begin{align}
\mathcal{C}_{t,n}(\bs \theta) &= \frac{\+y_{t}^\top}{n} \log\left(\bs \mu_{t,n}(\bs \theta)\oslash \bs \mu_{t,n}(\bs \theta^*)\right) - n^{-1}\left[\bs \mu_{t,n}(\bs \theta) - \bs \mu_{t,n}(\bs \theta^*) \right]^\top \+1_m \\ \label{fincon}
&= \sum_{i=1}^m \frac{y_t^{(i)}}{n}\log \frac{\mu_{t, n}^{(i)}(\bs \theta)}{\mu_{t, n}^{(i)}(\bs \theta^*)} - n^{-1}\left[\mu_{t, n}^{(i)}(\bs \theta) - \mu_{t, n}^{(i)}(\bs \theta^*)\right] \\
\label{consum}
&\asls \sum_{i=1}^m \mu_{t,\infty}^{(i)}(\bs \theta^*, \bs \theta^*)\log \frac{\mu_{t, \infty}^{(i)}(\bs \theta^*, \bs \theta)}{\mu_{t, \infty}^{(i)}(\bs \theta^*, \bs \theta^*)} - \left[\mu_{t, \infty}^{(i)}(\bs \theta^*, \bs \theta) - \mu_{t, \infty}^{(i)}(\bs \theta^*, \bs \theta^*)\right] \\
&=-\mathrm{KL}\left(\text{Pois}\left[\bs \mu_{t, \infty}(\bs \theta^*, \bs \theta^*)]\right) \| \text{Pois}\left[\bs \mu_{t, \infty}(\bs \theta^*,\bs \theta)\right]\right).
\end{align}
To see that the limit is well defined consider the cases for each $i \in [m]$, either:
\begin{itemize}
    \item $ \mu_{t,\infty}^{(i)}(\bs \theta^*,\bs \theta^*)>0$ and $ \mu_{t,\infty}^{(i)}(\bs \theta^*,\bs \theta)>0.$ In this case all functions in the sequence $\{\mathcal{C}_{t,n}(\bs \theta)\}_{n\geq1}$ and its limit are well defined; or
    \item $\mu_{t,\infty}^{(i)}(\bs \theta^*,\bs \theta)>0$ and $\mu_{t,\infty}^{(i)}(\bs \theta^*,\bs \theta^*) = 0$, or $\mu_{t,\infty}^{(i)}(\bs \theta^*,\bs \theta)= 0$ and $ \mu_{t,\infty}^{(i)}(\bs \theta^*,\bs \theta^*)>0$. This case is prohibited by lemma \ref{mumulemma}; or
    \item $ \mu_{t,\infty}^{(i)}(\bs \theta^*,\bs \theta) = \mu_{t,\infty}^{(i)}(\bs \theta^*,\bs \theta^*) = 0$. In this case, by lemmas \ref{muylemma} and \ref{mumunlemma},  we have that for all $n \in \mathbb{N}$  $\mu_{t,n}^{(i)}(\bs \theta^*) = 0$, $\mu_{t,n}^{(i)}(\bs \theta) = 0$, and  $y^{(i)}_t = 0 \quad \mathbb{P}_n^{\bs \theta^*}a.s.$, so that the $i$th term disappears from \eqref{fincon}  and \eqref{consum} with probability $1$ by the convention $0\log 0 \coloneqq 0$.
\end{itemize}
Hence we have shown the convergence of:
\begin{equation*}
    \mathcal{C}_{t,n}(\bs \theta) \asls - \mathrm{KL}(\bs \mu_{t, \infty}(\bs \theta^*, \bs \theta^*) \| \bs \mu_{t, \infty}(\bs \theta^*,\bs \theta)),
\end{equation*}
point-wise in $\bs \theta \in \Theta$.

Next we show that $(\mathcal{C}_{t,n})_{n\geq1}$ are stochastically equicontinuous. Let $\bs f\in \mathbb{R}^m$ and $E \subset \Omega$ such that $\mathbb{P}^{\bs\theta^*}(E)=1$.  Let  $\bs \theta_1, \bs \theta_2 \in \Theta$, $\omega \in E$, and $\varepsilon>0$. Firstly we will show the stochastic equicontinuity of $n^{-1}\bs \mu_{t,n}(\bs \theta)^\top\bs f$ for any $\bs f\in \mathbb{R}^m$. Let $\varepsilon_0>0$ and write by the triangle inequality:
\begin{align}
\left|n^{-1}\bs \mu_{t,n}(\bs \theta_1)^\top\bs f - n^{-1}\bs \mu_{t,n}(\bs \theta_2)^\top\bs f \right| &\leq \left|n^{-1}\bs \mu_{t,n}(\bs \theta_1)^\top\bs f - \bs \mu_{t, \infty}(\bs \theta^*,\bs \theta_1)^\top\bs f \right| \\
&+ \left|n^{-1}\bs \mu_{t,n}(\bs \theta_2)^\top\bs f - \bs \mu_{t, \infty}(\bs \theta^*,\bs \theta_2)^\top\bs f \right| \\
&+ \left|\bs \mu_{t, \infty}(\bs \theta^*, \bs \theta_1)^\top\bs f - \bs \mu_{t, \infty}(\bs \theta^*, \bs \theta_2)^\top\bs f \right|.
\end{align}
 There exists $N(\omega)<\infty$ such that for $n> N(\omega)$ the first two terms are bounded by $\varepsilon_0/3$ by proposition \ref{vecfilt}. Furthermore, since $\bs \theta \mapsto \bs \mu_{t, \infty}(\bs \theta^*, \bs \theta)$ is continuous by lemma \ref{lem:contmu} there exists a $\delta_0>0$ such that:

$$\left\|\bs \theta_1 - \bs \theta_2 \right\|_\infty<\delta_0 \implies \left|\bs \mu_{t, \infty}(\bs \theta^*,\bs \theta_1)^\top\bs f - \bs \mu_{t, \infty}(\bs \theta^*,\bs \theta_2)^\top\bs f \right|< \varepsilon_0/3.$$
Hence we have shown stochastic equicontinuity of $(n^{-1}\bs \mu_{t,n}(\bs \theta)^\top\bs f)_{n\geq1}$. Now, consider $\mathcal{C}_{t,n}$:

\begin{align}
\left| \mathcal{C}_{t,n}(\bs \theta_1) - \mathcal{C}_{t,n}(\bs \theta_2) \right| &\leq \left|\sum_{i=1}^mn^{-1}y_{t}^{(i)} \log \frac{ \mu_{t,n}^{(i)}(\bs \theta_1)}{ \mu_{t,n}^{(i)}(\bs \theta_2)}\right| \\
\label{part2eq}
&+ \left|n^{-1}\left[\bs \mu_{t,n}(\bs \theta_1)^\top - \bs \mu_{t,n}(\bs \theta_2)^\top \right] \+1_m \right|.
\end{align}
By what has been proven already we can choose $\delta_1$ and $N_1(\omega)$ to bound \eqref{part2eq} by $\varepsilon/2$. Let $\varepsilon_2>0$, by proposition \ref{prop:yas} there exists $N_2(\omega)$ such that for $n> N_2(\omega)$:

\begin{equation}\label{eq:sum_ylog}
\left|\sum_{i=1}^mn^{-1}y_{t}^{(i)} \log \frac{ \mu_{t,n}^{(i)}(\bs \theta_1)}{ \mu_{t,n}^{(i)}(\bs \theta_2)}\right| < \sum_{i=1}^m |\mu_{t, \infty}^{(i)}(\bs \theta^*,\bs \theta^*) + \varepsilon_2| \left|\log \frac{n^{-1} \mu_{t,n}^{(i)}(\bs \theta_1)}{n^{-1} \mu_{t,n}^{(i)}( \bs \theta_2)}\right|
\end{equation}
Furthermore, for each $i \in [m]$ either:
\begin{itemize}
\item  there is positive probability that either $\mu^{(i)}_{t,n}(\bs \theta_1)=0$ or ${\mu^{(i)}_{t,n}(\bs \theta_2)=0}$, then the $i$th term of the sum on the l.h.s. of \eqref{eq:sum_ylog} disappears since $y_t^{(i)} = 0$ with probability $1$ by lemma \ref{muylemma}, and we invoke the convention $0\log 0 \coloneqq 0$; or
\item ${\mu^{(i)}_{t,n}(\bs \theta_1)> 0}$ and $ \mu^{(i)}_{t,n}(\bs \theta_2)>0$ almost surely. Then by continuity of $\log$ on $\mathbb{R}_{>0}$ there exists  a $\delta_3^{(i)}>0$ such that for  $| n^{-1}  \mu^{(i)}_{t,n}(\bs \theta_1) - n^{-1} \mu^{(i)}_{t,n}(\bs \theta_2)|<\delta_3^{(i)}$:
\begin{align*}
\left|\log \frac{n^{-1} \mu^{(i)}_{t,n}(\bs \theta_1)}{n^{-1} \mu^{(i)}_{t,n}(\bs \theta_2)}\right| &=  \left|\log n^{-1} \mu^{(i)}_{t,n}(\bs \theta_1) - \log{n^{-1} \mu_{t,n}^{(i)}(\bs \theta_2)}\right| \\
&< \frac{\varepsilon}{2m|\mu_{t, \infty}^{(i)}(\bs \theta^*,\bs \theta^*) + \varepsilon_2|}.
\end{align*}
\end{itemize}
By stochastic equicontinuity of $(n^{-1} \bs \mu_{t,n}^\top \bs f)_{n\geq1}$ there exists $N_3(\omega)$ and $\delta_2$ such that for \newline $n> N_3(\omega)$ and $\|\bs \theta_1 - \bs \theta_2\|_\infty< \delta_2$ we have that $\|n^{-1}  \bs \mu_{t,n}(\bs \theta_1) - n^{-1}  \bs \mu_{t,n}(\bs \theta_2)\|_\infty<\min_i \delta_3^{(i)}$ so that:
\begin{align*}
\left|\sum_{i=1}^mn^{-1}y_{t}^{(i)} \log \frac{ \mu_{t,n}^{(i)}(\bs \theta_1)}{ \mu_{t,n}^{(i)}(\bs \theta_2)}\right| &< \sum_{i=1}^m |\mu_{t, \infty}^{*(i)}(\bs \theta^*,\bs \theta^*) + \varepsilon_2|\frac{\varepsilon}{2m|\mu_{t, \infty}^{*(i)}(\bs \theta^*,\bs \theta^*) + \varepsilon_2|}\\
&=\varepsilon/2.
\end{align*}
Hence choosing $\delta = \min(\delta_1, \delta_2)$ and $N(\omega) = \max(N_1(\omega),N_2(\omega),N_3(\omega))$ we have that for \newline ${\|\bs \theta_1 - \bs \theta_2\|_\infty< \delta}$ and $n> N(\omega)$:

\begin{equation*}
\left| \mathcal{C}_{t,n}(\bs \theta_1) - \mathcal{C}_{t,n}(\bs \theta_2) \right| < \varepsilon/2 + \varepsilon/2 =  \varepsilon.
\end{equation*}
Hence we have established the stochastic equicontinuity of $\mathcal{C}_{t,n}$. This along with the already proven pointwise convergence establishes uniform almost sure convergence by lemma \ref{stochascoli} and completes the proof.
\end{proof}

%

\subsubsection{Case (II)}

 We have:
\begin{align*}
    n^{-1}\mathcal{L}_n(\bs \theta) - n^{-1}\mathcal{L}_n(\bs \theta^*) = \sum_{r=1}^R\Big\{ &\+1_m^\top \left[n^{-1}\bar{\+ Y}_{r} \circ \log\left({\+M_{r,n}(\bs \theta)}\oslash{ \+M_{r,n}(\bs \theta^*)}\right)\right]\+1_m \\
&+ n^{-1}\+1_m^\top\left[\+M_{r,n}(\bs \theta) - \+M_{r,n}(\bs \theta^*)\right]\+1_m\Big\}.
\end{align*}

\begin{proposition}\label{Zcontfn}
Let assumptions \ref{as:compact}-\ref{as:init} hold. Then
\begin{equation}\label{eq:case_II_contrast}
    n^{-1} \mathcal{L}_{n}(\bs \theta) - n^{-1}\mathcal{L}_{n}(\bs \theta^*) \asls -\sum_{r=1}^R \mathrm{KL}\left(\mathrm{Pois}\left[\+ M_{r, \infty}(\bs \theta^*,\bs \theta^*) \right]\| \mathrm{Pois}\left[\+ M_{r, \infty}(\bs \theta^*,\bs \theta)\right]\right) ,
\end{equation}
 uniformly in $\bs \theta$.
\end{proposition}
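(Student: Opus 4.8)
The plan is to follow the proof of proposition \ref{contrastfnx} essentially verbatim, replacing the vector quantities of case (I) by their matrix analogues from case (II). For each $r=1,\dots,R$ I would define the random function $\mathcal{D}_{r,n}:\Theta\to\mathbb{R}$ by
\[
\mathcal{D}_{r,n}(\bs\theta) \coloneqq \+1_m^\top\!\left[n^{-1}\bar{\+Y}_r\circ\log\!\left(\+M_{r,n}(\bs\theta)\oslash\+M_{r,n}(\bs\theta^*)\right)\right]\+1_m + n^{-1}\+1_m^\top\!\left[\+M_{r,n}(\bs\theta)-\+M_{r,n}(\bs\theta^*)\right]\+1_m,
\]
with the convention $0\log0\coloneqq0$, so that $n^{-1}\mathcal{L}_n(\bs\theta)-n^{-1}\mathcal{L}_n(\bs\theta^*)=\sum_{r=1}^R\mathcal{D}_{r,n}(\bs\theta)$ as in the display preceding the proposition. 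First I would check that each $\mathcal{D}_{r,n}(\bs\theta)$ is $\Ps$-a.s.\ well defined: for each entry $(i,j)$, if $M_{r,n}^{(i,j)}(\bs\theta)=0$ or $M_{r,n}^{(i,j)}(\bs\theta^*)=0$ with positive probability, then lemma \ref{MY} forces $\bar Y_r^{(i,j)}=0$ a.s.\ and that term vanishes under the convention.

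The first substantive step is pointwise convergence. By remark \ref{rem:Lam_M} we have $n^{-1}\bar{\+Y}_r\asls\+M_{r,\infty}(\bs\theta^*,\bs\theta^*)$, and by proposition \ref{Yfilt} $n^{-1}\+M_{r,n}(\bs\theta)\asls\+M_{r,\infty}(\bs\theta^*,\bs\theta)$. Applying the continuous mapping theorem entrywise then gives
\[
\mathcal{D}_{r,n}(\bs\theta)\asls\sum_{i,j=1}^m\left\{M_{r,\infty}^{(i,j)}(\bs\theta^*,\bs\theta^*)\log\frac{M_{r,\infty}^{(i,j)}(\bs\theta^*,\bs\theta)}{M_{r,\infty}^{(i,j)}(\bs\theta^*,\bs\theta^*)}-\left[M_{r,\infty}^{(i,j)}(\bs\theta^*,\bs\theta)-M_{r,\infty}^{(i,j)}(\bs\theta^*,\bs\theta^*)\right]\right\},
\]
which is exactly $-\mathrm{KL}(\mathrm{Poi}[\+M_{r,\infty}(\bs\theta^*,\bs\theta^*)]\,\|\,\mathrm{Poi}[\+M_{r,\infty}(\bs\theta^*,\bs\theta)])$. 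Well-definedness of the limit is dispatched precisely as in case (I): lemma \ref{MM} rules out the case where exactly one of $M_{r,\infty}^{(i,j)}(\bs\theta^*,\bs\theta)$, $M_{r,\infty}^{(i,j)}(\bs\theta^*,\bs\theta^*)$ is zero, and when both vanish, lemmas \ref{MY} and \ref{MMn} show that $M_{r,n}^{(i,j)}(\bs\theta)$, $M_{r,n}^{(i,j)}(\bs\theta^*)$ and $\bar Y_r^{(i,j)}$ are all zero $\Ps$-a.s., so the $(i,j)$ term drops out of both sides.

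The second step is stochastic equicontinuity of $(\mathcal{D}_{r,n})_{n\geq1}$. As in case (I), I would first establish stochastic equicontinuity of each scaled entry $n^{-1}M_{r,n}^{(i,j)}(\bs\theta)$: splitting the difference across two $\bs\theta$-values into the two finite-$n$-to-limit gaps plus a limit-difference term, the first two are controlled for large $n$ by proposition \ref{Yfilt}, and the third by continuity of $\bs\theta\mapsto\+M_{r,\infty}(\bs\theta^*,\bs\theta)$, which holds because $\+M_{r,\infty}$ is a finite composition of maps continuous in $\bs\theta$ under assumptions \ref{as:params} and \ref{as:1}. The additive term $n^{-1}\+1_m^\top[\+M_{r,n}(\bs\theta_1)-\+M_{r,n}(\bs\theta_2)]\+1_m$ is then immediately equicontinuous, while for the logarithmic term I would bound $n^{-1}\bar Y_r^{(i,j)}$ above by its a.s.\ limit plus a slack and invoke continuity of $\log$ on $\mathbb{R}_{>0}$, noting again that entries with $M_{r,n}^{(i,j)}=0$ on a positive-probability event contribute nothing by lemma \ref{MY}. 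Combining pointwise convergence with stochastic equicontinuity, lemma \ref{stochascoli} upgrades the convergence to uniform in $\bs\theta$, and summing over $r=1,\dots,R$ yields \eqref{eq:case_II_contrast}. The only genuine obstacle is the careful handling of the degenerate zero-support entries, but this is entirely encapsulated in the matrix analogues \ref{MY}, \ref{MM}, \ref{MMn} of the case (I) lemmas, so the argument is a direct transcription of the proof of proposition \ref{contrastfnx}.
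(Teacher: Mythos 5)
Your proposal is correct and follows essentially the same route as the paper's own proof, which is likewise a direct transcription of the case (I) argument of proposition \ref{contrastfnx}: the same decomposition into the terms $\mathcal{D}_{r,n}(\bs\theta)$, pointwise convergence via proposition \ref{Yfilt}, remark \ref{rem:Lam_M} and the continuous mapping theorem, degenerate zero entries handled by the matrix lemmas, and stochastic equicontinuity combined with lemma \ref{stochascoli} to upgrade to uniform convergence. If anything, your citations of lemmas \ref{MY}, \ref{MM} and \ref{MMn} for the zero-support cases are the appropriate matrix analogues, whereas the paper's text at the corresponding points carries over the case (I) references \ref{muylemma}, \ref{mumulemma} and \ref{mumunlemma} verbatim.
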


\begin{proof}
The details are similar to lemma \ref{contrastfnx}. For $n \in \mathbb{N}$ define the sequence of random functions $(\mathcal{D}_n( \bs \theta))_{n\geq1}$, $ \mathcal{D}_n( \bs \theta)\coloneqq \sum_{r=1}^R \mathcal{D}_{r,n}(\bs \theta)$, where:
\begin{equation*}
\begin{aligned}
\mathcal{D}_{r,n}(\bs \theta) &\coloneqq \+1_m^\top \left[n^{-1}\bar{\+ Y}_{r} \circ \log\left({\+M_{r,n}(\bs \theta)}\oslash{ \+M_{r,n}(\bs \theta^*)}\right)\right]\+1_m \\
&+ n^{-1}\+1_m^\top\left[\+M_{r,n}(\bs \theta) - \+M_{r,n}(\bs \theta^*)\right]\+1_m \\
 = &\sum_{i=1}^m\sum_{j=1}^m \bar Y_r^{(i,j)}\log \frac{M^{(i,j)}_{r, n}(\bs \theta)}{M^{(i,j)}_{r, n}(\bs \theta^*)} + \left[M^{(i,j)}_{r, n}(\bs \theta) - M^{(i,j)}_{r, n}(\bs \theta^*)\right]
\end{aligned}
\end{equation*}
With the convention $0\log 0 \coloneqq 0$.
To see that this mapping is almost surely well defined, consider the following cases for each $(i,j) \in [m]^2$. If both $M_{r,n}^{(i,j)}(\bs \theta) > 0$, or $M_{r,n}^{(i,j)}(\bs \theta^*) > 0$ $\Ps\text{-}a.s.$, then the $\log$ of each of these terms is almost surely well defined. If either $M_{r,n}^{(i,j)}(\bs \theta) = 0$, or $M_{r,n}^{(i,j)}(\bs \theta^*) = 0$ with positive probability, then  $\bar Y^{(i,j)}_r = 0 $ $\Ps\text{-}a.s.$ by lemma \ref{MY} and we invoke the convention $0\log 0 \coloneqq 0$.

It is enough to show that for each $r \in \{1, \dots, R \}$

$$\mathcal{D}_{r,n}(\bs \theta) \asls - \mathrm{KL}\left(\text{Pois}\left[\+ M_{r, \infty}(\bs \theta^*,\bs \theta^*) \right]\| \text{Pois}\left[\+ M_{r, \infty}(\bs \theta^*,\bs \theta)\right]\right), \text{ uniformly in } \bs \theta.
$$
We show pointwise almost sure convergence and then stochastic equicontinuity. Fix $r \in \{1, \dots, R\}$ and note that by proposition \ref{Yfilt}  $n^{-1}\+M_{r,n}(\bs \theta) \asls \+M_{r,\infty}(\bs \theta^*, \bs \theta)$ for all $\bs \theta \in \Theta$ and $r = 1, \dots, R$. Furthermore by proposition \ref{prop:Yas} $n^{-1}\bar{\+Y}_{r} \asls \+M_{r, \infty}(\bs \theta^*, \bs \theta^*)$ for all $r=1,\dots,R$. We claim that by the CMT:

\begin{align}
\mathcal{D}_{r,n}(\bs \theta) = &\+1_m^\top \left[n^{-1}\bar{\+ Y}_r \circ \log\left({\+M_{n,r}(\bs \theta)}\oslash{ \+M_{n,r}(\bs \theta^*)}\right)\right]\+1_m  \\
+ &n^{-1}\+1_m^\top\left[\+M_{n,r}(\bs \theta) - \+M_{n,r}(\bs \theta^*)\right]\+1_m \\ \label{finconZ}
 = &\sum_{i=1}^m\sum_{j=1}^m n^{-1}\bar Y_r^{(i,j)}\log \frac{M^{(i,j)}_{r, n}(\bs \theta)}{M^{(i,j)}_{r, n}(\bs \theta^*)} + n^{-1}\left[M^{(i,j)}_{r, n}(\bs \theta) - M^{(i,j)}_{r, n}(\bs \theta^*)\right] \\ \label{consumZ}
\asls &\sum_{i=1}^m\sum_{j=1}^m  M^{(i,j)}_{r,  \infty}(\bs \theta^*, \bs \theta^*)\log \frac{M^{(i,j)}_{r,  \infty}(\bs \theta^*, \bs \theta)}{M^{(i,j)}_{r,  \infty}(\bs \theta^*, \bs \theta^*)} + \left[M^{(i,j)}_{r,  \infty}(\bs \theta^*, \bs \theta) - M^{(i,j)}_{r, \infty}(\bs \theta^*, \bs \theta^*)\right] \\
= &-\mathrm{KL}(\+M_{r, \infty}(\bs \theta^*, \bs \theta^*)\| \+M_{r, \infty}(\bs \theta^*, \bs \theta)).
\end{align}
  To see that this limit is indeed almost surely well defined consider the cases for each $i = 1, \dots, m$ and  \newline $j\in[m]$, either:
\begin{itemize}
    \item $ M_{r,\infty}^{(i,j)}(\bs \theta^*,\bs \theta^*)>0$ and $ M_{r,\infty}^{(i,j)}(\bs \theta^*,\bs \theta)>0.$ In this case all functions in the sequence and its limit are well defined. Or
    \item $M_{r,\infty}^{(i,j)}(\bs \theta^*,\bs \theta)>0$ and $M_{r,\infty}^{(i,j)}(\bs \theta^*,\bs \theta^*) = 0$, or $M_{r,\infty}^{(i,j)}(\bs \theta^*,\bs \theta)= 0$ and $ M_{r,\infty}^{(i,j)}(\bs \theta^*,\bs \theta^*)>0$. This case is prohibited by lemma \ref{mumulemma}. Or
    \item $ M_{r,\infty}^{(i,j)}(\bs \theta^*,\bs \theta^*) = M_{r,\infty}^{(i,j)}(\bs \theta^*,\bs \theta^*) = 0$. In this case, by lemmas \ref{muylemma} and \ref{mumunlemma},  we have \newline $M_{r,n}^{(i,j)}(\bs \theta^*) = 0$ and  $\bar Y^{(i,j)}_r = 0 \quad \mathbb{P}_n^{\bs \theta^*}a.s.$, so that the $(i,j)$th term disappears from the sums in \eqref{finconZ} and \eqref{consumZ} by the convention $0\log 0 \coloneqq 0$.
\end{itemize}
Hence we have shown:
$$\mathcal{D}_{r,n}(\bs \theta) \asls - \mathrm{KL}\left(\text{Pois}\left(\+ M_{r, \infty}(\bs \theta^*,\bs \theta^*) \right)\| \text{Pois}\left(\+ M_{r, \infty}(\bs \theta^*,\bs \theta)\right)\right), \text{ pointwise in } \bs \theta.
$$
We now prove stochastic equicontinuity of $(\mathcal{D}_{r,n})_{n\geq1}$. Firstly we will show the stochastic equicontinuity of $(n^{-1} \bs f_1^\top \+ M_{r,n}(\bs \theta)\bs f_2)_{n\geq1}$ for any vectors $\bs f_1,\bs f_2 \in \mathbb{R}^m$. Let $\bs f_1,\bs f_2 \in \mathbb{R}^m$ and $E \subseteq \Omega$ such that $\mathbb{P}^{\theta^*}(E)=1$. Let  $\bs \theta_1, \bs \theta_2 \in \Theta$, $\omega \in E$, and $\varepsilon>0$. Let $\varepsilon_0>0$ and write by the triangle inequality:
\begin{align*}
\left|n^{-1}\bs f_1^\top \+ M_{r,n}(\bs \theta_1)\bs f_2 - n^{-1}\bs f_1^\top \+ M_{r,n}(\bs \theta_2)\bs f_2 \right| &\leq \left|n^{-1}\bs f_1^\top \+ M_{r,n}(\bs \theta_1)\bs f_2 - \bs f_1^\top \+ M_{r, \infty}(\bs \theta^*,\bs \theta_1)\bs f_2 \right| \\
&+ \left|n^{-1}\bs f_1^\top \+ M_{r,n}(\bs \theta_2)\bs f_2 -\bs f_1^\top  \+ M_{r, \infty}(\bs \theta^*,\bs \theta_2)\bs f_2 \right| \\
&+ \left|\bs f_1^\top\+ M_{r, \infty}(\bs \theta^*, \bs \theta_1)\bs f_2 - \bs f_1^\top\+ M_{r, \infty}(\bs \theta^*, \bs \theta_2)\bs f_2 \right|.
\end{align*}
There exists  $N(\omega)$ such that for $n> N(\omega)$ the first two terms are bounded by $\varepsilon_0/3$ by proposition \ref{vecfilt}. Furthermore, since $\bs \theta \mapsto \+ M_{r, \infty}(\bs \theta^*, \bs \theta)$ is continuous by lemma \ref{lem:contM} there exists a $\delta_0$ such that:
$$\left\|\bs \theta_1 - \bs \theta_2 \right\|_\infty<\delta_0 \implies \left|\bs f_1^\top\+ M_{r, \infty}(\bs \theta^*,\bs \theta_1)\bs f_2 - \bs f_1^\top\+ M_{r, \infty}(\bs \theta^*,\bs \theta_2)\bs f_2 \right|< \varepsilon_0/3.$$

Hence we have shown stochastic equicontinuity of $(n^{-1}\bs f_1^\top \+ M_{r,n}(\bs \theta)\bs f_2)_{n\geq1}$. Now, consider $\mathcal{D}_{r,n}$:
\begin{align}
\left| \mathcal{D}_{r,n}(\bs \theta_1) - \mathcal{D}_{r,n}(\bs \theta_2) \right| &\leq \left|n^{-1} \+1_m^\top\left[\bar{\+Y}_r \circ \log \left( {\+ M_{r,n}(\bs \theta_1)}\oslash{\+ M_{r,n}(\bs \theta_2)} \right)\right]\+1_m\right| \\ \label{part2eqZ}
&+ \left|n^{-1}\+1_m^\top\left[\+ M_{r,n}(\bs \theta_1) - \+ M_{r,n}(\bs \theta_2) \right] \+1_m \right|
\end{align}
By what has already been proven, for any $\varepsilon>0$  we can choose $\delta_1$ and $N_1(\omega)$ to bound \eqref{part2eqZ} by $\varepsilon/2$. Let $\varepsilon_1>0$,  by proposition \ref{prop:Yas} there exists  $N_2(\omega)$ such that for $n> N_2(\omega)$,
\begin{equation*}
\left|\sum_{i,j=1}^mn^{-1}\bar Y_{r}^{(i,j)} \log \frac{ M_{r,n}^{(i,j)}(\bs \theta_1)}{ M_{r,n}^{(i,j)}(\bs \theta_2)}\right| < \sum_{i,j=1}^m |M_{r, \infty}^{(i,j)}(\bs \theta^*, \bs \theta^*) + \varepsilon_1| \left|\log \frac{n^{-1} M_{r,n}^{(i,j)}(\bs \theta_1)}{n^{-1} M_{r,n}^{(i,j)}( \bs \theta_2)}\right|.
\end{equation*}

Furthermore, for each $(i,j) \in [m]^2$  either:
\begin{itemize}
\item $M_{r,n}^{(i,j)}(\bs \theta_1) = 0$ or $ M_{r,n}^{(i,j)}(\bs \theta_2)=0$ with positive probability. In this case the $(i,j)$th terms disappear from the sum on the left hand side since $\bar Y_r^{(i,j)} = 0$ with probability $1$ by lemma \ref{MY}; or
\item $ M_{r,n}^{(i,j)} (\bs \theta_1)> 0$ and $ M_{r,n}^{(i,j)}(\bs \theta_2)>0$ almost surely, by continuity of $\log$ on $\mathbb{R}_{>0}$ there exists  a $\delta_3^{(i,j)}>0$ such that if $| M^{(i,j)}_{r,n}(\bs \theta_1) -M^{(i,j)}_{r,n}(\bs \theta_2)|<\delta^{(i,j)}_3$ then:
\begin{align*}
 \left| \log \frac{n^{-1} M^{(i,j)}_{r,n}(\bs \theta_1)}{n^{-1} M^{(i,j)}_{r,n}(\bs \theta_2)}\right| &=  \left| \log {n^{-1} M^{(i,j)}_{r,n}(\bs \theta_1)} - \log{n^{-1} M^{(i,j)}_{r,n}(\bs \theta_2)}\right| \\
 &\leq \frac{\varepsilon}{2m^2|M^{(i,j)}_{r,\infty}(\bs \theta^*,\bs \theta^*) + \varepsilon_1|}.
\end{align*}
\end{itemize}
Then by stochastic equicontinuity of $(n^{-1}\+ M_{r,n})_{n\geq1}$ there exists $N_3(\omega)$ and $\delta_2$ such that for $n> \max(N_2(\omega),N_3(\omega))$ and $\|\bs \theta_1 - \bs \theta_2\|_\infty< \delta_2$ we have that \newline $\| \+ M_{r,n}(\bs \theta_1) - \+ M_{r,n}(\bs \theta_2)\|_\infty<\min_{(i,j)}\delta^{(i,j)}_3$ so that:

\begin{align*}
\sum_{i,j=1}^m n^{-1}\bar Y^{(i,j)}_r \left| \log \frac{n^{-1} M^{(i,j)}_{r,n}(\bs \theta_1)}{n^{-1} M^{(i,j)}_{r,n}(\bs \theta_2)}\right| & < \sum_{i,j=1}^m |M^{(i,j)}_{r,\infty}(\bs \theta^*,\bs \theta^*)+ \varepsilon_1|\frac{\varepsilon}{2m^2|M^{(i,j)}_{r,\infty}(\bs \theta^*,\bs \theta^*) + \varepsilon_1|} \\
  &= \varepsilon/2.
\end{align*}
Choosing $\delta = \min(\delta_1, \delta_2)$ and $N(\omega) = \max(N_1(\omega),N_2(\omega),N_3(\omega))$ we have that for $\|\bs \theta_1 - \bs \theta_2\|_\infty< \delta$ and $n> N(\omega)$:
\begin{equation*}
\left| \mathcal{D}_{r,n}(\bs \theta_1) - \mathcal{D}_{r,n}(\bs \theta_2) \right| < \varepsilon/2 + \varepsilon/2 =  \varepsilon.
\end{equation*}
Hence we have established the stochastic equicontinuity of $(\mathcal{D}_{r,n})_{n\geq1}$. This along with the already proven pointwise convergence establishes uniform almost sure convergence by lemma \ref{stochascoli}.

\end{proof}

\subsection{Convergence of Maximum PAL estimators}\label{sec:conv_pal_estimators}


\begin{proof}[Proof of Theorem \ref{theo:consistency}]


Let $\mathcal{C}(\bs \theta^*, \bs \theta)$ be defined to be the r.h.s. of \eqref{eq:case_I_contrast} and let $\mathcal{C}_n$ be as in the proof of proposition \ref{contrastfnx}.  We have that $\mathcal{C}_n(\hat{\bs \theta}_n) \geq \mathcal{C}_n({\bs \theta})$ for all $\bs \theta \in \Theta_{(I)}^*$. Furthermore  ${\mathcal{C}(\bs \theta^*, \bs \theta^*) - \mathcal{C}(\bs \theta^*, \bs \theta) >0}$ for all $\bs \theta \in \Theta$. We can combine these inequalities to obtain:
\begin{equation} \label{supconv}
    \begin{aligned}
    0 &\leq \mathcal{C}(\bs \theta^*, \bs \theta^*) - \mathcal{C}(\bs \theta^*, \hat{\bs \theta}_n) \\
    &\leq \mathcal{C}(\bs \theta^*, \bs \theta^*)  - \mathcal{C}_n( \bs \theta^*)  +\mathcal{C}_n( \bs \theta^*)
    -\mathcal{C}_n(  \hat{\bs \theta}_n)
    + \mathcal{C}_n(  \hat{\bs \theta}_n)
    - \mathcal{C}(\bs \theta^*, \hat{\bs \theta}_n) \\
    &\leq 2\sup_{\bs \theta \in \Theta}\left|\mathcal{C}(\bs \theta^*, \bs \theta) - \mathcal{C}_n(\bs \theta) \right| \asls 0.
    \end{aligned}
\end{equation}
Hence $\mathcal{C}(\bs \theta^*, \hat{\bs \theta}_n) \asls \mathcal{C}(\bs \theta^*, {\bs \theta^*})$.

Now assume for purposes of contradiction that there is some positive probability that $\hat{\bs \theta}_n$ does not converge to the set $\Theta^*_{(I)}$, i.e. assume that there is an event $E\subset \Omega$ with $\mathbb{P}^{\bs \theta^*}(E)>0$ such that for all $\omega \in E$ there exists a $\delta>0$ such that  for infinitely many $n \in \mathbb{N}$ we have ${\hat{\bs \theta}_n(\omega)}$  is not in the open neighbourhood $B_\delta(\Theta^*) = \{\bs \theta \in \Theta: \exists \bs \theta' \in \Theta^*: \|\bs \theta - \bs \theta' \|<\delta\}$.
Since $\Theta$ is compact, the set $B_\delta(\Theta_{(I)}^*)^c = \Theta \setminus B_\delta(\Theta_{(I)}^*)$ is closed, bounded, and therefore compact. Furthermore, $\mathcal{C}(\bs \theta^*,\bs \theta)$ is continuous in $\bs \theta$. By the extreme value theorem this means that there exists a $\bs \theta'\in B_\delta(\Theta_{(I)}^*)^c$ such that for all  $\bs \theta \in B_\delta(\Theta_{(I)}^*)^c$:
$$ \mathcal{C}(\bs \theta^*,\bs \theta) \leq \mathcal{C}(\bs \theta^*,\bs \theta')$$
Furthermore, since $\bs \theta' \notin \Theta_{(I)}^*$ there exists  $\varepsilon>0$ such that:
$$ \mathcal{C}(\bs \theta^*,\bs \theta') < \mathcal{C}(\bs \theta^*,\bs \theta^*) - \varepsilon.$$
By our assumption we have for each $\omega \in E$ there are infinitely many $n \in \mathbb{N}$ such that $\hat{\bs \theta}_n(\omega) \in B_\delta(\Theta_{(I)}^*)^c$. But this implies that for each $\omega \in E$ there are infinitely many $n \in \mathbb{N}$ such that:
$$ \mathcal{C}(\bs \theta^*, \hat{\bs \theta}_n(\omega)) \leq \mathcal{C}(\bs \theta^*,\bs \theta') < \mathcal{C}(\bs \theta^*,\bs \theta^*) - \varepsilon,$$
$$\implies |\mathcal{C}(\bs \theta^*, \bs \theta^*) - \mathcal{C}(\bs \theta^*, \hat{\bs \theta}_n(\omega))|> \varepsilon,$$
which contradicts \eqref{supconv}. Hence we must have that $\hat{\bs \theta}_n$ converges to the set $\Theta^*_{(I)}$ $\mathbb{P}^{\bs \theta^*}$-a.s. The proof for case (II) follows the same arguments but with $\mathcal{C}_n$ and $\mathcal{C}(\bs\theta^*,\bs\theta)$ replaced by $\mathcal{D}_n$ as in the proof of proposition \ref{Zcontfn} and $\mathcal{D}(\bs\theta^*,\bs\theta)$ defined to be the r.h.s. of \eqref{eq:case_II_contrast}.
\end{proof}


\subsection{Identifiability} \label{sec:identifiabiliy}

\begin{proposition}\label{prop:ident}
For any $\bs \theta\in\Theta$,
\begin{align*}
\bs \theta \in \Theta^*_{(I)}&\quad\Longleftrightarrow\quad \bs \mu _{t,\infty}(\bs \theta,\bs \theta) = \bs \mu _{t,\infty}(\bs \theta^*,\bs \theta^*),\quad\forall t=1,\ldots,T\\ 
\bs \theta \in \Theta^*_{(II)}&\quad\Longleftrightarrow\quad \+{M}_{r,\infty}(\bs \theta,\bs \theta) = \+{M} _{r,\infty}(\bs \theta^*,\bs \theta^*),\quad\forall r=1,\ldots,R.
\end{align*}
\end{proposition}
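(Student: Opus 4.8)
The plan is to exploit a structural feature of the limiting filtering recursions defined at the start of sections \ref{sec:filtering_limits_case_I} and \ref{sec:filtering_limits_case_II}: the data-generating parameter $\bs\theta^*$ enters the recursion for $\bar{\bs\lambda}_{t,\infty}(\bs\theta^*,\bs\theta)$ (resp.\ $\bar{\bs\Lambda}_{\tau_r,\infty}(\bs\theta^*,\bs\theta)$) \emph{only} through the update-step ratio $\bs\mu_{t,\infty}(\bs\theta^*,\bs\theta^*)\oslash\bs\mu_{t,\infty}(\bs\theta^*,\bs\theta)$ (resp.\ $\+M_{r,\infty}(\bs\theta^*,\bs\theta^*)\oslash\+M_{r,\infty}(\bs\theta^*,\bs\theta)$); everywhere else only the second argument $\bs\theta$ appears. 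Substituting the first argument equal to the second produces the ``diagonal'' recursion, whose update ratio is the self-ratio $\bs\mu_{t,\infty}(\bs\theta,\bs\theta)\oslash\bs\mu_{t,\infty}(\bs\theta,\bs\theta)$. The key elementary observation is that, under the convention $0/0\coloneqq0$, the element-wise ratio of a vector with itself is exactly the indicator of its support; hence whenever $\bs\mu_{t,\infty}(\bs\theta^*,\bs\theta)=\bs\mu_{t,\infty}(\bs\theta^*,\bs\theta^*)$ the off-diagonal ratio collapses to the support indicator of $\bs\mu_{t,\infty}(\bs\theta^*,\bs\theta^*)$, and this matches the diagonal self-ratio as soon as the two intensities share a support.

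The core of the proof is a claim established by induction on $t$ for case (I) (for case (II), an induction on the block index $r$ with an inner induction over $t=\tau_{r-1}+1,\dots,\tau_r$): \emph{if $\bs\mu_{s,\infty}(\bs\theta^*,\bs\theta)=\bs\mu_{s,\infty}(\bs\theta^*,\bs\theta^*)$ for all $s\le t$, then $\bar{\bs\lambda}_{t,\infty}(\bs\theta^*,\bs\theta)=\bar{\bs\lambda}_{t,\infty}(\bs\theta,\bs\theta)$.} The base case $t=0$ is immediate from \eqref{eq:lam_0_inf_defn} since both sides equal $\bs\lambda_{0,\infty}(\bs\theta)$. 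For the inductive step, the prediction and emission maps depend only on the previous $\bar{\bs\lambda}$ and on $\bs\theta$, so the inductive hypothesis $\bar{\bs\lambda}_{t-1,\infty}(\bs\theta^*,\bs\theta)=\bar{\bs\lambda}_{t-1,\infty}(\bs\theta,\bs\theta)$ propagates to $\bs\lambda_{t,\infty}(\bs\theta^*,\bs\theta)=\bs\lambda_{t,\infty}(\bs\theta,\bs\theta)$ and then to $\bs\mu_{t,\infty}(\bs\theta^*,\bs\theta)=\bs\mu_{t,\infty}(\bs\theta,\bs\theta)$. The assumed equality $\bs\mu_{t,\infty}(\bs\theta^*,\bs\theta)=\bs\mu_{t,\infty}(\bs\theta^*,\bs\theta^*)$ then both collapses the off-diagonal update ratio to a support indicator and forces $\bs\mu_{t,\infty}(\bs\theta,\bs\theta)=\bs\mu_{t,\infty}(\bs\theta^*,\bs\theta^*)$, so the diagonal self-ratio has the same support; consequently the two update maps in \eqref{eq:lam_bar_t_inf_defn} agree and $\bar{\bs\lambda}_{t,\infty}(\bs\theta^*,\bs\theta)=\bar{\bs\lambda}_{t,\infty}(\bs\theta,\bs\theta)$.

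Given this claim both directions follow quickly. For the forward implication, if $\bs\theta\in\Theta^*_{(I)}$ then its defining property supplies $\bs\mu_{t,\infty}(\bs\theta^*,\bs\theta)=\bs\mu_{t,\infty}(\bs\theta^*,\bs\theta^*)$ for every $t$; the inductive step then also yields $\bs\mu_{t,\infty}(\bs\theta,\bs\theta)=\bs\mu_{t,\infty}(\bs\theta^*,\bs\theta)=\bs\mu_{t,\infty}(\bs\theta^*,\bs\theta^*)$, which is the right-hand condition. For the converse I run the same induction as a bootstrap: assuming $\bs\mu_{t,\infty}(\bs\theta,\bs\theta)=\bs\mu_{t,\infty}(\bs\theta^*,\bs\theta^*)$ for all $t$, at stage $t$ the inductive hypothesis gives $\bs\mu_{t,\infty}(\bs\theta^*,\bs\theta)=\bs\mu_{t,\infty}(\bs\theta,\bs\theta)$, and chaining with the converse hypothesis gives $\bs\mu_{t,\infty}(\bs\theta^*,\bs\theta)=\bs\mu_{t,\infty}(\bs\theta^*,\bs\theta^*)$ -- precisely the membership condition for $\Theta^*_{(I)}$ -- which simultaneously closes the update step and lets the induction proceed. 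Case (II) is identical after the substitutions $\bs\mu_{t,\infty}\mapsto\+M_{r,\infty}$ and $\bar{\bs\lambda}_{t,\infty}\mapsto\bar{\bs\lambda}_{\tau_r,\infty}$, using that $\bs\theta^*$ enters $\bar{\bs\Lambda}_{\tau_r,\infty}(\bs\theta^*,\bs\theta)$ in \eqref{eq:Lam_bar_t_inf_defn} solely via the ratio $\+M_{r,\infty}(\bs\theta^*,\bs\theta^*)\oslash\+M_{r,\infty}(\bs\theta^*,\bs\theta)$.

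I expect the only delicate point to be the $0/0$ bookkeeping in these collapsing ratios: one must verify that on the zero set of the relevant intensity the corresponding term contributes nothing on \emph{both} the off-diagonal and diagonal sides, so that the two update maps agree component-by-component rather than merely where entries are positive. Lemmas \ref{mumulemma} and \ref{MM}, which assert $\mu^{(i)}_{t,\infty}(\bs\theta^*,\bs\theta)=0\iff\mu^{(i)}_{t,\infty}(\bs\theta^*,\bs\theta')=0$ (resp.\ the analogous statement for $\+M$), are exactly the tools that render the convention consistent across arguments, and I would invoke them to justify agreement on the zero set. Beyond this verification the argument is a routine propagation of equalities through the recursion, so I do not anticipate a substantive obstacle.
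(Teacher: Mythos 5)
Your proposal is correct and follows essentially the same route as the paper's proof: both exploit that $\bs\theta^*$ enters the limiting recursions only through the update ratio, observe that this ratio collapses to a support indicator (so the update step becomes the identity) once the relevant $\bs\mu$ (resp.\ $\+M$) intensities coincide, and propagate the equalities by induction in $t$ (resp.\ $r$), running the same induction as a bootstrap for the converse direction. If anything, you are slightly more explicit than the paper about the $0/0$ support bookkeeping (the paper's proof simply collapses the bracket to $\+1_m - \+q_{t}(\bs\theta) + \+q_{t}(\bs\theta)$), but that is a refinement of detail, not a different argument.
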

\begin{proof}
For the first equivalence in the statement, in order to prove the implication in the forward direction, assume that $\bs \theta \in \Theta^*_{(I)}$, i.e., $\quad \bs \mu _{t,\infty}(\bs \theta^*,\bs \theta) = \bs \mu _{t,\infty}(\bs \theta^*,\bs \theta^*)$, for all $t=1,\ldots,T$. Recall from the definitions in \eqref{eq:lam_0_inf_defn}-\eqref{eq:lam_bar_t_inf_defn} that $\bs\lambda_{0,\infty}(\bs \theta^* ,\bs \theta)$ does not depend on $\bs\theta^*$, hence neither does $\bs\lambda_{1,\infty}(\bs\theta^*,\bs\theta)$, and so: 
\begin{align*}
\bs \mu_{1,\infty}(\bs \theta^*,\bs \theta^*)^\top &= \bs \mu_{1,\infty}({\bs \theta^*},\bs \theta)^{\top} \\
&= (\bs \lambda_{1,\infty}({\bs \theta^*},\bs \theta)\circ \+ q_1({\bs \theta}))^\top \+G_1(\bs \theta) + {\bs \kappa_{1, \infty}(\bs \theta)}^{\top}  \\
&=  (\bs \lambda_{1,\infty}({\bs \theta},\bs \theta)\circ \+ q_1(\bs \theta))^\top \+G_1(\bs \theta) + {\bs \kappa_{1,\infty}(\bs \theta)}^{\top} \\
&= \bs \mu_{1,\infty}({\bs \theta},\bs \theta)^{\top}.
\end{align*}
Now, for $t>1$ assume that $\bs \lambda_{t-1,\infty}(\bs \theta,\bs \theta) = \bs \lambda_{t-1,\infty}(\bs \theta^*,\bs \theta)$ and $ \bs \mu_{t-1,\infty}(\bs \theta,\bs \theta) = \bs \mu_{t-1,\infty}(\bs \theta^*,\bs \theta)$. Then we have that:

\begin{align*}
\bar{\bs \lambda}_{t-1,\infty}(\bs \theta^*,\bs \theta) &= \bigg[\+ 1_m - \+q_{t-1}(\bs \theta) \\
&+ \bigg(\bs \mu_{t-1,\infty}(\bs \theta^*,\bs \theta)^\top \bigg\{[(\+ 1_m \otimes \+ q_{t-1}(\bs \theta))\circ \+ {G_{t-1}(\bs \theta)}^\top] \\
&\oslash\left[ \bs \mu_{t-1,\infty}(\bs \theta, \bs \theta)  \otimes \+1_m \right]\bigg\}\bigg)^\top\bigg]\circ \bs \lambda_{t-1,\infty}(\bs \theta^*,\bs \theta) \\
&= \left[1_m - \+q_{t-1}(\bs \theta) + \+q_{t-1}(\bs \theta) \right]\circ \bs \lambda_{t-1,\infty}(\bs \theta^*,\bs \theta) \\
&= \bs \lambda_{t-1,\infty}(\bs \theta^*,\bs \theta) \\
&= \bs \lambda_{t-1,\infty}(\bs \theta, \bs \theta),
\end{align*}
so that

\begin{align*}
\bs \lambda_{t,\infty}(\bs \theta^*,\bs \theta)^\top &= (\bar{\bs \lambda}_{t-1,\infty}(\bs \theta^*,\bs \theta) \circ \bs \delta_t(\bs \theta))^\top\+ K_{t, \bs \eta(\bar{\bs \lambda}_{t-1,\infty}(\bs \theta^*\bs \theta) \circ \bs \delta_t(\bs \theta))} + \bs \alpha_{t,\infty}(\bs \theta)^\top \\
&= (\bs  \lambda_{t-1,\infty}(\bs \theta,\bs \theta) \circ \bs \delta_t(\bs \theta))^\top\+ K_{t, \bs \eta(\bs \lambda_{t-1,\infty}(\bs \theta,\bs \theta) \circ \bs \delta_t(\bs \theta))} + \bs \alpha_{t,\infty}(\bs \theta)^\top \\
&= \bs \lambda_{t,\infty}(\bs \theta,\bs \theta)^\top,
\end{align*}
and

\begin{align*}
\bs \mu_{t,\infty}(\bs \theta^*,\bs \theta^*)^\top &= \bs \mu_{t,\infty}({\bs \theta^*},\bs \theta)^{\top} \\
&= (\bs \lambda_{t,\infty}({\bs \theta^*},\bs \theta)\circ \+ q_t({\bs \theta}))^\top \+G_t(\bs \theta) + {\bs \kappa_{t,\infty}(\bs \theta)}^{\top}  \\
&=  (\bs \lambda_{t,\infty}({\bs \theta},\bs \theta)\circ \+ q_t(\bs \theta))^\top \+G_t(\bs \theta) + {\bs \kappa_{t,\infty}(\bs \theta)}^{\top} \\
&= \bs \mu_{t,\infty}({\bs \theta},\bs \theta)^{\top}.
\end{align*}
By induction we have thus shown that $\bs \mu_{t,\infty}(\bs \theta^*,\bs \theta^*) = \bs \mu_{t,\infty}(\bs \theta,\bs \theta)$ for all $t = 1, \dots, T$ and have completed the proof for the forward direction of the first implication in the statement.

For the backwards direction we need to show that $\bs \mu_{t,\infty}(\bs \theta^*,\bs \theta^*) = \bs \mu_{t,\infty}({\bs \theta},\bs \theta)  \implies \bs \mu_{t,\infty}(\bs \theta^*,\bs \theta) = \bs \mu_{t,\infty}(\bs \theta^*,\bs \theta^*)$, for all $t=1,\ldots,T$ . Similarly as for the forwards direction:

\begin{align*}
\bs \mu_{1,\infty}(\bs \theta^*,\bs \theta)^\top &= (\bs \lambda_{1,\infty}({\bs \theta^*},\bs \theta)\circ \+ q_1({\bs \theta}))^\top \+G_1(\bs \theta) + {\bs \kappa_1(\bs \theta)}^{\top}  \\
&=  (\bs \lambda_{1,\infty}({\bs \theta},\bs \theta)\circ \+ q_1(\bs \theta))^\top \+G_1(\bs \theta) + {\bs \kappa_1(\bs \theta)}^{\top} \\
&= \bs \mu_{1,\infty}({\bs \theta},\bs \theta)^{\top} \\
&= \bs \mu_{1,\infty}({\bs \theta^*},\bs \theta^*)^{\top}.
\end{align*}
Now, for $t>1$ assume that $\bs \lambda_{t-1,\infty}({\bs \theta},\bs \theta) = \bs \lambda_{t-1,\infty}(\bs \theta^*,\bs \theta)$ and $ \bs \mu_{t-1,\infty}({\bs \theta},\bs \theta) = \bs \mu_{t-1,\infty}(\bs \theta^*,\bs \theta)$. Then we have that:

\begin{align*}
\bar{\bs \lambda}_{t-1,\infty}(\bs \theta^*,\bs \theta) &= \bigg[\+ 1_m - \+q_{t-1}(\bs \theta) \\
&+ \bigg(\bs \mu_{t-1,\infty}(\bs \theta^*,\bs \theta)^\top \bigg\{[(\+ 1_m \otimes \+ q_{t-1}(\bs \theta))\circ \+ {G_{t-1}(\bs \theta)}^\top]\\
&\oslash\left[ \bs \mu_{t-1,\infty}(\bs \theta, \bs \theta)  \otimes \+1_m \right]\bigg\}\bigg)^\top\bigg]\circ \bs \lambda_{t-1,\infty}(\bs \theta^*,\bs \theta)  \\
&= \left[1_m - \+q_{t-1}(\bs \theta) + \+q_{t-1}(\bs \theta) \right]\circ \bs \lambda_{t-1,\infty}(\bs \theta^*,\bs \theta) \\
&= \bs \lambda_{t-1,\infty}(\bs \theta^*,\bs \theta) \\
&= \bs \lambda_{t-1,\infty}(\bs \theta, \bs \theta),
\end{align*}
so that
\begin{align*}
\bs \lambda_{t,\infty}(\bs \theta^*,\bs \theta)^\top &= (\bar{\bs \lambda}_{t-1,\infty}(\bs \theta^*,\bs \theta) \circ \bs \delta_t(\bs \theta))^\top\+ K_{t, \bs \eta(\bar{\bs \lambda}_{t-1,\infty}(\bs \theta^*\bs \theta) \circ \bs \delta_t(\bs \theta))} + \bs \alpha_{t,\infty}(\bs \theta)^\top \\
&= (\bs  \lambda_{t-1,\infty}(\bs \theta,\bs \theta) \circ \bs \delta_t(\bs \theta))^\top\+ K_{t, \bs \eta(\bs \lambda_{1,\infty}(\bs \theta,\bs \theta) \circ \bs \delta_t(\bs \theta))} + \bs \alpha_{t, \infty}(\bs \theta)^\top \\
&= \bs \lambda_{t,\infty}(\bs \theta,\bs \theta)^\top,
\end{align*}
and
\begin{align*}
\bs \mu_{t,\infty}(\bs \theta^*,\bs \theta)^\top
&= (\bs \lambda_{t,\infty}({\bs \theta^*},\bs \theta)\circ \+ q_t({\bs \theta}))^\top \+G_t(\bs \theta) + {\bs \kappa_{t,\infty}(\bs \theta)}^\top  \\
&=  (\bs \lambda_{t,\infty}({\bs \theta},\bs \theta)\circ \+ q_t(\bs \theta))^\top \+G_t(\bs \theta) + {\bs \kappa_{t,\infty}(\bs \theta)}^\top \\
&= \bs \mu_{t,\infty}({\bs \theta},\bs \theta)^\top\\
&= \bs \mu_{t,\infty}({\bs \theta^*},\bs \theta^*)^\top
\end{align*}
This completes the proof of the first implication in the statement of the proposition.

For the second implication, we will first show $\+ M_{r\infty}({\bs \theta^*},\bs \theta) = \+ M_{r\infty}({\bs \theta^*},\bs \theta^*)\implies \+ M_{r\infty}({\bs \theta^*},\bs \theta^*) = \+ M_{r\infty}({\bs \theta},\bs \theta)$, for all $r=1,\ldots, R$. Recalling the definitions in \eqref{eq:Lam_0_inf_defn}-\eqref{eq:Lam_bar_t_inf_defn}, we have that for all $s \in \{1, \dots, \tau_1 \}$, $\bs \Lambda_{s, \infty}(\bs \theta^*, \bs \theta) = \bs \Lambda_{s, \infty}(\bs \theta, \bs \theta)$ and hence:

\begin{align}\label{ASEQM}
    \+M_{1, \infty}(\bs \theta^*, \bs \theta^*) = \+M_{1, \infty}(\bs \theta^*, \bs \theta) &= \sum_{s=1}^{\tau_1}\+\Lambda_{s,\infty}(\bs \theta^*, \bs \theta)\circ \+Q_s(\bs \theta) \\
    &=\sum_{s=1}^{\tau_1}\+\Lambda_{s,\infty}(\bs \theta, \bs \theta)\circ \+Q_s(\bs \theta) \\
    &=     \+M_{1, \infty}(\bs \theta, \bs \theta).
\end{align}

Now let $r\geq1$ and assume that, for all $s \in \{\tau_{r-1}+1, \dots, \tau_r\}$, $\bs \Lambda_{s, \infty}(\bs \theta^*, \bs \theta) = \bs \Lambda_{s, \infty}(\bs \theta, \bs \theta)$. Then:
\begin{align*}
    \bar{\bs \Lambda}_{\tau_r,\infty}(\bs \theta^*, \bs \theta) &= \left[\+1_m \otimes \+1_m - \+Q_{\tau_r}(\bs \theta)\right] \circ {\bs \Lambda}_{\tau_r,\infty}(\bs \theta^*, \bs \theta) \\
    &+ \frac{\+M_{r,\infty}(\bs \theta^*, \bs \theta^*)}{\+M_{r,\infty}(\bs \theta^*,\bs \theta)}\left[ \+\Lambda_{\tau_r,\infty}(\bs \theta^*, \bs \theta)\circ \+Q_{\tau_r}(\bs \theta)\right] \\
    &= \+\Lambda_{\tau_r,\infty}(\bs \theta^*, \bs \theta) \\
    &= \+\Lambda_{\tau_r,\infty}(\bs \theta, \bs \theta).
\end{align*}

This then implies that for all $s \in \{\tau_{r}+1, \dots, \tau_{r+1}\}$, $\bs \Lambda_{s, \infty}(\bs \theta^*, \bs \theta) = \bs \Lambda_{s, \infty}(\bs \theta, \bs \theta)$, which in turn implies, as in \eqref{ASEQM} that $\+M_{r+1, \infty}(\bs \theta^*,\bs \theta^*) = \+M_{r+1, \infty}(\bs \theta,\bs \theta)$. The reverse direction follows by similar reasoning, as in mirroring the proof of the backwards direction of the first implication in the statement of the proposition, so the details are omitted.
\end{proof}

\newpage

\section{Supplementary material for section \ref{sec:examples}}

\subsection{Supplementary material for the pedagogical SEIR example}
\label{sec:SIRov_supp}
In this section we present the PALSMC algorithm used in the pedagogical SEIR example, given by algorithm \ref{alg:SIRSMC}. For notation purposes, define $\+y_t = [0 \; 0 \; y_t \; 0]$. The following section describes how one can make proposals informed by observations.
\subsubsection{Deriving a proposal informed by observations}
Let $f(\cdot | \mu_q,  \sigma^2_q)$ be the density associated with a $\mathcal{N}(\mu_q,  \sigma^2_q)_{\geq 0, \leq 1}$ random variable. We would like to make proposals informed by observations, to that end we seek a Laplace approximation to:

$$\hat p(q_t \mid y_{1:t}, q_{1:t-1}) := \frac{\exp\ell(y_{t} \mid y_{1:t-1}, q_{1:t})f(q_t| \mu_q,  \sigma^2_q)}{\int\exp\ell(y_{t} \mid y_{1:t-1}, q_{1:t})f(q_t| \mu_q,  \sigma^2_q) dq_t}.$$ 
Suppressing dependence on the particle, let ${\bs \lambda}_{t}$ be calculated as per line \ref{seirpred} of algorithm \ref{alg:SIRSMC}. We have for some constant $C_1$ and $C_2$:

\begin{equation}
\begin{aligned} \label{seirpropeq}
\log \hat p(q_t | y_t) & = \ell(y_{t} \mid y_{1:t-1}, q_{1:t}) + f(q_t|\mu_q,  \sigma^2_q) + C_1 \\
&= y_t \log(q_t) + y_t \log(\lambda^{(3)}_t) - q_t - \log y_t!     - \frac{1}{2}\left(\frac{q_t - \mu_q}{\sigma_q} \right)^2 +C_2
\end{aligned} 
\end{equation}
To get the mean of a Laplace approximation to \eqref{seirpropeq} we must find it's maximum w.r.t. $q_t$, hence:

\begin{equation}
\begin{aligned}\label{SEIRSMCmean}
\frac{d \log \hat p(q_t \mid y_t)}{dq_t} &= \frac{y_t}{q_t} - \lambda^{(3)}_t - \frac{q_t - \mu_q}{\sigma_q^2} = 0\\ 
&\iff  (q_t)^2 + (\lambda^{(3)}_t\sigma_q^2 - \mu_q )q_t - y_t\sigma_q^2 = 0 \\
&\implies q_t = \frac{1}{2}\left(\mu_q  - \lambda^{(3)}_t\sigma_q^2 + \sqrt{(\lambda^{(3)}_t\sigma_q^2- \mu_q )^2 + 4y_t\sigma_q^2 }\right) =: \mu_{prop}
\end{aligned}
\end{equation}
For the variance we find the second derivative and evaluate it at $\mu_{prop}$:
\begin{equation}
\begin{aligned}\label{SEIRSMCvar}
\frac{d^2 \log \hat p(q_t \mid y_t)}{d(q_t)^2} = -  \frac{y_t}{(q_t)^2} - \frac{1}{\sigma_q^2} \\
\implies \sigma_{prop}^2 = \left(\frac{y_t}{\mu_{prop}^2} + \frac{1}{\sigma_q^2}\right)^{-1}.
\end{aligned}
\end{equation}
To be congruent with the support of $q_t$ we truncate the proposal to be $\mathcal{N}(\mu_{prop},\sigma_{prop}^2)_{\geq 0, \leq 1}$, denote its density as $\pi (\cdot | \mu_{prop},\sigma_{prop}^2)$.

\begin{algorithm}[H]
\caption{PAL within SMC}\label{alg:SIRSMC}
\begin{algorithmic}[1]
  \Statex {\bf initialize:} $\bar{\bs \lambda}_{0,i} \leftarrow \bs \lambda_0$ for $i = 1$ to $n_{part}$.
  \State {\bf for}  $t  \geq 1$: 
  \State \quad  {\bf for}  $i = 1,\dots, n_{part}$: 
  \State \quad \quad \label{seirpred}$\bs \lambda_{t}^{(i)} \leftarrow \left(\bar{\bs \lambda}_{t-1}^{(i)}\circ \bs \delta_t\right)^\top\+K_{t, \eta \left(\bs {\bar \lambda}_{t}^{(i)} \right)} + \bs \alpha_t$
  \State \quad \quad $q_{t}^{(i)}  \sim \mathcal{N}\left(\mu_{prop}, \sigma_{prop}^2  \right)_{\geq 0, \leq 1}$ calculated as per \ref{SEIRSMCmean} and \ref{SEIRSMCvar}.
   \State \quad \quad  $\+q_{t}^{(i)}  \leftarrow [0\;0\;q_{t}^{(i)} \;0]^\top$
 \State \quad \quad  $\log w_{t}^{(i)}  \leftarrow \+ y_t^\top \log \bs \lambda_{t}^{(i)}\circ \+ q_{t}^{(i)} - \bs \lambda_{t}^\top \+ q_{t}^{(i)}  - \log\+ y_t! + \log f(q_{t}^{(i)} |\mu_q,\sigma_q^2)- \log \pi(q_{t}^{(i)} |\mu_{prop},\sigma_{prop}^2)$
 \State \quad \quad $\bs{\bar \lambda}_{t}^{(i)} \leftarrow \left(\+1_m - \+ q_{t}^{(i)} \right)\circ \bs \lambda_{t}^{(i)} + \+y_t$ 
  \State \quad {\bf end for}
 \State \quad $\ell(y_t\mid y_{1:t-1}) \leftarrow \frac{1}{n_{part}}\sum_{i=1}^{n_{part}} w_{t}^{(i)} $
 \State \quad $\bar w_{t}^{(i)}  \leftarrow w_{t}^{(i)} /\sum_{j=1}^{n_{part}}w_{t}^{(j)} $
 \State \quad \textbf{resample} $\left\{\bs{\bar \lambda}_{t}^{(i)},q_{t}^{(i)}  \right\}_{i=1}^{n_{part}}$ according to a systematic resampling scheme with weights $\left\{\bar w_{t}^{(i)}\right\}_{i=1}^{n_{part}}$ .
 \State {\bf end for}
\end{algorithmic}
\end{algorithm}

\subsubsection{Comparison to a standard sequential Monte Carlo approach}
In this section we perform a routine comparison of PALSMC evaluations and `exact' particle filter likelihood estimates on a small simulated model for which a standard SMC approach, that is integrating out both the $\+ x_t$  and the $\bar{\bs \theta}_t$ processes, is still viable. The model we use is the same as in the pedagogical SEIR example, except we introduce over-dispersion into the infection rate, that is the number of new exposed at time $t$ is distributed:

\begin{equation}
B_t \sim \mathrm{Bin}(S_t, 1- e^{-h\xi_t\beta\frac{I_t}{n_t}}),
\end{equation}
where $\xi_t \sim \text{gamma}(\sigma_\xi,\sigma_\xi)$ is mean $1$ multiplicative noise. We simulated data using the same parameters as in the pedagogical SEIR example and $\sigma_\xi = 1$. For the standard SMC approach we used joint proposals inspired by the PALSMC derivations.

We ran each of the filters with the data generating parameters as input with 1000, 5000, and 10000 particles. For each procedure and each particle size we ran generated 100 likelihood estimates and calculated the standard deviation. We found that, as expected, the variance shrinks for both procedures as the number of particles increases, and that PALSMC had a systematically lower variance with little bias.

\begin{figure}[h!]
 \begin{longtable}{ c c c}
\label{measlesresults}\\
 \hline
 Number of particles & PALSMC\;(sd) & SMC\;(sd) \\
 \hline

 1000 & -585.44\;(0.26) & -585.51\;(0.48)  \\
 5000 & -585.39\;(0.15)	&  -585.38\;(0.27)  \\
 10000 & 	-585.06\;(0.11) & -585.10\;(0.17)   \\

 \hline
 \caption{Likelihood estimates calculated at the data generating parameters using the PALSMC and a standard SMC approach. Standard deviations are calculated from 100 runs of each procedure.}

\end{longtable}
\end{figure}

\subsection{A simulation example}\label{sec:interpretation}

Consider the following SEIR model with immigration and emigration: $\mathbb{P}_{0,n} = \mathrm{Mult} \left (n, \left [ 0.99\; \; 0\; \; 0.01 \; \; 0 \right ]^{\top} \right )$ and for all $t$: ${\bs\alpha_{t,n}} = \left [ \frac{4}{100} n\; \; \frac{4}{100} n\; \; \frac{4}{100} n \;\; \frac{4}{100} n \right ]^\top$, ${\bs \delta_t} = \left [ \frac{98}{100}\;\; \frac{98}{100}\;\; \frac{98}{100}\;\; \frac{98}{100} \right ]^\top$, ${\bs \kappa}_{t,n} = \left [ \frac{1}{100} n\;\; \frac{1}{100} n\;\; \frac{1}{100} n\;\; \frac{1}{100} n \right ]^\top$, ${\+q}_t = \left [ 0.1\; \; 0.1 \; \; 0.3 \; \; 0.2 \right ]^\top$
and
$$
\mathbf{K}_{t, \bs\eta} = 
    \left [ \begin{array}{cccc}
        e^{-\beta^* \eta^{(3)}} & 1 - e^{-\beta \eta^{(3)}} & 0 & 0 \\
        0 & e^{-\rho} & 1 - e^{-\rho} & 0 \\
        0 & 0 & e^{-\gamma} & 1 - e^{-\gamma} \\
        0 & 0 & 0 & 1
    \end{array} \right ],\quad{\+G}_t = 
    \left [ \begin{array}{cccc}
        0.95 & 0 & 0.05 & 0\\
        0.3 & 0 & 0.7 & 0\\
        0.15 & 0 & 0.85 & 0\\
        0 & 0 & 0 & 1\\
    \end{array} \right ]
$$
with DGP $\bs\theta^* = [\beta^*\;\rho^*\; \gamma^*]^\top =[ 0.5\;0.05\;0.1]^\top$.

This observation model can be interpreted as follows: with probability $q_t^{(i)}$ each individual in compartment $i$ is tested for disease. Allowing $q_t^{(i)}$ to vary across $i$ could model, for example, infective individuals being more likely to be tested. The above choice of ${\+G}_t $ allows for false-positives (first row) and false-negatives (third row), where those testing positive are considered infective, and those testing negative are considered susceptible. Of course, other choices are possible.

\begin{figure}
    \centering
    \includegraphics[width=\textwidth]{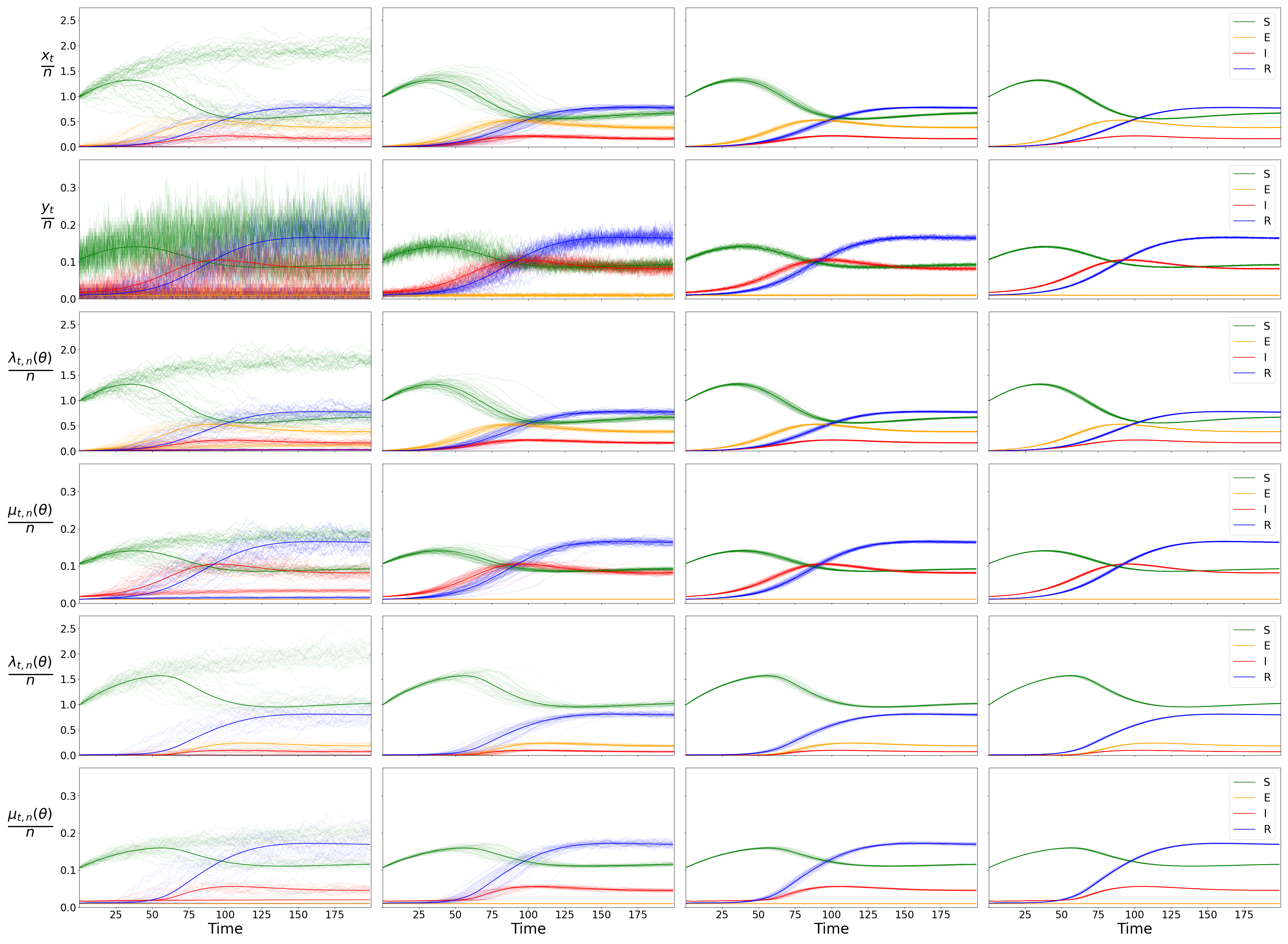}

    \caption{Simulation SEIR example. Top two rows: asymptotic behaviour of ${\+x}_t \slash n$ and ${\+y}_t \slash n$; $50$ simulations from the model (light lines) and theoretical deterministic $n\to\infty$ limits (bold line) for each population size (left to right), $n \in \{100, 1000, 10000, 100000\}$. Middle two rows: filtering intensities associated with the $50$ simulated data sets with $\bs\theta$ taken to be $\bs\theta^*$. Bottom two rows: filtering with $\bs\theta$ set erroneously $\beta=0.1$, $\gamma=0.3$, and all other parameters set as for the middle two rows. }
    \label{fig:extendedSEIR_LLN}
\end{figure}

The top two rows of plots in figure \ref{fig:extendedSEIR_LLN} show $n^{-1}\+x_t$ and $n^{-1}\+y_t$ simulated $50$ times from the model with population sizes $n \in \{100, 1000, 10000, 100000\}$. Note that in the top row, the fact that trajectories for compartment $S$ in $n^{-1}\+x_t$ are valued above $1$ in places is explained in terms of immigration into the $S$ compartment exceeding the combined effect of emigration from $S$ and individuals transitioning from $S$ to  $E$. With $n=100$, the fact that some trajectories for the $S$ compartment are roughly increasing over time corresponds to the lack of an outbreak; for other trajectories which rise and then fall, an outbreak does occur. 

Due to the choices of  $\mathbb{P}_{0,n} $, ${\bs\alpha_{t,n}} $ and ${\bs \kappa}_{t,n}$ set out above, it is immediate that the vectors $\bs\lambda_{0,\infty}$, ${\bs\alpha_{t,\infty}} $ and ${\bs \kappa}_{t,\infty}$ appearing in assumptions \ref{as:init} and \ref{as:params} exist. The convergence of $n^{-1}\+x_t$ and $n^{-1}\+y_t$  as $n\to\infty$ to deterministic limits as discussed in section \ref{sec:consistency_result} is evident in figure \ref{fig:extendedSEIR_LLN}.  

The middle two rows of figure \ref{fig:extendedSEIR_LLN} show the behaviour of the scaled filtering intensities $n^{-1}\bs{\lambda}_{t,n}(\bs\theta)$ and $n^{-1}\bs{\mu}_{t,n}(\bs\theta)$ obtained from algorithm \ref{alg:x} in the case of correctly specified parameters $\bs\theta\leftarrow
\bs\theta^*$.  It is evident that, as per the discussion of asymptotic filtering accuracy in section \ref{sec:consistency_result}, as $n\to\infty$ these quantities converge to the same deterministic limits as do $n^{-1}\+x_t$ and $n^{-1}\+y_t$, respectively. On the other hand, as illustrated in the bottom two rows of figure \ref{fig:extendedSEIR_LLN},  when the model is not correctly specified, then $\bs{\lambda}_{t,n}(\bs\theta)$ and $\bs{\mu}_{t,n}(\bs\theta)$ converge to limits which are not equal to the limits of $n^{-1}\+x_t$ and $n^{-1}\+y_t$.

\begin{figure}[httb!]
    \centering
    \includegraphics[width=\textwidth]{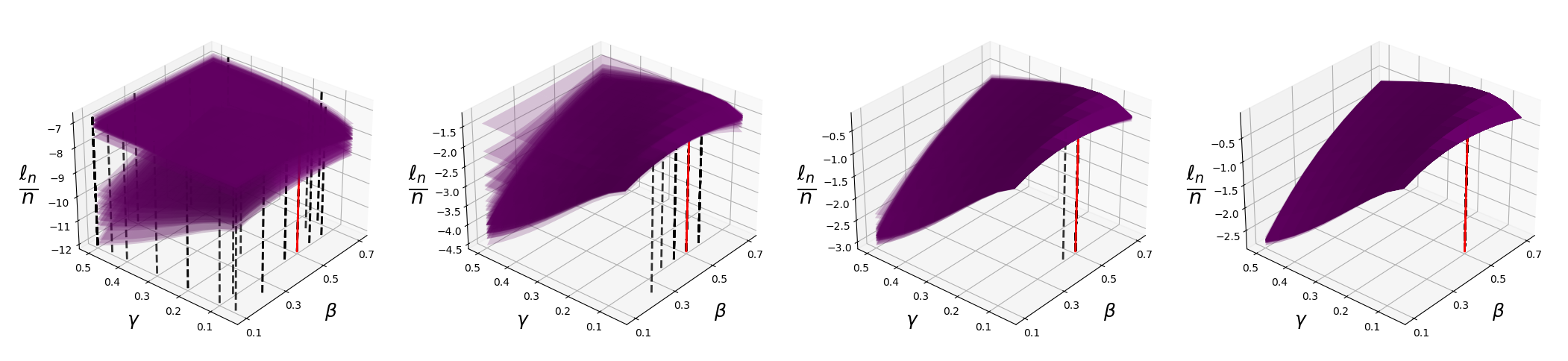}
    \includegraphics[width=\textwidth]{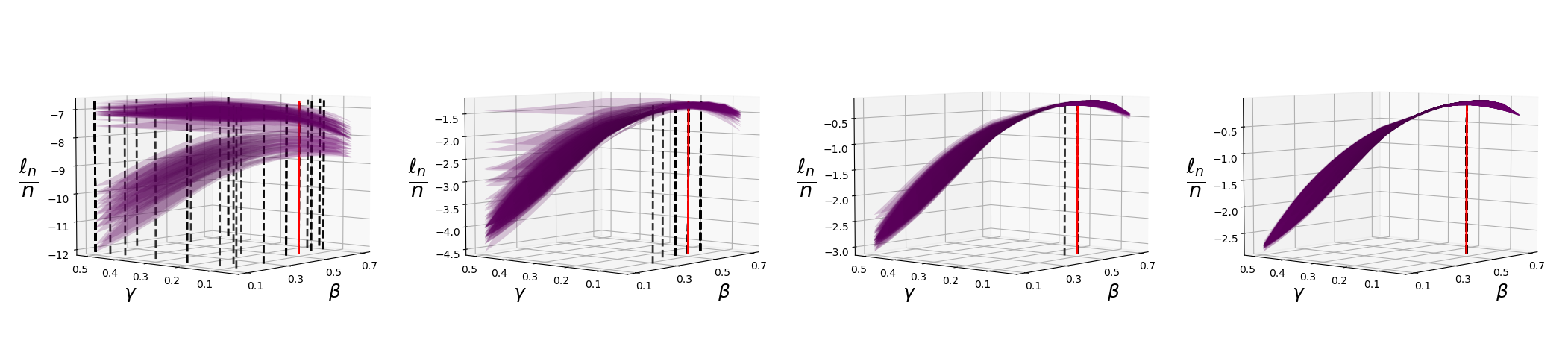}
    \caption{Simulation SEIR example. Purple surfaces within each plot are the scaled log-PAL surfaces associated with $50$ data sets simulated from the model with the DGP. From left to right: $n =100, 1000, 10000, 100000$. Vertical black dashed lines are the maximum PAL estimates for each surface, the vertical red line is the DGP. The two rows show the same 3-d plots from different viewing angles.}
    \label{fig:consistency}
\end{figure}

Figure \ref{fig:consistency} illustrates the behaviour of the scaled log-PAL $n^{-1}\ell_n(\bs\theta)$ evaluated over a find grid of values fo $\bs\theta=[\beta\; \gamma ]^\top$ (all other parameters held constant). Each purple surface in each plot corresponds to a different data set simulated from the model, as in the second row of figure \ref{fig:extendedSEIR_LLN}.  As $n$ grows, figure \ref{fig:extendedSEIR_LLN} evidences convergence of the maximum PAL estimates to the true parameter value, as per theorem \ref{theo:consistency}.

\subsection{Delayed Acceptance PMCMC for the boarding school influenza outbreak}\label{sec:boarding_school}
This example illustrates the use of the PAL within delayed acceptance PMCMC, specifically the delayed acceptance Particle Marginal Metropolis Hastings (daPMMH) algorithm of \cite{golightly2015delayed}. 

\subsubsection*{Data and model}
The data set is the well-known boarding school influenza outbreak data, recorded at a British boarding school in 1978 and reported in the British Medical Journal \cite{Anonbsflu, davies1982christ}. The data are available in the R package ``pomp'' \cite{pomp20126}. On day one there was one infection and over the course of the $14$ day epidemic a total of $512$ students reported symptoms from a population of $n=763$. The observations are prevalence data: daily counts of the total number of symptomatic individuals. We cast this an instance of case (I), using a simple SIR model, where the initial state of the population is fixed to $\left[763\;1\;0 \right]^\top$ and we define the matrix $\+K_{t, \bs \eta}$ as follows:
\begin{equation*}
\+K_{t, \bs \eta} = 
\left [ \begin{array}{ccc}
e^{-\beta \eta^{(2)}} & 1 - e^{-\beta \eta^{(2)}} & 0 \\
0 & e^{-\gamma} & 1 - e^{-\gamma} \\
0 & 0 & 1 
\end{array} \right ],
\end{equation*}
where $\beta$ and $\gamma$ are to be estimated. Observations $y_t$ are modelled as binomially under-reported counts of infected individuals, that is, given $x_t^{(2)}$, $y_t \sim \text{Bin}(x_t^{(2)},q)$ where $q\in[0,1]$ is unknown and to be estimated. To connect with the notation of algorithm \ref{alg:x} we have $\+y_t \equiv [0\;y_t\;0]^\top$ and $\+q_t \equiv [0\;q\;0]^\top$ for $t\geq 1$.

\subsubsection*{Delayed Acceptance Particle Marginal Metropolis Hastings}


In the standard PMMH algorithm \citep{andrieu2010particle}, one calculates a particle filter approximation to the likelihood for each proposed parameter value, which is typically a computationally intensive operation. The daPMMH algorithm introduces an additional `pre-screening' acceptance step based on  an approximate likelihood which is assumed to be cheap to evaluate. Only if the proposed parameter is accepted in this initial step is a particle filter approximation to the likelihood then evaluated; thus in performing this additional step, one seeks to avoid running a particle filter for proposals which are likely to be rejected. Details of the validity of the scheme, in the sense that it indeed targets the true posterior distribution over the parameters, can be found in \cite{golightly2015delayed}. Algorithm \ref{daPMMH} in section \ref{sec:boarding_school_supp} of the supplementary material illustrates how to use a PAL within a daPMMH.

We stress that, although for the SIR model the number of compartments is small ($m=3$), and for the data set in question the population size is fairly small ($n=763$), this actually presents a stern relative speed test for PALs versus particle filters: the particle filter element of the daPMMH and PMMH algorithms involves simulating from the latent compartmental model, and the overall cost of the particle filter, therefore, grows with both the number of compartments and the size of the population, as well as the number of particles. By contrast, evaluating the PAL involves no random number generation and has a cost independent of population size. Thus, if a relative speed gain using PALs can be demonstrated with a small population size and small number of compartments, it is reasonable to expect an even greater relative speed gain for models with larger numbers of compartments and larger populations.

\subsubsection*{Results}

We compare the performance of three algorithms: PALMH: a Metropolis-within-Gibbs algorithm with the PAL substituted in place of the exact likelihood, i.e., targeting an approximation to the exact posterior distribution; PMMH: a standard Particle Marginal Metropolis-Hastings within Gibbs; daPMMH: a delayed acceptance Particle Marginal Metropolis-Hastings within Gibbs, in which we use the PAL for the delayed acceptance step. We apply these three methods to both a synthetic and a real dataset.
For all three algorithms we use Gaussian random walk proposals independently for each element of $\bs \theta$. The random walk variances are tuned to ensure acceptance rates between 20\% and 40\%. The PMMH and daPMMH algorithms were each run with $1000$ particles. All experiments were run on a single core of a 1.90 GHz i7-8650U CPU.

The parameters of the model are collected in the vector $\bs \theta = [\beta\; \gamma\; q]^\top$. We consider a fairly vague prior $p(\bs \theta) = p(\beta)p(\gamma)p(q)$, where $p(\beta)$ and $p(\gamma)$ are truncated Gaussian densities $\mathcal{N}(0,1)_{\geq 0}$ and $p(q)$ is a truncated Gaussian density $\mathcal{N}(0.5,0.5)_{\geq 0, \leq 1}$. 

\paragraph{Simulated data.} We simulated an epidemic for 14 days with the parameter regime $\bs \theta^* = [\beta^*\; \gamma^*\; q^*]^\top = [2\; 0.5\; 0.8\;]^\top$. For each of the PALMH, PMMH, and daPMMH we ran a $5\times10^5$ length chain, discarded $10^5$ for burn in and then thinned to a sample of $2.5\times 10^5$. Trace plots, autocorrelation plots, and posterior sample histograms for each scheme are presented in  section \ref{sec:boarding_school_supp} of the supplementary material, the rates of decay of the ACFs with respect to lag for the daPMMH and PMMH algorithms are similar, the rate of decay for the PALMH algorithm is faster. The Monte Carlo approximations of the posterior marginals are closely matched across the three algorithms, see table \ref{bsflusynthtable} for summary statistics, and are concentrated around the data generating parameters. A single evaluation of the PAL took a mean time of $9.4\times 10^{-6}$ seconds, the particle filter approximation to the likelihood took a mean time of $4.5\times 10^{-3}$ seconds, both algorithms were implemented with Rcpp.

\paragraph{Real data.} On the real data we ran the PALMH, PMMH, and daPMMH for $5\times10^5$ iterations each, with run times of 12.2 minutes, 4.5 hours, and 2.8 hours respectively, exhibiting the speed benefits of the PAL approach. Trace plots, autocorrelation plots, and approximate posterior sample histograms for each scheme are presented in  section \ref{sec:boarding_school_supp}, the rate of decay of the ACF with lag is similar for the daPMMH and PMMH algorithms, the rate of decay for the PALMH algorithm is faster. The daPMMH and PMMH algorithms yield very similar approximate posterior marginals as expected -- see table \ref{bsflufitreal}. The posterior marginals obtained from the PALMH scheme exhibit modes in different locations to those from PMMH/daPMMH, with the following epidemiological interpretation. The approximate posterior marginals obtained from PALMH correspond to a fast growing outbreak (large $\beta$), with individuals spending longer in the infected state (small $\gamma$) and a relatively lower reporting rate (relatively small $q$).  By contrast, the PMMH/daPMMH marginals suggest a slower outbreak (smaller $\beta$) with less time spent in the infected compartment (larger $\gamma$), but with a higher case reporting rate (relatively high $q$). 
Posterior predictive checks \citep{gelman1995bayesian} show that, while having contrasting epidemiological interpretations (potentially due to model mis-specification), both PALMH and PMMH/daPMMH achieve good coverage of the data, see figure \ref{bsflupredictive}. The mean trajectories from these posterior predictive distributions reflect the above interpretations of posterior marginals. The posterior predictive means and credible regions were calculated from $10000$ samples from the posterior predictive distributions produced by the PALMH and daPMMH respectively; each sample from the posterior predictive distribution was generated by: sampling a parameter $\bs \theta'$ from the approximate posterior; then using $\bs \theta'$ to simulate an epidemic trajectory and data record from the model.

We performed inference on this data set using a Linear Noise Approximation to the likelihood as described in section 4 of \cite{fearnhead2014inference} within a Metropolis Hastings scheme, the full results can be found in the supplementary material. We find that, whilst the LNA and the PAL perform similarly in terms of parameter inference, the Latent Compartmental Model from which the PAL is derived is more congruent with reality than the SDE model, since the latter allows non-integer and negative counts of individuals in compartments. Furthermore, a single evaluation of the PAL was approximately $\sim100$ times faster than a single evaluation of the LNA marginal likelihood for this dataset.

\begin{longtable}[h!]{ c c c c c}

 \hline
 Parameter & True value & PALMH & PMMH & daPMMH  \\
 \hline

 $\beta$ & 2	& 2.10\;(1.88, 2.34)			& 2.08\;(1.85 2.35) 	& 2.08 \;(1.85,2.35) \\
 $\gamma$	 & 0.5	& 0.51\;(0.42, 0.63) 			& 0.53\;(0.44 0.65)	& 0.53\;(0.43, 0.65 )  \\
 $q$ 	& 0.8	& 0.81\;(0.70, 0.94)			& 0.82\;(0.71, 0.96) 	&  0.82\;(0.71, 0.96) \\

 \hline
  \caption{Boarding school model posterior means and 95\% credible interval, synthetic data. \label{bsflusynthtable}}
 \end{longtable}
 \vspace{-1em}
 \begin{longtable}[h!]{ c c c c}

 \hline
 Parameter & PALMH & PMMH & daPMMH \\
 \hline

 $\beta$ 	& 2.98\;(2.60,3.30) 	& 2.30\;(2.00,2.68) 	&  2.30\;(2.00,2.68)	 \\
 $\gamma$ 	& 0.406\;(0.35,0.47) 	& 0.58\;(0.47,0.68)	 	&  0.58\;(0.47,0.68)	 \\
 $q$ 		& 0.69\;(0.62,0.77) 	& 0.90\;(0.76,0.99)  	&  0.90\;(0.76,0.99)	 \\

 \hline
  \caption{Boarding school model posterior means and 95\% credible interval, real data. \label{bsflufitreal}}
 \end{longtable}

\begin{figure}
    \centering
    \includegraphics[width=\textwidth]{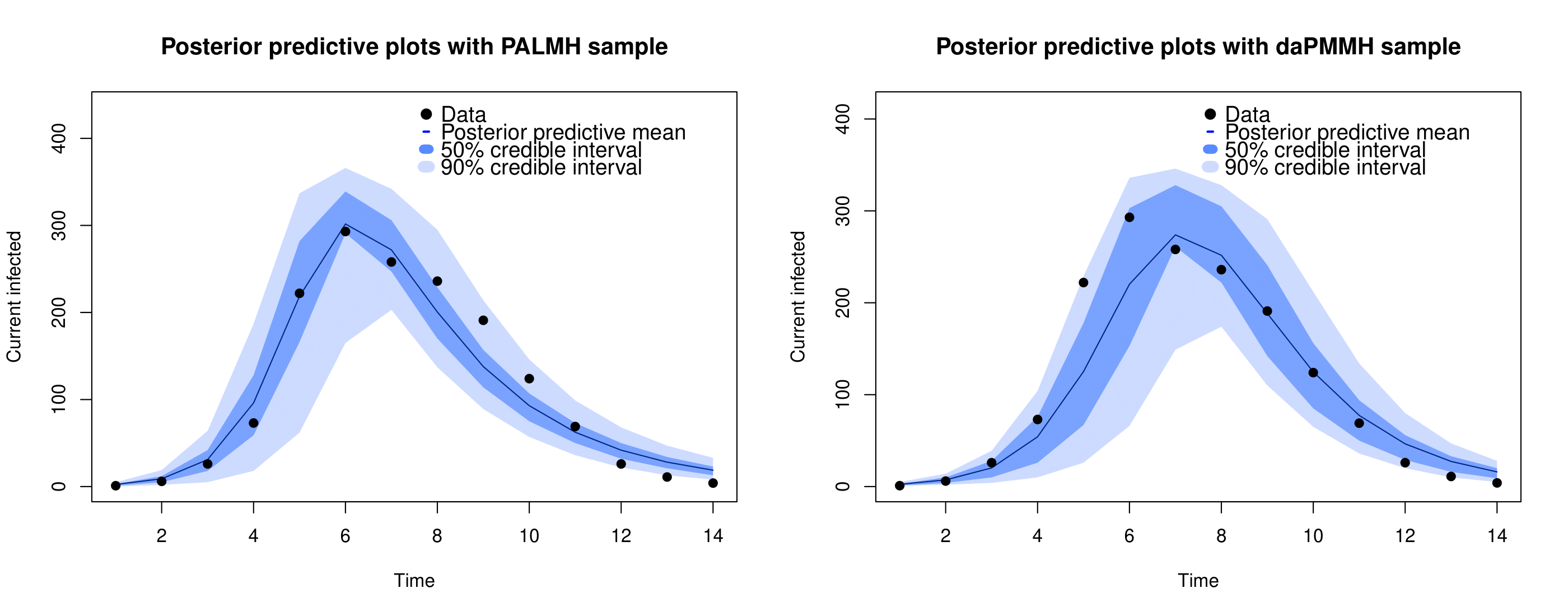}
    \caption{Boarding school influenza example. Means and credible intervals for posterior predictive distributions.}
    \label{bsflupredictive}
\end{figure}

\subsubsection*{PALMH: prior sensitivity analysis}
Section \ref{sec:boarding_school} explores Bayesian analysis on real data under vague priors, with results in table \ref{bsflufitreal}. Whilst both the PALMH and PMMH schemes result in identical inferences on simulated data, there are discrepancies on the real-world boarding school data - which could be attributed to a misspecified model. The estimated parameters under the PALMH suggest an $R_0$ of around $7.3$ (the PMMH estimates suggest an $R_0$ of around $4$) which is consistent with the entire population being infected at some point during the epidemic - one can question whether this is a realistic inference. Given the closed nature of this epidemic, along with the likelihood of close monitoring of the individuals in the system, one could afford to place stronger priors on $q$. In this section, we explore the inferences one can make using the PALMH scheme under more informative priors. We consider the following scenarios:

\begin{enumerate}
\item $\mathcal{N}(0.5,0.5)_{\geq 0, \leq 1}$ - the vague prior used in the original analysis.
\item $\text{Beta}(9,1)$ - an informative prior with mean $0.9$ and variance $0.0082$ and mode $1$.
\item $\text{Beta}(95,5)$ - a strongly informative prior with mean $0.95$ and variance $0.00047$ and mode $1$.
\item $\delta_{0.9}$ - an atomic prior on $0.9$ (the mean inferred $q$ under the PMMH analysis).
\item $\delta_{0.95}$ - an atomic prior on $0.95$.
\end{enumerate} 

For each of these we ran a $5 \times 10^5$ length chain, discarded $10^5$ for burn in and then thinned to a sample of $2.5 \times 10^5$, we summarise our findings in table \ref{bsflupriors}. We find that as the prior belief in a high reporting rate is strengthened, the resulting estimated $R_0$  lowers. If one places strong prior belief in a high reporting rate, see the  $\text{Beta}(95,5)$ and $\delta_{0.95}$  columns, then the inferred $R_0$ falls in line with our findings using the PMMH procedure.

\begin{longtable}[h!]{ c c c c c c}

 \hline
 Parameter & $\mathcal{N}(0.5,0.5)_{\geq 0, \leq 1}$ & $\text{Beta}(9,1)$ & $\text{Beta}(95,5)$  & $\delta_{0.9}$ & $\delta_{0.95}$ \\
 \hline

 $\beta$ 	& 2.98(2.60,3.30) 	& 2.77(2.29,3.22) &	2.39(2.01,2.84)&  2.46(2.08,2.92) & 2.35(1.98,2.77)\\
 $\rho$ 	& 0.41(0.35,0.47) 	& 0.44(0.35,0.57)	&   0.58(0.49,0.68) &  0.55(0.48,0.63)	 & 0.59(0.52,0.68)	 \\
 $q$ 		& 0.69(0.62,0.77) 	& 0.74(0.63,0.90) &   0.94(0.86,0.97)	&  0.90	 & 0.95	 \\
$R_0$     & 6.91 & 6.47 & 4.14 & 4.52 & 4.05	 \\
 \hline
  \caption{Boarding school model PALMH prior sensitivity analysis. Posterior means and 95\% credible interval, real data under various prior assumptions with $R_0$ posterior mean  point estimates. \label{bsflupriors}}
 \end{longtable}

To investigate the disparity between inferences using the PALMH procedure vs the PMMH procedure when applied to real data, exhibited in table \ref{bsflufitreal}, we repeated the analysis with a fixed $q=0.9$ (equivalent to the $\delta_{0.9}$ prior). The resulting posteriors for the PALMH and PMMH procedures still exhibited some differences, but were much more similar as a result of this stronger assumption:

\begin{itemize}
\item The prosterior mean and 95\% credible interval for $\beta$ under the PMMH procedure was 2.14 (1.91,2.40), to be compared with 2.46( 2.08,2.92) for PALMH.
\item The prosterior mean and 95\% credible interval for $\gamma$ under the PMMH procedure was 0.58 (0.53,0.64), to be compared with 0.55( 0.48,0.63) for PALMH.
\item The posterior mean estimates for $R_0$ under the PMMH and PALMH procedures were 4.52 and 3.70, respectively.
\end{itemize}

\subsubsection{Algorithm details for section \ref{sec:boarding_school}}
The following algorithm describes how the PAL can be used within a delayed acceptance pmcmc scheme.
\label{sec:boarding_school_supp}
\begin{algorithm}[H]
\caption{Delayed acceptance PMMH algorithm with PAL}\label{daPMMH}
\begin{algorithmic}[1]
  \Statex {\bf Initialize:} $i = 0$, set $\bs \theta_0$ arbitrarily.
  
  \State Run a particle filter to produce an approximation to  $p(\+y_{1:t}\mid \bs \theta_0)$ and denote this as $\hat p(\+y_{1:t}\mid \bs \theta_0)$.
  
  \State Run algorithm \ref{alg:x} to produce a PAL approximation  to $p(\+y_{1:t}\mid \bs \theta_0)$ and denote this as $\hat {p}_a(\+y_{1:t}\mid \bs \theta_0)$.
  \State {\bf for}  $i \geq 1:$ \label{step1}
  \State \quad sample $\bs \theta_* \sim q(\cdot \mid \bs \theta_{i-1}).$
  \State \quad\textbf{stage 1} 
  
  \begin{itemize}
     \item Run algorithm \ref{alg:x} to produce a PAL approximation to $p(\+y_{1:t}\mid \bs \theta_*)$ and denote this as $\hat {p}_a(\+y_{1:t}\mid \bs \theta_*)$.
     \item  With probability:
     \begin{equation*}
     \alpha_1(\bs \theta_{i-1}, \bs \theta_*) = \min\left\{ 1, \frac{\hat {p}_a(\+y_{1:t}\mid \bs \theta_*)p(\bs \theta_*)}{\hat {p}_a(\+y_{1:t}\mid \bs \theta_{i-1})p(\bs \theta_{i-1})}\frac{q(\bs \theta_{i-1} \mid \bs \theta_*)}{q(\bs \theta_* \mid \bs \theta_{i-1})}\right\},
     \end{equation*}
		run a particle filter to produce an approximation to $p(\+y_{1:t}\mid \bs \theta_*)$, denote this as $\hat p(\+y_{1:t}\mid \bs \theta_*)$ and go to \textbf{Stage 2}. Otherwise, set $\bs \theta_i = \bs \theta_{i-1}$,  set $i=i+1$ and return to 4.
  \end{itemize}
  \State \label{step2} \quad\textbf{stage 2} 
  
  With probability 
  \begin{equation*}
  \alpha_2(\bs \theta_{i-1}, \bs \theta_*) = \min \left\{1, \frac{\hat p(\+ y_{1:t}|\bs \theta_*)p(\bs \theta_*)}{\hat p(\+ y_{1:t}| \bs \theta_{i-1})p(\bs \theta_{i-1})}\frac{\hat {p}_a(\+y_{1:t}| \bs \theta_{i-1})p(\bs \theta_{i-1})}{\hat {p}_a(\+y_{1:t}|\bs \theta_*)p(\bs \theta_*)} \right\},
  \end{equation*}
  \quad \quad set $\bs \theta_i = \bs \theta_*$, otherwise set $\bs \theta_i = \bs \theta_{i-1}$. Set $i=i+1$ and return to 4.
  \State {\bf end for}
\end{algorithmic}
\end{algorithm}

\subsubsection*{Time comparisons with the Linear Noise Approximation}
For comparisons with the PAL we consider LNAMH: a Metropolis-within-Gibbs algorithm with a Linear noise approximation (LNA) to the likelihood of a stochastic differential equation model used in the accept/reject step, see \cite{fearnhead2014inference} for details. We apply the LNAMH to the real dataset to compare and contrast to the PAL, for these comparisons we implement the PAL in base R, whereas the LNA computations use base R interfaced with fortran for cumbersome ODE solving calculations. The LNAMH implementation introduces an extra parameter in the variance of a Gaussian obsevation model, analogous to $V(\bs \theta)$ in section 4.2 of \cite{fearnhead2014inference}, which we will denote as $v$; we consider a vague truncated Gaussian prior of  $\mathcal{N}(400,300)_{\geq 0}$. We ran the chain for 100k iterations, discarded the first 20k and thinned to a sample of 25k to produce the posterior histograms.

The posterior predictive plot associated with the LNAMH sample, figure \ref{LNApostpred}, demonstrates good coverage of the data, yet they help illustrate some of its shortfalls in comparison to the PAL approach: the Gaussian nature of the ingredients of the LNA permits non-integer and even allows negative valued observations, which is clearly not parsimonious with reality; further, modelling with a constant in time observation variance leads to underconfidence in the start and end of the data record. In order to circumvent these issues within the LNA framework, one would have to turn to sophisticaed and expensive methods; alternatively, one could avoid each of them for free through the use of PALs.

\begin{figure}
    \centering
    \includegraphics[width=0.7\textwidth]{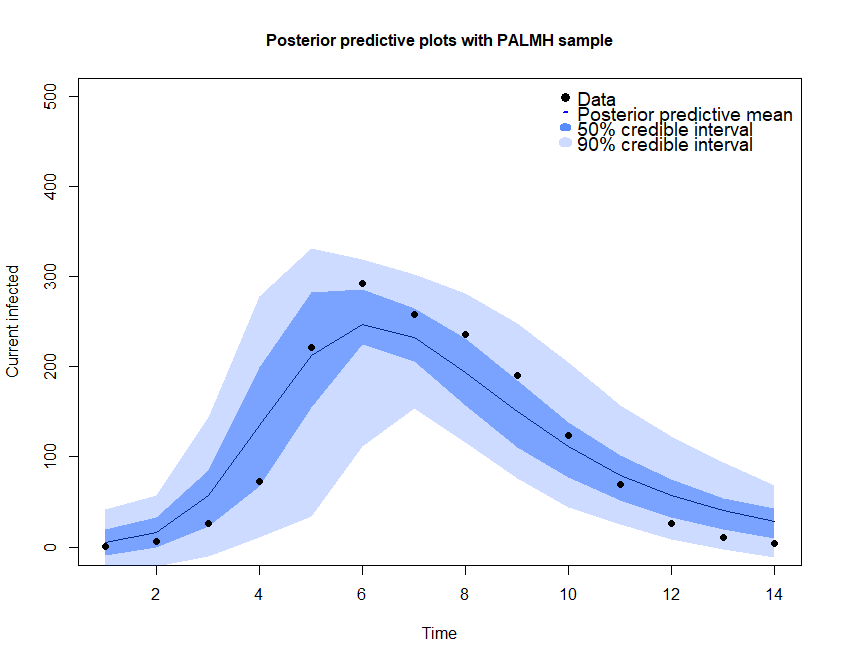}
    \caption{Posterior predictive distribution for LNAMH sample. To produce this plot we sampled a parameter from the approximate posterior and simulated from the SDE model 1000 times.}
    \label{LNApostpred}
\end{figure}

Figure \ref{LNAtime} reports the mean time ratio between a single evaluation of the LNA likelihood and a single evaluation of the PAL for varying ODE solver intermediate time step choices for the LNA and analogous choice of $h$ for the PAL. The order of magnitude of the speed gains is around 100 for the PAL, demonstrating the significant speed benefits given by the simplicity of computations needed to compute the PAL in comparison to cumbersome ODE solution calculations. Experiments were performed on an Intel(R) Core(TM) i7-7700HQ CPU @ 2.80GHz processor.

\begin{figure}
    \centering
    \includegraphics[width=0.7\textwidth]{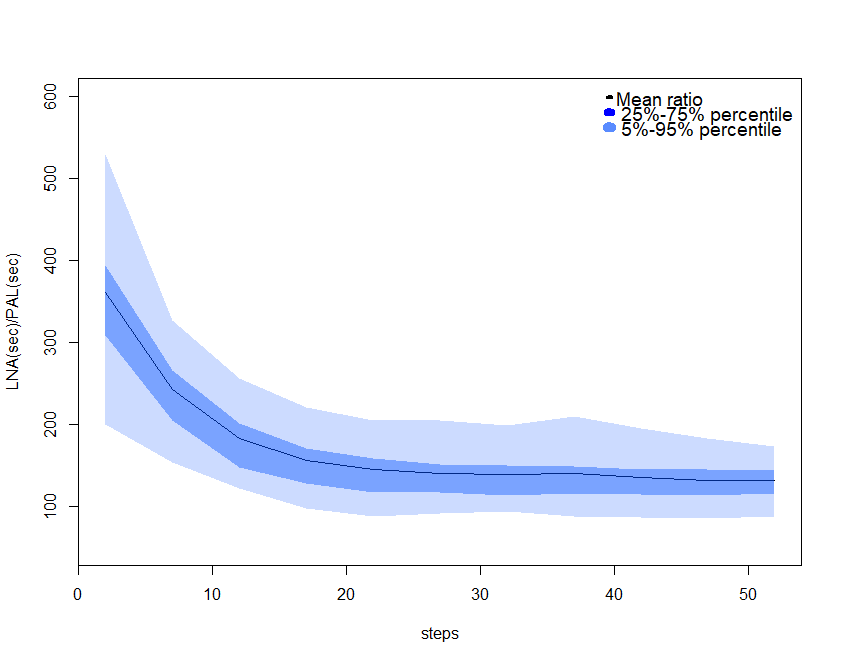}
    \caption{Time comparisons for the LNA. The ratio of one evaluation of the LNA liklihood to one evaluation of the PAL for varying ODE solver intermediate time steps, with the comparative PAL evaluation run with $h=1\slash \text{number of timesteps}$. Percentiles are based on 1000 runs.}
    \label{LNAtime}
\end{figure}

\begin{figure}
    \centering
    \includegraphics[width=\textwidth]{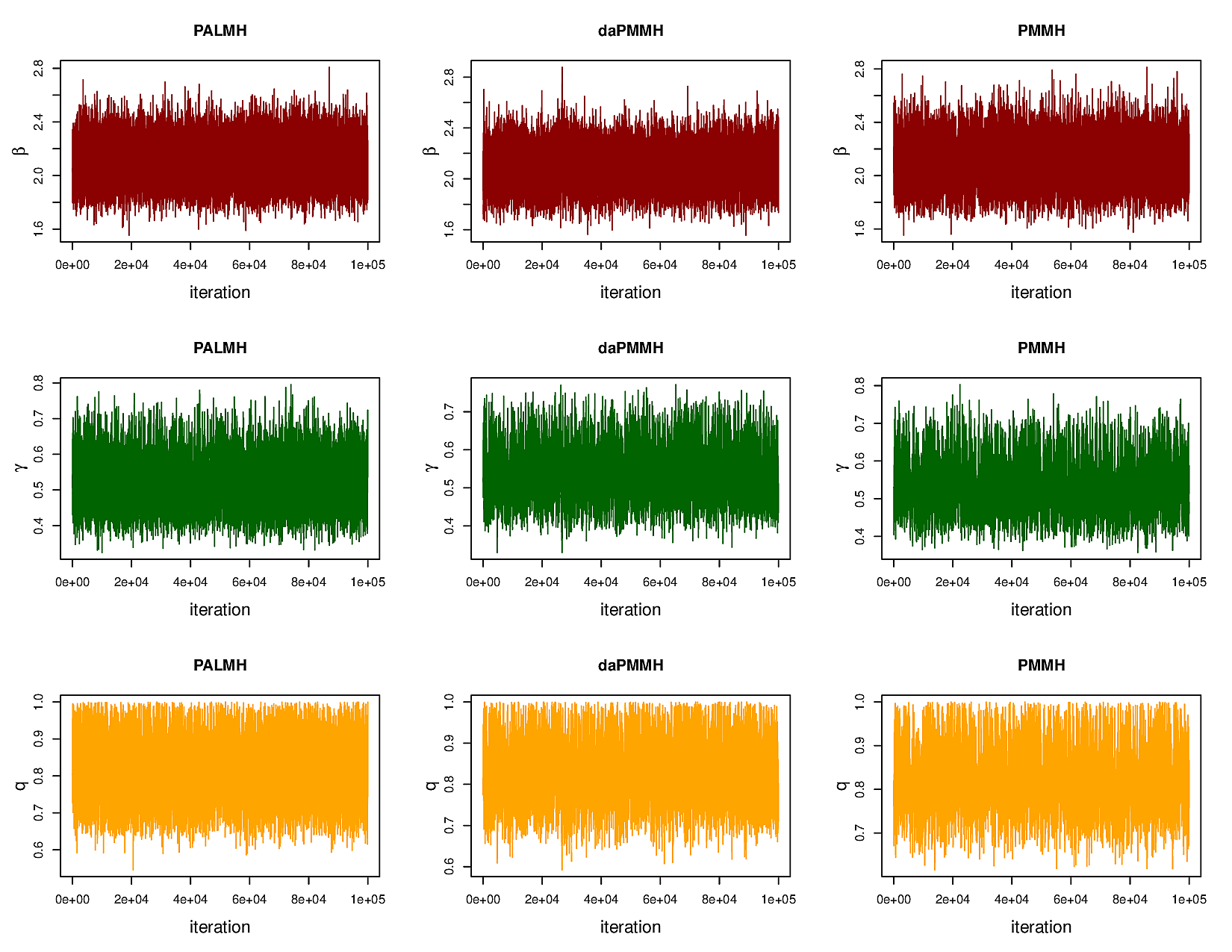}
    \caption{Boarding school influenza example. Traceplots produced by the 3 procedures we have considered when run using synthetic data generated with parameters $\bs \theta^* = (\beta^*, \gamma^*, q^*) = (2, 0.5, 0.8)$. The plots display the first $10^5$ iterations after the burn in period.}
    \label{syntheticdatatrace}
\end{figure}

\begin{figure}
    \centering
    \includegraphics[width=\textwidth]{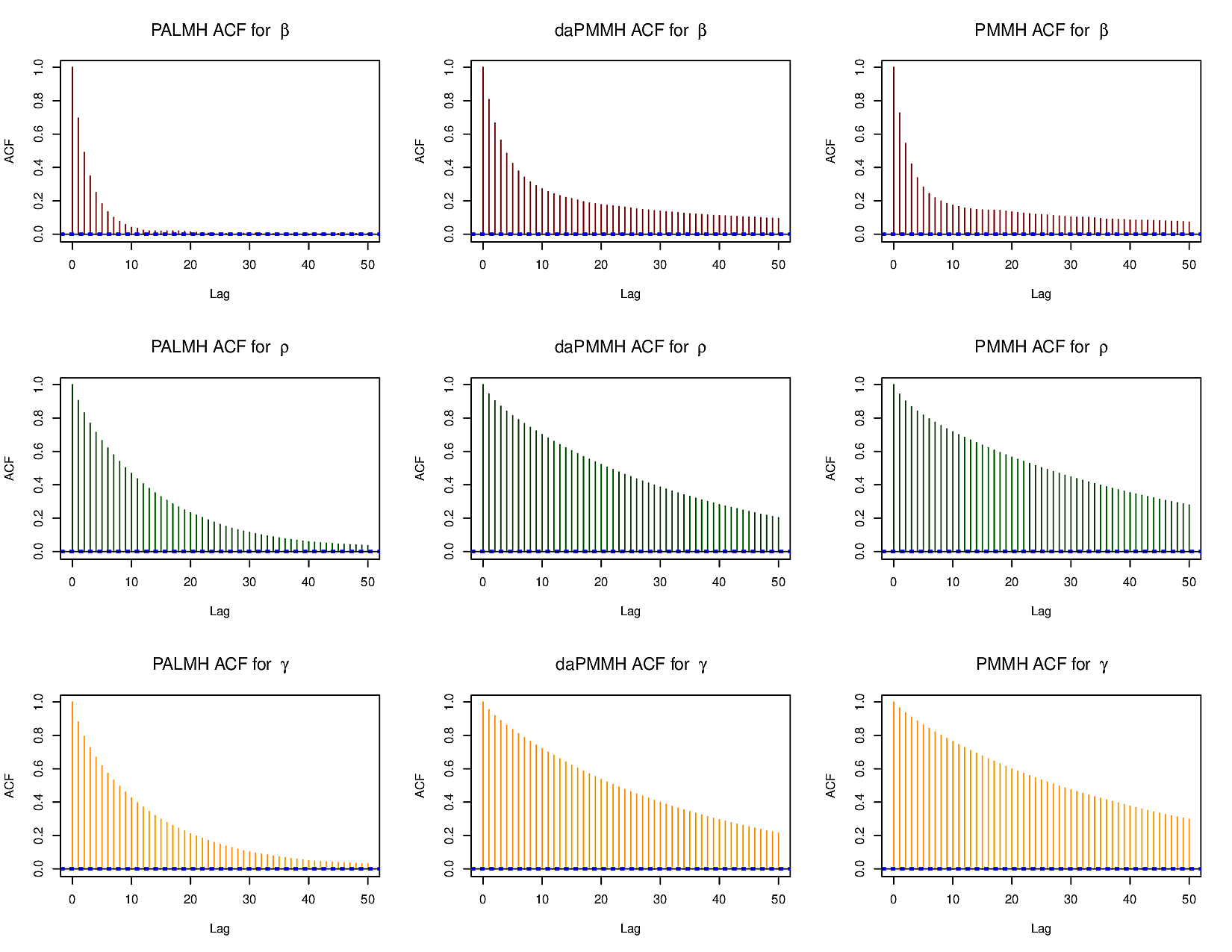}
    \caption{Boarding school influenza example. ACF plots for each considered scheme when run using synthetic data generated with parameters $\bs \theta^* = (\beta^*, \gamma^*, q^*) = (2, 0.5, 0.8)$.}
    \label{syntheticdataacf}
\end{figure}

\begin{figure}
    \centering
    \includegraphics[width=\textwidth]{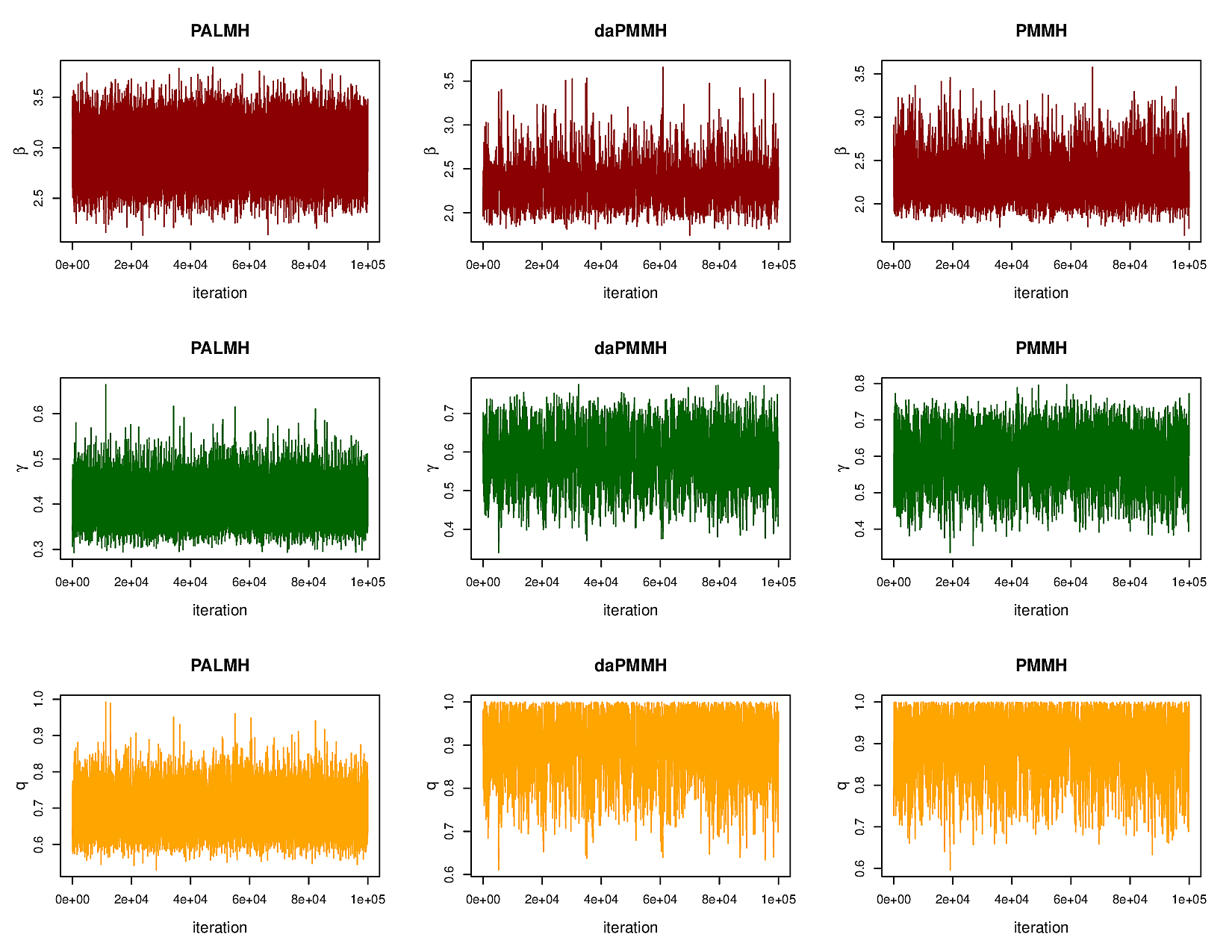}
    \caption{Boarding school influenza example. Traceplots produced by the three considered schemes run using real data. The plots display the first $10^5$ iterations after the burn in period.}
    \label{realdatatrace}
\end{figure}

\begin{figure}
    \centering
    \includegraphics[width=\textwidth]{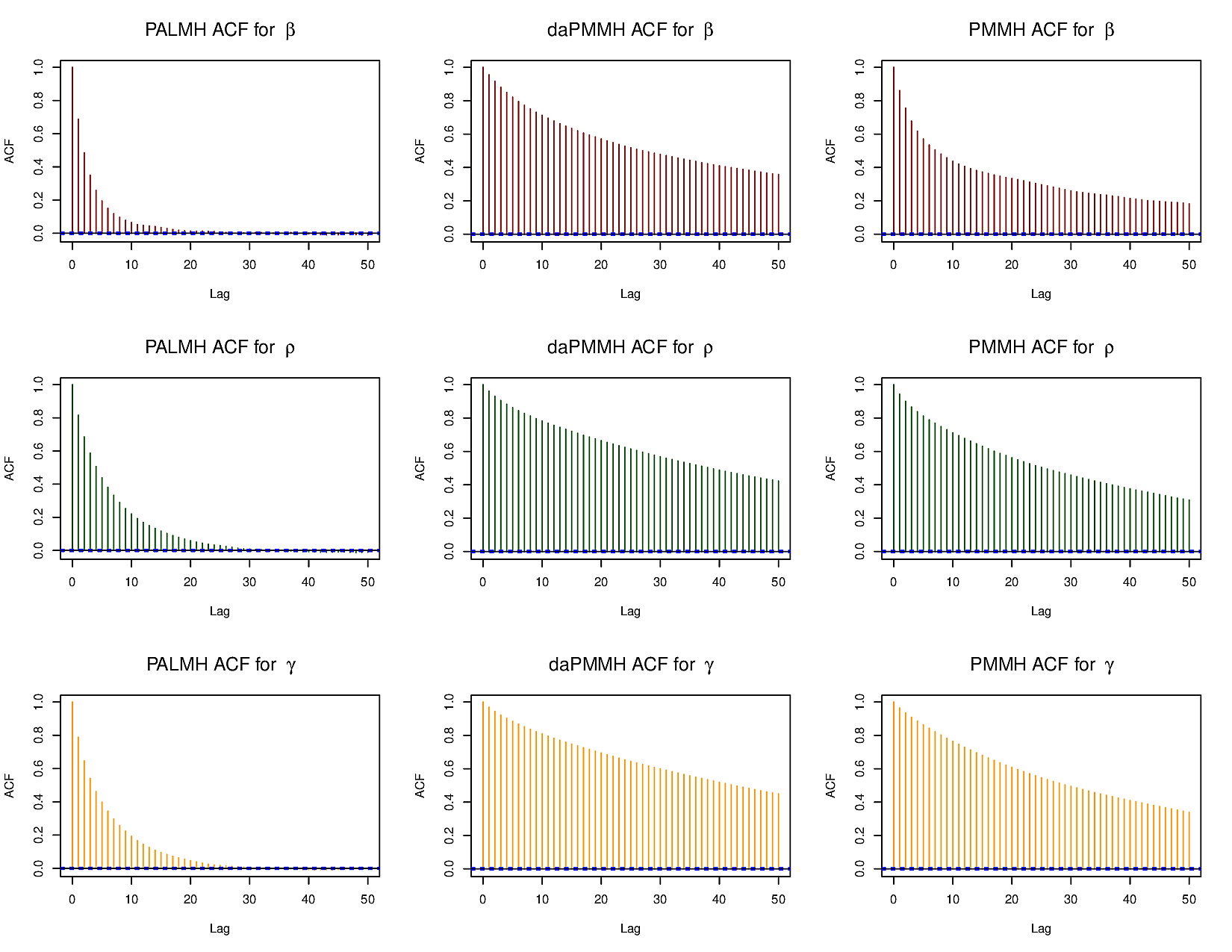}
    \caption{Boarding school influenza example. ACF plots produced by the three schemes run using real data.}
    \label{realdataacf}
\end{figure}

 \begin{figure}
    \centering
    \includegraphics[width=\textwidth]{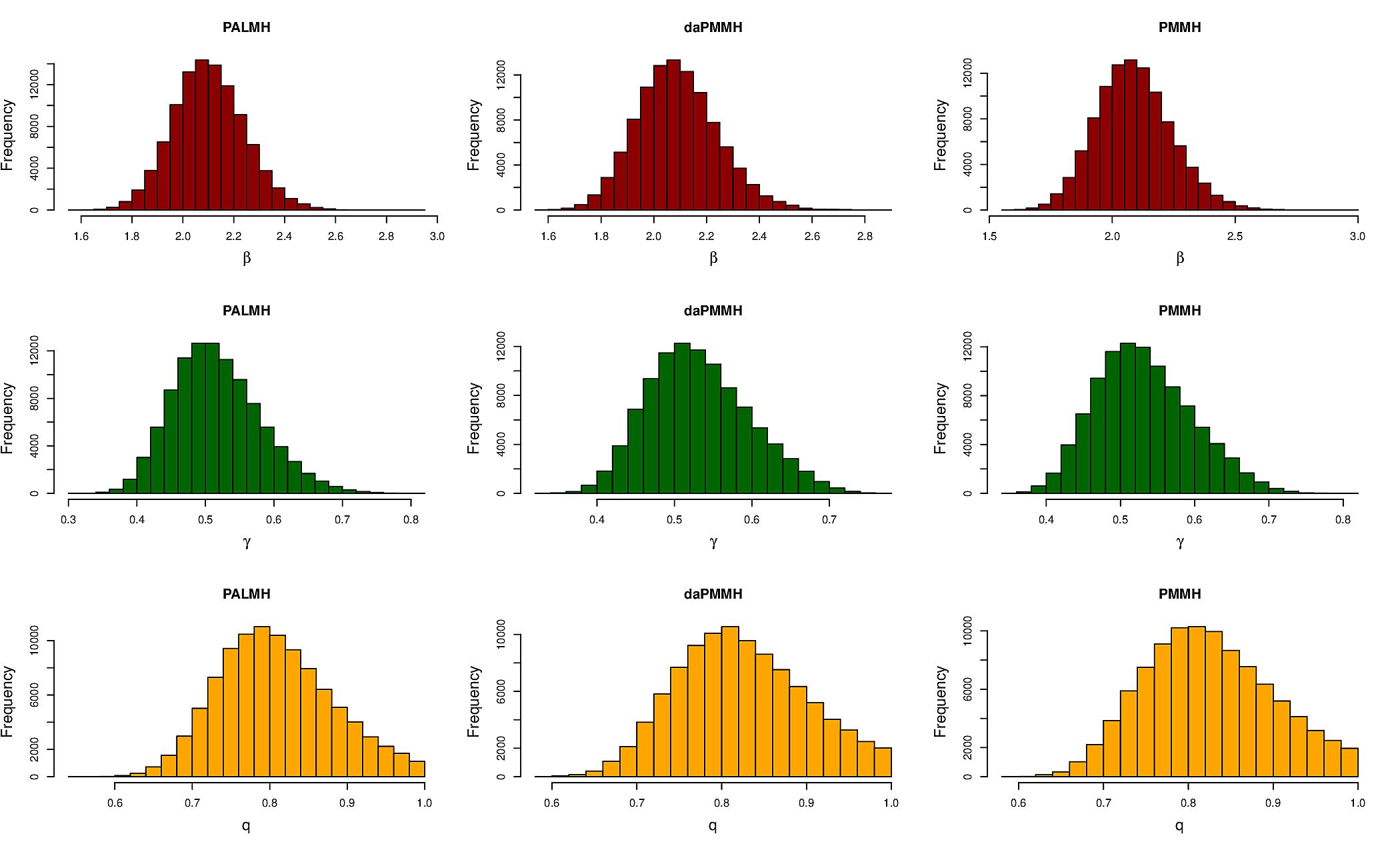}
    \caption{Boarding school influenza example. Posterior marginales produced by the three algorithms when run using synthetic data generated with parameters $\bs \theta^* = [\beta^*\; \gamma^*\; q^*]^\top = [2 \; 0.5\; 0.8]^\top$, the histograms are based on a thinned sample of $2.5 \times 10^4$.}
    \label{syntheticdatapost}
\end{figure}

\begin{figure}
    \centering
    \includegraphics[width=\textwidth]{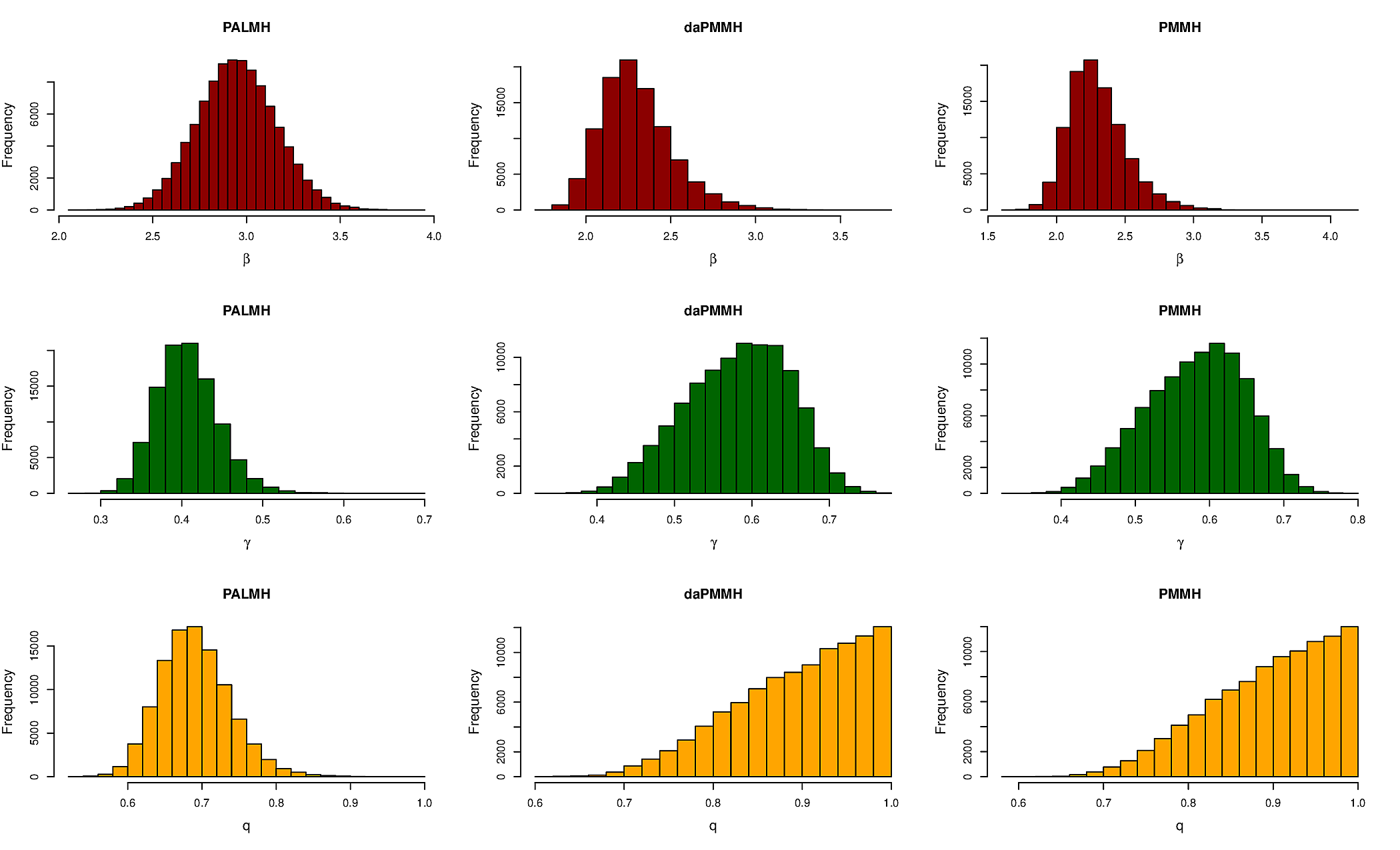}
    \caption{Boarding school influenza example. Posterior samples produced by 3 considered schemes run using real data, the histograms are based on a thinned sample of $2.5 \times 10^4$.}
    \label{realdatapost}
\end{figure}

\begin{figure}
    \centering
    \includegraphics[width=\textwidth]{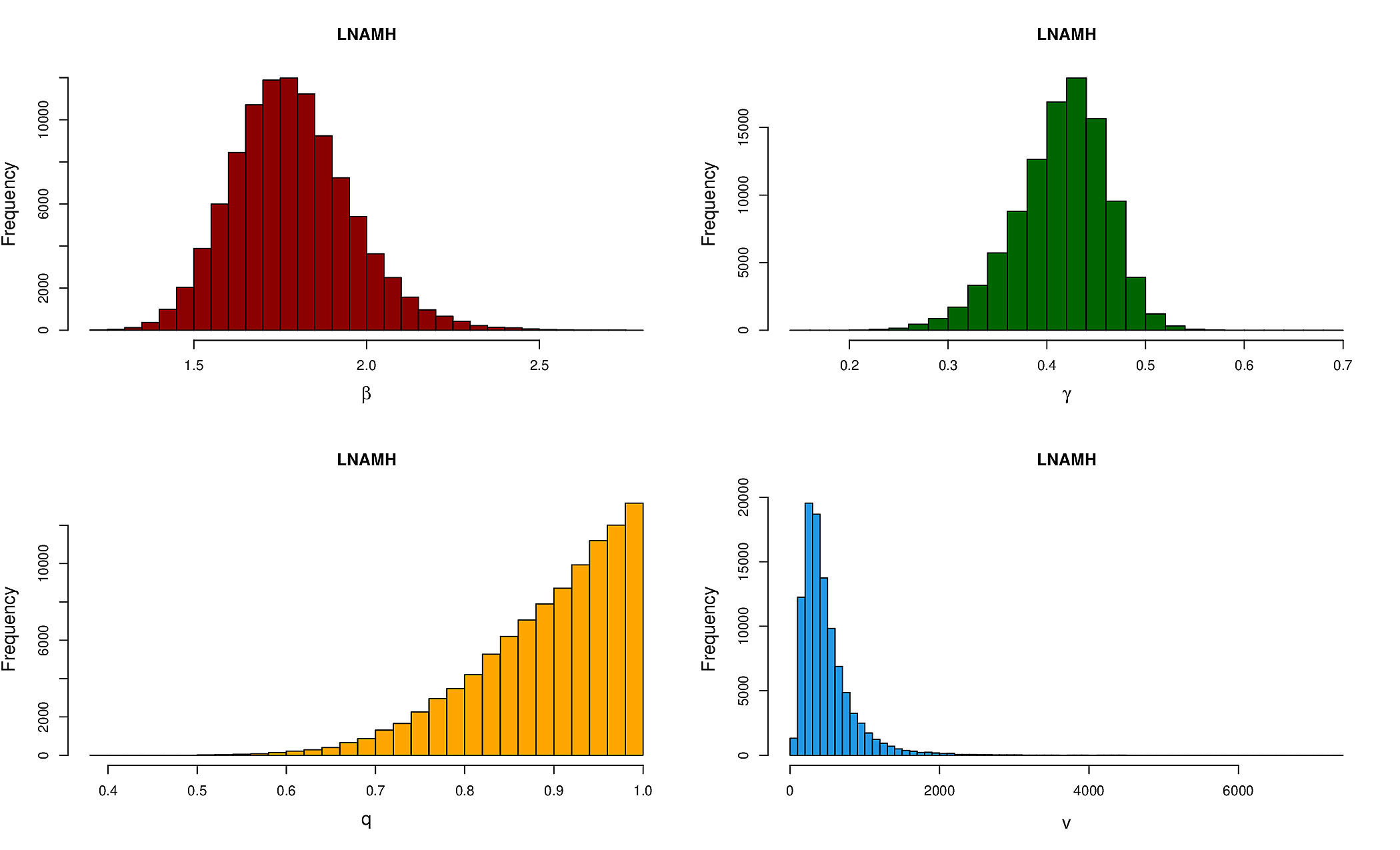}
    \caption{Boarding school influenza example. Posterior samples produced by the LNA procedure, the histograms are based on a thinned sample of $2.5 \times 10^4$.}
    \label{realdatapostlna}
\end{figure}
 \newpage
\subsection{Supplementary material for section \ref{sec:age_struct}} \label{sec:age_struct_supp}

To write the age structured model of section \ref{sec:age_struct} as an instance of the Latent Compartmental Model we take $m=16$ and identify vectors $\+x_{k,t} := \left[S_{k,t}\;E_{k,t}\;I_{k,t}\;R_{k,t} \right]^\top$, $\+x_t := \left[\+x_{1,t}^\top\;\dots\; \+x_{4,t}^\top\right]^\top$ and matrices:

\begin{equation*}
\+ Z_{k,t} :=\left[\begin{array}{cccc}
S_{k,t} -  B_{k,t} & B_{k,t} & 0 & 0 \\
0 & E_{k,t}-C_{k,t} & C_{k,t} & 0 \\
0 & 0 & I_{k,t}-D_{k,t}& D_{k,t} \\
0 & 0 & 0 & R_{k,t}
\end{array}\right],
\; 
\+Z_{t}:=\left[\begin{array}{cccc}
\+Z_{1,t}& & \dots & 0 \\
 & \+Z_{2,t,}&  &  \vdots \\
\vdots & & \ddots& \\
0 & \dots &  & \+Z_{4,t}
\end{array}\right].
\end{equation*}

\begin{equation*}
\mathbf{K}_{k,t, \boldsymbol{ \eta}(\+x_t)}:=\left[\begin{array}{cccc}
e^{-h \bar \beta_{k,t}} & 1-e^{-h \bar \beta_{k,t}} & 0 & 0 \\
0 & e^{-h \rho} & 1-e^{-h \rho} & 0 \\
0 & 0 & e^{-h \gamma} & 1-e^{-h \gamma} \\
0 & 0 & 0 & 1
\end{array}\right], 
\;
\mathbf{K}_{t, \boldsymbol{ \eta}}:=\left[\begin{array}{cccc}
\mathbf{K}_{1,t, \boldsymbol{ \eta}}& & \dots & 0 \\
 & \mathbf{K}_{2,t, \boldsymbol{ \eta}}&  &  \vdots \\
\vdots & & \ddots& \\
0 & \dots &  & \mathbf{K}_{4,t, \boldsymbol{ \eta}}
\end{array}\right],
\end{equation*}
where the $\bar \beta_{k,t}$ are the elements of the vector on the l.h.s. of \eqref{eq:beta_bar}. Due to the block-diagonal structure of the matrix $\mathbf{K}_{t, \boldsymbol{ \eta}}$ for this example, algorithm \ref{alg:Ztagg} can be simplified to avoid performing various multiplications by zero. The resulting procedure is algorithm \ref{alg:age}.

   \begin{longtable}[t!]{ c c c c c c c c c c c c}\\

 \hline
 Parameter & ODE & PAL  \\
 \hline

$q_1$ & $0.93\;(0.78,0.99)$& $0.71\;(0.53,0.97)$\\
$q_2$ & $0.96\;(0.86,0.99)$& $0.52\;(0.49,0.56)$\\
$q_3$ & $0.28\;(26,0.30)$& $0.84\;(0.61,0.99)$\\
$q_4$ & $0.28\;(0.22,0.34)$& $0.25\;(0.19,0.32)$\\
$\beta_{11}$ & $4.34\;(1.36,8.83)$& $1.26\;(0.44,2.44)$\\
$\beta_{12}$ & $2.91\;(1.09,5.45)$& $0.85\;(0.56,1.25)$\\
$\beta_{13}$ & $3.51\;(2.54,4.59)$& $0.26\;(0.09,0.52)$\\
$\beta_{14}$ & $1.33\;(0.58,2.29)$& $0.17\;(0.05,0.36)$\\
$\beta_{22}$ & $2.55\;(0.86,5.11)$& $4.21\;(3.98,4.37)$\\
$\beta_{23}$ & $6.89\;(5.88,8.18)$& $0.46\;(0.35,0.60)$\\
$\beta_{24}$ & $0.72\;(0.36,1.12)$& $0.09\;(0.04,0.16)$\\
$\beta_{33}$ & $18.08\;(17.54,18.50)$& $0.35\;(0.15,0.58)$\\
$\beta_{34}$ & $0.14\;(0.06,0.25)$& $0.10\;(0.01,0.33)$\\
$\beta_{44}$ & $21.34\;(20.41,22.26)$& $1.96\;(1.59,2.24)$\\

 \hline
 \caption{Age-structured 'flu example. Posterior means and
       95\%  credible intervals. \label{agestructfit}}
 \end{longtable}

\begin{algorithm}[!htb]
\caption{Filtering for the age-structured model}\label{alg:age}
\begin{algorithmic}[1]
  \Statex {\bf Initialize:} $\bs {\bar \lambda}_{0} \leftarrow \bs \lambda_{0}$.
  \State $\bs {\bar \lambda}_{0} \leftarrow \left[\bs {\bar \lambda}_{1,0}^\top\; \cdots\; \bs {\bar \lambda}_{4,0}^\top \right]^\top$
  \State {\bf for} $r\geq 1:$
  \State \quad {\bf for}  $t = \tau_{r-1}+1,\dots, \tau_r -1$: 
  \State \quad \quad{\bf for}  $k = 1, \dots, 4$ :

  \State \quad \quad \quad $\+ \Lambda_{k,t} \leftarrow (\bs {\bar \lambda}_{k,t-1} \otimes \+1_m)\circ \+K_{k,t,{\bs \eta}\left(\bs {\bar \lambda}_{t-1}\right)} $
  \State \quad \quad \quad $\bar{\bs \lambda}_{k,t} \leftarrow (\+ 1_m^\top \bar{ \+ \Lambda}_{k,t})^\top$
  \State \quad \quad{\bf end for}
      \State \quad\quad  $\bs {\bar \lambda}_{t} \leftarrow \left[\bs {\bar \lambda}_{1,t}^\top\; \cdots\; \bs {\bar \lambda}_{4,t}^\top \right]^\top$
  \State \quad {\bf end for}
  \State \quad {\bf for}  $k = 1, \dots, 4$: 
  \State \quad \quad  $\+ \Lambda_{k,\tau_r} \leftarrow (\bs {\bar \lambda}_{k,\tau_r-1}\otimes \+1_m)\circ \+K_{k,\tau_r,{\bs \eta}\left(\bs {\bar \lambda}_{t-1}\right)} $
  \State \quad \quad $\+M_{k,r} \leftarrow \sum_{t=\tau_{r-1}}^{\tau_r} \bs \Lambda_{k,t}\circ \+Q_{k,t}$
   \State \quad \quad  $\+ {\bar \Lambda}_{k,\tau_r} \leftarrow \left( \+1_m \otimes \+1_m - \+ Q_{k,\tau_r}  \right)\circ \bar{\bs \Lambda}_{k,\tau_r}+ \bar{\+ Y}_{k,r}\circ \bs \Lambda_{k,\tau_r}\circ \+Q_{k,\tau_r}\oslash\+M_{k,r}$
  \State \quad \quad $\bar{\bs \lambda}_{k,\tau_r} \leftarrow (\+ 1_m^\top \bar{ \+ \Lambda}_{k,\tau_r})^\top$
  \State \quad{\bf end for}
   \State \quad  $\bs {\bar \lambda}_{\tau_r} \leftarrow \left[\bs {\bar \lambda}_{1,\tau_r}^\top\; \cdots\; \bs {\bar \lambda}_{4,\tau_r}^\top \right]^\top$
    \State \quad   ${\mathcal{L}}(\bar{\+Y}_{1:4,r}|\bar{\+Y}_{1:4,1:r-1}) \leftarrow  \sum_{k=1}^{4}\+ - 1_m^\top\+M_{k,r}\+1_m +\+1_m^\top (\bar{\+ Y}_{k,r} \circ \+M_{k,r})\+1_m - \+1_m^\top \log(\bar{\+ Y}_{k,r}!)\+1_m $
    \State {\bf end for}
\end{algorithmic}
\end{algorithm}

\begin{figure}
    \centering
    \includegraphics[width=\textwidth]{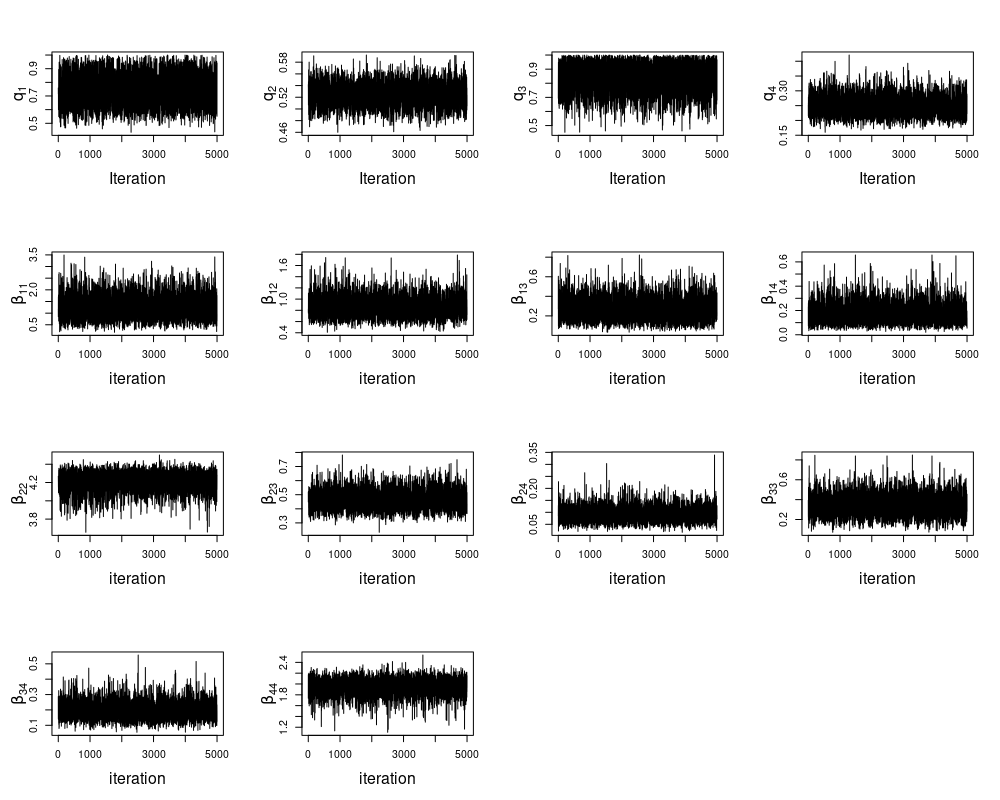}
    \caption{Age-structured example. HMC posterior trace plots for the parameters of the stochastic model produced using Stan. The plots show the first $5^5$ iterations after the burn in period.}
    \label{agestochtrace}
\end{figure}

\begin{figure}
    \centering
    \includegraphics[width=\textwidth]{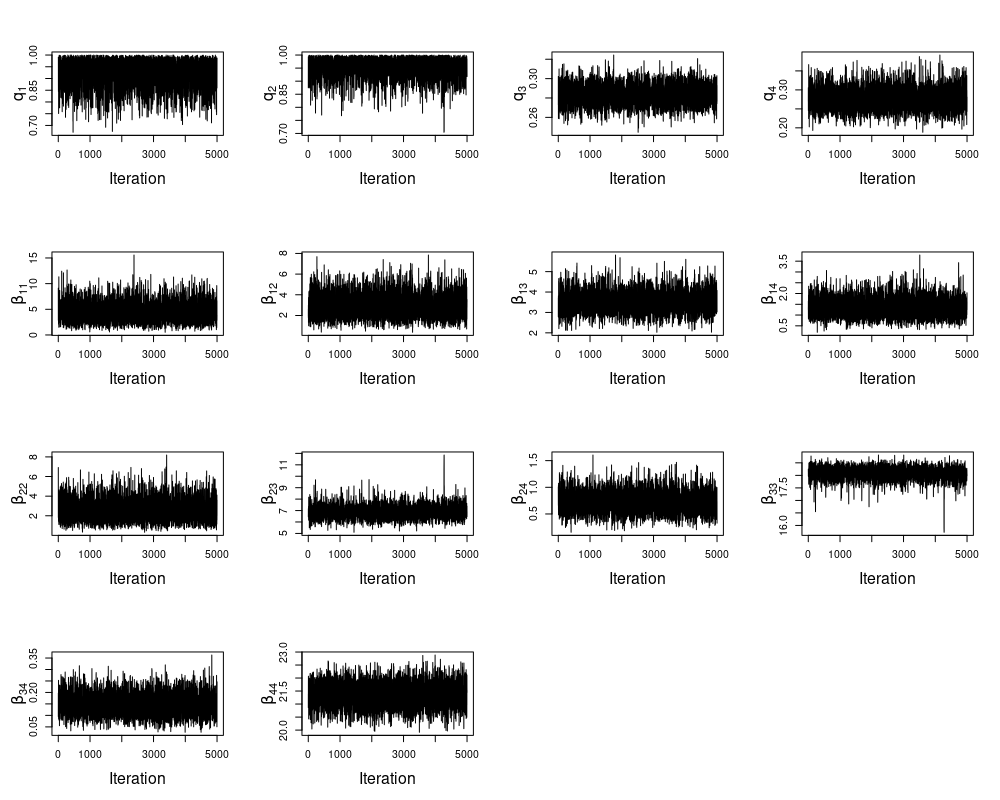}
    \caption{Age-structured example. HMC posterior trace plots for the parameters of the ODE model produced using Stan. The plots show the first $5^5$ iterations after the burn in period.}
	\label{agedettrace}
\end{figure}

\begin{figure}
    \centering
    \includegraphics[width=\textwidth]{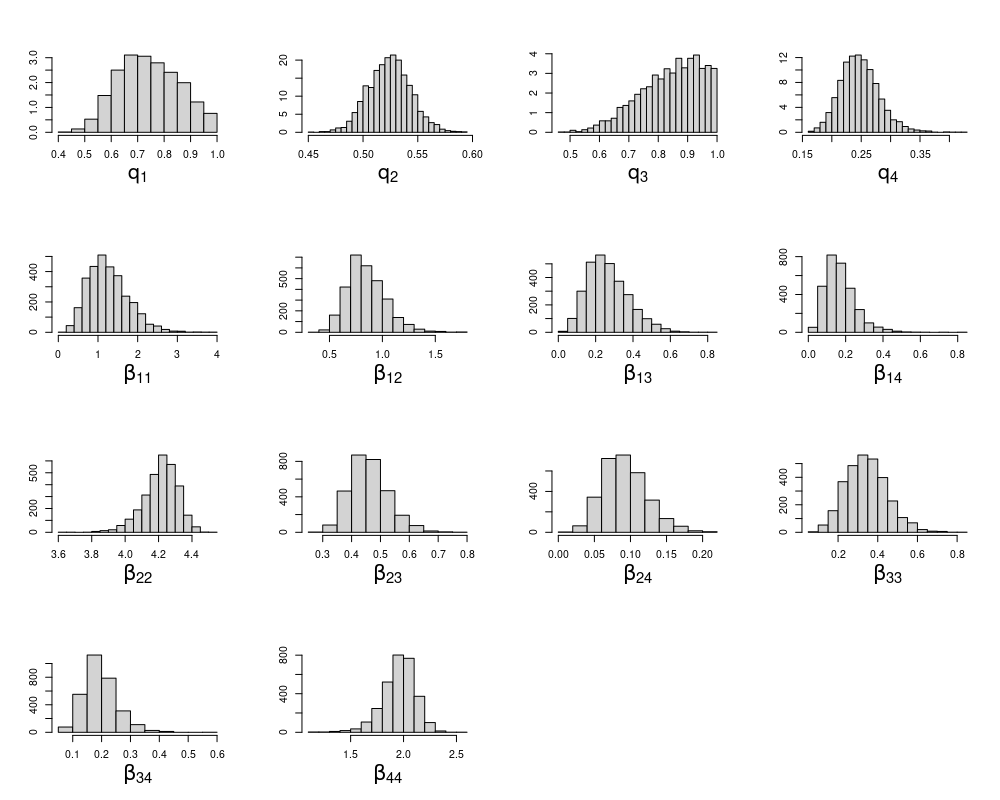}
    \caption{Age-structured example. HMC posterior histograms for the parameters of the stochastic model produced using Stan.}
    \label{agestochhist}
\end{figure}

\begin{figure}
    \centering
    \includegraphics[width=\textwidth]{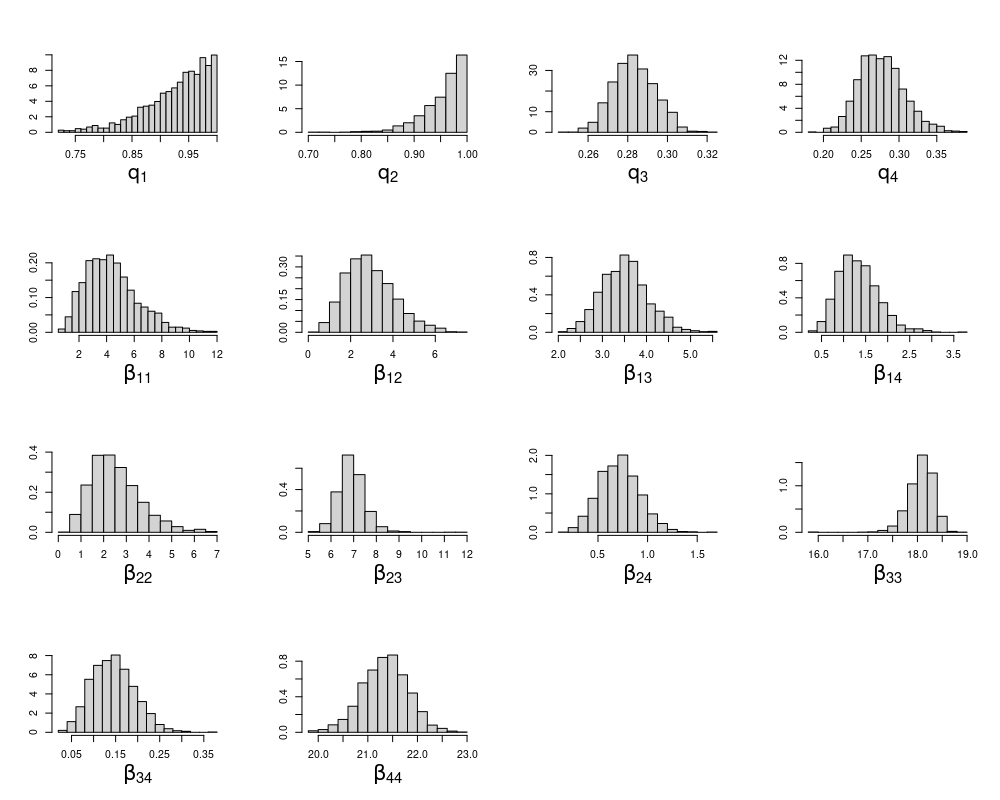}
    \caption{Age-structured example. HMC posterior histograms for the parameters of the ODE model produced using Stan.}
	 \label{ODEhist}
\end{figure}
\newpage

\subsection{Supplementary material for section \ref{sec:rotavirus}} \label{sup:rotavirus_supp}

\subsubsection{Model}
The evolution of the full age stratified rotavirus model at time $t$ is given by:

\begin{align}
S_{1,t+1}&= S_{1,t-1} + A_{1,t} + E_{1,t} - B_{1,t} - F_{1,t}^{(S)}, \\
I_{1,t+1}&= I_{1,t} + B_{1,t} - C_{1,t} - F_{1,t}^{(I)}, \\
R_{1,t+1}&= R_{1,t} + C_{1,t} - E_{1,t} - F_{1,t}^{(R)}, \\
S_{2,t+1}&= S_{2,t} + F_{1,t}^{(S)} + E_{2,t} - B_{2,t} - F_{2,t}^{(S)},\\
I_{2,t+1}&= I_{2,t}+ F_{1,t}^{(I)} + B_{2,t} - C_{2,t} - F_{2,t}^{(I)}, \\
R_{2,t+1}&=  R_{2,t} + F_{1,t}^{(R)} + C_{2,t} - E_{2,t} - F_{2,t}^{(R)},\\
S_{3,+1t}&= S_{3,t} + F_{2,t}^{(S)} + E_{3,t} - B_{3,t} - D_{t}^{(S)},\\
I_{3,t+1}&= I_{3,t}+ F_{2,t}^{(I)} + B_{3,t} - C_{3,t} - D_{t}^{(I)},\\
R_{3,t+1}&=  R_{3,t} + F_{2,t}^{(R)} + C_{3,t} - E_{3,t} - D_{t}^{(R)},\\
\end{align}
where at time $t$: $A_{1,t} \sim \text{Pois}(\alpha_t)$, for some $\alpha_t \in \mathbb{R}$ represents new births, which is chosen according to historical birth record data; $B_{\cdot,t}$ represents new infectives; $C_{\cdot,t}$ represents recovering individuals; $D_{t}^\cdot \sim \text{Binom}\left({\cdot_{t-1},1- \delta} \right)$ represents emigrating (dying) individuals; $E_{\cdot,t}$ represents individuals experiencing waning immunity; and $F_{\cdot,t}$ represents ageing individuals.

\begin{align}
\left[\begin{array}{c}
B_{1,t} \\
 F_{1,t}^{(S)} \\
 S_{1,t}- B_{1,t} - F_{1,t}^{(S)}
\end{array}\right]
&\sim\text{Mult}\left(S_{1,t},
\left[\begin{array}{c}
p_{1,t} \\
1 - e^{(-hd_1)} \\
 e^{(-hd_1)} - p_{k,t}
\end{array}\right]
\right) \\
\left[\begin{array}{c}
C_{1,t}\\
F_{1,t}^{(I)} \\
 I_{1,t}- C_{1,t} - F_{1,t}^{(I)}
\end{array}\right]
&\sim\text{Mult}\left(I_{1,t},
\left[\begin{array}{c}
1 - e^{-h\gamma} \\
1 - e^{-hd_1} \\
e^{-h\gamma} + e^{-hd_1} - 1
\end{array}\right]
\right) \\
\left[\begin{array}{c}
E_{1,t}\\
F_{1,t}^{(R)} \\
 R_{1,t}- E_{1,t} - F_{1,t}^{(R)}
\end{array}\right]
&\sim\text{Mult}\left(R_{1,t},
\left[\begin{array}{c}
1 - e^{-h\omega} \\
1 - e^{-hd_1} \\
e^{-h\omega} + e^{-hd_1} - 1
\end{array}\right]
\right) \\
\left[\begin{array}{c}
B_{2,t} \\
 F_{2,t}^{(S)} \\
 S_{2,t} - B_{2,t} - F_{2,t}^{(S)}
\end{array}\right]
&\sim\text{Mult}\left(S_{2,t},
\left[\begin{array}{c}
p_{2,t} \\
1 - e^{(-hd_2)} \\
 e^{(-hd_2)} - p_{2,t}
\end{array}\right]
\right) \\
\left[\begin{array}{c}
C_{2,t}\\
F_{2,t}^{(I)} \\
 I_{2,t} - C_{2,t} - F_{2,t}^{(I)}
\end{array}\right]
&\sim\text{Mult}\left(I_{2,t},
\left[\begin{array}{c}
1 - e^{-h\gamma} \\
1 - e^{-hd_2} \\
e^{-h\gamma} + e^{-hd_2} - 1
\end{array}\right]
\right) \\
\left[\begin{array}{c}
E_{2,t}\\
F_{2,t}^{(R)} \\
 R_{2,t}- E_{2,t} - F_{2,t}^{(R)}
\end{array}\right]
&\sim\text{Mult}\left(R_{2,t},
\left[\begin{array}{c}
1 - e^{-h\omega} \\
1 - e^{-hd_2} \\
e^{-h\omega} + e^{-hd_2} - 1
\end{array}\right]
\right) \\
\left[\begin{array}{c}
B_{3,t} \\
S_{3,t} - D^{(S)}_t- B_{3,t}
\end{array}\right]
&\sim\text{Mult}\left(S_{3,t} - D^{(S)}_t,
\left[\begin{array}{c}
p_{3,t} \\
1-p_{3,t}
\end{array}\right]
\right) \\
\left[\begin{array}{c}
C_{3,t}\\
 I_{3,t} - D_t^{(I)}- C_{2,t}
\end{array}\right]
&\sim\text{Mult}\left(I_{3,t} - D_t^{(I)},
\left[\begin{array}{c}
1 - e^{-h\gamma} \\
e^{-h\gamma}
\end{array}\right]
\right) \\
\left[\begin{array}{c}
E_{3,t}\\
 R_{3,t} - D_t^{(R)} - E_{2,t}
\end{array}\right]
&\sim\text{Mult}\left(R_{3,t} - D_t^{(R)},
\left[\begin{array}{c}
1 - e^{-h\omega} \\
e^{-h\omega}
\end{array}\right]\right)
\end{align}  

To align notation with the model descriptions in section \ref{sec:models} collect observations at time $r$ in the matrix $\bar {\+Y}_r \in \mathbb{N}^{9 \times 9}$ which has elements equal to zero except $\bar{Y}_r^{(3k-2,3k-1)} = Y_{r,k}$ for age groups $k = 1,2,3$, similarly collect reporting rates in ${\+Q}_r \in \mathbb{N}^{9 \times 9}$ which has elements equal to zero except $Q_r^{(3k-2,3k-1)} = q_{r,k}$ for $k=1,2,3$. Define ${\+x}_t = [S_{1,t} \; I_{1,t} \; R_{1,t}\; S_{2,t} \; I_{2,t} \; R_{2,t} \; S_{3,t} \; I_{3,t}  \; R_{3,t} ]$. Identify the matrix:


\begin{align}
\mathbf{K}_{t, \boldsymbol{\eta}}^{(1, \cdot)} &= 
\left[ e^{-hd_1} - p_{1,t}  \quad \quad p_{1,t} \quad \quad 0 \quad \quad 1 - e^{-hd_1} \quad \quad 0 \quad \quad 0 \quad \quad 0 \quad \quad 0 \quad \quad 0\right], \\
\mathbf{K}_{t, \boldsymbol{\eta}}^{(2, \cdot)} &=  \left[ 0 \quad \quad e^{-h\gamma} + e^{-hd_1} - 1 \quad \quad 1-e^{-h\gamma} \quad \quad 0 \quad \quad 1 - e^{-hd_1} \quad \quad 0 \quad \quad 0 \quad \quad 0 \quad \quad 0 \right], \\
\mathbf{K}_{t, \boldsymbol{\eta}}^{(3, \cdot)} &= \left[ 1 - e^{-h\omega} \quad \quad 0 \quad \quad 0 \quad \quad e^{-h\omega} +  e^{-hd_1} -1 \quad \quad 0 \quad \quad 1-e^{-hd_1} \quad \quad 0 \quad \quad 0 \quad \quad 0 \right], \\
\mathbf{K}_{t, \boldsymbol{\eta}}^{(4, \cdot)} & = \left[ 0 \quad \quad 0 \quad \quad 0 \quad \quad  e^{-hd_2} - p_{2,t} \quad \quad p_{2,t} \quad \quad 0 \quad \quad 1 - e^{-hd_2} \quad \quad 0 \quad \quad 0 \right], \\
\mathbf{K}_{t, \boldsymbol{\eta}}^{(5, \cdot)} & = \left[0 \quad \quad 0 \quad \quad 0 \quad \quad 0 \quad \quad e^{-h\gamma} + e^{-hd_2} - 1 \quad \quad 1 - e^{-h\gamma} \quad \quad 0 \quad \quad 1- e^{-hd_2} \quad \quad 0\right], \\
\mathbf{K}_{t, \boldsymbol{\eta}}^{(6, \cdot)} & = \left[0 \quad \quad 0 \quad \quad 0 \quad \quad 1 - e^{-h\omega} \quad \quad 0 \quad \quad e^{-h\omega} + e^{-hd_1} - 1 \quad \quad 0 \quad \quad 0 \quad \quad 1 - e^{-hd_1}\right], \\
\mathbf{K}_{t, \boldsymbol{\eta}}^{(7, \cdot)} & = \left[0 \quad \quad 0 \quad \quad 0 \quad \quad 0 \quad \quad 0 \quad \quad 0\quad \quad 1 - p_{3,t} \quad \quad p_{3,t}  \quad \quad 0\right], \\
\mathbf{K}_{t, \boldsymbol{\eta}}^{(8, \cdot)} & = \left[0 \quad \quad 0 \quad \quad 0 \quad \quad 0 \quad \quad 0 \quad \quad 0 \quad \quad 0\quad \quad e^{-h\gamma} \quad \quad 1 - e^{-h\gamma} \right] ,\\
\mathbf{K}_{t, \boldsymbol{\eta}}^{(9, \cdot)} & = \left[0 \quad \quad 0 \quad \quad 0 \quad \quad 0 \quad \quad 0 \quad \quad 0\quad \quad 1 - e^{-h\omega} \quad \quad 0 \quad \quad e^{-h\omega}\right].   
\end{align}
Where for models EqEq and EqOv we have $p_{k,t}=1-\exp\left\{-\bs{\beta}_k^\top \frac{\bs I_{t}}{n}\chi_t\right\}$ for $k=1,2,3$ , and for model OvOv we have $p_{k,t}=1-\exp\left\{-\bs{\beta}_k^\top \frac{\bs I_{t}}{n}\chi_t\xi_{r}\right\}$ for $k=1,2,3$, in which case we will write $\mathbf{K}_{t, \boldsymbol{\eta}} = \mathbf{K}_{t, \boldsymbol{\eta}, \xi}$ .

For models EqOv and OvOv we have for $k = 1,2,3$:

$$Q_r^{(3k-2,3k-1)} \sim \mathcal{N}(\mu_q, \sigma_q^2)_{\geq 0, \leq 1}$$
corresponding to the reporting rate of new infectived individuals for each age group. Denote this prior density of $\+Q_r$ as $f(\cdot \mid \mu_q, \sigma_q^2)$.

\begin{figure}[t]
\centering
\begin{tikzpicture}
\tikzstyle{new style 0}=[fill={rgb,255: red,171; green,255; blue,253}, draw=black, shape=rectangle, minimum size=1.5cm]
\tikzstyle{new edge style 0}=[->]
\tikzstyle{none}=[]
	\begin{pgfonlayer}{nodelayer}
		\node [style=new style 0] (0) at (-1.5, 28.5) {$R_1$};
		\node [style=new style 0] (1) at (-5, 28.5) {$I_1$};
		\node [style=new style 0] (2) at (-8.5, 28.5) {$S_1$};
		\node [style=new style 0] (3) at (-8.5, 25) {$S_2$};
		\node [style=new style 0] (4) at (-5, 25) {$I_2$};
		\node [style=new style 0] (5) at (-1.5, 25) {$R_2$};
		\node [style=new style 0] (6) at (-1.5, 21.5) {$R_3$};
		\node [style=new style 0] (7) at (-8.5, 21.5) {$S_3$};
		\node [style=new style 0] (8) at (-5, 21.5) {$I_3$};
		\node [style=none] (9) at (-8.5, 19.25) {};
		\node [style=none] (10) at (-5, 19.25) {};
		\node [style=none] (11) at (-1.5, 19.25) {};
		\node [style=none] (12) at (-3, 29.5) {$\omega$};
		\node [style=none] (13) at (-8, 27) {$d_1$};
		\node [style=none] (14) at (-4.5, 27) {$d_1$};
		\node [style=none] (15) at (-1, 27) {$d_1$};
		\node [style=none] (16) at (-8, 23.5) {$d_2$};
		\node [style=none] (17) at (-1, 23.5) {$d_2$};
		\node [style=none] (18) at (-4.5, 23.5) {$d_2$};
		\node [style=none] (20) at (-8, 20.25) {$\delta$};
		\node [style=none] (21) at (-4.5, 20.25) {$\delta$};
		\node [style=none] (22) at (-1, 20.25) {$\delta$};
		\node [style=none] (23) at (-3, 26) {$\omega$};
		\node [style=none] (24) at (-3, 22.5) {$\omega$};
		\node [style=none] (25) at (-6.5, 28) {$\beta_1$};
		\node [style=none] (26) at (-6.5, 24.5) {$\beta_2$};
		\node [style=none] (27) at (-6.5, 21) {$\beta_3$};
		\node [style=none] (28) at (-3, 28) {$\gamma$};
		\node [style=none] (29) at (-3, 28) {};
		\node [style=none] (30) at (-3, 24.5) {$\gamma$};
		\node [style=none] (31) at (-3, 21) {$\gamma$};
		\node [style=none] (32) at (-10.75, 28.5) {};
		\node [style=none] (33) at (-10, 28.75) {$\alpha$};
		\node [style=none] (34) at (-10.75, 28.5) {};
	\end{pgfonlayer}
	\begin{pgfonlayer}{edgelayer}
		\draw [style=new edge style 0, in=180, out=0] (2) to (1);
		\draw [style=new edge style 0] (1) to (0);
		\draw [style=new edge style 0] (3) to (4);
		\draw [style=new edge style 0] (4) to (5);
		\draw [style=new edge style 0] (7) to (8);
		\draw [style=new edge style 0] (8) to (6);
		\draw [style=new edge style 0] (2) to (3);
		\draw [style=new edge style 0] (1) to (4);
		\draw [style=new edge style 0] (0) to (5);
		\draw [style=new edge style 0] (3) to (7);
		\draw [style=new edge style 0] (4) to (8);
		\draw [style=new edge style 0] (5) to (6);
		\draw [style=new edge style 0] (7) to (9.center);
		\draw [style=new edge style 0] (8) to (10.center);
		\draw [style=new edge style 0] (6) to (11.center);
		\draw [style=new edge style 0, in=30, out=150] (0) to (2);
		\draw [style=new edge style 0, in=30, out=150] (5) to (3);
		\draw [style=new edge style 0, in=30, out=150] (6) to (7);
		\draw [style=new edge style 0] (32.center) to (2);
	\end{pgfonlayer}
\end{tikzpicture}
\caption{Schema for the latent compartmental model of rotavirus transmission.}
\end{figure}
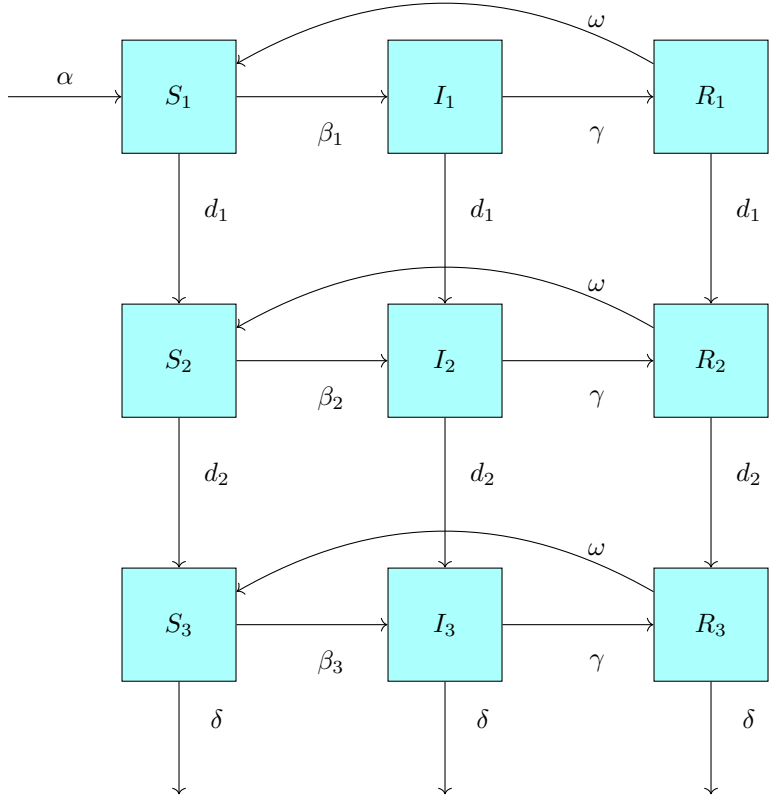
\subsubsection{Inference}
We assume that the values of $\alpha, d_1, d_2, \delta, \gamma,\omega$ and $\mu_q$ are known, we set them to the same values as assumed in \cite{stocks2020model}, these are available on the GitHub page. All other parameters are to be estimated.
\subsubsection*{Laplace approximation proposals for the rotavirus example}\label{Qproprota}
Consider algorithm \ref{alg:rota}. We factorise the proposal of particles at time $r$, $[\xi_r^{(i)}, \+Q_r^{(i)}]$, into sampling $\xi_r^{(i)}$ from its prior, then given this we seek a Laplace/PAL approximation to the distribution:

$$\hat p(\+Q_r \mid  \bar{\+Y}_{1:r}, \+Q_{1:r-1}, \bs \xi_{1:r}):= \frac{\exp\mathcal{L}(\bar{\+Y}_r \mid \bar{\+Y}_{1:r-1}, \+Q_{1:r}, \bs \xi_{1:r})f(\+Q_r \mid \mu_q, \sigma_q^2)}{\int \exp\mathcal{L}(\bar{\+Y}_r \mid \bar{\+Y}_{1:r-1}, \+Q_{1:r}, \bs \xi_{1:r})f(\+Q_r \mid  \mu_q, \sigma_q^2)d\+Q_{r}}$$
Surpressing dependence on the particle, let $\+L_r =  \sum_{t = \tau_{r-1}+1}^{\tau_{r}} \bs \Lambda_{t}$ with $\bs \Lambda_{t}$ calculated as per line \ref{rotapred} of algorithm \ref{alg:rota}, we have for some constants $C_1,C_2$:
\begin{equation}
\begin{aligned}
\log \hat p(\+Q_r \mid  \bar{\+Y}_{1:r}, \+Q_{1:r-1}, \bs{\xi}_{1:r}) & = \mathcal{L}(\bar{\+Y}_r \mid \bar{\+Y}_{1:r-1}, \+Q_{1:r}, \bs \xi_{1:r}) + \log f(\+Q_r \mid \mu_q, \sigma_q^2) + C_1 \\
&= \sum_{j=1}^3 \bigg\{ \bar Y_r^{(3j-2,3j-1)} \log(Q_r^{(3j-2,3j-1)}L_r^{(3j-2,3j-1)}) \\
&- L_r^{(3j-2,3j-1)}Q_r^{(3j-2,3j-1)} - \bar Y_r^{(3j-2,3j-1)}! \\
&- \frac{1}{2}\left(\frac{Q_r^{(3j-2,3j-1)} - \mu_q}{\sigma_q} \right)^2 \bigg\} +C_2
\end{aligned} 
\end{equation}
To get the mean of a Laplace approximation to the above we must find it's maximum w.r.t. $\+Q_r$, hence for $j=1,2,3$:

\begin{equation}
\begin{aligned}
\frac{d \log \hat p(\+Q_r \mid \+y_r)}{dQ_r^{(3j-2,3j-1)}} &= \frac{\bar Y_r^{(3j-2,3j-1)}}{Q_r^{(3j-2,3j-1)}} - L_r^{(3j-2,3j-1)} - \frac{Q_r^{(3j-2,3j-1)} - \mu_q}{\sigma_q^2} = 0\\ 
&\iff  (Q_r^{(3j-2,3j-1)})^2 + (L_r^{(3j-2,3j-1)}\sigma_q^2 - \mu_q )Q_r^{(3j-2,3j-1)} - \bar Y_r^{(3j-2,3j-1)}\sigma_q^2 = 0 \\
&\implies Q_r^{(3j-2,3j-1)} = \frac{1}{2}\left(\mu_q  - L_r^{(3j-2,3j-1)}\sigma_q^2 + \sqrt{(L_r^{(3j-2,3j-1)}\sigma_q^2- \mu_q )^2 + 4\bar Y_r^{(3j-2,3j-1)}\sigma_q^2 }\right)\\
 &=: \mu^{(j)}_r.
\end{aligned}
\end{equation}
For the variance we find the second derivative and evaluate it at $\mu^{(j)}_r$:

\begin{equation}
\begin{aligned}
\frac{d^2 \log \hat p(\+q_r \mid \+y_r)}{d(Q_r^{(3j-2,3j-1)})^2} = -  \frac{\bar Y_r^{(3j-2,3j-1)}}{(Q_r^{(3j-2,3j-1)})^2} - \frac{1}{\sigma_q^2} \\
\implies \left(\sigma_r^{(j)}\right)^2 = \left(\frac{\bar Y_r^{(3j-2,3j-1)}}{\left(\mu^{(j)}_r\right)^2} + \frac{1}{\sigma_q^2}\right)^{-1}.
\end{aligned}
\end{equation}
Hence, having proposed $\xi_r$ from its prior, we propose $\+ Q_r$  by setting all elements to be zero except:
\begin{equation}\label{eq:Qproprota}
\begin{aligned}
Q_r^{(3j-2,3j-1)} \sim \mathcal{N}\left(\mu_r^{(j)}, (\sigma_r^{(j)})^2\right)_{\geq 0, \leq 1} \quad \quad \text{ for } j=1,2,3.
\end{aligned}
\end{equation}
Let $ \pi(\cdot \mid \bar{\+Y}_{1:r}, \+Q_{1:r-1}, \xi_{1:r})$ be the proposal density associated with \eqref{eq:Qproprota}.
\begin{algorithm}
\caption{PAL within SMC for model of Rotavirus}\label{alg:rota}
\begin{algorithmic}[1]
  \Statex {\bf initialize:} $\bar{\bs \lambda}_{0}^{(i)} \leftarrow \bs \lambda_0$ for $i = 1$ to $n_{part}$.
  \State {\bf for}  $r  \geq 1$:
  \State \quad {\bf for}  $i = 1$ to $n_{part}$
  \State \quad \quad $\xi_r^{(i)} \sim \mathrm{Gamma}(\sigma_{\xi},\sigma_{\xi})$ 
  \State \quad \quad  {\bf for}  $  t = \tau_{r-1}+1, \dots,  \tau_r - 1$:
  \State \quad \quad \quad \label{rotapred} $\bs \Lambda_{t}^{(i)} \leftarrow  ((\bar{ \bs \lambda}^{(i)}_{t-1}\circ \bs \delta_{t}) \otimes \+1_m)\circ \+K_{t,\bs \eta\left(\bar{ \bs \lambda}_{t-1}^{(i)}\circ \bs \delta_{t}\right),\xi_r^{(i)}} + \bs \alpha_t$
  \State \quad \quad \quad $\bar{\bs \lambda}^{(i)}_{t} \leftarrow  (\+1_m^\top \bs \Lambda_{t}^{(i)})^\top$
  \State \quad \quad {\bf end for}
  \State  \quad \quad $\bs \Lambda^{(i)}_{\tau_r}\leftarrow  ((\bs \lambda^{(i)}_{\tau_r -1}\circ \bs \delta_{\tau_r}) \otimes \+1_m)\circ \+K_{\tau_r,\bs \eta(\bs \lambda^{(i)}_{\tau_r -1}\circ \bs \delta_{\tau_r}),\xi_r^{(i)}} + \bs \alpha_{\tau_r}$
  \State \quad \quad $\+Q_r^{(i)}\sim \pi(\cdot \mid \bar{\+Y}_{1:r}, \+Q_{1:r-1}^{(i)}, \xi_{1:r}^{(i)})$ calculated according to \eqref{eq:Qproprota}.
  \State \quad \quad $\+M_r^{(i)} \leftarrow   \sum_{t = \tau_{r-1}+1}^{\tau_{r}} \bs \Lambda^{(i)}_{t}\circ \+Q_r^{(i)}$
  \State \quad \quad $  \mathcal{L}(\bar{\+Y}_{r} \mid \bar{\+Y}_{1:r-1}, {\+Q}^{(i)}_{1:r}, \xi_{1:r}^{(i)}) \leftarrow \+1_m^\top\+M_r\+1_m +\+1_m^\top (\bar{\+ Y}_r \circ \log\+M_r)\+1_m - \+1_m^\top \log(\bar{\+ Y}_r!)\+1_m$
  \State \quad \quad $\log w^{(i)}_r \leftarrow  \mathcal{L}(\bar{\+Y}_{r} \mid \bar{\+Y}_{1:r-1}, {\+Q}^{(i)}_{1:r}, \xi_{1:r}^{(i)}) + f(\+Q_r^{(i)} \mid \mu_q, \sigma_q) - \pi(\+Q_r^{(i)}\mid \bar{\+Y}_{1:r}, \+Q_{1:r-1}^{(i)}, \xi_{1:r}^{(i)})  $
    \State \quad \quad $\bar{\bs \Lambda}^{(i)}_{\tau_r}\leftarrow (\+1_m \otimes \+1_m - \+Q_r^{(i)})\circ \bs \Lambda_{\tau_r}^{(i)}+ \bar{\+Y}_{{r}} \circ \bs \Lambda^{(i)}_{\tau_r} \circ \+ Q_r^{(i)}  \oslash \+M_r^{(i)} $
  \State \quad \quad $\bar{\bs \lambda}^{(i)}_{ \tau_{r}} \leftarrow (\+ 1_m^\top \bar{ \+ \Lambda}^{(i)}_{ \tau_{r}})^\top$
  \State \quad \quad {\bf end for}
  \State \quad $\hat{\mathcal{L}}(\bar{\+ Y}_r |\bar{\+ Y}_{1:r-1} )\leftarrow \log\left(n_{part}^{-1}\sum_{j=1}^{n_{part}} w_r^{(j)}\right)$
  \State \quad $\bar w_r^{(i)} \leftarrow w_r^{(i)}/\sum_j w_r^{(j)}$ for $i = 1$ to $n_{part}$
  \State \quad {\bf resample} $\left\{\bar{\bs \lambda}^{(i)}_{\tau_r} \right\}_{i=1}^{n_{part}}$ according to a systematic resampling scheme with weights $\{ w_r^{(i)}\}_{i=1}^{n_{part}}$
  \State {\bf end for}
\end{algorithmic}
\end{algorithm}
The resulting approximate likelihood estimate for algorithm \ref{alg:rota} is:

\begin{equation}
p(\bar{\+ Y}_{1:R}) \approx \sum_{r=1}^R \hat{\mathcal{L}}(\bar{\+ Y}_r |\bar{\+ Y}_{1:r-1} ).
\end{equation}
\subsubsection*{Convergence plots for coordinate ascent algorithm}
For each of EqEq EqOv, and OvOv, we performed a finite differencing coordinate ascent optimisation. That is, for each paramter: fix all others to their current value and approximate the sign of the gradient with finite differencing and take a step in positive gradient direction - cycle through parameters until convergence. Figures \ref{EqEqconvergence}, \ref{EqOvconvergence}, and \ref{OvOvconvergence} demonstrate the convergence of this procedure for each model EqEq, EqOv, and OvOv respectively.
\begin{figure}
\includegraphics[width = \textwidth]{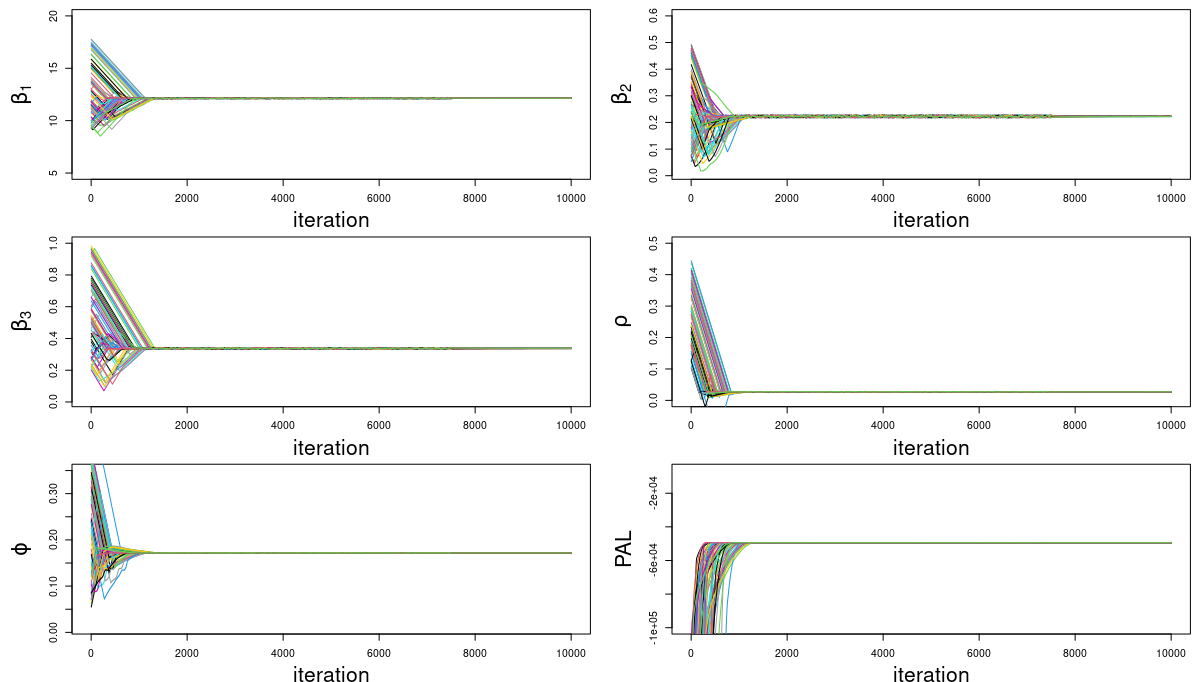}
\caption{Plots showing 100 runs of the optimization procedure the EqEq rotavirus model applied to real data.}
\label{EqEqconvergence}
\end{figure}
\begin{figure}
\includegraphics[width = \textwidth]{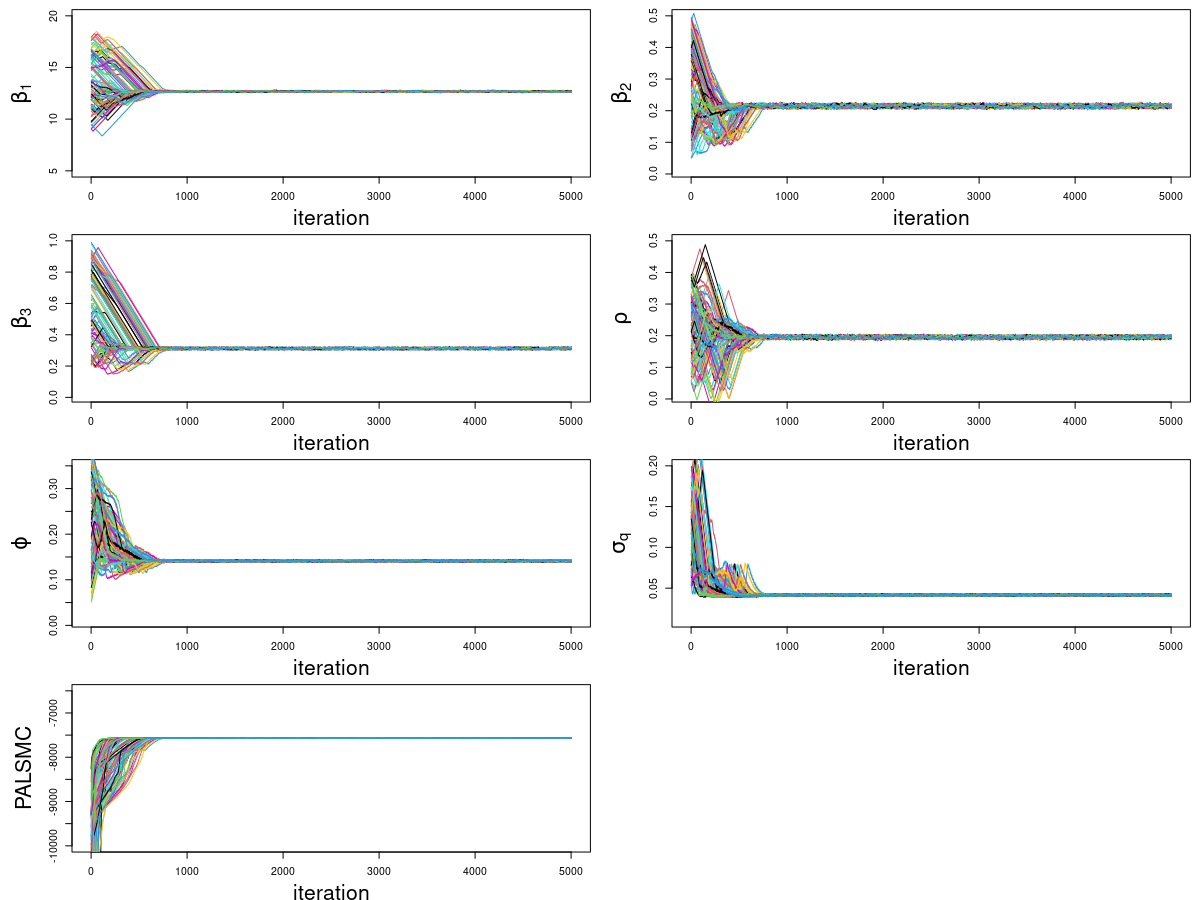}
\caption{Plots showing 100 runs of the optimization procedure the EqOv rotavirus model applied to real data.}
\label{EqOvconvergence}
\end{figure}
\begin{figure}
\includegraphics[width = \textwidth]{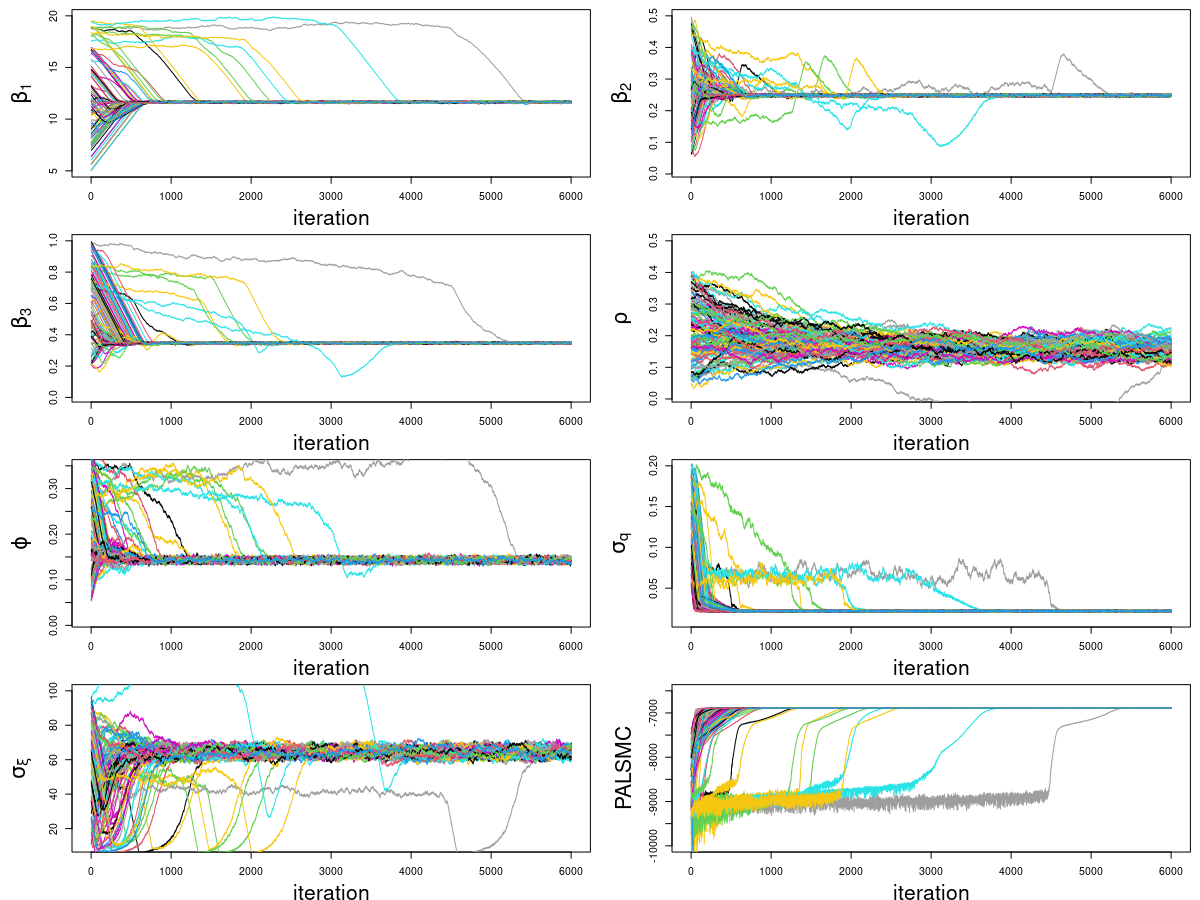}
\caption{Plots showing 100 runs of the optimization procedure the OvOv rotavirus model applied to real data.}
\label{OvOvconvergence}
\end{figure}

\subsubsection{Rotavirus ARMA comparison details.} \label{rotavirusARMA}
The benchmark model consists of an ARMA(2,0,1) model fit to the series $\log(cases +1)$ independently for each age group, taking care to apply the appropriate jacobian transform to the likelihood.

\subsection{Supplementary material for section \ref{sec:measles}} \label{sec:measles_supp}
\subsubsection{Model}
As in \cite{xia2004measles} and \cite{park2020inference}, we assume that $\beta_{r,k}$ follows the school year:
\begin{equation*}
\beta_{r,k} =
\begin{cases}
(1 + 2(1-p)a)\bar{\beta}_k, & \text{during school term,} \\
(1-2pa) \bar{\beta}_k, & \text{during school holidays,}
\end{cases}
\end{equation*}
where $p=0.759$ is the proportion of the year taken up by school terms, $\bar{\beta}_k>0$ is the mean transition rate for city $k$, and $a$ is the relative effect of holidays on transmission. Finally, the new infected and new removed are:
\begin{equation*}
C_{k,t} \sim \text{Bin}(E_{k,t} - F_{k,t}^{(E)}, 1- e^{-h \rho}), \qquad D_{k,t} \sim \text{Bin}(I_{k,t}- F_{k,t}^{(I)}, 1- e^{-h \gamma}),
\end{equation*}
with $h \slash \rho$ mean time spent in the exposed compartment and $h \slash \gamma$ mean recovery time. 
Given the vectors $\bs \delta_{k,t} = \left[\delta_{k,t}^{(S)}\;\delta_{k,t}^{(E)}\;\delta_{k,t}^{(I)}\;\delta_{k,t}^{(R)}\right]^\top \in \mathbb{R}^4_{\geq 0}$ and $\bs \alpha_{k,t} = \left[\alpha_{k,t}^{(1)}\;0\;0\;0 \right]^\top \in \mathbb{R}^4_{\geq 0}$, we have:
\begin{equation*}
F^{(\cdot)}_{k,t} \sim \text{Bin}\left(\cdot_{k,t}, 1- \delta_{k,t}^{(\cdot)} \right), \qquad 
A_{k,t} \sim \text{Pois}\left(\alpha_{k,t}^{(1)}\right),
\end{equation*}
modelling the new births (immigration) into the susceptible population and the deaths (emigration) across compartments. Since there is no reinfection mechanism in the model (a realistic assumption for measles modelling), it is important to have new individuals enter the population to model the recurrent epidemic peaks present in the data. As already mentioned, for the model to capture recurrent peaks, it must accommodate recruitment into the susceptible compartments. Birthrate data for each city of the model is used to do this --- as in \cite{xia2004measles} --- it is assumed that newborns enter the susceptible class after a delay of $4$ years, corresponding to the age an individual enters the high-risk school-age demographic. There is a further `cohort' effect aspect to the model: it is assumed that at the start of the school year, a fraction $c \in (0,1)$ of the lagged births enter the susceptible compartment, the remaining $1-c$ proportion enter at a constant rate throughout the year. This informs the assumed rate parameters $\bs \alpha_{k,t} = [\alpha_{k,t}^{(1)}\;0\;0\;0\;]^\top$ and, similarly, death rate records inform choice of $\bs \delta_{k,t}$. The values used for $\bs\alpha_{k,t}$ and $\bs\delta_{k,t}$ are reported in the data available on the GitHub page.

The observations are aggregated incidence data in the form of cumulative fortnightly transitions from infective to recovered for each of the $40$ cities subject to binomial under-reporting, at times $\tau_r=4r$ for $r = 1,\dots,R$. Observations are modelled as transitions from  infective to recovered compartments because, on discovery, cases are treated with bed rest and hence removed from the population \cite{park2020inference}. Denoting observations as $\bar{\+Y}_{k,r}=\sum_{t=\tau_{r-1}+1}^{\tau_{r}}\+ Y_{k,t}$ where each $\+ Y_{k,t}\in \mathbb{N}^{4 \times 4}$ has each element equal to zero except for the $(3,4)$th element which, conditional on $D_{k,t}$, is distributed:
\begin{equation*}
Y_{k,t}^{(3,4)} \sim \text{Bin}\left(D_{k,t}, Q_{k,r}^{(3,4)} \right), \text{ for } t = \tau_{r-1},\dots \tau_r, r\geq 1, \quad k=1,\dots,J,
\end{equation*}
where  $\+Q_{k,r} \in [0,1]^{4\times4}$ consists of all zeros apart from the $(3,4)$th entry, which is the reporting rate of transitions from infective to recovered. We assume that this rate follows $Q_{k,r}^{(3,4)} \sim \mathcal{N}(\mu_{q,k}, \sigma_q^2)_{\geq 0, \leq 1}$ for $k = 1, \dots , K$, denote this density with $f(\cdot \mid \cdot)$ for the purposes of algorithm \ref{alg:measles_block_lookahead}. The mean under-reporting rate parameters, $\mu_{q,k} \in [0,1], k = 1,\dots,J$, are assumed known for each city and are set to the same values as \cite{park2020inference}, which are available in the data on the GitHub page, $ \sigma_q^2>0$ is to be estimated.
\subsubsection{Inference}
To employ the algorithms described in the methodology section we need to specify the transition matrix $\+K_{r, \bs \eta}$, which in the case of this model is of size $4J \times 4J $. To be more succinct, we can write out a matrix  $\+K_{k,r,\tilde{ \bs \eta}}$ for each city $k= 1, \dots, 40$.
 We define our matrices $\+K_{r,\tilde{ \bs \eta}, \xi,k}$:

\begin{equation*}
\mathbf{K}_{r, \boldsymbol{ \eta}, \xi,k}=\left[\begin{array}{cccc}
e^{-h g_{k}(\beta_{k,r},\boldsymbol{ \eta}, \xi) } & 1-e^{-hg_{k}(\beta_{k,r},\boldsymbol{ \eta}, \xi) } & 0 & 0 \\
0 & e^{-h \rho} & 1-e^{-h \rho} & 0 \\
0 & 0 & e^{-h \gamma} & 1-e^{-h \gamma} \\
0 & 0 & 0 & 1
\end{array}\right],
\end{equation*}
where

\begin{equation*}
g_{k}(\beta,\boldsymbol{ \eta}, \xi) = \beta\xi \cdot\left[  { \eta}^{(k)}+ \sum_{l \neq k} \frac{v_{kl}}{n_{k}}\left\{ { \eta}^{(l)} - { \eta}^{(k)} \right\} \right],
\end{equation*}
%
One can identify matrices:
\begin{equation*}
\+ Z_{k,t} :=\left[\begin{array}{cccc}
S_{k,t}-F^{(S)}_{k,t}-B_{k,t} & B_{k,t} & 0 & 0 \\
0 & E_{k,t}-F^{(E)}_{k,t}-C_{k,t} & C_{k,t} & 0 \\
0 & 0 & I_{k,t}-F^{(I)}_{k,t}-D_{k,t}& D_{k,t} \\
0 & 0 & 0 & R_{k,t}-F^{(R)}_{k,t}
\end{array}\right],
\end{equation*}
and let $\+ Z_{t}$ be block-diagonal with blocks $\+ Z_{k,t} $, $k=1,\ldots,J$. One can take advantage of the block-diagonal structure of $\+K_{t, \boldsymbol{\eta}}$ to implement an efficient block particle filter, see \cite{rebeschini2015can} and \cite{ning2021iterated}, with lookahead resampling scheme \citep{lin2013lookahead}.
\subsubsection*{Proposals for algorithm \ref{alg:measles_block_lookahead}}

The details for the derivation of the proposals used in algorithm \ref{alg:measles_block_lookahead} lines \ref{prop1} and \ref{prop2} are similar to those of section \ref{Qproprota}, so we omit them. Suppressing dependence on the particle, let $\+L_{r,k} =  \sum_{t = \tau_{r-1}+1}^{\tau_{r}} \bs \Lambda_{t,k}$ with $\bs \Lambda_{t,k}$ calculated as per line \ref{measlespred1} (resp. \ref{measlespred2}) of algorithm \ref{alg:rota} and define:

\begin{align}
\hat{\mu}_{r,k} &= \frac{1}{2}\left(\mu_{q,k}  - L_{r,k}^{(3,4)}\sigma_q^2 + \sqrt{(L_{r,k}^{(3,4)}\sigma_q^2- \mu_{q,k} )^2 + 4\bar Y_r^{(3,4)}\sigma_q^2 }\right), \\
\left(\hat{\sigma}_{r,k}\right)^2 &= \left(\frac{\bar Y_r^{(3,4)}}{\left(\hat{\mu}_{r,k}\right)^2} + \frac{1}{\sigma_q^2}\right)^{-1}.
\end{align}
Then in line \ref{prop1} (resp. \ref{prop2}) we make the proposals:

\begin{equation}\label{eq:measlesproposal}
Q_{k,r}^{(3,4)} \sim \mathcal{N}\left(\hat{\mu}_{r,k}, \hat{\sigma}_{r,k}^2 \right)_{\geq 0, \leq 1}. 
\end{equation}

\begin{algorithm}
\caption{PAL within a lookahead block particle filter}\label{alg:measles_block_lookahead}
\begin{algorithmic}[1]
  \Statex {\bf initialize:} $\bar{\bs \lambda}_{0,k}^{(i)} \leftarrow  n_{k,0} \bs \pi_{k,0}$ set $\log \zeta_{0,k}^{(i)} \leftarrow 0$  and set $\log W_{0,k}^{(i)} \leftarrow 0$ for $i = 1$ to $n_{part}$ and $k = 1, \dots, K.$
  \State {\bf for}  $r  \geq 1$:
  \State \quad {\bf for}  $k = 1, \dots, J$:
  \State \quad \quad {\bf for}  $i = 1, \dots,n_{part}$
  \State \quad \quad \quad $\xi_{k,r}^{(i)} \sim \mathrm{Gamma}(\sigma_{\xi},\sigma_{\xi})$ 
  \State \quad \quad \quad  {\bf for}  $  t = \tau_{r-1}+1, \dots,  \tau_r - 1$:
  \State \quad \quad \quad \quad \label{measlespred1} $\bs \Lambda_{t,k}^{(i)} \leftarrow  ( (\bar{\bs \lambda}^{(i)}_{t-1,k} \circ \bs \delta_{t,k}) \otimes \+1_m)\circ \+K_{r,\bs \eta\left(\bar{ \bs \lambda}_{\tau_{r-1},1:J}^{(i)}\right),\xi_{k,r}^{(i)},k} $
  \State \quad \quad \quad \quad $\bar{\bs \lambda}^{(i)}_{t,k} \leftarrow  (\+1_m^\top \bs \Lambda_{t,k}^{(i)})^\top + \bs \alpha_{t,k}$
  \State \quad \quad \quad {\bf end for}
  \State  \quad \quad \quad $\bs \Lambda^{(i)}_{\tau_r,k}\leftarrow  ((\bs \lambda^{(i)}_{\tau_r -1,k} \circ \bs \delta_{\tau_r,k}) \otimes \+1_m)\circ  \+K_{r,\bs \eta\left(\bar{ \bs \lambda}_{\tau_{r-1},1:J}^{(i)}\right),\xi_{r,k}^{(i)},k}$
      \State \quad \quad \quad \label{prop1}$\+Q_{k,r}^{(i)} \sim \pi\left(\cdot \mid \left\{ \bs \Lambda^{(i)}_{t,k}\right\}_{t = \tau_{r-1}+1}^{\tau_{r}}, \bar{\+ Y}_{r,k}, \bs \varphi \right)$ as per \ref{eq:measlesproposal}
        \State \quad \quad \quad $\+M_{r,k}^{(i)} \leftarrow   \sum_{t = \tau_{r-1}+1}^{\tau_{r}} \bs \Lambda^{(i)}_{t,k}\circ\+Q_{k,r}^{(i)} $
  \State \quad \quad \quad $\mathcal{L}(\bar{\+ Y}_{r,k} |\bar{\+ Y}_{1:r-1,k} ) \leftarrow  - \+1_m^\top\+M_{r,k}^{(i)} \+1_m +    \+1_m^\top\left(\bar{\+ Y}_{r,k }\circ \log\+M_{r,k}^{(i)}\right)\+1_m - \+1_m^\top\left(\log\bar{\+Y}_{r,k}! \right)\+1_m$
    \State \quad \quad \quad $\small \log w^{(i)}_{r,k} \leftarrow \mathcal{L}(\bar{\+ Y}_{r,k} |\bar{\+ Y}_{1:r-1,k} ) + \log\left(f(\+Q_{k,r}^{(i)} \mid \+Q_{k,1:r-1}^{(i)})\right) - \log\left(\pi\left(\+Q_{k,r}^{(i)} \mid \left\{ \bs \Lambda^{(i)}_{t,k}\right\}_{t = \tau_{r-1}+1}^{\tau_{r}}, \bar{\+ Y}_{r,k}, \bs \varphi \right)\right)$
     \State \quad \quad \quad $\bar W_{r-1,k}^{(i)} \leftarrow W_{r-1,k}^{(i)}/\sum_j W_{r-1,k}^{(j)}$ for $i = 1$ to $n_{part}$
    \State \quad \quad \quad $\log W_{r,k}^{(i)} \leftarrow \log \bar W_{r-1,k}^{(i)} + \log w_{r,k}^{(i)}$
    \State \quad \quad \quad $\bar{\bs \Lambda}^{(i)}_{\tau_r,k}\leftarrow (\+1_m \otimes \+1_m -\+Q_{k,r}^{(i)} )\circ \bs \Lambda_{\tau_r,k}+ \bar{\+Y}_{{r,k}} \circ \bs \Lambda_{\tau_r,k}^{(i)} \circ \+Q_{k,r}^{(i)}  \oslash \+M_{r,k}^{(i)} $
  \State \quad \quad \quad $\bar{\bs \lambda}^{(i)}_{ \tau_{r},k} \leftarrow (\+ 1_m^\top \bar{ \+ \Lambda}^{(i)}_{ \tau_{r},k})^\top + \bs \alpha_{\tau_r,k}$
    \State \quad \quad {\bf end for}
      \State \quad  {\bf end for}
    \State \quad {\bf for}  $k = 1, \dots, J$:
  \State \quad \quad {\bf for}  $i = 1, \dots,n_{part}$
  \State \quad \quad \quad $\tilde{\xi}_{k,r+1}^{(i)} \sim \mathrm{Gamma}(\sigma_{\xi},\sigma_{\xi})$ 
  \State \quad \quad \quad  {\bf for}  $  t = \tau_{r}+1, \dots,  \tau_{r+1} -1$:
  \State \quad \quad \quad \quad \label{measlespred2} $\bs \Lambda_{t,k}^{(i)} \leftarrow  ( (\bar{\bs \lambda}^{(i)}_{t-1,k} \circ \bs \delta_{t,k}) \otimes \+1_m)\circ \+K_{r+1,\bs \eta\left(\bar{ \bs \lambda}_{\tau_{r},1:J}^{(i)}\right),\tilde{\xi}_{r+1,k}^{(i)},k} $
  \State \quad \quad \quad \quad $\bar{\bs \lambda}^{(i)}_{t,k} \leftarrow  (\+1_m^\top \bs \Lambda_{t,k}^{(i)})^\top + \bs \alpha_{t,k}$
  \State \quad \quad \quad {\bf end for}
  \State  \quad \quad \quad $\bs \Lambda^{(i)}_{\tau_{r+1},k}\leftarrow  ((\bs \lambda^{(i)}_{\tau_{r+1} -1,k} \circ \bs \delta_{\tau_{r+1},k}) \otimes \+1_m)\circ  \+K_{r+1,\bs \eta\left(\bar{ \bs \lambda}_{\tau_{r-1},1:J}^{(i)}\right),\tilde{\xi}_{r+1,k}^{(i)},k}$
      \State \quad \quad \quad \label{prop2} $\tilde{\+Q}_{k,r+1}^{(i)} \sim \pi\left(\cdot \mid \left\{ \bs \Lambda^{(i)}_{t,k}\right\}_{t = \tau_{r}+1}^{\tau_{r+1}}, \bar{\+ Y}_{r+1,k}, \bs \varphi \right)$ as per \ref{eq:measlesproposal}
        \State \quad \quad \quad $\+M_{r+1,k}^{(i)} \leftarrow   \sum_{t = \tau_{r}+1}^{\tau_{r+1}} \bs \Lambda^{(i)}_{t,k}\circ\tilde{\+Q}_{r+1,k}^{(i)} $
  \State \quad \quad \quad $\mathcal{L}(\bar{\+ Y}_{r+1,k} |\bar{\+ Y}_{1:r,k} ) \leftarrow  - \+1_m^\top\+M_{r+1,k}^{(i)} \+1_m +    \+1_m^\top\left(\bar{\+ Y}_{r+1,k }\circ \log\+M_{r+1,k}^{(i)}\right)\+1_m - \+1_m^\top\left(\log\bar{\+Y}_{r+1,k}! \right)\+1_m$
    \State \quad \quad \quad $\small \log w^{(i)}_{r+1,k} \leftarrow \mathcal{L}(\bar{\+ Y}_{r+1,k} |\bar{\+ Y}_{1:r,k} ) + \log\left(f(\+Q_{k,r+1}^{(i)} \mid \+Q_{k,1:r}^{(i)})\right) - \log\left(\pi\left(\+Q_{k,r+1}^{(i)} \mid \left\{ \bs \Lambda^{(i)}_{t,k}\right\}_{t = \tau_{r}+1}^{\tau_{r+1}}, \bar{\+ Y}_{r+1,k}, \bs \varphi \right)\right)$
    \State \quad \quad \quad  $\log \zeta_{r,k}^{(i)} \leftarrow \log W_{r,k}^{(i)} + \log w_{r+1,k}^{(i)}$
  \State \quad \quad {\bf end for}
  \State \quad \quad $\hat{\mathcal{L}}(\bar{\+ Y}_{r,k} |\bar{\+ Y}_{1:r-1,k} )\leftarrow \log \left(\sum_j W_{r,k}^{(j)} \right)$
   \State \quad \quad$\bar \zeta_{r,k}^{(i)} \leftarrow \zeta_{r,k}^{(i)}/\sum_j \zeta_{r,k}^{(j)}$ for $i = 1$ to $n_{part}$
  \State \quad \quad {\bf resample} $\left\{\bar{\bs \lambda}^{(i)}_{\tau_r, k}, W_{r,k}^{(i)}, \zeta_{r,k}^{(i)}\right\}_{i=1}^{n_{part}}$ with weights $\{ \bar{\zeta}_{r,k}^{(i)}\}_{i=1}^{n_{part}}$
   \State \quad \quad $\log W_{r,k}^{(i)} \leftarrow \log W_{r,k}^{(i)} - \log \zeta_{r,k}^{(i)}$
  \State {\bf end for}
\end{algorithmic}
\end{algorithm}

\subsubsection*{Inference}
In each model instance, A,B, and C, described in \ref{sec:measles}, we can define $\bs \vartheta$, $\bar{\bs \theta}_{1:T}$, and $\bs \varphi$:

\begin{itemize}
\item A: $\bs \vartheta = \left[\bs \pi_0 \; \bar \beta \; \rho \; \gamma \; g \; a \; c \right]$, $\left\{\bar{\bs \theta_r}\right\}_{r\geq 0 } = \left\{\left[\xi_{1,r} \; \dots \; \xi_{40,r}\; Q_{1,r}^{(3,4)}\; \dots\; Q_{40,r}^{(3,4)} \right]\right\}_{r\geq 0 }$, and $\bs \varphi = [\sigma_q^2, \sigma_\xi]$.
\item B: $\bs \vartheta = \left[\bs \pi_{1,0}\;\dots\;\bs \pi_{40,0} \; \bar \beta \; \rho \; \gamma \; g \; a \; c \right]$, $\left\{\bar{\bs \theta_r}\right\}_{r\geq 0 } = \left\{\left[\xi_{1,r}, \dots, \xi_{40,r}, Q_{1,r}^{(3,4)}, \dots, Q_{40,r}^{(3,4)} \right]\right\}_{r\geq 0 }$, and $\bs \varphi = [\sigma_q^2, \sigma_\xi]$.
\item C: $\bs \vartheta = \left[\bs \pi_{1,0}\;\dots\;\bs \pi_{40,0} \; \bar \beta_{1}\; \dots \; \bar \beta_{40} \; \rho \; \gamma \; g \; a \; c \right]$, $\left\{\bar{\bs \theta_r}\right\}_{r\geq 0 } = \left\{\left[\xi_{1,r}, \dots, \xi_{40,r}, Q_{1,r}^{(3,4)}, \dots, Q_{40,r}^{(3,4)} \right]\right\}_{r\geq 0 }$, and $\bs \varphi = [\sigma_q^2, \sigma_\xi]$.
\end{itemize}

Each block, labelled $k=1,\dots,J$, corresponds to a specific city. This block structure allows one to perform proposals and weighting locally to each block, avoiding explicit high-dimensional filtering. At time $r$, the lookahead scheme consists of: performing a  `regular' particle propagation and reweighting step (the usual SMC iteration), then we propagate again each particle and run a PAL iteration for time $r+1$, with  `dummy'  particles (used purely for weighting purposes, denoted with tildes in algorithm \ref{alg:measles_block_lookahead}), we then weight the original particles proportionally to the joint likelihood of the regular and dummy particles at times $r$ and $r+1$ - taking care to apply the appropriate correction in the likelihood calculation, dummy particles are then discarded. We found that this scheme greatly reduced Monte Carlo error. See algorithm \ref{alg:measles_block_lookahead} for our implementation.The resulting approximate log-likelihood estimate associated with algorithm \ref{alg:measles_block_lookahead} is:

\begin{equation}
\log p(\bar{\+Y}_{1:J,1:R}) \approx \sum_{r=1}^R \sum_{k=1}^J\hat{\mathcal{L}}(\bar{\+ Y}_{r,k} |\bar{\+ Y}_{1:r-1,k} ).
\end{equation}
The optimisation scheme we used is described in figure \ref{fig:measlesopt}. We report the inferences for model $C$ in table \ref{measles_C_results}.

\subsubsection{Measles ARMA comparison details.} \label{measlesARMA}
The benchmark model consists of an ARMA(2,0,1) model fit to the series $\log(cases +1)$ independently for each each city, taking care to apply the appropriate jacobian transform to the likelihood.

\subsubsection*{Measles projection details.}
The sample, size $300$, of projected case numbers used to produce figure \ref{measlesmap} in the main article were generated by the following workflow:
\begin{enumerate}
\item Presampling $\xi_{k,r}^{(i)}\sim \text{Gamma}(\sigma_\xi,\sigma_\xi)$ with $\sigma_\xi$ set to our point estimate, for $k=1,...,40$, $r = 1,...,4$, and $i = 1,\dots , 300$.
\item Running our PALSMC scheme on the original dataset with 300 particles and parameters set to our point estimates, taking as output a sample of final time-point population state intensity vectors $\bar{\bs \lambda}_{T,k}^{(i)}$.
\item For $i = 1,\dots,300$ and $k = 1,\dots, 40$, propagate the intensity vectors through the transition kernel using the iteration for $t = 1,\dots,16$ (corresponding to 8 weeks):

\begin{align}
\bs \Lambda_{t,k}^{(i)} &=  ( (\bar{\bs \lambda}^{(i)}_{t-1,k} \circ \bs \delta_{t,k}) \otimes \+1_m)\circ \+K_{r,\bs \eta\left(\bar{ \bs \lambda}_{\tau_{r-1},1:J}^{(i)}\right),\xi_{r,k}^{(i)},k} \\
\bar{\bs \lambda}^{(i)}_{t,k} &=  (\+1_m^\top \bs \Lambda_{t,k}^{(i)})^\top + \bs \alpha_{t,k}
\end{align}
Where $\bs \alpha_{t,k}$ and  $\bs \delta_{t,k}$ are chosen according to the assumption that birth rates and death rates remain constant.
\item Simulate $I^{(i)}_{k,t}\sim \text{Pois}(\bar{\bs \lambda}^{(i)}_{t,k})$ for $t$ corresponding to weeks $2,4,6,$ and $8$ for each sample $i = 1,\dots, 300$.
\end{enumerate}
\begin{figure}[h!]
\centering
\includegraphics[width = 0.8\textwidth]{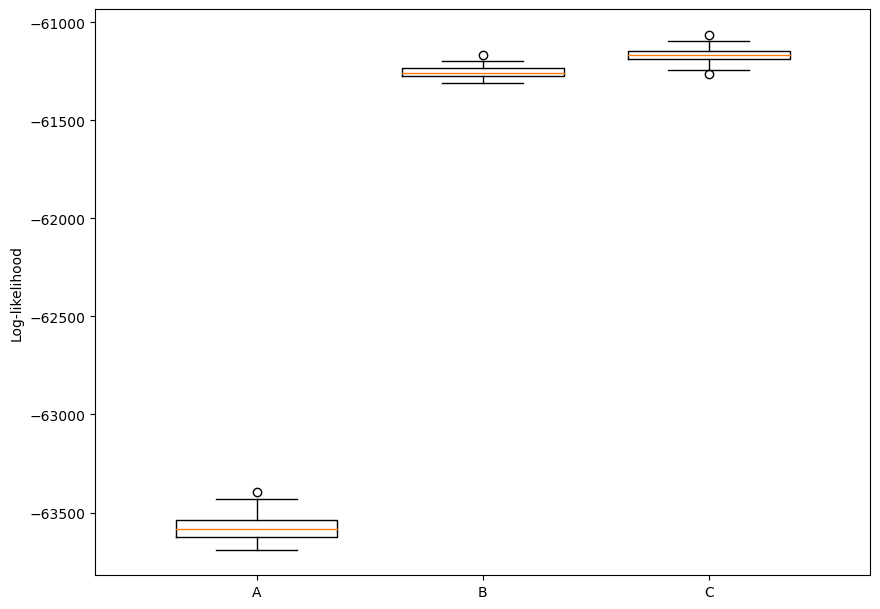}
\caption{Approximate log-likelihood values for the measles data under scenarios A, B, and C. For each scenario, the optimal combination of parameters was obtained through Sequential Least Squares Programming (SLSQP) with target function given by algorithm \ref{alg:measles_block_lookahead} with 5000 particles and lookahead resampling, this scheme was initialised randomly at 100 points over feasible values, we present the best attained values. After the optimization, algorithm \ref{alg:measles_block_lookahead} with 5000 particles and lookahead resampling is run 100 times on the optimized parameters to build the boxplots and estimate the variance of the approximate log-likelihood.}
\label{fig:measlesopt}
\end{figure}

\begin{figure}[h!]
\centering
\begin{longtable}{l c c c c c c c c}
\hline
City &    $\frac{n_{0,k}}{1000}$ &  $\pi_{0,k}^{(1)}$ &  $\pi_{0,k}^{(2)}$ & $\pi_{0,k}^{(3)}$ & $\pi_{0,k}^{(4)}$ &   $R_0$ &  $1/\rho$ &  $1/\gamma$ \\
\hline
BIRKENHEAD &  143 &      0.07594 &      0.00007 &      0.00013 &      0.92387 &  8.47 &         8.49 &           9.53 \\
    BIRMINGHAM & 1118 &      0.04575 &      0.00005 &      0.00013 &      0.95408 &  5.63 &         8.49 &           9.53 \\
     BLACKPOOL &  150 &      0.07210 &      0.00005 &      0.00259 &      0.92525 & 12.93 &         8.49 &           9.53 \\
        BOLTON &  169 &      0.09337 &      0.00007 &      0.00120 &      0.90537 &  9.44 &         8.49 &           9.53 \\
   BOURNEMOUTH &  140 &      0.12166 &      0.00006 &      0.00005 &      0.87822 & 10.62 &         8.49 &           9.53 \\
      BRADFORD &  294 &      0.08243 &      0.00004 &      0.00044 &      0.91708 & 10.22 &         8.49 &           9.53 \\
      BRIGHTON &  158 &      0.07625 &      0.00008 &      0.00035 &      0.92332 & 14.66 &         8.49 &           9.53 \\
       BRISTOL &  443 &      0.07355 &      0.00009 &      0.00206 &      0.92430 &  8.63 &         8.49 &           9.53 \\
       CARDIFF &  245 &      0.09190 &      0.00005 &      0.00058 &      0.90747 &  7.81 &         8.49 &           9.53 \\
      COVENTRY &  257 &      0.11602 &      0.00004 &      0.00018 &      0.88376 &  8.16 &         8.49 &           9.53 \\
         DERBY &  143 &      0.11061 &      0.00006 &      0.00008 &      0.88925 & 10.46 &         8.49 &           9.53 \\
     GATESHEAD &  115 &      0.08601 &      0.00007 &      0.00006 &      0.91386 &  8.28 &         8.49 &           9.53 \\
  HUDDERSFIELD &  130 &      0.09003 &      0.00007 &      0.00022 &      0.90968 & 10.78 &         8.49 &           9.53 \\
          HULL &  302 &      0.06856 &      0.00009 &      0.00083 &      0.93051 &  9.28 &         8.49 &           9.53 \\
       IPSWICH &  104 &      0.08528 &      0.00009 &      0.00000 &      0.91463 &  9.03 &         8.49 &           9.53 \\
         LEEDS &  510 &      0.09935 &      0.00006 &      0.00168 &      0.89891 &  5.92 &         8.49 &           9.53 \\
     LEICESTER &  288 &      0.07103 &      0.00005 &      0.00133 &      0.92759 &  9.00 &         8.49 &           9.53 \\
     LIVERPOOL &  802 &      0.05754 &      0.00004 &      0.00025 &      0.94217 &  5.63 &         8.49 &           9.53 \\
        LONDON & 3389 &      0.04575 &      0.00006 &      0.00021 &      0.95399 &  5.63 &         8.49 &           9.53 \\
    MANCHESTER &  704 &      0.05658 &      0.00003 &      0.00145 &      0.94193 &  7.29 &         8.49 &           9.53 \\
MIDDLESBOROUGH &  146 &      0.06662 &      0.00007 &      0.00067 &      0.93264 & 11.32 &         8.49 &           9.53 \\
     NEWCASTLE &  295 &      0.07129 &      0.00005 &      0.00024 &      0.92843 &  9.19 &         8.49 &           9.53 \\
       NORWICH &  120 &      0.10958 &      0.00005 &      0.00000 &      0.89037 & 12.78 &         8.49 &           9.53 \\
    NOTTINGHAM &  307 &      0.05794 &      0.00004 &      0.00068 &      0.94133 & 11.54 &         8.49 &           9.53 \\
        OLDHAM &  119 &      0.09814 &      0.00007 &      0.00092 &      0.90087 & 11.57 &         8.49 &           9.53 \\
      PLYMOUTH &  209 &      0.08388 &      0.00006 &      0.00077 &      0.91529 & 13.37 &         8.49 &           9.53 \\
    PORTSMOUTH &  240 &      0.07339 &      0.00007 &      0.00295 &      0.92359 &  9.39 &         8.49 &           9.53 \\
       PRESTON &  120 &      0.06501 &      0.00007 &      0.00242 &      0.93251 &  7.52 &         8.49 &           9.53 \\
       READING &  116 &      0.07686 &      0.00005 &      0.00137 &      0.92172 & 12.93 &         8.49 &           9.53 \\
       SALFORD &  178 &      0.08982 &      0.00007 &      0.00109 &      0.90903 &  8.82 &         8.49 &           9.53 \\
     SHEFFIELD &  515 &      0.07818 &      0.00006 &      0.00308 &      0.91869 &  8.10 &         8.49 &           9.53 \\
   SOUTHAMPTON &  181 &      0.11018 &      0.00006 &      0.00391 &      0.88585 & 11.14 &         8.49 &           9.53 \\
      SOUTHEND &  152 &      0.11816 &      0.00008 &      0.00043 &      0.88132 & 16.65 &         8.49 &           9.53 \\
     ST.HELENS &  112 &      0.09871 &      0.00008 &      0.00256 &      0.89864 &  9.89 &         8.49 &           9.53 \\
     STOCKPORT &  142 &      0.09721 &      0.00004 &      0.00231 &      0.90044 &  9.03 &         8.49 &           9.53 \\
         STOKE &  276 &      0.07614 &      0.00006 &      0.00071 &      0.92310 &  8.62 &         8.49 &           9.53 \\
    SUNDERLAND &  178 &      0.06698 &      0.00006 &      0.00088 &      0.93208 & 14.90 &         8.49 &           9.53 \\
       SWANSEA &  162 &      0.08195 &      0.00006 &      0.00052 &      0.91748 & 12.59 &         8.49 &           9.53 \\
       WALSALL &  115 &      0.08378 &      0.00009 &      0.00080 &      0.91533 & 13.75 &         8.49 &           9.53 \\
 WOLVERHAMPTON &  162 &      0.06376 &      0.00006 &      0.00021 &      0.93597 &  8.57 &         8.49 &           9.53 \\
\hline
\caption{\label{measles_C_results} Measles example.  Inferred quantities for model C.}
\end{longtable}
\end{figure}

\end{document}